\documentclass[11pt,reqno]{amsart}
\usepackage{amsaddr}
\usepackage[left=2cm,right=2cm,top=2cm,bottom=2cm]{geometry}

\usepackage[usenames,dvipsnames]{xcolor}
\usepackage{colortbl}
\usepackage[shortlabels]{enumitem}
\usepackage{amsmath,amsthm,amssymb,bm,bbm,dsfont}
\usepackage{thmtools}
\usepackage{thm-restate}
\usepackage{mathtools}\mathtoolsset{centercolon}
\mathtoolsset{showonlyrefs=true}

\usepackage[mathscr]{eucal}
\usepackage{upgreek}
\usepackage{setspace}
\pagestyle{plain}
\usepackage{graphicx}
\usepackage{verbatim}
\usepackage{float}
\usepackage{placeins}
\usepackage{array}
\usepackage{booktabs}

\usepackage{threeparttable}
\usepackage[update,prepend]{epstopdf}
\usepackage{multirow}
\usepackage{amsfonts,amssymb,dsfont}
\usepackage[abs]{overpic}

\usepackage{tikz}
\usepackage{tikzscale}
\usepackage{pgfplots}
\usetikzlibrary{calc}
\usetikzlibrary{patterns}
\usetikzlibrary{decorations.pathreplacing}
\usepackage{blox}

\bXLineStyle{-}

\definecolor{sccolor}{RGB}{62, 150, 81}
\definecolor{spcolor}{RGB}{57, 106, 177}

\definecolor{dullmagenta}{rgb}{0.4,0,0.4}   \definecolor{darkblue}{rgb}{0,0,0.4}
\usepackage{hyperref}
\hypersetup{
		unicode=false,          	pdftoolbar=true,        	pdfmenubar=true,        	pdffitwindow=false,     	pdfstartview={FitH},  
		pdfnewwindow=true,      	linktocpage=true,
	colorlinks=true,        	linkcolor=Red,          	citecolor=ForestGreen,  	filecolor=Magenta,      	urlcolor=BlueViolet,    				}

\usepackage{doi}
\usepackage{caption, subcaption}
\usepackage{enumitem}
\usepackage{shadethm}

%
%
%

\makeatletter
\newcommand{\opnorm}{\@ifstar\@opnorms\@opnorm}
\newcommand{\@opnorms}[1]{	\left|\mkern-1.5mu\left|\mkern-1.5mu\left|
	#1
	\right|\mkern-1.5mu\right|\mkern-1.5mu\right|
}
\newcommand{\@opnorm}[2][]{	\mathopen{#1|\mkern-1.5mu#1|\mkern-1.5mu#1|}
	#2
	\mathclose{#1|\mkern-1.5mu#1|\mkern-1.5mu#1|}
}
\makeatother

\hyphenation{sub-additive}

\makeatletter
\ifx\@NODS\undefined

\else\fi
\makeatother

\makeatletter
\ifx\@NODS\undefined
\let\mathbb=\mathds
\else\fi
\makeatother

\usepackage{xargs}
\usepackage[prependcaption]{todonotes}
\newcommandx{\eric}[2][1=]{\todo[inline, author={Eric}, linecolor=yellow,backgroundcolor=yellow!25,bordercolor=yellow,#1]{#2}}

\newcommandx{\ericside}[2][1=]{\todo[author={Eric}, linecolor=yellow,backgroundcolor=yellow!25,bordercolor=yellow,#1]{#2}}

\DeclarePairedDelimiter{\floor}{\lfloor}{\rfloor}
\DeclareMathOperator*{\argmax}{\arg\max}
\DeclareMathOperator*{\argmin}{\arg\min}

\DeclareMathOperator{\Tr}{Tr}
\DeclareMathOperator{\tr}{Tr}
\DeclareMathOperator{\e}{\mathrm{e}}

\DeclareMathOperator{\B}{\mathbf{B}}

\DeclareMathOperator{\Sym}{Sym}

\DeclareMathOperator{\N}{\mathbb{N}}
\DeclareMathOperator{\one}{\mathds{1}}

\DeclareMathOperator{\Pe}{P_\text{e}}
\newcommand{\Pestar}{P^*_\text{e}}
\DeclareMathOperator{\Ps}{P_\text{s}}

\DeclareMathOperator{\rank}{rank}

\newcommand{\bra}[1]{\langle #1 |}
\newcommand{\ket}[1]{| #1 \rangle}

\newcommand{\be}{{\mathbf e}}

\newcommand{\bx}{{\mathbf x}}
\newcommand{\by}{{\mathbf y}}

\renewcommand{\vec}{\mathbf}

\newcommand{\cH}{{\mathcal{H}}}
\newcommand{\cP}{{\mathcal{P}}}
\newcommand{\cE}{{\mathcal{E}}}
\newcommand{\sC}{{\mathcal{C}}}
\newcommand{\cD}{{\mathcal{D}}}
\newcommand{\cC}{{\mathcal{C}}}

\newcommand{\cX}{{\mathcal{X}}}
\newcommand{\cS}{{\mathcal{S}}}
\newcommand{\cW}{{\mathcal{W}}}
\newcommand{\cR}{{\mathcal{R}}}
\newcommand{\cs}{{\mathcal{W}}}

\def\0{{\mathbf{0}}}
\def\1{{\mathbf{1}}}
\def\2{{\mathbf{2}}}
\def\3{{\mathbf{3}}}
\def\4{{\mathbf{4}}}
\def\5{{\mathbf{5}}}
\def\6{{\mathbf{6}}}

\def\7{{\mathbf{7}}}
\def\8{{\mathbf{8}}}
\def\9{{\mathbf{9}}}

\def\bbR{\mathbb{R}}

\def\be{\begin{equation}}
\def\ee{\end{equation}}
\def\bea{\begin{eqnarray}}
\def\eea{\end{eqnarray}}

\def\eps{\varepsilon}

\def\B{{\mathcal B}}

\theoremstyle{plain}
\newtheorem{theo}{Theorem}
\newtheorem{prop}{Proposition}[section]
\newtheorem{lemm}[prop]{Lemma}
\newtheorem{coro}[theo]{Corollary}
\newshadetheorem{conj}{Conjecture}
\newtheorem*{prop2}{Proposition~\ref{prop:H}}

\newtheorem*{prop5}{Proposition~\ref{prop:E_SW}}
\newtheorem*{prop6}{Proposition~\ref{prop:spCh}}

\theoremstyle{definition}
\newtheorem{defn}[prop]{Definition} 

\theoremstyle{remark}
\newtheorem{remark}{Remark}[section]

\begin{document}
	
\let\origmaketitle\maketitle
\def\maketitle{
	\begingroup
	\def\uppercasenonmath##1{} 	\let\MakeUppercase\relax 	\origmaketitle
	\endgroup
}

\title{\bfseries \Large{  Non-Asymptotic Classical Data Compression with Quantum Side Information  }}

\author{ {Hao-Chung Cheng$^{1,2}$, Eric P. Hanson$^{3,4}$, Nilanjana Datta$^{3}$, Min-Hsiu Hsieh$^1$ }}
\address{\small  	
			$^{1}$University of Technology Sydney, Australia\\
	$^{2}$National Taiwan University, Taiwan (R.O.C.)\\
	$^{3}$Cambridge University, United Kingdom\\
	$^{4}$University of Paris-Sud, France}
 
\date{\today}

\begin{abstract}
In this paper, we analyze classical data compression with quantum side information (also known as the classical-quantum Slepian-Wolf protocol) in the so-called  large and moderate deviation regimes. In the non-asymptotic setting, the protocol involves compressing classical sequences of finite length $n$ and decoding them with the assistance of quantum side information. In the large deviation regime, the compression rate is fixed, and we obtain bounds on the error exponent function, which characterizes the minimal probability of error as a function of the rate.  Devetak and Winter showed that the asymptotic data compression limit for this protocol is given by a conditional entropy. For any protocol with a rate below this 
quantity, the probability of error converges to one asymptotically and its speed of convergence is given by the strong converse exponent function. We obtain finite blocklength bounds on this function, and determine exactly its asymptotic value. In the moderate deviation regime for the compression rate, the latter is no longer considered to be fixed. It is allowed to depend on the blocklength $n$, but assumed to decay slowly to the asymptotic data compression limit. Starting from a rate above this limit, we determine the speed of convergence of the error probability to zero and show that it is given in terms of the conditional information variance.
Our results complement earlier results obtained by Tomamichel and Hayashi, in which they analyzed the so-called small deviation regime of this protocol.
\end{abstract}
	
\maketitle

\section{Introduction} \label{sec:introduction}

Source coding (or data compression) is the task of compressing information emitted by a source in a manner such that it can later be decompressed to yield the original information with high probability. The information source is said to be memoryless if there is no correlation between the successive messages emitted by it. In this case, $n$ successive uses of the source is modeled by a sequence of $n$ independent and identically distributed (i.i.d.) random variables $X_1, X_2, \ldots, X_n$, each taking values $x$ in a finite alphabet ${\mathcal{X}}$ with probability $p(x)$. Such a source is equivalently modeled by a single random variable $X$ with probability mass function $p(x)$, with $x \in {\mathcal{X}}$, and is called a \emph{discrete memoryless source} (DMS). Let $H(X)$ denote the Shannon entropy of $X$. Shannon's Source Coding Theorem \cite{Sha48} tells us that if the messages emitted by $n$ copies of the source are compressed into at least $nH(X)$ bits, then they can be recovered with arbitrary accuracy upon decompression, in the asymptotic limit ($n \to \infty$).

One variant of the above task is that of data compression with classical side information (at the decoder), which is also called {\em{Slepian-Wolf coding}}, first studied by Slepian and Wolf \cite{SW73}. In this scenario, one considers a memoryless source emitting two messages, $x$ and $y$, which can be considered to be the values taken by a pair of correlated random variables $(X, Y)$. The task is once again to optimally compress sequences $\bx:=(x_1,\dotsc,x_n)$ emitted on $n$ copies of the source so that they can be recovered with vanishing probability of error in the asymptotic limit.
However, at the recovery (or decompression) step, the decoder also has access to the corresponding sequence $\by:=(y_1,\dotsc,y_n)$. Since $X$ and $Y$ are correlated, the knowledge of $\by$ gives information about the sequence $\bx$, and thus assists in the decoding. Slepian and Wolf showed that as long as the sequences $\bx$ are compressed into $nR$ bits with $R \geq {nH(X|Y)}$, where $H(X|Y)$ is the conditional entropy of $X$ given $Y$, this task can be  accomplished with vanishing probability of error in the asymptotic limit \cite{SW73}. Here, $R$ is called the \emph{rate} of the protocol and $n$ is called the coding blocklength. In fact, Slepian and Wolf also considered the case in which $Y$ is compressed and sent to the decoder at a rate $R'$; the decoder attempts to faithfully decode both $X$ and $Y$. This gives rise to an achievable rate \emph{region} of pairs $(R,R')$ for which this task is possible. In this work, we do not bound the rate $R'$; that is, we consider the decoder to receive an uncompressed version of $Y$.

In this setting, Slepian and Wolf showed that the data compression limit, that is, the minimal rate of asymptotically lossless compression, is given by $H(X|Y)$. Moreover, Oohama and Han \cite{OH94} established that the data compression limit for Slepian-Wolf coding satisfies the so-called strong converse property. That is, for any attempted compression to ${nR}$ bits with $R< H(X|Y)$, the probability of error converges to $1$ in the  asymptotic limit. This protocol has been extended to countably infinite alphabets and a class of information sources with memory by Cover \cite{cover_proof_1975}, and to various other settings \cite{Wyn75, AK75}.

The above characterization of the data compression limit in terms of the conditional entropy is only valid in the asymptotic limit. It tells us that there exist data compression protocols for which infinite sequences of outputs of a DMS can be compressed and later recovered with vanishing probability of error, but it gives no control on the probability of error incurred for any finite sequence. However, in practical implementations of the protocol one is obliged to consider finite sequences. Hence, it is important to determine the behavior of the optimal error probability in the  non-asymptotic setting (i.e.~finite $n$). To do so, we consider the so-called \emph{reliability function} or {\em{error exponent function}} (see ~\cite{Gal76, CK11} and references therein), which gives the exponential rate of decay of the minimal probability of error achievable by a Slepian-Wolf  protocol, at a fixed rate of compression. On the other hand, one can evaluate the minimum compression rate as a function of the coding blocklength, under the constraint that the error probability is below a certain threshold \cite{NH14, TK14, Tan14}.
 
A quantum generalization of the Slepian-Wolf protocol, which was first introduced by Devetak and Winter \cite{DW03} is the task of classical data compression with quantum side information. They referred to this task as the Classical-Quantum Slepian-Wolf (CQSW) problem. In this protocol, the correlated pair of random variables $(X,Y)$ is replaced by a classical-quantum (c-q) state $\rho_{XB}$. Here $B$ denotes a quantum system which is in the possession of the decoder (say, Bob) and constitutes the quantum side information (QSI), while $X$ is a classical system in the possession of the encoder (say, Alice) and corresponds to a random variable $X$ with probability mass function~$p(x)$, with $x \in {\mathcal{X}}$, as in the classical setting. Such a c-q state is described by an ensemble $\{ p(x), \rho_B^x\}_{x\in\mathcal{X}}$: with probability $p(x)$ the random  variable $X$ takes the value $x$ and Bob's system $B$ is in the state $\rho_B^x$\footnote{Such a state arises when the messages emitted from a quantum DMS are subject to a quantum instrument, and the classical outputs are sent to Alice, and the quantum ones to Bob.
}.  In the so-called \emph{asymptotic, memoryless setting} of CQSW, one considers Alice and Bob to share a large number, $n$, of identical copies of the c-q state $\rho_{XB}$.  Consequently, Alice knows the sequence $\bx := (x_1,x_2, \ldots, x_n)$, whereas the quantum state (i.e.~the QSI)  $\rho_B^{\bx} := \rho_B^{x_1} \otimes \rho_B^{x_2} \ldots \otimes \rho_B^{x_n}$ is accessible only to Bob. However, Bob has no knowledge of the sequence $\bx$. The aim is for Alice to convey the sequence $\bx$ to Bob using as few bits as possible. Bob can make use of the QSI in order to help him decode the compressed message sent by Alice. Devetak and Winter proved that the data compression limit of CQSW, evaluated in the asymptotic limit ($n \to \infty$), is given by the conditional entropy $H(X|B)_\rho$ of the c-q state $\rho_{XB}$.

In this paper we primarily study the CQSW protocol in the {\em{non-asymptotic setting}} in which one no longer takes the limit $n\to\infty$. This corresponds to the more realistic scenario in which only a finite number of copies of the c-q state $\rho_{XB}$ are available. First, we focus on the so-called \emph{large deviation regime}\footnote{That is, the regime in which the compression rate deviates from the data compression limit by a constant amount. We refer the readers to \cite{CH17,CTT2017,RD18} for details of different deviation regimes. }, in which the compression rate $R$ is fixed, and we analyze the optimal probability of error as a function of blocklength $n$. Specifically, in the range $R>H(X|B)_\rho$, we obtain upper and lower bounds on the error exponent function (see Theorems~\ref{theo:large_ach} and \ref{theo:sp_SW}). 
The lower bound shows that for any $R> H(X|B)_\rho$ the CQSW task can be accomplished with a probability of error which decays to zero exponentially in $n$. The upper bound puts a limit on how quickly the probability of error can decay. 
We term this upper bound the ``sphere-packing bound'' for CQSW, since it is analogous to the so-called sphere-packing bound obtained in c-q channel coding~\cite{Dal13, DW14, CHT17}.

For any protocol with a rate $R<H(X|B)_\rho$, the probability of error converges to one asymptotically and its speed of convergence is given by the strong converse exponent function. We obtain finite blocklength bounds on this function (see Theorems~\ref{thm:SC-converse-bound} and \ref{thm:SC-achiev-bound}), and determine exactly its asymptotic value (see Corollary~\ref{cor:exact_sc}), in terms of the \emph{sandwiched conditional R\'enyi entropy} \cite{MDS+13, WWY14, Tom16}. A non-asymptotic study of CQSW in the strong converse domain was also carried out by Tomamichel~\cite{tom-thesis}, and by Leditzky, Wilde, and Datta~\cite{leditzky_strong_2016}. In these works, one-sided bounds were obtained, and hence the asymptotic value of the strong converse exponent was not determined.

The bounds we obtain are expressed in terms of certain entropic exponent functions involving conditional R\'enyi entropies. To derive these results, we prove and employ properties of these functions. In obtaining the strong converse bounds, we employ  variational representations for certain auxiliary exponent functions by making use of those for the so-called log-Euclidean R\'enyi relative entropies developed in \cite{MO17}. Our variational representations are analogous to those obtained by Csisz{\'a}r and K{\"o}rner in the classical setting \cite{CK80, CK81, Csi82, CK11}.

We also study the trade-offs between the rate of compression, the minimal probability of error, and the blocklength $n$. Specifically, we characterize the  behaviors of the error probability and the compression rate in the  \emph{moderate deviation regime}. In contrast to the previously discussed results for which the rate $R$ was considered to be fixed, here we allow the rate to change with $n$, approaching $H(X|B)_\rho$ slowly (slower than $\frac{1}{\sqrt{n}}$), from above.
In this case, we show that the probability of error vanishes asymptotically.
In addition, we obtain an asymptotic formula describing the minimum compression rate which  converges to $H(X|B)_\rho$ when the probability of error decays sub-exponentially in $n$.
We summarize the error behaviors of different regimes in Table~\ref{table:finite}.

\begin{table}[th]
	\centering
	\resizebox{0.85\columnwidth}{!}{
				\begin{tabular}{|c|c|c|} 			\toprule
						Different Regimes & Concentration Phenomena& Slepian-Wolf Coding \\ 			\midrule
			\midrule
			\multirow{2}{*}{Small Deviation} & \multirow{2}{*}{$\Pr\left( S_n \geq \sqrt{n} x \right) \rightarrow 1 - \Phi\left(\frac{x}{\sqrt{v}}\right)$} & \multirow{2}{*}{$ \Pestar\left(n,H(X|B)_\rho + \frac{A}{\sqrt{n}}\right) \rightarrow \Phi\left(\frac{A}{\sqrt{V }}\right)$} \\ 			&& \\
			
			\hline

			\multirow{2}{*}{Moderate Deviation} &  \multirow{2}{*}{ $\Pr\left( S_n \geq n a_n x \right) = \mathrm{e}^{ -\frac{ na_n^2 }{2v} x + o(n a_n^2)}$} 
			&\multirow{2}{*}{$\Pestar(n,H(X|B)_\rho+a_n ) = \mathrm{e}^{- \frac{n a_n^2}{2V} + o(n a_n^2) }$}
			\\ 			&& \\
			
			\hline
			
			Large Deviation & \multirow{2}{*}{ $ \Pr\left( S_n \geq n x \right)= \mathrm{e}^{ -n \Lambda^*(x) + o(n) }$} 
			& \multirow{2}{*}{$\Pestar(n,R) = \mathrm{e}^{-n E (R) + o(n) }$}
			\\ 			($R>H(X|B)_\rho$) && \\
			
			\hline
			
			Large Deviation & \multirow{2}{*}{ $ \Pr\left( S_n \geq n x \right)= \mathrm{e}^{ -n \Lambda^*(x) + o(n) }$} 
			& \multirow{2}{*}{$1 - \Pestar(n,R) = \mathrm{e}^{-n E_\text{sc}^* (R) + o(n) }$}
			\\ 			($R<H(X|B)_\rho$) && \\
			
			\bottomrule
			
					\end{tabular}
	}
	\vspace{1em}
	\caption{The comparison of different regimes of Slepian-Wolf coding with quantum side information. Here, $\Pestar(n,R)$ denotes the optimal error probability with rate $R$ and blocklength $n$;
	$V:= V(X|B)_\rho$ is the conditional information variance defined in Eq.~\eqref{eq:V_cond};$(a_n)_{n\geq 0}$ is an arbitrary sequence such that $\lim_{n\to+\infty} a_n = 0$ and $\lim_{n\to+\infty} \sqrt{n} a_n = +\infty$. 
		The quantity $\Lambda^*$ is the Legendre-Fenchel transform of the cumulant generating function of the random variable, and $\Phi$ is  the  cumulative distribution function of a  standard normal  distribution. The strong converse exponent $E_\text{sc}^*(R)$ is defined in Eq.~\eqref{eq:def_Esc}.
		Note that determining the error exponent $E(R)$ in Slepian-Wolf coding with quantum side information is still an open problem.	}	\label{table:finite}

\end{table}		

\subsection{Prior Works} \label{sec:prior}

Renes and Renner \cite{RR12} analyzed the protocol in the so-called one-shot setting (which corresponds to the case $n=1$)
for a given threshold ($\varepsilon$, say) on the probability of error. They proved that in this case the classical
random variable $X$ can be compressed to a number of bits
given by a different entropic quantity, the so-called {\em{smoothed conditional max-entropy}}, the smoothing parameter being dependent on $\varepsilon$. They also established that this entropic quantity gives
the minimal number of bits, up to small additive quantities involving $\varepsilon$. More precisely, the authors established upper and lower bounds on the minimal number of bits in terms of the {{smoothed conditional max-entropy}}. The asymptotic result of Devetak and Winter could be recovered from their results by replacing the c-q state $\rho_{XB}$ by its $n$-fold tensor power $\rho_{XB}^{\otimes n}$ in these one-shot bounds, dividing by $n$, and taking the limit $n \to \infty$.
In \cite{TH13}, the authors improved these bounds and established a second order expansion of the so-called minimum code size\footnote{That is, the minimum number of bits required to accomplish the task with blocklength $n$, subject to the constraint that the probability error is at most $\eps$.} given an error $\eps$:
\begin{equation*} 
m^*(n,\eps) = nH(X|B)_\rho - \sqrt{nV(X|B)_\rho} \Phi^{-1}(\eps) + O \left(\log n\right),
\end{equation*}
where $V(X|B)_\rho$ is the quantum conditional information variance, and $\Phi$ is  the  cumulative distribution function of a  standard normal  distribution.

As regards the strong converse regime, Tomamichel analyzed the maximal success probability (as opposed to the average success probability considered in this work), and derived a lower bound on the strong converse exponent in terms of a min-conditional entropy in Section 8.1.3 of \cite{tom-thesis}.  Later, Leditzky, Wilde, and Datta determined two lower bounds on the strong converse exponent, which are given in Theorem 6.2 of \cite{leditzky_strong_2016}. The first one bounds the strong converse exponent in terms of a difference of (Petz-) R\'enyi relative entropies, whereas the second one gives a bound in terms of the conditional sandwiched R\'enyi relative entropy. In fact, the second bound can be written as
\begin{equation}
sc(n,R) \geq \sup_{\alpha > 1}\frac{1-\alpha}{2\alpha} (R - H_\alpha^{*,\uparrow} (X|B)_\rho)
\end{equation}
where $sc(n,R)$ is the strong converse exponent for blocklength $n$ and rate $R$ (defined in~\eqref{eq:def_sc}), and $H_\alpha^{*,\uparrow} (X|B)_\rho$ is the conditional sandwiched $\alpha$-R\'enyi relative entropy of the source state $\rho_{XB}$ (defined in \eqref{eq:cond_ent}). This bound is weaker by a factor of $\frac{1}{2}$ to that given in Theorem~\ref{thm:SC-converse-bound}.
We note that the first bound is also weaker than Theorem~\ref{thm:SC-converse-bound}. We refer  the readers to Remark~\ref{remark_sc} for the discussion.

This paper is organized as follows. We introduce the CQSW protocol in Sec.~\ref{sec:protocol}, and state our main results in Sec.~\ref{ssec:main}. The notation and definitions for the entropic quantities and exponent functions are described in Sec.~\ref{sec:pre}.
Sec.~\ref{sec:Large} presents the error exponent analysis for CQSW as $R>H(X|B)_\rho$ (large deviation regime), and we study the optimal success exponent as $R< H(X|B)_\rho$ (strong converse regime)  in Sec.~\ref{sec:SC}.
In Sec.~\ref{sec:mod_dev_reg} we discuss the moderate deviation regime.
We conclude this paper in Sec.~\ref{sec:conclusions} with a discussion.

\section{Classical Data Compression with Quantum Side Information (Slepian-Wolf Coding)}\label{sec:protocol}

Suppose Alice and Bob share multiple (say $n$) identical copies of a classical-quantum (c-q) state 
\begin{equation} \label{eq:rho_XB-cq}
\rho_{XB} = \sum_{x\in \cX} p(x)\ket{x}\bra{x}\otimes \rho_B^x
\end{equation}
where $\cX$ is a finite alphabet and $\rho_B^x$ is a quantum state, of a system $B$ with Hilbert space $\cH_B$, in Bob's possession. The letters $x\in \cX$ can be considered to be the values taken by a random variable $X$ with probability mass function $p(x)$. One can associate with $X$ a quantum system (which we also refer to as $X$) whose Hilbert space has an orthonormal basis labeled by $x\in \cX$, i.e. $\{\ket{x}\}_{x\in \cX}$.

The aim of classical-quantum Slepian-Wolf (CQSW) coding is for Alice to convey sequences $\bx = (x_1,\dotsc,x_n)\in \cX^n$ to Bob using as few bits as possible; Bob can employ the corresponding quantum state $\rho^{\bx}_{B^n} = \rho^{x_1}_B\otimes \dotsm \otimes \rho^{x_n}_B$ which is in his possession, and plays the role of \emph{quantum side information} (QSI), to help decode Alice's compressed message.

Alice's encoding (compression) map is given by $\cE: \cX^n\to \cW$, where the alphabet $\cW$ is such that  $|\cW| < |\cX|^n$.
If Alice's message was $\bx$, the compressed message that Bob receives is $\cE(\bx)\in \cW$. He applies a decoding map $\cD$ on the pair $(\cE(\bx), \rho^{\bx}_{B^n})$ in order to infer Alice's original message. Thus, Bob's decoding is given by a map $\cD: \cs \times \mathcal{S}(B^n) \to  \cX^n$
where $\cS(B^n)$ denotes the set of states on $\cH_B^{\otimes n}$.

If we fix the first argument as $w\in \cs$, we have that the decoding $\cD(w,\cdot)$ is a map from $\mathcal{S}(B^n) \to \cX^n$ which is given by a positive operator-valued measurement (POVM). Thus, we can represent the decoding by a collection of POVMs $\{\cD_w\}_{w\in \cs}$, where $\cD_w = \{\Pi_{\bx}^{(w)}\}_{\bx\in \cX^n}$ with $\Pi_{ \bx}^{(w)}\geq 0$ and $\sum_{ \bx\in \cX^n} \Pi_{ \bx}^{(w)} = \one$,  for each $w\in \cs$.  That is, if Alice sends the message $\bx$, Bob receives $\cE(\bx)$, and measures the state $\rho_{B^n}^\bx$ with the POVM $\{\Pi_{\bx}^{(\cE(\bx))}\}_{\bx\in \cX^n}$. We depict the protocol in Figure~\ref{fig:SW-qsi}.

\begin{figure}[th]
	\centering
	\includegraphics{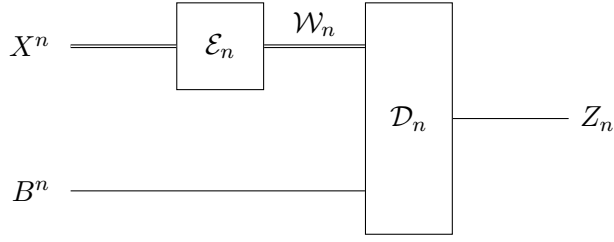}
	\caption{We are given $n$ copies of a classical source $X$ which is correlated with a quantum system $B$. We  compress the source into $\cW_n$  via the encoding $\cE_n$, and then perform a decompression via $\cD_n$ which has access to the side information $B^n$. This yields the output $Z_n$ with associated alphabets $\mathcal{Z}^n$. } \label{fig:SW-qsi}
	
\end{figure}

Given $n\in \N$ and $R>0$, an encoding-decoding pair $(\cE,\cD)$ of the form described above is said to form an $(n,R)$-code if $|\cW| = 2^{nR}$ (or, more precisely, $|\cW| = \lceil 2^{nR} \rceil$). Here, $R$ is called the \emph{rate} of the code $\sC = (\cE,\cD)$. For such a code, the probability of error is given by
\begin{equation}
\Pe(\sC)\equiv \Pe(\rho_{XB},\sC) = 1 - \sum_{\bx \in \cX^n} p(\bx) \tr[ \Pi_{\bx}^{(\cE(\bx))} \rho_{B^n}^{\bx} ],
\end{equation}
where $p(\bx) = p(x_1)\dotsm p(x_n)$ for $\bx = (x_1,\dotsc, x_n)$. We can also consider a random encoding which maps $\bx$ to $w$ with some probability $P(w|x)$. In this case, the probability of error is given by
\begin{equation}
\Pe(\sC) = 1 - \sum_{\bx \in \cX^n, w \in \cW} P(w|\bx) p(\bx) \tr[ \Pi_{\bx}^{(w)} \rho_{B^n}^{\bx} ].
\end{equation}
Alternatively, we can see the random encoding $\cE$ as applying a deterministic encoding $\cE_j$ with some probability $Q_j$. Then for a code $\sC = (\cE, \cD)$,
\begin{equation} \label{eq:Pe_rand_as_det}
\Pe((\cE, \cD)) = 1 - \sum_j Q_j \sum_{\bx \in \cX^n} P(w|\bx) p(\bx) \tr[ \Pi_{\bx}^{(\cE_j)} \rho_{B^n}^{\bx} ] = \sum_j Q_j \Pe((\cE_j,\cD)).
\end{equation}
Thus, the error probability for a random encoding is an average of error probabilities of deterministic encodings. In particular, $\min_j \Pe((\cE_j, \cD)) \leq \Pe((\cE, \cD))$, so the optimal error probability is achieved for a deterministic code.

 The optimal (minimal) rate of data compression evaluated in the asymptotic limit ($n\to \infty$), under the condition that the probability of error vanishes in this limit is called the \emph{data compression limit}. Devetak and Winter~\cite{DW03} proved that it is given by the conditional entropy of $\rho_{XB}$:
\begin{equation}
H(X|B)_\rho = H(\rho_{XB}) - H(\rho_B)
\end{equation}
where $H(\omega) := - \tr (\omega \log \omega )$ denotes the von Neumann entropy of a state $\omega$.

In this paper, we analyze the Slepian-Wolf protocol primarily in the non-asymptotic scenario (finite $n$). The two key quantities that we focus on are the following. The \emph{optimal error probability} for a rate $R$ and blocklength $n$ is defined as
\begin{equation} \label{def:optPe}
\Pestar(n,R)\equiv \Pestar(\rho_{XB},n,R) := \inf \{ \Pe(\sC) : \sC \text{ is an }(n,R)\text{-code for }\rho_{XB}\}.
\end{equation}
Similarly, for any $\eps\in(0,1)$, we define the \emph{optimal rate} of compression at an error threshold $\eps$ and blocklength $n$ by
\begin{equation}
R^*(n,\eps) := \inf \{ R : \exists\, \text{ an }(n,R)\text{-code } \sC \text{ with } \Pe(\sC) \leq \eps\}.
\end{equation}
In particular, we obtain bounds on the \emph{finite blocklength error exponent}
\begin{equation} \label{eq:def_e}
e(n,R) := - \frac{1}{n}\log \Pestar (n,R)
\end{equation}
and the \emph{finite blocklength strong converse exponent}
\begin{equation} \label{eq:def_sc}
sc(n,R) := - \frac{1}{n}\log(1- \Pestar (n,R)).
\end{equation}
In terms of $\Pestar(n,R)$, Devetak and Winter's result can be reformulated as
\begin{equation}  \label{eq:pr-error-DW}
\begin{split}
& \forall R > H(X|B)_\rho: \quad \limsup_{n\to\infty} \Pestar(n,R) = 0,\\
& \forall R < H(X|B)_\rho: \quad  \liminf_{n\to\infty}\Pestar(n,R) > 0.
\end{split}
\end{equation}

 Hence, $H(X|B)_\rho$ is called the \emph{Slepian-Wolf limit}. We may illustrate this result by Figure~\ref{fig:pr-succ-diag} below.
\begin{figure}[ht]
\centering
\includegraphics{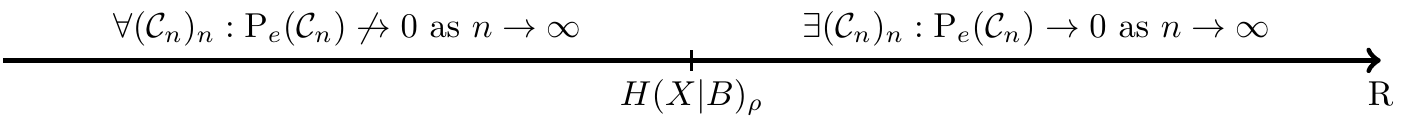}
\caption{Diagram depicting the relationship between the optimal error probability of $(n,R)$-Slepian-Wolf codes, and the rate of compression $R$, as established in \cite{DW03}. 
For rates $R> H(X|B)_\rho$, the task can be accomplished with probability of error tending to $0$ as the number of copies $n$ of the state $\rho_{XB}$ tend to infinity. In contrast, for any rate $R<H(X|B)_\rho$, the probability of error does not converge to $0$ as $n\to\infty$. \label{fig:pr-succ-diag}}
\end{figure}

\section{Main Results} \label{ssec:main}

The main contributions of this work consist of a refinement of \eqref{eq:pr-error-DW}.
 We derive bounds on the speed of convergence of  $\Pestar(n,R)$ to zero for any $R> H(X|B)_\rho$. Further, for $R<H(X|B)_\rho$ we obtain bounds on the strong converse $sc(n,R)$, and determine its exact value in the asymptotic limit. In addition, we analyze the asymptotic behavior of $\Pestar(n,R)$ and $R^*(n,\eps)$ in the so-called
 moderate deviations regime. These results are given by the following theorems, in each of which $\rho_{XB}$ denotes a c-q state (eq.~\ref{eq:rho_XB-cq}), with $H(X|B)_\rho > 0$.

Given a rate $R > H(X|B)_\rho$, there exists a sequence of codes $\sC_n$ such that the probability of error tends to zero as $n\to \infty$, as shown by \eqref{eq:pr-error-DW}. In fact, this convergence occurs exponentially quickly with $n$, and the exponent can be bounded from below and above, as we show in Theorems~\ref{theo:large_ach} and \ref{theo:sp_SW} respectively.

\begin{restatable}{theo}{largeach}
\label{theo:large_ach}
For any rate $R \geq H(X|B)_\rho$, and any blocklength $n\in\mathbb{N}$, the finite blocklength error exponent defined in \eqref{eq:def_e} satisfies
\begin{equation}
e (n,R) \geq {E}_\textnormal{r}^{\downarrow}(R) - \frac{2}{n},
\end{equation}
where 
\begin{align}
   {E}_\textnormal{r}^{\downarrow}(R) \equiv {E}_\textnormal{r}^{\downarrow}(\rho_{XB},R) := \sup_{ \frac12 \leq \alpha\leq 1} \frac{1-\alpha}{\alpha} \left( R - H_{2-\frac{1}{\alpha}}^\downarrow(X|B)_\rho \right),
\end{align}
$H_\alpha^\downarrow(X|B)_\rho := -D_\alpha(\rho_{XB}\|\mathds{1}_X\otimes \rho_B)$, and $D_\alpha$ is the $\alpha$-R\'enyi divergence: $D_\alpha(\rho\|\sigma) = \frac{1}{\alpha-1} \log  \Tr \left[ \rho^\alpha \sigma^{1-\alpha} \right]$.
\end{restatable}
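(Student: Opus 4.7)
The plan is to prove this lower bound via a random-binning construction for the encoder, combined with a pretty good measurement (PGM) decoder and the Hayashi-Nagaoka operator inequality.

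First I fix a parameter $\alpha \in [\tfrac{1}{2}, 1]$ and, with $s := (1-\alpha)/\alpha \in [0,1]$, consider the random encoder $\cE$ that assigns each sequence $\bx \in \cX^n$ independently and uniformly at random to one of $M := \lceil 2^{nR}\rceil$ bins. For the decoder, I introduce a family of test operators $\{M_\bx\}_{\bx\in\cX^n}$ with $0 \preceq M_\bx \preceq \one$, to be specified shortly. Given a received bin label $w$, Bob applies the PGM $\Pi_\bx^{(w)} := T_w^{-1/2} M_\bx T_w^{-1/2}$, with $T_w := \sum_{\bx'\,:\,\cE(\bx') = w} M_{\bx'}$.

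The second step is to apply the Hayashi-Nagaoka inequality, yielding $\one - \Pi_\bx^{(\cE(\bx))} \preceq 2(\one - M_\bx) + 4 \sum_{\bx' \neq \bx,\,\cE(\bx')=\cE(\bx)} M_{\bx'}$, and to take the expectation of $\Pe(\sC)$ over the random binning. By pairwise independence, $\Pr[\cE(\bx')=\cE(\bx)] = 1/M$ for $\bx'\neq \bx$, so
\begin{align}
\E[\Pe(\sC)] \;\leq\; 2 \sum_\bx p(\bx)\tr\!\bigl[(\one - M_\bx)\rho_{B^n}^\bx\bigr] + \frac{4}{M}\sum_{\bx} p(\bx)\sum_{\bx' \in \cX^n}\tr\!\bigl[M_{\bx'}\rho_{B^n}^\bx\bigr].
\end{align}

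The heart of the proof is to choose $\{M_\bx\}$ so that both terms above are controlled by the Petz conditional R\'enyi entropy $H_{1-s}^\downarrow(X|B)_\rho$. The natural choice is a threshold test between $\rho_{B^n}^\bx$ and the marginal $\rho_{B^n}$: using a Nagaoka-type operator inequality of the form $\one - M_\bx \preceq 2^{ns\gamma} (\rho_{B^n}^\bx)^{-1}\bigl((\rho_{B^n}^\bx)^{1-s}\rho_{B^n}^{s}\bigr)$ and $M_\bx \preceq 2^{-n(1-s)\gamma}\rho_{B^n}^{-1}\bigl((\rho_{B^n}^\bx)^{1-s}\rho_{B^n}^{s}\bigr)$ (with appropriate symmetrization to make $M_\bx$ an actual POVM element), one can arrange that for any threshold $\gamma \in \bbR$ both $\sum_\bx p(\bx)\tr[(\one - M_\bx)\rho_{B^n}^\bx]$ and $\frac{1}{M}\sum_{\bx,\bx'} p(\bx) \tr[M_{\bx'}\rho_{B^n}^\bx]$ are bounded by multiples of $2^{ns\gamma} \sum_\bx p(\bx)\tr[(\rho_{B^n}^\bx)^{1-s}\rho_{B^n}^s]$ and $2^{-n(1-s)\gamma} \cdot M^{-1} \tr[\rho_{B^n}^{1-s}\rho_{B^n}^s \cdot M]$ respectively. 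Invoking $\sum_\bx p(\bx) \rho_{B^n}^\bx = \rho_{B^n}$ for the second term, and using the tensor-power structure together with the identity
\[
\sum_\bx p(\bx) \tr\!\bigl[(\rho_{B^n}^\bx)^{1-s}\rho_{B^n}^{s}\bigr] = 2^{-n s H_{1-s}^\downarrow(X|B)_\rho},
\]
which follows directly from the definitions, the choice $\gamma = R - H_{1-s}^\downarrow(X|B)_\rho$ balances the two terms and gives $\E[\Pe(\sC)] \leq c \cdot 2^{-n s(H_{1-s}^\downarrow(X|B)_\rho - R)}$ for an absolute constant $c$. Optimizing over $\alpha \in [\tfrac12, 1]$ produces the exponent $E_\text{r}^\downarrow(R)$, and derandomization (a deterministic code achieves at most the average error) yields $\Pestar(n,R) \leq \E[\Pe(\sC)]$, with the $-2/n$ correction absorbing the multiplicative constants from Hayashi-Nagaoka and the ceiling in $M$.

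The main obstacle I expect is verifying that the test operators can be chosen so that the \emph{Petz} R\'enyi divergence (rather than the sandwiched one) is what emerges on both sides of the Hayashi-Nagaoka split: the product $(\rho_{B^n}^\bx)^{1-s}\rho_{B^n}^{s}$ is not self-adjoint in general, so turning the scalar Nagaoka inequality into a legitimate operator $0 \preceq M_\bx \preceq \one$ requires a symmetrization (e.g.\ through $\rho_{B^n}^{s/2}(\cdots)\rho_{B^n}^{s/2}$) that must be shown to preserve the key trace identities. A secondary technical point is handling possible singularities of $\rho_{B^n}$, which can be resolved by a standard regularization $\rho_{B^n} \mapsto \rho_{B^n} + \varepsilon\one$ followed by $\varepsilon \to 0$, or by restricting to the common support of $\rho_{B^n}$.
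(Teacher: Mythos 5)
Your skeleton matches the paper's proof exactly: random binning encoder, pretty good measurement decoder built from test operators $\{\Lambda_{\bx}\}$, the Hayashi--Nagaoka inequality to split the error into a ``type-I'' term $\sum_{\bx}p(\bx)\tr[(\one-\Lambda_{\bx})\rho_{B^n}^{\bx}]$ and a ``collision'' term $\frac{1}{|\cW|}\sum_{\bx}\tr[\rho_{B^n}\Lambda_{\bx}]$, and derandomization at the end. The factor-of-$4$ bookkeeping giving the $-2/n$ correction is also right.

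However, the step you yourself flag as the main obstacle is left genuinely unresolved, and the mechanism you propose for it is the wrong one. You want to control both terms by $\Tr[\rho_{XB}^{1-s}(\one_X\otimes\rho_B)^{s}]$ via a Nagaoka-type operator inequality applied to the non-self-adjoint product $(\rho_{B^n}^{\bx})^{1-s}\rho_{B^n}^{s}$, repaired by a symmetrization such as $\rho_{B^n}^{s/2}(\cdots)\rho_{B^n}^{s/2}$. That symmetrization changes the quantity: $\Tr[\rho^{s/2}\omega^{1-s}\rho^{s/2}]=\Tr[\omega^{1-s}\rho^{s}]$ only for the full trace, but the operator inequalities you need hold (if at all) for the symmetrized operators, and pushing them through the Hayashi--Nagaoka split produces sandwiched-type rather than Petz-type exponents; there is no ``legitimate operator $0\preceq M_{\bx}\preceq\one$'' that realizes the scalar Nagaoka bound with the Petz ordering in the noncommutative case. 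The paper avoids this entirely: it takes $\Lambda_{x}$ to be the spectral projector $\{p(x)\rho_B^{x}-\tfrac{1}{|\cW|}\rho_B\geq 0\}$ (no free threshold $\gamma$; the rate is built into the test) and then invokes the Audenaert \emph{et al.} inequality
\begin{equation}
\Tr\left[\{A-B\geq 0\}B+\{B-A\geq 0\}A\right]\leq \Tr\left[A^{1-s}B^{s}\right],\qquad s\in[0,1],
\end{equation}
with $A=\rho_{XB}$ and $B=\tfrac{1}{|\cW|}\one_X\otimes\rho_B$. This single inequality bounds the sum of both error terms directly by the Petz quantity $|\cW|^{-s}\Tr[\rho_{XB}^{1-s}(\one_X\otimes\rho_B)^{s}]$, with no symmetrization and no regularization of supports needed. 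Without this (or an equivalent) ingredient, your argument does not close; with it, the rest of your write-up goes through as stated.
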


\noindent The proof of Theorem~\ref{theo:large_ach} is in  Section~\ref{subsec:large_achiev}.

\begin{restatable}[Sphere-Packing Bound for Slepian-Wolf Coding]{theo}{spSW} \label{theo:sp_SW}
 Let $R \in ( H(X|B)_\rho, H_0^\uparrow(X|B)_\rho )$. Then, there exist $N_0, K\in\mathbb{N}$, such that for all $n\geq N_0$, the finite blocklength error exponent defined in \eqref{eq:def_e} satisfies
\begin{align*} e (n,R) \leq E_\textnormal{sp} (R) + \frac12\left(1+ \left|\frac{\partial E_\textnormal{sp} (r)}{\partial r}\Big|_{r=R}\right| \right) \frac{\log n}{n} + \frac{K}{n}, 
\end{align*}
where
\begin{align}
{E}_\textnormal{sp}(R) \equiv{E}_\textnormal{sp}(\rho_{XB},R) := \sup_{0 \leq \alpha\leq 1} \frac{1-\alpha}{\alpha} \left( R - H_{\alpha}^\uparrow(X|B)_\rho \right), \end{align}
and $H_{\alpha}^\uparrow(X|B)_\rho := \max_{\sigma_B \in \mathcal{S}(B)} - D_\alpha(\rho_{XB}\|\mathds{1}_X\otimes \sigma_B)$.
\end{restatable}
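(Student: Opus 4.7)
The plan is to recast any $(n,R)$-code as a binary hypothesis test between $\rho_{XB}^{\otimes n}$ and $\mathds{1}_{X^n}\otimes\sigma_B^{\otimes n}$, convert the resulting inequality into a R\'enyi-divergence lower bound on $\Pe$, single-letterise via the tensor-additivity of the Petz divergence, and finally sharpen the $\log n/n$ prefactor using a saddle-point refinement in the spirit of the c-q channel sphere-packing bounds of~\cite{DW14,CHT17}. Given a code $\sC=(\cE,\cD)$, set $T_\bx:=\Pi_\bx^{(\cE(\bx))}$ for each $\bx\in\cX^n$. The POVM constraint $\sum_{\bx\in\cX^n}\Pi_\bx^{(w)}=\mathds{1}$ together with $|\cW|\leq 2^{nR}$ gives
\begin{equation}
\sum_{\bx\in\cX^n}p(\bx)\,\tr\!\bigl[T_\bx\,\rho_{B^n}^{\bx}\bigr]=1-\Pe(\sC),\qquad \sum_{\bx\in\cX^n}\tr\!\bigl[T_\bx\,\sigma_B^{\otimes n}\bigr]\leq 2^{nR},
\end{equation}
for any $\sigma_B\in\mathcal{S}(B)$. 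Lifting $\{T_\bx\}$ to the single test $T:=\sum_{\bx}\proj{\bx}\otimes T_\bx$ on $\cH_{X^n}\otimes\cH_{B^n}$ therefore exhibits a binary hypothesis test whose Type-I error equals $\Pe(\sC)$ and whose Type-II mass is at most $2^{nR}$.

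The second step turns this into a R\'enyi bound. A Nagaoka-style inequality combined with Audenaert's estimate $\tr[(\rho-\mu\,\omega)_+]\geq \tr\rho-\mu^{1-\alpha}\tr[\rho^\alpha\omega^{1-\alpha}]$, together with an optimisation in $\mu>0$, yields for every $\alpha\in(0,1)$
\begin{equation}
\Pe(\sC)\geq \exp\!\left(-\frac{1-\alpha}{\alpha}\Bigl(nR - D_\alpha\!\bigl(\rho_{XB}^{\otimes n}\,\big\|\,\mathds{1}_{X^n}\otimes\sigma_B^{\otimes n}\bigr)\Bigr) - c(\alpha)\log n\right)
\end{equation}
with some continuous $c(\alpha)$. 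Tensor-additivity reduces the divergence to $n\,D_\alpha(\rho_{XB}\|\mathds{1}_X\otimes\sigma_B)$, and minimising over $\sigma_B$ produces $-n\,H_\alpha^\uparrow(X|B)_\rho$. Dividing by $n$ and taking the supremum over $\alpha\in[0,1]$ already gives the crude sphere-packing estimate $e(n,R)\leq E_\textnormal{sp}(R)+O(n^{-1}\log n)$.

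The final step is the exact-asymptotic sharpening. Introduce the auxiliary function $E_0(\alpha,R):=\tfrac{1-\alpha}{\alpha}\bigl(R-H_\alpha^\uparrow(X|B)_\rho\bigr)$ and establish: (a) uniqueness and smooth $\alpha$-dependence of the minimiser $\sigma_B^\star(\alpha)$ in the variational expression for $H_\alpha^\uparrow(X|B)_\rho$; (b) strict concavity of $\alpha\mapsto E_0(\alpha,R)$ with a unique interior maximiser $\alpha^\star(R)\in(0,1)$ whenever $R\in(H(X|B)_\rho,H_0^\uparrow(X|B)_\rho)$; and (c) the Legendre-conjugacy identity $\alpha^\star(R)=\bigl(1+|\partial_R E_\textnormal{sp}(R)|\bigr)^{-1}$. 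A Laplace/saddle-point expansion of the one-shot bound of step~2 around $\alpha^\star(R)$ then produces a Gaussian integral whose $n^{-1/2}$ normalisation contributes the $\tfrac12 n^{-1}\log n$ term, while the Jacobian of the $\alpha\leftrightarrow R$ change of variables supplies the $|\partial_R E_\textnormal{sp}(R)|$ factor.

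The hardest step will be the exact asymptotics. Unlike in the classical method of types, the variational object $\max_{\sigma_B}-D_\alpha(\rho_{XB}\|\mathds{1}_X\otimes\sigma_B)$ involves non-commuting operators, so a direct saddle-point expansion is not available; instead, perturbative bounds on $\rho_{XB}^{\alpha}(\mathds{1}_X\otimes\sigma_B)^{1-\alpha}$ that are uniform in $\sigma_B$ near $\sigma_B^\star(\alpha^\star(R))$ are needed, together with differentiability of $E_\textnormal{sp}$ at the interior rate~$R$. The hypothesis $R<H_0^\uparrow(X|B)_\rho$ enters precisely to guarantee that $\alpha^\star(R)$ stays bounded away from~$0$, which is what controls the constants $N_0$ and $K$ appearing in the statement.
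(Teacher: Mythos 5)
Your high-level architecture is the same as the paper's: build the test $T=\sum_{\bx}\proj{\bx}\otimes\Pi_{\bx}^{(\cE(\bx))}$ to reduce any $(n,R)$-code to a binary hypothesis test with type-I error $\Pe(\sC)$ and type-II mass at most $|\cW|$ (the paper's Proposition~\ref{prop:one-shot_converse}, up to your choice of the unnormalized reference $\mathds{1}_{X^n}\otimes\sigma_B^{\otimes n}$ in place of $\tau_{X^n}\otimes\sigma_B^{\otimes n}$), then feed this into a sharp converse Hoeffding bound whose polynomial prefactor comes from a saddle-point analysis; your Legendre identity $\alpha^\star(R)=(1+|\partial_R E_\textnormal{sp}(R)|)^{-1}$ is exactly the paper's $s_n^\star=(1-\alpha^\star)/\alpha^\star$ appearing in the prefactor $\tfrac12(1+s_n^\star)\log n$.

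The gap is in your second step, which is where all the difficulty lives. As quantified, your displayed inequality cannot hold for \emph{every} $\alpha\in(0,1)$: letting $\alpha\to1$ makes $\tfrac{1-\alpha}{\alpha}\to0$ and would force $\Pestar(n,R)\geq n^{-c(1)}$, contradicting Theorem~\ref{theo:large_ach} for $R>H(X|B)_\rho$. The correct Hoeffding converse carries the supremum over $\alpha$ \emph{inside} the exponent; it is a single bound realized at the optimal tilting parameter, not a per-$\alpha$ family. More substantively, the tools you name do not yield even the corrected statement: Audenaert's inequality \cite{ACM+07} upper-bounds an error sum and is an achievability device, while here one needs a \emph{lower} bound on the type-I error under a type-II constraint; a Nagaoka-type use of the monotonicity of $Q_\alpha$ under the two-outcome measurement gives only $\eps^\alpha\geq Q_\alpha(\rho\|\sigma)-\mu^{1-\alpha}$, whose exponent is $\tfrac{1-\alpha}{\alpha}D_\alpha$ rather than the Hoeffding exponent $\tfrac{1-\alpha}{\alpha}(D_\alpha-r)$. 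Obtaining the sharp bound with the $\tfrac12(1+s_n^\star)\log n$ prefactor requires a refined large-deviation lower bound (Nussbaum--Szko{\l}a distributions plus a tilted-measure argument with Berry--Esseen-type corrections); this is precisely Proposition~\ref{prop:sharp_Hoeffding}, imported from \cite{CHT17,CH17}, and it is the step your proposal defers to a ``Laplace/saddle-point expansion'' without supplying. You would also need, as the paper verifies via Proposition~\ref{prop:E_SW}, that the saddle point $\sigma_R^\star$ satisfies $\rho_{XB}\ll\mathds{1}_X\otimes\sigma_R^\star$ and that the relevant relative-entropy variance stays bounded away from zero, which is where the hypothesis $R<H_0^\uparrow(X|B)_\rho$ and the constants $N_0,K$ actually enter.
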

\noindent The proof of Theorem~\ref{theo:sp_SW} is in  Section~\ref{subsec:large_converse}.

On the other hand, for $R < H(X|B)_\rho$, no sequence of codes $\sC_n$ can achieve vanishing error asymptotically. For this range, we in fact show that the probability of error converges exponentially quickly to one, as shown by the bounds on $sc(n,R)$ given in the following theorems.

\begin{restatable}{theo}{SCconversebound}\label{thm:SC-converse-bound}
For all $R<H(X|B)_{\rho},$ the finite blocklength strong converse exponent defined in \eqref{eq:def_sc} satisfies
\begin{equation}
sc(n,R)\geq E_\textnormal{sc}^*(R) >0,
\end{equation}
where 
\begin{equation} \label{eq:def_Esc}
  {E}_\textnormal{sc}^*(R) \equiv {E}_\textnormal{sc}^*(\rho_{XB},R) := \sup_{\alpha> 1} \frac{1-\alpha}{\alpha} \left( R - H_{\alpha}^{*,\uparrow}(X|B)_\rho \right),
\end{equation}
and $H_{\alpha}^{*,\uparrow}(X|B)_\rho:= \max_{\sigma_B\in\mathcal{S}(B)} - D_\alpha^*(\rho_{XB}\|\mathds{1}_X \otimes \sigma_B)$, with $D^*_\alpha(\rho\|\sigma) := \frac{1}{\alpha-1} \log 
\Tr \left[ \left( \rho^{\frac12} \sigma^{\frac{1-\alpha}{\alpha}} \rho^{\frac12} \right)^\alpha \right]$ being the sandwiched R\'enyi divergence~\cite{MDS+13,WWY14}.
\end{restatable}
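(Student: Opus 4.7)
The plan is to establish a one-shot upper bound of the form $1 - \Pe(\sC) \leq 2^{-nE_\textnormal{sc}^*(R)}$ for every $(n,R)$-code $\sC$, from which $sc(n,R) \geq E_\textnormal{sc}^*(R)$ follows directly from definition~\eqref{eq:def_sc}. By~\eqref{eq:Pe_rand_as_det} it suffices to treat deterministic encodings $\cE: \cX^n \to \cW$ with $|\cW| = 2^{nR}$, so I fix such a code $\sC = (\cE,\{\Pi_\bx^{(w)}\})$ and set up the test operator
\begin{equation*}
T := \sum_{\bx \in \cX^n} |\bx\rangle\langle\bx| \otimes \Pi_\bx^{(\cE(\bx))}.
\end{equation*}
For each $w \in \cW$ the partial sum $\sum_{\bx:\cE(\bx)=w}\Pi_\bx^{(w)} \leq \sum_{\bx \in \cX^n}\Pi_\bx^{(w)} = \one$, and distinct codewords live on orthogonal $X^n$-subspaces, so $0 \leq T \leq \one$ and $\tr[T\rho_{XB}^{\otimes n}] = 1 - \Pe(\sC)$.

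The key analytic input is a H\"older-type consequence of the data-processing inequality for the sandwiched R\'enyi divergence: for every $\alpha > 1$, every state $\rho$, every positive operator $\sigma$, and every $0 \leq T \leq \one$,
\begin{equation*}
\tr[T\rho] \;\leq\; \tr[T\sigma]^{(\alpha-1)/\alpha} \cdot 2^{\frac{\alpha-1}{\alpha}\, D_\alpha^*(\rho\|\sigma)}.
\end{equation*}
This is obtained by applying DPI with the two-outcome measurement $\{T,\one-T\}$, dropping the nonnegative $\one-T$ contribution in the resulting classical R\'enyi sum (valid because $\alpha - 1 > 0$), and rearranging. I apply it with $\rho = \rho_{XB}^{\otimes n}$ and $\sigma = \one_X^{\otimes n}\otimes \sigma_B^{\otimes n}$ for an arbitrary single-letter $\sigma_B \in \cS(B)$. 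The ``counting'' factor satisfies
\begin{equation*}
\tr\bigl[T\,(\one_X^{\otimes n}\otimes\sigma_B^{\otimes n})\bigr] = \sum_{w\in\cW}\tr\!\left[\Bigl(\sum_{\bx:\cE(\bx)=w}\Pi_\bx^{(w)}\Bigr)\sigma_B^{\otimes n}\right] \leq |\cW| = 2^{nR},
\end{equation*}
and the additivity of $D_\alpha^*$ on tensor products gives $D_\alpha^*(\rho_{XB}^{\otimes n}\|\one_X^{\otimes n}\otimes\sigma_B^{\otimes n}) = n\,D_\alpha^*(\rho_{XB}\|\one_X\otimes\sigma_B)$.

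Combining these two estimates, minimizing the resulting bound over $\sigma_B \in \cS(B)$ (which, by definition, replaces $D_\alpha^*(\rho_{XB}\|\one_X\otimes\sigma_B)$ with $-H_\alpha^{*,\uparrow}(X|B)_\rho$), and then taking the infimum over $\alpha > 1$ yields
\begin{equation*}
1 - \Pe(\sC) \;\leq\; \inf_{\alpha>1} 2^{\frac{n(\alpha-1)}{\alpha}(R - H_\alpha^{*,\uparrow}(X|B)_\rho)} = 2^{-nE_\textnormal{sc}^*(R)},
\end{equation*}
so that $sc(n,R) \geq E_\textnormal{sc}^*(R)$. The strict positivity $E_\textnormal{sc}^*(R) > 0$ for $R < H(X|B)_\rho$ follows from the continuity $\lim_{\alpha\downarrow 1}H_\alpha^{*,\uparrow}(X|B)_\rho = H(X|B)_\rho$: for $\alpha$ slightly above $1$, both $\tfrac{\alpha-1}{\alpha}$ and $H_\alpha^{*,\uparrow}(X|B)_\rho - R$ are strictly positive, giving a strictly positive summand in the supremum defining $E_\textnormal{sc}^*(R)$.

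The main subtlety to flag is the single-letterization: the bound naturally involves a minimum over \emph{$n$-copy} operators on $B^n$, but restricting to product $\sigma_B^{\otimes n}$ is a valid weakening in the direction we want (since it can only enlarge the upper bound on $1-\Pe(\sC)$), so no additivity-of-the-conditional-entropy result is needed here. This asymmetry between the two directions is precisely why the same H\"older-type argument cannot supply the matching achievability bound in Theorem~\ref{thm:SC-achiev-bound}, motivating the more delicate variational representations alluded to in the introduction.
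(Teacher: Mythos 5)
Your proposal is correct and follows essentially the same route as the paper's proof: the same block-diagonal test operator $T=\sum_{\bx}\ket{\bx}\bra{\bx}\otimes\Pi_{\bx}^{(\cE(\bx))}$, the same counting bound $\tr[T(\one_X^{\otimes n}\otimes\sigma_B^{\otimes n})]\leq|\cW|$, and the same use of data processing for $Q_\alpha^*$ under the two-outcome measurement followed by additivity to single-letterize (the paper phrases this as a one-shot bound with $\tau_X=\one_X/|\cX|$ and then tensorizes, which differs from your direct $n$-shot treatment only by a normalization constant). Your explicit continuity argument for $E_\textnormal{sc}^*(R)>0$ is a welcome addition the paper leaves implicit.
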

\noindent The proof of Theorem~\ref{thm:SC-converse-bound} is in  Section~\ref{subsec:SC_converse}.

We also obtain an upper bound on $sc(n,R)$, which, together with Theorem~\ref{thm:SC-converse-bound} shows that $E^*_\text{sc}(R)$ is the strong converse exponent in the asymptotic limit.
\begin{restatable}{theo}{SCachievbound}\label{thm:SC-achiev-bound}
For all $R<H(X|B)_{\rho}$,  the finite blocklength strong converse exponent defined in~\eqref{eq:def_sc} satisfies
	\begin{equation}\label{eq:sc-ub}
	 sc(n,R) \leq  E_\textnormal{sc}^*(R) + \frac{c}{m}\log(m+1) + O_m\left(\tfrac{1}{\sqrt{n}}\right)
	\end{equation}
for $c:=\frac{3(|\cH_B|+2)(|\cH_B|-1)}{2}$ and any $m\in \N$, where we denote by $O_m(\frac{1}{\sqrt{n}})$ any term which is bounded by $C_m \frac{1}{\sqrt{n}}$ for all $n$ large enough, for some constant $C_m$ depending only on $m$, $|\cX|$, and $\rho_{XB}$. In particular, taking $n\to\infty$ then $m\to\infty$ yields
\begin{equation}
\limsup_{n\to\infty} sc(n,R) \leq  E_\textnormal{sc}^*(R).
\end{equation}
\end{restatable}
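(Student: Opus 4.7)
The task is achievability: for each rate $R<H(X|B)_\rho$, each $m\in\N$, and all sufficiently large $n$, one must exhibit an $(n,R)$-code whose success probability $1-\Pe$ is at least $2^{-n[E_\textnormal{sc}^*(R)+\frac{c}{m}\log(m+1)+O_m(1/\sqrt n)]}$. The strategy combines block coding at block length $m$ with pinching against an optimizer of the sandwiched R\'enyi variational problem, exploiting the variational representations for log-Euclidean R\'enyi divergences developed in \cite{MO17}.

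\textbf{Step 1: Block structure and pinching.} I would write $n=\ell m+r$ with $0\le r<m$ and view the $n$ copies of $\rho_{XB}$ as $\ell\approx n/m$ independent blocks of $m$ copies, designing a rate-$R$ code block-by-block so that the overall success probability factorizes into $\ell$ per-block contributions. On each block I would apply the pinching channel $\mathcal{P}_{\sigma_B^{\otimes m}}$, where $\sigma_B\in\mathcal{S}(B)$ is an optimizer in the definition of $H_\alpha^{*,\uparrow}(X|B)_\rho$ at the eventually-chosen R\'enyi order $\alpha>1$. The pinching inequality $\rho_{B^m}^{\bx}\leq v_m\,\mathcal{P}_{\sigma_B^{\otimes m}}(\rho_{B^m}^{\bx})$, with $v_m$ polynomial in $m$ of degree $c$ (governed by the number of distinct spectral projectors of $\sigma_B^{\otimes m}$), converts the block analysis to a commuting setting at the cost of a multiplicative factor $v_m\leq (m+1)^c$ in the per-block success probability. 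Averaged over $n$ copies this is precisely the $\frac{c}{m}\log(m+1)$ penalty in the theorem.

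\textbf{Step 2: Per-block random-binning decoder and asymptotics.} In the commuting (pinched) setting I would employ a Csisz\'ar--K\"orner-style random binning encoder, uniformly partitioning $\cX^m$ into $\lceil 2^{mR}\rceil$ bins, together with a decoder whose POVM elements are spectral projectors of a ``tilted'' state supplied by the Mosonyi--Ogawa variational representation~\cite{MO17} of the log-Euclidean R\'enyi divergence at the parameter $\alpha>1$. A Hayashi--Nagaoka-type operator inequality, which simplifies in the commuting setting, then lower bounds the per-block success probability by a log-Euclidean R\'enyi exponent; by the variational formula, this exponent converges to $\frac{1-\alpha}{\alpha}(R-H_\alpha^{*,\uparrow}(X|B)_\rho)$ as $m\to\infty$. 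Multiplying over the $\ell$ blocks, optimizing over $\alpha>1$ to obtain $E_\textnormal{sc}^*(R)$, handling the remainder block of length $r<m$ separately, and absorbing finite-size fluctuations into $O_m(1/\sqrt n)$ yields the finite-$n$ bound~\eqref{eq:sc-ub}. Sending $n\to\infty$ followed by $m\to\infty$ collapses both correction terms, which also establishes $\limsup_n sc(n,R)\leq E_\textnormal{sc}^*(R)$.

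\textbf{Main obstacle.} The central difficulty is Step 2: arranging the decoder so that, after pinching, the per-block analysis produces the log-Euclidean R\'enyi exponent whose $m\to\infty$ limit is precisely the sandwiched conditional R\'enyi entropy appearing in $E_\textnormal{sc}^*(R)$, while keeping the pinching loss polynomially bounded in $m$ with the correct exponent $c$. The Mosonyi--Ogawa variational representation of the sandwiched R\'enyi divergence as a limit of log-Euclidean ones is the tool that bridges these two quantities, but carrying it through the random-binning analysis --- and coordinating the three limits $n\to\infty$, $m\to\infty$, and the $\alpha$-optimization --- requires delicate bookkeeping of the finite-$m$ variational gap and of the $O_m(1/\sqrt n)$ fluctuation term that arises from typical-set-type approximations on each block.
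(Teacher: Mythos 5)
There is a genuine gap at the heart of Step 2: you never explain where the relative-entropy term $D(\sigma_{XB}\|\rho_{XB})$ in the exponent comes from, and the mechanism you propose cannot produce it. A Hayashi--Nagaoka/random-binning analysis applied to the \emph{true} source $\rho_{XB}$ bounds the error probability by (a constant times) $2^{-nE^\downarrow_\textnormal{r}(R)}$, and for $R<H(X|B)_\rho$ this bound is vacuous (the exponent is $\leq 0$), so it gives no nontrivial lower bound on the exponentially small success probability. The paper's proof instead rests on a change-of-measure argument that your plan omits: one picks a \emph{dummy} c-q state $\sigma_{XB}\in\mathcal{S}_\rho(XB)$ with $H^\downarrow_\alpha(X|B)_\sigma\lesssim R$, applies the achievability theorem (Theorem~\ref{theo:large_ach}) to $\sigma_{XB}$ to get a code with success probability close to $1$ \emph{for $\sigma_{XB}$}, and then transfers this to $\rho_{XB}$ at a multiplicative cost $\e^{-a}$ with $a\approx D^\delta_{\text{max}}(\sigma_{XB}^{\otimes n}\|\rho_{XB}^{\otimes n})\approx nD(\sigma_{XB}\|\rho_{XB})$ (Lemma~\ref{lem:dummy-state} combined with Lemma~\ref{lem:$n$-shot-stein-exp}). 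Optimizing over $\sigma_{XB}$ via the variational representation of Proposition~\ref{prop:representation} is what yields $E^\flat_\textnormal{sc}(R)=\min_\sigma\{D(\sigma\|\rho)+|H(X|B)_\sigma-R|^+\}$; the case $H(X|B)_{\sigma}>R$ additionally requires the code-pruning argument of Lemma~\ref{prop:n-shot-F2} to produce the $|\cdot|^+$ term, which your proposal also does not address. Invoking the Mosonyi--Ogawa ``tilted state'' as a \emph{decoder} rather than as the dummy \emph{source} does not substitute for this step.

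Two further points would also need repair. First, pinching by $\sigma_B^{\otimes m}$ for a fixed optimizer $\sigma_B$ is not sufficient to pass from the log-Euclidean to the sandwiched quantity: the conditional entropies $H^{\flat,\uparrow}_\alpha$ and $H^{*,\uparrow}_\alpha$ involve a maximum over \emph{all} states on $B^m$, and to argue that $D^\flat_\alpha$ and $D^*_\alpha$ coincide on the relevant states one must first restrict the maximization to permutation-symmetric states (via a symmetrization/convexity argument) and then pinch by the \emph{universal symmetric state} $\sigma_{\text{u},m}$, which dominates every symmetric state up to the polynomial factor $v_{m,|\cH_B|}$; this is exactly what Lemma~\ref{lem:flat-star-bound} does, and it is also the source of the constant $c$. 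Second, your independent per-$m$-block coding is close to circular: it requires the per-block success exponent at blocklength $m$ to already equal $E^*_\textnormal{sc}(R)$ up to $o_m(1)$, which is the statement being proved. The paper avoids this by building a single joint code on $k\approx n/m$ copies of the pinched state $\rho_m$ (with $k\to\infty$ supplying the $O_m(1/\sqrt{n})$ convergence of Proposition~\ref{prop:SC-achiev-flat-bound}) and lifting it to $\rho_{XB}^{\otimes n}$, so the two limits $k\to\infty$ and $m\to\infty$ play distinct roles.
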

\noindent  The proof of Theorem~\ref{thm:SC-achiev-bound} is in  Section~\ref{subsec:SC_achiev}, along with Proposition~\ref{prop:full-SC-achiev-bound}, a more detailed version of the result with the constants written explicitly. Note that, together, Theorems~\ref{thm:SC-converse-bound} and \ref{thm:SC-achiev-bound} imply the following result.
\begin{coro} \label{cor:exact_sc}
For all $R < H(X|B)_\rho$, the strong converse exponent defined in~\eqref{eq:def_sc} satisfies
\begin{equation}
\lim_{n\to\infty} sc(n,R) = E^*_\text{sc}(R).
\end{equation}
\end{coro}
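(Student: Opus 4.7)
The plan is to deduce Corollary~\ref{cor:exact_sc} as an immediate consequence of the two preceding finite blocklength bounds on $sc(n,R)$, via a standard squeeze argument on the $\liminf$ and $\limsup$. The only subtlety is that Theorem~\ref{thm:SC-achiev-bound} gives a bound with an extra term $\frac{c}{m}\log(m+1)$ depending on a free parameter $m\in\N$, so the upper bound must be extracted by first sending $n\to\infty$ at fixed $m$, then sending $m\to\infty$.

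First I would fix $R<H(X|B)_\rho$ and apply Theorem~\ref{thm:SC-converse-bound} directly: for every $n\in\N$,
\begin{equation*}
sc(n,R)\geq E_\textnormal{sc}^*(R),
\end{equation*}
so taking $\liminf_{n\to\infty}$ yields $\liminf_{n\to\infty} sc(n,R)\geq E_\textnormal{sc}^*(R)$.

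Next I would use Theorem~\ref{thm:SC-achiev-bound}, which states that for each $m\in\N$,
\begin{equation*}
sc(n,R)\leq E_\textnormal{sc}^*(R)+\frac{c}{m}\log(m+1)+O_m\!\left(\tfrac{1}{\sqrt{n}}\right),
\end{equation*}
with the constant in the $O_m$ term depending only on $m$, $|\cX|$, and $\rho_{XB}$. Holding $m$ fixed and taking $\limsup_{n\to\infty}$ kills the $O_m(1/\sqrt{n})$ remainder, leaving
\begin{equation*}
\limsup_{n\to\infty} sc(n,R)\leq E_\textnormal{sc}^*(R)+\frac{c}{m}\log(m+1).
\end{equation*}
The left-hand side does not depend on $m$, so letting $m\to\infty$ (noting $\frac{\log(m+1)}{m}\to 0$) gives $\limsup_{n\to\infty} sc(n,R)\leq E_\textnormal{sc}^*(R)$.

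Combining the two inequalities,
\begin{equation*}
E_\textnormal{sc}^*(R)\leq \liminf_{n\to\infty} sc(n,R)\leq \limsup_{n\to\infty} sc(n,R)\leq E_\textnormal{sc}^*(R),
\end{equation*}
so the limit exists and equals $E_\textnormal{sc}^*(R)$, proving the corollary. There is no real obstacle here since all the work lies in the two theorems being invoked; the only thing to be careful about is the order of limits in $n$ and $m$, which is forced by the explicit dependence of the error term on $m$ in Theorem~\ref{thm:SC-achiev-bound}.
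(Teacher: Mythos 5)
Your proposal is correct and matches the paper's own reasoning exactly: the corollary is stated as an immediate consequence of Theorems~\ref{thm:SC-converse-bound} and \ref{thm:SC-achiev-bound}, with the upper bound obtained by taking $n\to\infty$ first and then $m\to\infty$, just as you do. Nothing further is needed.
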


Lastly, we consider the case where the rate depends on $n$ as $R_n := H(X|B)_\rho + a_n$, where $a_n$ is a \emph{moderate sequence}, that is, a sequence of real numbers  satisfying
\begin{align} \label{eq:a_n}
\begin{split}
&\textnormal{(i)} \;a_n \to 0, \quad\text{as} \quad n\to \infty,\\
&\textnormal{(ii)} \; a_n \sqrt{n} \to \infty, \quad\text{as}\quad n\to \infty.
\end{split}
\end{align}
In this case, we have the following asymptotic result.

\begin{restatable}{theo}{modlarge}\label{theo:mod_large}
Assume that the c-q state $\rho_{XB}$ has strictly positive \emph{conditional information variance} $V(X|B)_\rho$, where  
\begin{equation}
V(X|B)_\rho := V(\rho_{XB}\|\mathds{1}_X\otimes \rho_B )
\end{equation}
 with $V(\rho\|\sigma) :=  \Tr  [ \rho \left( \log \rho - \log \sigma \right)^2  ] - D(\rho\|\sigma)^2$. Then for any sequence $(a_n)_{n\in \mathbb{N}}$ satisfying Eq.~\eqref{eq:a_n}, 
\begin{equation}\label{eq:mod_large0}
\lim_{n\to \infty} \frac{1}{ na_n^2} \log \Pestar( n, R_n ) = -\frac{1}{2 V(X|B)_\rho }
\end{equation}
for $R_n:=H(X|B)_\rho + a_n$.

\end{restatable}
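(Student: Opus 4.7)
The plan is to sandwich the quantity $-\frac{1}{n a_n^2}\log \Pestar(n,R_n)$ between matching lower and upper bounds that both tend to $\frac{1}{2V(X|B)_\rho}$. I would apply Theorem~\ref{theo:large_ach} (random-coding exponent) for the achievability side and Theorem~\ref{theo:sp_SW} (sphere-packing exponent) for the converse side, both evaluated at the blocklength-dependent rate $R_n = H(X|B)_\rho + a_n$. The common technical device is the Taylor expansion
\begin{equation}
H_\alpha^{\downarrow}(X|B)_\rho,\ H_\alpha^{\uparrow}(X|B)_\rho \;=\; H(X|B)_\rho - \tfrac{\alpha-1}{2}\,V(X|B)_\rho + O\bigl((\alpha-1)^2\bigr), \qquad \alpha\to 1.
\end{equation}
The $\downarrow$ case follows from the standard expansion $D_\alpha(\rho\|\sigma) = D(\rho\|\sigma) + \tfrac{\alpha-1}{2}V(\rho\|\sigma) + O((\alpha-1)^2)$; the $\uparrow$ case has the same two leading coefficients because the maximizing $\sigma_B$ in $H_\alpha^{\uparrow}$ equals $\rho_B$ at $\alpha = 1$, so a Danskin/envelope argument equates derivatives there.

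\textbf{Achievability.} In the supremum defining $E_{\textnormal{r}}^{\downarrow}(R_n)$ I would use the test point $\alpha_n := 1 - a_n/V(X|B)_\rho$, which lies in $[\tfrac12, 1]$ once $n$ is large enough. Writing $\beta_n := 2-1/\alpha_n$, one finds $\beta_n - 1 = -a_n/V(X|B)_\rho + O(a_n^2)$, so the Taylor expansion gives $R_n - H_{\beta_n}^{\downarrow}(X|B)_\rho = \tfrac{a_n}{2} + O(a_n^2)$ while $(1-\alpha_n)/\alpha_n = a_n/V(X|B)_\rho + O(a_n^2)$. Their product yields
\begin{equation}
E_{\textnormal{r}}^{\downarrow}(R_n) \;\geq\; \frac{a_n^2}{2V(X|B)_\rho}\bigl(1+o(1)\bigr).
\end{equation}
Theorem~\ref{theo:large_ach} then gives $-\log\Pestar(n,R_n) \geq n E_{\textnormal{r}}^{\downarrow}(R_n) - 2$; since $n a_n^2 \to \infty$ the constant $-2$ is negligible after dividing by $n a_n^2$, establishing $\liminf_{n\to\infty} -\tfrac{1}{na_n^2}\log \Pestar(n,R_n) \geq \tfrac{1}{2V(X|B)_\rho}$.

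\textbf{Converse and main obstacle.} For the upper bound I would apply Theorem~\ref{theo:sp_SW}, which is available once $R_n \in (H(X|B)_\rho, H_0^{\uparrow}(X|B)_\rho)$, i.e.\ for all sufficiently large $n$. The same Taylor expansion, applied to $H_\alpha^{\uparrow}$, shows that the optimizer $\alpha_n^\star$ in $E_{\textnormal{sp}}(R_n)$ satisfies $1-\alpha_n^\star = a_n/V(X|B)_\rho + O(a_n^2)$ and $E_{\textnormal{sp}}(R_n) = \tfrac{a_n^2}{2V(X|B)_\rho}(1+o(1))$; one also checks that the supremum is not attained near $\alpha = 0$ by noting that for $\alpha$ bounded away from $1$ the argument $R_n - H_\alpha^{\uparrow}(X|B)_\rho$ becomes non-positive in the limit. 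The principal difficulty is the remainder $\tfrac12(1+|\partial_R E_{\textnormal{sp}}(r)|_{r=R_n}|)\tfrac{\log n}{n}$ from Theorem~\ref{theo:sp_SW}: after dividing by $a_n^2$ it contributes $O(\log n/(na_n^2))$, which is not automatically $o(1)$ under the bare hypothesis $\sqrt{n}\,a_n\to\infty$. I would close this by exploiting $|\partial_R E_{\textnormal{sp}}(r)|_{r=R_n}| = (1-\alpha_n^\star)/\alpha_n^\star = O(a_n)$ and re-auditing the sphere-packing proof so that the $\log n$ factor is multiplied by $O(a_n)$ rather than $O(1)$, yielding an $O(a_n \log n/n)$ contribution whose quotient by $a_n^2$ is $O(\log n/(na_n)) = o(1)$ since $\sqrt{n}\,a_n \to \infty$ forces $n a_n \gg \sqrt{n} \gg \log n$. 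Combining with the achievability half gives $\limsup_{n\to\infty} -\tfrac{1}{na_n^2}\log \Pestar(n,R_n) \leq \tfrac{1}{2V(X|B)_\rho}$, proving \eqref{eq:mod_large0}.
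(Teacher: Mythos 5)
Your achievability half is essentially the paper's own argument: the test point $\alpha_n = 1 - a_n/V(X|B)_\rho$ is the paper's choice $s_n = a_n/V$ after the substitution $s=(1-\alpha)/\alpha$, and the second-order Taylor expansion is supplied by Proposition~\ref{prop:E0_SW+E0down}\ref{E0_SW-d}--\ref{E0_SW-e}, with the $(1+o(1))$ made rigorous by a uniform bound on the third derivative over $[0,1]$. That direction is fine.

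The converse direction has a genuine gap. You correctly diagnose that the residual $\tfrac12\bigl(1+|\partial_r E_\textnormal{sp}(r)|_{r=R_n}|\bigr)\tfrac{\log n}{n}$ of Theorem~\ref{theo:sp_SW} is not $o(a_n^2)$ in general, but the repair you propose --- ``re-auditing the sphere-packing proof so that the $\log n$ factor is multiplied by $O(a_n)$'' --- is asserted rather than carried out, and it is not what the underlying machinery can deliver. Two concrete obstacles. First, Theorem~\ref{theo:sp_SW} rests on Eq.~\eqref{eq:sharp2} of Proposition~\ref{prop:sharp_Hoeffding}, whose hypothesis \eqref{eq:cond_sharp} demands $\phi_n(\bar r)\geq \nu>0$ uniformly near the operating point; since $\phi_n = E_\textnormal{sp}(R_n)\to 0$ as $R_n\downarrow H(X|B)_\rho$, this hypothesis fails for large $n$, so Theorem~\ref{theo:sp_SW} cannot be invoked at the moving rate $R_n$ with constants $N_0,K$ uniform in $n$. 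Second, the coefficient of $\log n/n$ in Eq.~\eqref{eq:sharp2} is $\tfrac12(1+s_n^\star)\geq\tfrac12$; the ``$1$'' is intrinsic to the Bahadur--Rao-type prefactor and cannot be shrunk to $O(a_n)$. The paper instead returns to the one-shot converse (Proposition~\ref{prop:one-shot_converse}) with the saddle-point state $\sigma_{R_n}^\star$ and applies the \emph{other} branch, Eq.~\eqref{eq:sharp1}, whose only hypothesis is a variance lower bound --- which holds uniformly because $\sigma_{R_n}^\star\to\rho_B$ and $V(X|B)_\rho>0$. That bound trades the troublesome additive $\log n$ for a rate perturbation $\gamma_n=\tfrac{\log n}{2n}+\tfrac{C}{n}=o(a_n)$, absorbed into $E_\textnormal{sp}(H(X|B)_\rho+a_n+\gamma_n)$ and controlled by Proposition~\ref{prop:spCh} (this is where, in effect, the $\log n/n$ really does acquire the factor $s_n^\star=O(a_n)$ you were hoping for), leaving only an additive term $\log(s_n^\star\sqrt n)=\tfrac12\log(n\delta_n^2)+O(1)=o(na_n^2)$. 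Without invoking this alternative bound or supplying a substitute for it, your converse does not close.
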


\begin{restatable}{theo}{modrate} \label{theo:moderate_rate}
Assume that the c-q state $\rho_{XB}$ has $V(X|B)_\rho>0$.
Then for any sequence $(a_n)_{n\in \mathbb{N}}$ satisfying Eq.~\eqref{eq:a_n}, and $\eps_n := \e^{-n a_n^2}$, we have the asymptotic expansion
\begin{equation} \label{eq:moderate_exp-rate}
  R^* (n, \eps_n) = H(X|B)_\rho +\sqrt{2V(X|B)_\rho}  a_n + o(a_n).
\end{equation} 
\end{restatable}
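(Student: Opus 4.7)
The plan is to derive Theorem~\ref{theo:moderate_rate} by inverting the moderate deviation bound on the optimal error probability given in Theorem~\ref{theo:mod_large}. Heuristically, near the Slepian-Wolf limit, Theorem~\ref{theo:mod_large} tells us the optimal error probability behaves like $\exp\!\bigl(-n(R-H(X|B)_\rho)^2/(2V(X|B)_\rho)\bigr)$; solving this expression for $R$ at the error threshold $\eps_n=\e^{-na_n^2}$ gives exactly $R=H(X|B)_\rho+\sqrt{2V(X|B)_\rho}\,a_n$, which is the first-order expansion claimed in \eqref{eq:moderate_exp-rate}. The task is just to make this heuristic rigorous.

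The first step is the simple observation that scaling a moderate sequence by a positive constant preserves the conditions in \eqref{eq:a_n}. Hence, for any $\delta\in(0,1)$, the perturbed sequences $b_n^{\pm}:=(1\pm\delta)\sqrt{2V(X|B)_\rho}\,a_n$ are themselves moderate, and Theorem~\ref{theo:mod_large} applied to them yields
\begin{equation*}
\log\Pestar\!\bigl(n,\,H(X|B)_\rho+b_n^{\pm}\bigr) \;=\; -\frac{n(b_n^{\pm})^2}{2V(X|B)_\rho}+o\bigl(n(b_n^{\pm})^2\bigr) \;=\; -(1\pm\delta)^2\,na_n^2+o(na_n^2).
\end{equation*}

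For the achievability half of the theorem, the sign ``$+$'' gives a rate slightly larger than the conjectured one, at which the error probability is $\e^{-(1+\delta)^2 na_n^2+o(na_n^2)}\le \eps_n=\e^{-na_n^2}$ for all $n$ sufficiently large (since $(1+\delta)^2>1$). Hence $R^*(n,\eps_n)\le H(X|B)_\rho+(1+\delta)\sqrt{2V(X|B)_\rho}\,a_n$ eventually. For the converse half, the sign ``$-$'' gives a rate slightly smaller than the conjectured one, at which the error probability is $\e^{-(1-\delta)^2 na_n^2+o(na_n^2)}>\eps_n$ eventually (since $(1-\delta)^2<1$). Using the monotonicity $\Pestar(n,R')\le\Pestar(n,R)$ for $R'\ge R$ (any $(n,R)$-code embeds into an $(n,R')$-code by padding with unused codewords), I conclude that no $(n,R)$-code of rate $R<H(X|B)_\rho+b_n^-$ can achieve error $\le \eps_n$, and so $R^*(n,\eps_n)\ge H(X|B)_\rho+(1-\delta)\sqrt{2V(X|B)_\rho}\,a_n$ for $n$ large.

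Taking $\limsup$ and $\liminf$ of $(R^*(n,\eps_n)-H(X|B)_\rho)/a_n$ from the two bounds above and then letting $\delta\to0$ yields a true limit equal to $\sqrt{2V(X|B)_\rho}$, which is \eqref{eq:moderate_exp-rate}. There is no genuine obstacle here once Theorem~\ref{theo:mod_large} is in hand: the only care required is verifying that the correction terms $o(na_n^2)$ are dominated by the quadratic gap $(1-(1\mp\delta)^2)na_n^2$ for fixed $\delta>0$, which is automatic from $na_n^2\to\infty$. The real content sits entirely in Theorem~\ref{theo:mod_large}, of which Theorem~\ref{theo:moderate_rate} is essentially a change-of-variables corollary.
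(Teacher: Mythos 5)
Your argument is correct, and it takes a genuinely different route from the paper's. You derive Theorem~\ref{theo:moderate_rate} by inverting Theorem~\ref{theo:mod_large}: since any positive constant multiple of a moderate sequence is again moderate, you may apply that theorem to $b_n^{\pm}=(1\pm\delta)\sqrt{2V(X|B)_\rho}\,a_n$, and together with the monotonicity of $R\mapsto\Pestar(n,R)$ (valid because enlarging $\cW$ can only decrease the optimal error) this squeezes $\bigl(R^*(n,\eps_n)-H(X|B)_\rho\bigr)/a_n$ between $(1-\delta)\sqrt{2V(X|B)_\rho}$ and $(1+\delta)\sqrt{2V(X|B)_\rho}$ for $n$ large; the only analytic point — that the $o(na_n^2)$ corrections are dominated by the fixed gap $\bigl|(1\pm\delta)^2-1\bigr|\,na_n^2$ — is, as you note, automatic from $na_n^2\to\infty$. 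The paper instead proves the theorem independently of Theorem~\ref{theo:mod_large}: it starts from the finite-blocklength sandwich \eqref{n-shot-bounds-rate} for $R^*(n,\eps)$ in terms of the hypothesis-testing relative entropy (Theorem 9 of \cite{TH13}) and then invokes the moderate-deviation expansion of $D^{\eps_n}_\text{H}$ from \cite{CTT2017} (Proposition~\ref{prop:moderate-expansion-entropy}), the only delicate step being to check that the rescaled threshold $\alpha\eps_n=\e^{-nb_n^2}$ corresponds to a moderate sequence $b_n$ with $b_n-a_n=o(a_n)$. Your route is shorter and stays entirely inside the paper once Theorem~\ref{theo:mod_large} is available, while the paper's route is logically independent of the error-exponent result and rests directly on the one-shot rate bounds; both give the same $o(a_n)$ precision. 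One small point worth making explicit: to pass from $\Pestar(n,H(X|B)_\rho+b_n^+)\le\eps_n$ to the existence of a code witnessing $R^*(n,\eps_n)\le H(X|B)_\rho+b_n^+$, either observe that the infimum defining $\Pestar$ is attained, or note that the inequality is eventually strict because $(1+\delta)^2>1$.
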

\noindent The proof of Theorem~\ref{theo:moderate_rate} is in  Section~\ref{sec:mod_rate}.

\section{Preliminaries and Notation} \label{sec:pre}
Throughout this paper, we consider a finite-dimensional Hilbert space $\mathcal{H}$.  
The set of density operators (i.e.~positive semi-definite operators with unit trace) on $\mathcal{H}$ is defined as $\mathcal{S(H)}$.
The {quantum systems}, denoted by capital letter (e.g.~$A$, $B$), are modeled by finite-dimensional Hilbert spaces (e.g.~$\mathcal{H}_A, \mathcal{H}_B$);  $n$ copies of a system $A$ is denoted by $A^{n}$, and is modeled by the $n$-fold tensor product of the Hilbert spaces, $\mathcal{H}_{A^n} = \mathcal{H}_A^{\otimes n}$.
For $\rho,\sigma\in\mathcal{S(H)}$, we denote by $\rho\ll \sigma$ if the support of $\rho$ is contained in the support of $\sigma$. The identity operator on $\mathcal{H}$ is denoted by $\mathds{1}_\mathcal{H}$. The subscript will be removed if no confusion is possible.
We use $\Tr\left[\cdot \right]$ as the standard trace function.
For a bipartite state $\rho_{AB} \in \mathcal{S}(AB)$, $\Tr_B\left[ \rho_{AB} \right]$ denotes the partial trace with respect to system $B$.
We denote by $|t|^+ := \max\{0,t\}$.
The indicator function $\mathbf{1}_{\{A\}}$ is defined as follows: $\mathbf{1}_{\{A\}} = 1$ if the event $A$ is true; otherwise $\mathbf{1}_{\{A\}} = 0$.

For a positive semi-definite operator $\bm{X}$ whose spectral decomposition is $\bm{X} = \sum_{i} a_i P_i$, where $(a_i)_i$ and $(P_i)_i$ are the eigenvalues and eigenprojections of $\bm{X}$, its power is defined as: $\bm{X}^p := \sum_{i:a_i\neq 0} a_i^p P_i$.
In particular, $\bm{X}^0$ denotes the projection onto $\texttt{supp}(\bm{X})$, where we use $\texttt{supp}(A)$ to denote the support of the operator $A$.
Further, $A\perp B$ means $\texttt{supp}(A) \cap \texttt{supp}(B) = \emptyset$.
Additionally, we define the pinching map with respect to $\bm{X}$ by $\cP_{\bm{X}}(A) = \sum_i P_i A P_i$.
The $\exp$ and $\log$ are performed on base $2$ throughout this paper.

\subsection{Entropic Quantities} 
For any pair of density operators $\rho$ and $\sigma$, we define the {quantum relative entropy}, Petz's quantum R\'enyi divergence \cite{Pet86}, sandwiched R\'enyi divergence \cite{MDS+13, WWY14}, and the {log-Euclidean R\'enyi divergence} \cite{ON00,MO17}, respectively, as follows: 
\begin{align}
D(\rho\|\sigma) &:=  \Tr \left[ \rho \left( \log \rho - \log \sigma \right) \right], \label{eq:relative}\\
D_\alpha(\rho\|\sigma) &:= \frac{1}{\alpha-1} \log Q_\alpha(\rho\|\sigma) , \quad
Q_\alpha(\rho\|\sigma) := \Tr \left[ \rho^\alpha \sigma^{1-\alpha} \right]; \label{eq:Petz}\\
D^*_\alpha(\rho\|\sigma) &:= \frac{1}{\alpha-1} \log Q^*_\alpha(\rho\|\sigma), \quad
Q^*_\alpha(\rho\|\sigma) := \Tr \left[ \left( \rho^{\frac12} \sigma^{\frac{1-\alpha}{\alpha}} \rho^{\frac12} \right)^\alpha \right]; \label{eq:sandwich} \\
D^\flat_\alpha(\rho\|\sigma) &:= \frac{1}{\alpha-1} \log Q^\flat_\alpha(\rho\|\sigma), \quad
Q^\flat_\alpha(\rho\|\sigma) := \Tr \left[\mathrm{e}^{\alpha \log \rho + (1-\alpha) \log \sigma}\right]. \label{eq:chaotic}
\end{align}
We define the {quantum relative entropy variance}~\cite{TH13, Li14} by
\begin{align}
&V(\rho\|\sigma) :=  \Tr \left[ \rho \left( \log \rho - \log \sigma \right)^2 \right] - D(\rho\|\sigma)^2. \label{eq:V}\end{align}
The above quantity is non-negative. Further, it follows that
\begin{align} \label{eq:positive_D}
V(\rho\|\sigma) > 0 \quad \text{implies} \quad D(\rho\|\sigma )> 0.
\end{align}
For $\rho_{AB} \in \mathcal{S}(AB)$, $\alpha\geq 0$ and $t= \{\,\}, \{\flat\}$, or $\{*\}$, the \emph{quantum conditional R\'enyi entropies} are given by
\begin{align}
\begin{split}\label{eq:cond_ent}
H_{\alpha}^{t,\uparrow} (A|B)_{\rho} &:= \max_{\sigma_B \in \mathcal{S}(B)} - D_\alpha^{t}\left( \rho_{AB} \| \mathds{1}_A \otimes \sigma_{B} \right),   \\
H_{\alpha}^{t,\downarrow} (A|B)_{\rho} &:= - D_\alpha^{t}\left( \rho_{AB} \| \mathds{1}_A \otimes \rho_{B} \right).
\end{split}
\end{align}
When $\alpha=1$ and $t= \{\,\}, \{\flat\}$, or $\{*\}$ in Eq.~\eqref{eq:cond_ent}, both quantities coincide with the usual \emph{quantum conditional entropy}:
\begin{align}
H_{1}^{t,\uparrow} (A|B)_{\rho} = H_{1}^{t,\downarrow} (A|B)_{\rho} = H(A|B)_\rho := H(AB)_\rho - H(B)_\rho,
\end{align}
where $H(A)_\rho := -\Tr[\rho_A\log \rho_A]$ denotes the \emph{von Neumann entropy}.
Similarly, for $\rho_{AB} \in \mathcal{S}(AB)$, we define the \emph{conditional information variance}:
\begin{align}
	V(A|B)_\rho := V(\rho_{AB}\|\mathds{1}_A\otimes \rho_B ). \label{eq:V_cond}
\end{align}
It is not hard to verify from Eq.~\eqref{eq:positive_D} that
\begin{align} \label{eq:positive_C}
	V(A|B)_\rho > 0 \quad \text{implies} \quad H(A|B)_\rho >0.
\end{align}

\begin{lemm}[{\cite{Lie73}, \cite[Lemma III.3, Lemma III.11, Theorem III.14, Corollary III.25]{MO17}}, {\cite[Corollary 2.2]{Hia16}}] \label{lemma:chaotic}
	Let $\rho,\sigma \in \mathcal{S(H)}$. Then, 
	\begin{align}
		&\alpha \mapsto \log Q_\alpha(\rho\|\sigma)
		\text{ and } \alpha \mapsto \log Q^\flat_\alpha(\rho\|\sigma) \; \text{ are convex on} \; (0,1); \label{eq:chaotic1}\\
	&\alpha \mapsto D_\alpha\left( \rho \|\sigma\right) \text{ is continuous and monotone increasing on } [0,1]. \label{eq:chaotic5}
	\end{align}
	Moreover\footnote{It was shown in \cite[Lemma III.22]{MO17} that the map $\sigma \mapsto D_\alpha(\rho\|\sigma)$ is lower semi-continuous on $\mathcal{S(H)}$ for all $\alpha\in(0,1)$. The argument can be extended to the range $\alpha\in[0,1]$ by the same method in \cite[Lemma III.22]{MO17}.},
	\begin{align}
	&\forall \alpha\in(0,1), \quad (\rho,\sigma) \mapsto  Q^\flat_\alpha(\rho\|\sigma) \; \text{ is jointly concave on } \mathcal{S(H)}\times \mathcal{S(H)}; \label{eq:chaotic3}\\
	&\forall \alpha\in[0,1], \quad \sigma \mapsto D_\alpha(\rho\|\sigma) \; \text{ is convex and lower semi-continuous on } \mathcal{S(H)}. \label{eq:chaotic4}
	\end{align}
\end{lemm}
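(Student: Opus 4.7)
The plan is to assemble each of the four assertions from classical results in convex and matrix analysis, since the lemma is essentially a compendium of standard facts about the Petz and log-Euclidean R\'enyi quantities. I would treat each item in turn, indicating which ingredient supplies it. For the convexity of $\alpha\mapsto\log Q_\alpha(\rho\|\sigma)$ on $(0,1)$, I would invoke the operator H\"older inequality: for $\alpha=\theta\alpha_0+(1-\theta)\alpha_1$ with $\theta\in(0,1)$, one obtains $Q_\alpha(\rho\|\sigma)\le Q_{\alpha_0}(\rho\|\sigma)^\theta\,Q_{\alpha_1}(\rho\|\sigma)^{1-\theta}$, and taking logarithms yields convexity. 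For $\alpha\mapsto\log Q^\flat_\alpha(\rho\|\sigma)=\log\Tr\,\mathrm{e}^{\alpha\log\rho+(1-\alpha)\log\sigma}$, the key observation is that $H(\alpha):=\alpha\log\rho+(1-\alpha)\log\sigma$ is affine in $\alpha$, so the claim reduces to the classical fact that $t\mapsto\log\Tr\,\mathrm{e}^{A+tB}$ is convex for Hermitian $A,B$, a direct consequence of the Peierls--Bogoliubov/Gibbs variational principle.

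For monotonicity of $\alpha\mapsto D_\alpha(\rho\|\sigma)$ on $[0,1]$, I would use the standard R\'enyi divergence ordering (which is itself a consequence of the convexity of $\log Q_\alpha$ and $\log Q_1=0$). Continuity on the open interval $(0,1)$ is immediate from the analyticity of $\alpha\mapsto\Tr[\rho^\alpha\sigma^{1-\alpha}]$ in $\alpha$; at the endpoints, continuity is obtained from the limits $\lim_{\alpha\to 1^-}D_\alpha=D$ and $\lim_{\alpha\to 0^+}D_\alpha=-\log\Tr[\rho^0\sigma]=D_0$, which follow from L'H\^opital's rule applied to $(\alpha-1)^{-1}\log Q_\alpha$ after restricting to $\mathrm{supp}(\rho)$.

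The joint concavity of $(\rho,\sigma)\mapsto Q^\flat_\alpha(\rho\|\sigma)$ for $\alpha\in(0,1)$ is the most subtle piece; for this I would appeal directly to Hiai's Corollary 2.2 in \cite{Hia16}, which establishes it through a BKM-R\'enyi interpolation/Lie--Trotter argument. For the convexity of $\sigma\mapsto D_\alpha(\rho\|\sigma)$ on $\mathcal{S(H)}$, when $\alpha\in(0,1)$ I would invoke Lieb's joint concavity theorem \cite{Lie73} for $(\rho,\sigma)\mapsto\Tr[\rho^\alpha\sigma^{1-\alpha}]$; fixing $\rho$ and then composing with the monotone-concave map $\log$ and with $\frac{1}{\alpha-1}<0$ yields convexity of $D_\alpha(\rho\|\cdot)$. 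At $\alpha=0$, $D_0(\rho\|\sigma)=-\log\Tr[\rho^0\sigma]$ is $-\log$ of a linear function of $\sigma$, hence convex; at $\alpha=1$, convexity is just joint convexity of the relative entropy. For the lower semi-continuity assertion I would follow the argument of \cite[Lem.~III.22]{MO17} essentially verbatim: continuity already holds on sequences $\sigma_n\to\sigma$ along which $\mathrm{supp}(\sigma_n)$ stabilizes and contains $\mathrm{supp}(\rho)$, while any loss of support either forces $D_\alpha\to+\infty$ (compatible with lower semi-continuity) or is controlled by restriction to a common support subspace on which the trace expression is continuous; the extension of that lemma's argument to the endpoints $\alpha=0,1$ is immediate using $D_0$ and $D_1$ in place of $D_\alpha$.

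The main obstacles I anticipate are, firstly, the joint concavity of $Q^\flat_\alpha$, since unlike Petz's $Q_\alpha$ it is not covered by Lieb's original concavity theorem and requires Hiai's refined analysis; and secondly, the careful handling of support changes in the lower semi-continuity statement, where a naive argument based only on continuity of the spectral functional calculus breaks down. Both are sidestepped in my plan by quoting the already-developed machinery of \cite{Hia16} and \cite{MO17}, so the remainder of the proof is an organized invocation of named results with short verifications at the boundary values $\alpha\in\{0,1\}$.
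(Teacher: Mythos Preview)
Your proposal is correct and aligns with the paper's own treatment: the lemma is presented there purely as a compendium of known facts, with no proof beyond the citations in its header (Lieb for the Petz concavity, Hiai for the log-Euclidean joint concavity, and several lemmas from Mosonyi--Ogawa for convexity, monotonicity, continuity, and lower semi-continuity). Your write-up is in fact more detailed than what the paper gives, since you sketch the mechanism behind each cited result rather than simply pointing to it, but the underlying approach---assemble the statement item-by-item from the named references---is the same.
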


\begin{prop}[Properties of $\alpha$-R\'enyi Conditional Entropy] \label{prop:H}
	Given any classical-quantum state $\rho_{XB} \in \mathcal{S}(XB)$, the following holds:
	\begin{enumerate}[(a)]
		\item\label{H-a} The map $\alpha \mapsto H_\alpha^\uparrow (X|B)_\rho $ is continuous and monotonically decreasing on $[0,1]$.
		
		\item\label{H-b} The map $\alpha \mapsto \frac{1-\alpha}{\alpha} H_\alpha^\uparrow(X|B)_\rho$ is concave on $(0,1)$.
	\end{enumerate}
\end{prop}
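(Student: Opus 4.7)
My plan splits cleanly along the two parts, with the substance concentrated in (b).

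For the monotonicity claim in (a), I would appeal directly to the monotonicity of $\alpha \mapsto D_\alpha$ on $[0,1]$ recorded in Lemma~\ref{lemma:chaotic}, eq.~\eqref{eq:chaotic5}: for $0 \leq \alpha_1 \leq \alpha_2 \leq 1$ and any $\sigma_B \in \mathcal{S}(B)$, one has $-D_{\alpha_1}(\rho_{XB}\|\mathds{1}_X\otimes \sigma_B) \geq -D_{\alpha_2}(\rho_{XB}\|\mathds{1}_X\otimes \sigma_B)$; evaluating at an optimizer for $\alpha_2$ gives $H_{\alpha_1}^\uparrow \geq H_{\alpha_2}^\uparrow$. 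For continuity on $[0,1]$, I would exploit the classical-quantum structure: applying H\"older's inequality to the $\sigma_B$-maximization produces a Sibson-type identity
\begin{equation*}
H_\alpha^\uparrow(X|B)_\rho = \frac{\alpha}{1-\alpha}\log \Tr\left[\left(\sum_x (p(x)\rho_B^x)^\alpha\right)^{1/\alpha}\right], \qquad \alpha \in (0,1),
\end{equation*}
whose right-hand side is manifestly continuous on $(0,1)$. At $\alpha = 1$ L'H\^opital recovers $H(X|B)_\rho$, and as $\alpha \to 0^+$ a dominant-eigenvalue analysis of the $1/\alpha$-power yields $\log \bigl\|\sum_{x: p(x)>0}\Pi_{\rho_B^x}\bigr\|_\infty$, matching $H_0^\uparrow(X|B)_\rho$.

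For part (b), the Sibson identity rewrites the target as
\begin{equation*}
\frac{1-\alpha}{\alpha}H_\alpha^\uparrow(X|B)_\rho = \log \Tr\left[\left(\sum_x (p(x)\rho_B^x)^\alpha\right)^{1/\alpha}\right].
\end{equation*}
My plan is to obtain a variational representation of this quantity in which $\alpha$ appears \emph{affinely}, since the concavity of an extremum over a family of affine functions is automatic. The natural candidate uses the Gibbs variational principle for the log-Euclidean R\'enyi divergence $D^\flat_\alpha$ imported from \cite{MO17}, namely
\begin{equation*}
\log Q^\flat_\alpha(\rho\|\sigma) = \max_{\omega \in \mathcal{S}(\mathcal{H})}\bigl\{-\alpha D(\omega\|\rho) - (1-\alpha) D(\omega\|\sigma)\bigr\},
\end{equation*}
combined with the relation between the Petz and log-Euclidean quantities once the optimization over $\sigma_B$ has been absorbed. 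An alternative route is to attack the displayed expression directly via operator concavity of $Y \mapsto Y^\alpha$ on $[0,\infty)$ for $\alpha \in (0,1)$ together with Lieb-type concavity theorems for trace functionals.

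The main obstacle is clearly part (b). Part (a) is, in essence, a formal consequence of already-established properties of $D_\alpha$ and of the closed-form Sibson optimizer. By contrast, proving concavity in the scalar parameter $\alpha$ for a trace functional in which $\alpha$ sits inside \emph{operator} powers requires a genuinely operator-theoretic argument, and the bookkeeping needed to link the Petz quantity entering the statement with the log-Euclidean variational formula (whose natural variable is affine in $\alpha$) is where I expect the proof to demand the most care.
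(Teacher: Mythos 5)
Your treatment of part (a) is viable but diverges from the paper's: for continuity the paper does not pass through the Sibson closed form at all. It fixes a sequence $\alpha_k\to\alpha_\infty$ and combines the lower semi-continuity of $\sigma\mapsto D_{\alpha}(\rho_{XB}\|\mathds{1}_X\otimes\sigma)$ with the continuity of $\alpha\mapsto D_\alpha$ at fixed $\sigma$ (Lemma~\ref{lemma:chaotic}) to get $\limsup_k H^\uparrow_{\alpha_k}\leq H^\uparrow_{\alpha_\infty}$, and then evaluates at the full-rank perturbations $\sigma_k=(1-\eps_k)\sigma_\infty^\star+\eps_k\mathds{1}/d$ of the optimizer to get the matching $\liminf$ bound. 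Your route instead leans on the explicit formula plus two endpoint limits, which is fine but shifts the burden onto justifying the Sibson identity and the $\alpha\to0^+$ analysis; the monotonicity argument is the same as the paper's.

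For part (b) there is a genuine gap: you have correctly reduced the claim to the concavity in $\alpha$ of $\log\Tr[(\sum_x(p(x)\rho_B^x)^\alpha)^{1/\alpha}]$, but neither of your two candidate routes supplies the missing ingredient. The paper's proof consists of exactly two steps: concavity of $s\mapsto E_0(s)=-sH^{\uparrow}_{1/(1+s)}(X|B)_\rho$ (item (c) of Proposition~\ref{prop:E0_SW+E0down}) followed by the substitution $\alpha=1/(1+s)$, and the concavity of $E_0$ is obtained from the Bourin--Lee matrix inequality (Lemma~\ref{lemm:BL16}): joint log-convexity of $(p,t)\mapsto\|(\sum_iZ_i^*A_i^{t/p}Z_i)^{\gamma p}\|$, applied with $A_x=p(x)\rho_B^x$ to give log-convexity of $p\mapsto\Tr[(\sum_x(p(x)\rho_B^x)^{1/p})^{p}]$. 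Your first route cannot replace this: the Gibbs variational formula expresses $\log Q^\flat_\alpha$ as a \emph{maximum} of functions affine in $\alpha$, which certifies \emph{convexity} in $\alpha$ (this is precisely \eqref{eq:chaotic1}), i.e.\ the wrong direction, and it concerns $D^\flat_\alpha$ at fixed $\sigma$ rather than the $\sigma_B$-optimized Petz quantity in which the troublesome operator power $1/\alpha$ actually sits. Your second route (operator concavity of $Y\mapsto Y^\alpha$, Lieb-type theorems) yields concavity in the \emph{matrix} arguments, not in the scalar exponent, so it does not bear on the claim. Finally, a caution you should take seriously given that you flagged (b) as the hard part: even once concavity of $E_0$ in $s$ is in hand, transferring it to concavity in $\alpha$ through the nonlinear reparametrization $\alpha=1/(1+s)$ is not automatic, and the statement as literally written deserves scrutiny --- for a uniform source with trivial side information one has $H^\uparrow_\alpha(X|B)_\rho\equiv\log|\mathcal{X}|$, and $\frac{1-\alpha}{\alpha}\log|\mathcal{X}|$ is convex on $(0,1)$. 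So whichever route you take, you must identify exactly which concavity statement (in $s$ or in $\alpha$) is both provable and sufficient for the downstream use in Proposition~\ref{prop:E_SW}.
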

\noindent The proof is provided in Appendix~\ref{app:H}.

Given two states $\rho$ and $\sigma$, one can define an associated binary hypothesis testing problem of determining which of the two states was given via a binary POVM. Such a POVM is described by an operator $Q$ (associated, say, with the outcome $\rho$) such that $0\leq Q\leq \one$, called the \emph{test}. Two types of errors are possible;  the probability of measuring $\rho$ and reporting the outcome $\sigma$ is given by $\tr[(\one-Q)\rho]$ and called the \emph{type-I} error, while the probability of measuring $\sigma$ and reporting the outcome $\rho$ is given by $\tr[Q\sigma]$ and is called the \emph{type-II} error.
The hypothesis testing relative entropy (e.g.~as defined in~\cite{WR13}) is defined by
\begin{equation} \label{eq:def-DH}
D^\eps_\text{H}(\rho\|\sigma) = -\log \inf_{\substack{Q: 0\leq Q \leq \one \\ \tr[(\one - Q)\rho]\leq \eps}} \tr[Q\sigma]
\end{equation}
and characterizes the minimum type-II error incurred via a test which has type-I error at most $\eps$. The hypothesis testing relative entropy satisfies the data-processing inequality
	\begin{equation} \label{eq:DH_DPI}
	D^\eps_\text{H}(\rho\|\sigma) \geq D_\text{H}^\eps (\Phi(\rho)\|\Phi(\sigma))
	\end{equation}
for any completely positive map $\Phi$~\cite{WR13}.
This quantity has an interpretation as a relative entropy as it satisfies the following asymptotic equipartition property:
\begin{equation}
\lim_{n\to \infty} \frac{1}{n}D^\eps_\text{H}(\rho^{\otimes n}\|\sigma^{\otimes n})  = D(\rho\|\sigma)
\end{equation}
which was proven in two steps, by~\cite{HP91} and \cite{NO00}.

We can consider a related quantity, $\widehat{\alpha}_\mu(\cdot\|\cdot)$ which denotes the minimum type-I error such that the type-II error does not exceed $\mu$. That is,
	\begin{equation} \label{eq:def_hat-alpha}
	\widehat \alpha_\mu(\rho\|\sigma) =  \inf_{\substack{T: 0\leq T \leq \one \\ \tr[T\sigma]\leq \mu}} \tr[(\one - T)\rho]  \equiv \exp(-D^\mu_\text{H}(\sigma\|\rho)).
	\end{equation}
	By~\eqref{eq:DH_DPI}, for any completely positive map $\Phi$ we have
	\begin{equation} \label{eq:hat_alpha_DPI}
	\widehat \alpha_\mu(\rho\|\sigma) = \exp(-D^\mu_H(\sigma\|\rho)) \leq \exp(-D_H^\mu (\Phi(\sigma)\|\Phi(\rho)))=\widehat \alpha_\mu(\Phi(\rho)\|\Phi(\sigma)).
	\end{equation}

We also consider the so-called max-relative entropy, given by
\begin{equation} \label{eq:def_Dmax}
D_\text{max}(\rho\|\sigma) := \inf\{ \gamma: \rho\leq 2^\gamma \sigma \}.
\end{equation} 
In establishing the exact strong converse exponent, we will employ a smoothed variant of this quantity,
\begin{equation} \label{eq:Dmax_smoothed}
D^\delta_{\text{max}}(\rho\|\sigma) = \min_{\bar \rho \in B_\delta(\rho)} D_\text{max}(\bar \rho \|\sigma),
\end{equation}
where 
\begin{equation}
 \B_\delta(\rho) = \{ \bar \rho \in \cD(\cH): d_\text{op}(\bar \rho, \rho) \leq \delta\}
 \end{equation} is the $\eps$-ball in the \emph{distance of optimal purifications}, $d_\text{op}$, defined by 
\begin{equation} \label{eq:def_dist_purifications}
d_\text{op}(\rho,\sigma) = \min_{\psi_\rho, \psi_\sigma} \frac{1}{2}\| \ket{\psi_\rho}\bra{\psi_\rho} - \ket{\psi_\sigma}\bra{\psi_\sigma}\|_1
\end{equation}
where the minimum is over purifications $\psi_\rho$ of $\rho$ and $\psi_\sigma$ of $\sigma$. By equation (2) of \cite{DMHB11}, the distance $d_\text{op}$ satisfies
\begin{equation}
\frac{1}{2}d_\text{op}(\rho,\sigma)^2 \leq \frac{1}{2}\|\rho-\sigma\|_1\leq d_\text{op}(\rho,\sigma). \label{eq:d_op_vs_td}
\end{equation}
It was shown in~\cite{DMHB11} that $D_{\text{max}}^\delta(\rho\|\sigma)$ satisfies an asymptotic equipartition property. In fact, Theorem 14 of~\cite{DMHB11} gives finite $n$ upper and lower bounds on $\frac{1}{n}D^\delta_{\text{max}}(\rho^{\otimes n}\|\sigma^{\otimes n})$ which converge to $D(\rho\|\sigma)$. We will only need the upper bound, namely
\begin{equation} \label{eq:Ddelta-AEP-UB}
\frac{1}{n}D^\delta_{\text{max}}(\rho^{\otimes n}\|\sigma^{\otimes n}) \leq D(\rho\|\sigma) + \frac{1}{\sqrt{n}}4 \sqrt{2}(\log \eta) \log \frac{1}{1 - \sqrt{1-\eps^2}},
\end{equation}
where $\eta := 1 + \tr [\rho^{3/2}\sigma^{-1/2} + \rho^{1/2} \sigma^{1/2}]$.

The smoothed max-relative entropy satisfies the following simple but useful relation (which is a one-shot analog of Lemma V.7 of \cite{MO17}).
\begin{lemm}	\label{lem:$n$-shot-stein-exp}
	For $\delta \geq 0$, if $a\geq D_\textnormal{max}^\delta(\sigma\|\rho)$ then
	\begin{equation}
	\tr[(\sigma - \e^a\rho)_+] \leq 2\delta.
	\end{equation}
\end{lemm}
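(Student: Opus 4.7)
The plan is to unfold the definition of $D_\textnormal{max}^\delta(\sigma\|\rho)$, replace $\sigma$ by a nearby state that is actually dominated (in the operator order) by $\e^a\rho$, and then control the positive part of $\sigma - \e^a\rho$ by the trace distance to this nearby state.

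First, I would apply the definition \eqref{eq:Dmax_smoothed} of $D_\textnormal{max}^\delta$: since $a \geq D_\textnormal{max}^\delta(\sigma\|\rho)$, the minimization over $\B_\delta(\sigma)$ is attained at some $\bar\sigma$ with $d_\textnormal{op}(\bar\sigma,\sigma)\leq \delta$ and $D_\textnormal{max}(\bar\sigma\|\rho)\leq a$, so by \eqref{eq:def_Dmax} we get the operator inequality $\bar\sigma \leq \e^a \rho$, i.e. $\bar\sigma - \e^a\rho \leq 0$. Next I would write
\begin{equation}
\sigma - \e^a\rho \;=\; (\sigma - \bar\sigma) \;+\; (\bar\sigma - \e^a\rho),
\end{equation}
and use the variational characterization $\tr[(X)_+] = \sup_{0\leq P\leq \one}\tr[PX]$. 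For any such test $P$, the second term contributes $\tr[P(\bar\sigma - \e^a\rho)]\leq 0$ because $P\geq 0$ and $\bar\sigma - \e^a\rho\leq 0$, so
\begin{equation}
\tr[(\sigma - \e^a\rho)_+] \;\leq\; \sup_{0\leq P\leq \one}\tr[P(\sigma-\bar\sigma)] \;=\; \tr[(\sigma-\bar\sigma)_+] \;\leq\; \|\sigma-\bar\sigma\|_1.
\end{equation}

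Finally, I would invoke the distance comparison \eqref{eq:d_op_vs_td}, which gives $\tfrac12\|\sigma-\bar\sigma\|_1 \leq d_\textnormal{op}(\sigma,\bar\sigma)\leq \delta$, hence $\|\sigma-\bar\sigma\|_1 \leq 2\delta$, yielding the claimed bound. There is no real obstacle here; the only things to be careful about are (i) that one should use the operator-order monotonicity through the variational formula for $\tr[(\cdot)_+]$ rather than trying to take the positive part of the operator inequality directly, and (ii) that one should use the crude bound $\tr[(X)_+]\leq \|X\|_1$ rather than the sharper $\tr[X_+]=\tfrac12(\|X\|_1+\tr X)$, since $\bar\sigma$ from \eqref{eq:Dmax_smoothed} may be sub-normalized and so $\tr(\sigma-\bar\sigma)$ need not vanish — this accounts for the factor of $2$ in $2\delta$.
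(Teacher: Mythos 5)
Your proof is correct and follows essentially the same route as the paper's: both extract a state $\bar\sigma$ within $d_\textnormal{op}$-distance $\delta$ satisfying $\bar\sigma\leq\e^a\rho$, split $\sigma-\e^a\rho=(\sigma-\bar\sigma)+(\bar\sigma-\e^a\rho)$, drop the non-positive term against the optimal test, and conclude via $\|\sigma-\bar\sigma\|_1\leq 2d_\textnormal{op}(\sigma,\bar\sigma)\leq 2\delta$. The only cosmetic difference is that you phrase the positive-part bound variationally over all $0\leq P\leq\one$ while the paper fixes the projection onto the positive part of $\sigma-\e^a\rho$; these are the same step.
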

\begin{proof}[Proof of Lemma~\ref{lem:$n$-shot-stein-exp}]	
	By the definition \eqref{eq:def_Dmax},
	for any $a\geq D_\text{max}(\sigma\|\rho)$, we have $\sigma - \e^a\rho \leq 0$, and therefore $\tr[(\sigma-\e^a\rho)_+]=0$, which proves the result for $\delta=0$. For $\delta >0$ and $a\geq D_\text{max}^\delta(\sigma\|\rho)$ there exists a density matrix $\tilde \sigma$ with $\tilde \sigma - \e^a\rho \leq 0$ and $d_\text{op}(\tilde \sigma,\sigma)\leq \delta$. Setting $P_+ = \{ \sigma \leq \e^a \rho\}$, we have
	\begin{equation}
	\tr[P_+(\sigma - \e^a \rho)] = \tr[ P_+ (\sigma-\tilde \sigma)] + \tr[ P_+ (\tilde\sigma-\e^a \rho)].
	\end{equation}
	Since $0\leq P_+ \leq \one$, we have $\tr[ P_+ (\tilde\sigma-\e^a \rho)] \leq \tr[(\tilde\sigma-\e^a \rho)_+]\leq 0$. Thus,
	\begin{equation}
	\tr[P_+(\sigma - \e^a \rho)] \leq \tr[ P_+ (\sigma-\tilde \sigma)] \leq \|\sigma-\tilde \sigma\|_1 \leq 2d_\text{op}(\sigma,\tilde\sigma)\leq 2\delta
	\end{equation}
	using \eqref{eq:d_op_vs_td} in the second to last inequality.
\end{proof}

\subsection{Error Exponent Function} \label{subsec:exponent}

For $t= \{\,\}, \{*\}$, or $\{\flat\}$, we define
\begin{align} 
E_\text{r}^t (R)\equiv E_\text{r}^t (\rho_{XB},R) &:= \max_{0\leq s\leq 1}\left\{ E_0^t (s ) + sR\right\}; \label{eq:gallager_r2}\\
E_\text{sp}^t (R) \equiv E_\text{sp}^t (\rho_{XB},R)  &:= \sup_{s\geq 0}\left\{ E_0^t(s ) + sR\right\}; \label{eq:gallager_sp2}\\
E_\text{sc}^t (R)\equiv E_\text{sc}^t (\rho_{XB},R) &:= \sup_{-1 < s< 0}\left\{ {E}^t_0 (s ) + sR\right\}; \label{eq:gallager_sc2}  \\
E_0^t(s ) \equiv E_0^t(\rho_{XB}, s ) &:= - s H_{\frac{1}{1+s}}^{t,\uparrow} (X|B)_{\rho}, \label{eq:E0SW1}
\end{align}
omitting the dependence on $\rho_{XB}$ except where necessary. For $t = \{ \, \}$, i.e.~the Petz's R\'enyi conditional entropy, one has 
\begin{align}
E_0(s) = -\log \Tr \left[ \left( \Tr_{X} \rho_{XB}^{\frac{1}{1+s}} \right)^{1+s} \right]
\label{eq:E0SW2}
\end{align}
by quantum Sibson's identity \cite{SW12}. 
We also define another version of exponent function via $H_\alpha^{\downarrow}$:
\begin{align}
{E}_\text{r}^\downarrow(R) &:= \max_{0\leq s\leq 1} \left\{  {E}_0^\downarrow (s ) + s R \right\}, \label{eq:E0_bar} \\
{E}_0^\downarrow (s ) &:= -s H_{1-s}^\downarrow (X|B)_\rho. \label{eq:E0SW2_bar}
\end{align}
Note that \cite[Proposition III.18]{MO17}
\begin{align}
&D^*_\alpha(\cdot\|\cdot) \leq D_\alpha(\cdot\|\cdot) \leq D^\flat_\alpha(\cdot\|\cdot), \quad \alpha\in[0,1) \\
&D^\flat_\alpha(\cdot\|\cdot) \leq D^*_\alpha(\cdot\|\cdot) \leq D_\alpha(\cdot\|\cdot), \quad \alpha\in(1,+\infty],
\end{align}
which implies that
\begin{align}
&{E}_\text{sp} (R)  \leq {E}_\text{sp}^{\flat} (R) \\
&{E}_\text{r} (R)  \leq  {E}_\text{r}^{\flat} (R), \\
&{E}_\text{sc} (R)  \leq  {E}_\text{sc}^{\flat} (R).
\end{align}
Further, $H_\alpha^\uparrow (X|B)_\rho \leq H_{2-\frac{1}{\alpha}}^\downarrow (X|B)_\rho$ for $\alpha\in[1/2,+\infty]$ \cite[Corollary 4]{TBH14}, \cite[Corollary 5.3]{Tom16}.
For $R\in[H^\uparrow_1(X|B)_\rho, H^\uparrow_{1/2}(X|B)_\rho]$, together with Proposition~\ref{prop:E_SW}-\ref{E_SW-a} below, we have
\begin{align}
{E}^{\downarrow}_\text{r} (R) \leq E_\text{r} (R)
= E_\text{sp} (R) \leq {E}_\text{sp}^\flat (R) = {E}^\flat_\text{r} (R).
\end{align}
In Section~\ref{subsec:large_achiev}, we obtain an achievability bound of the optimal error in terms of $E_\text{r}^{\downarrow}$. We conjecture that it can be further improved to $E_\text{r}$.

In the following, we collect some useful properties of the auxiliary functions $E_0 (s)$ and $ {E}^\downarrow_0 (s )$. 

\begin{restatable}[Properties of $E_0$ and $E_0^\downarrow$]{prop}{propaux}\label{prop:E0_SW+E0down}
	Let $\rho_{XB}$ be a classical-quantum state with $H(X|B)_{\rho}>0$, the auxiliary functions $E_0(s )$ and $E_0^\downarrow(s)$ admit the following properties. 
	\begin{enumerate}[(a)]
		\item\label{E0_SW-a} (Continuity) The function $s\mapsto E_0 (s )$ is continuous for all $s\in (-1,+\infty)$, and $s\mapsto E_0^\downarrow(s)$ is continuous for all $s\in [0,+\infty)$.		
		\item\label{E0_SW-b} (Negativity) For $s\geq 0$,
		\begin{align}
		E_0 (s ) \leq 0, \quad\text{and}\quad E_0^\downarrow (s ) \leq 0
		\end{align}
		with $E_0 (0 ) = E_0^\downarrow(0) = 0$.
		
		\item\label{E0_SW-c} (Concavity)  Both $s\mapsto E_0(s )$ and $s\mapsto E_0^\downarrow(s )$ are  concave in $s$ for all $s\in(-1,+\infty)$.
		
		\item\label{E0_SW-d} (First-order Derivative)
		\begin{align} \label{eq:first}
		\left.\frac{ \partial E_0^{\downarrow} (s )}{\partial s}\right|_{s = 0} =	\left.\frac{ \partial E_0 (s )}{\partial s}\right|_{s = 0} = -H(X|B)_{\rho}.
		\end{align}
		
		\item\label{E0_SW-e} (Second-order Derivative) 
		\begin{align} \label{eq:second}
		\left.\frac{ \partial^2 E_0^\downarrow (s )}{\partial s^2}\right|_{s = 0}=	\left.\frac{ \partial^2 E_0 (s )}{\partial s^2}\right|_{s = 0} = -V(X|B)_{\rho}.
		\end{align}
	\end{enumerate}
\end{restatable}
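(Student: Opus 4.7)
The plan is to prove the five items in the order (a), (c), (d), (e), (b), working with the closed-form expressions
\[ E_0^\downarrow(s) = s\,D_{1-s}(\rho_{XB}\|\mathds{1}_X\otimes \rho_B) = -\log \Tr\bigl[\rho_{XB}^{1-s}(\mathds{1}_X\otimes \rho_B)^s\bigr] \]
and, via Sibson's identity \eqref{eq:E0SW2},
\[ E_0(s) = -\log \Tr\bigl[(\Tr_X \rho_{XB}^{1/(1+s)})^{1+s}\bigr] = s\min_{\sigma_B\in \mathcal{S}(B)} D_{1/(1+s)}(\rho_{XB}\|\mathds{1}_X\otimes \sigma_B). \]
The benefit of this ordering is that negativity (b) drops out immediately once concavity (c) and the slope at $s=0$ (d) are in hand. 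Continuity (a) of both functions is routine: operator powers are continuous in $s$ on the (fixed) supports of $\rho_{XB}$ and $\mathds{1}_X\otimes\rho_B$, partial traces and traces preserve continuity, and the logarithm is continuous on $(0,+\infty)$; at $s=0$ both arguments equal $\Tr[\rho_{XB}]=1$, so positivity extends to a neighborhood (and globally on the stated ranges).

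For concavity (c) of $E_0^\downarrow$, Lemma~\ref{lemma:chaotic} provides convexity of $\alpha \mapsto \log Q_\alpha(\rho_{XB}\|\mathds{1}_X\otimes \rho_B)$ on $(0,1)$; the affine change of variables $\alpha=1-s$ transfers this to concavity of $E_0^\downarrow(s)=-\log Q_{1-s}(\rho_{XB}\|\mathds{1}_X\otimes \rho_B)$ on $(0,1)$, and extending to the full range $(-1,+\infty)$ follows from the standard convexity of $\log Q_\alpha$ on $(-\infty,0)\cup(1,+\infty)$, which one can verify either by a direct second-derivative computation or by viewing $\log Q_\alpha$ as a cumulant-generating-function-type object via Lieb's concavity theorem. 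For $E_0(s)$ the change of variables $\alpha=1/(1+s)$ is non-affine, so convexity of $\log Q_\alpha$ in $\alpha$ does not transfer directly. This is the main obstacle, and I would resolve it by working with the Sibson form $\log \Tr[(\Tr_X \rho_{XB}^{1/(1+s)})^{1+s}]$ and proving it is convex in $s$ via either a non-commutative Hölder-type inequality in the spirit of Gallager's classical argument, or an analytic second-derivative computation relying on Lieb's concavity theorem.

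For the first derivative (d), differentiating $E_0^\downarrow(s)=sD_{1-s}(\rho_{XB}\|\mathds{1}_X\otimes \rho_B)$ at $s=0$ yields $D_1(\rho_{XB}\|\mathds{1}_X\otimes \rho_B)=-H(X|B)_\rho$; for $E_0$, the envelope theorem reduces the calculation to the same quantity, since the minimizer of $\sigma_B\mapsto D_\alpha(\rho_{XB}\|\mathds{1}_X\otimes \sigma_B)$ at $\alpha=1$ is $\sigma_B=\rho_B$. For the second derivative (e), I would invoke the standard Taylor expansion $D_\alpha(\rho\|\sigma)=D(\rho\|\sigma)+\tfrac{\alpha-1}{2}V(\rho\|\sigma)+O((\alpha-1)^2)$: differentiating $sD_{1-s}$ twice at $s=0$ gives $-2\tfrac{d}{d\alpha}D_\alpha\big|_{\alpha=1}=-V(\rho_{XB}\|\mathds{1}_X\otimes \rho_B)=-V(X|B)_\rho$, and the chain rule with $\alpha(s)=1/(1+s)$ (so $\alpha'(0)=-1$) together with the envelope theorem (the first-order envelope term vanishes at the minimizer) produces the same value for $E_0$. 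Finally, (b) is immediate from (c) and (d): a concave function on $[0,+\infty)$ with value $0$ and slope $-H(X|B)_\rho\leq 0$ at the origin lies weakly below its tangent there, and is therefore non-positive on $[0,+\infty)$.
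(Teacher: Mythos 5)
Your overall architecture is sound, and several of your routes genuinely differ from the paper's: the paper gets negativity~(b) directly from the non-negativity of the conditional R\'enyi entropy of a c-q state rather than from the tangent-line argument you use, and it computes the second derivative of $E_0$ by a lengthy explicit Fr\'echet-derivative calculation on the Sibson closed form rather than via the envelope theorem plus the expansion $D_\alpha'\big|_{\alpha=1}=V/2$ (which the paper does use, citing Lin--Tomamichel, but only for $E_0^\downarrow$). Your versions of (a), (b), (d), (e) are correct and in places cleaner, provided you justify the envelope-theorem step (differentiability of $\alpha\mapsto H_\alpha^\uparrow$, uniqueness and smoothness of the minimizer $\sigma_B^\star$ near $\alpha=1$).

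The genuine gap is item (c) for $E_0(s)$. You correctly identify that the non-affine substitution $\alpha=1/(1+s)$ blocks the transfer of convexity of $\alpha\mapsto\log Q_\alpha$, but you then only gesture at two candidate strategies (``a non-commutative H\"older-type inequality in the spirit of Gallager'' or ``an analytic second-derivative computation relying on Lieb's concavity theorem'') without carrying either out. This is not a routine step one can defer: the concavity of the Gallager-type auxiliary function $s\mapsto -\log\Tr\bigl[\bigl(\Tr_X\rho_{XB}^{1/(1+s)}\bigr)^{1+s}\bigr]$ in the classical-quantum setting was a nontrivial open question precisely because the classical H\"older argument does not commute past the partial trace, and a direct second-derivative computation is not known to close it on the whole range $s>-1$. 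The paper resolves it by invoking the joint log-convexity of $(p,t)\mapsto\bigl\|\bigl(\sum_i Z_i^*A_i^{t/p}Z_i\bigr)^{\gamma p}\bigr\|$ due to Bourin and Lee (with an alternative proof via operator geometric means in \cite{CH16}); specializing $Z_i=\mathds{1}$, $A_x=p(x)W_x$, $t=\gamma=1$ gives log-convexity of $p\mapsto\Tr\bigl(\sum_x(p(x)W_x)^{1/p}\bigr)^p$, i.e.\ concavity of $E_0$. Without importing such a matrix-trace inequality (or an equivalent), your proof of (c) for $E_0$ is incomplete. A smaller point: for $E_0^\downarrow$ on the full range $(-1,+\infty)$ you appeal to convexity of $\log Q_\alpha$ outside $[0,1]$; this is true (it is a log-sum-exp in $\alpha$ over the Nussbaum--Szko\l a distributions), but you should say so explicitly since the cited lemma only covers $(0,1)$.
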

\noindent The proof is provided in Appendix~\ref{app:prop}.

Proposition~\ref{prop:E_SW} below discusses the properties of the exponent functions. See Figure~\ref{fig:Esp_SW} for the illustration. 

\begin{prop}[Properties of the Exponent Function] \label{prop:E_SW}
	Let $\rho_{XB}$ be a classical-quantum state with $H(X|B)_{\rho}>0$, the following holds.
	\begin{enumerate}[(a)]
		\item\label{E_SW-a}	$E_\textnormal{sp}(\cdot)$ is convex, differentiable, and monotonically increasing on $[0,+\infty]$. Further,
		\begin{align}
		E_\textnormal{sp}(R) = \begin{cases}
		0, & R\leq H_1^{\uparrow}(X|B)_{\rho} \\
		E_\textnormal{r}(R) , & H_{1}^{\uparrow}(X|B)_{\rho} \leq R \leq  H_{1/2}^{\uparrow}(X|B)_{\rho} \\
		+\infty, & R >  H_{0}^{\uparrow}(X|Y)_{\rho}
		\end{cases}.
		\end{align}
		
		\item\label{E_SW-b}	Define
		\begin{align}
		F_R(\alpha,\sigma_B) := \begin{dcases}\frac{1-\alpha}{\alpha} \left( R + D_\alpha\left(\rho_{XB}\| \mathds{1}_X \otimes \sigma_B  \right) \right), &\alpha\in(0,1), \\
		0, &\alpha = 1,
		\end{dcases}
		\end{align}
		on $(0,1]\times \mathcal{S}(B)$.
		For $R\in(H_1^{\uparrow}(X|B)_\rho, H_0^{\uparrow}(X|B)_\rho)$, there exists a unique saddle-point $(\alpha^\star,\sigma^\star) \in (0,1)\times\mathcal{S}(B)$ of $F_R(\cdot,\cdot)$ such that
		\begin{align}
		F_R(\alpha^\star,\sigma^\star) = \sup_{\alpha\in [0,1] } \inf_{\sigma_B \in \mathcal{S}(B)} F_R(\alpha,\sigma_B) = \inf_{\sigma_B \in \mathcal{S}(B)}\sup_{\alpha\in [0,1] }   F_R(\alpha,\sigma_B) = E_\textnormal{sp}(R).
		\end{align}
		
		\item\label{E_SW-c} Any saddle-point $(\alpha^\star, \sigma^\star)$ of $F_{R}(\cdot,\cdot)$ satisfies 		\begin{align}
		\mathds{1}_X \otimes \sigma^\star 
		\gg \rho_{XB}.
		\end{align}

	\end{enumerate}	
\end{prop}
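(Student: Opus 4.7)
The plan is to handle the three items in turn, leveraging Propositions~\ref{prop:H} and~\ref{prop:E0_SW+E0down}, Sion's minimax theorem, and the quantum Sibson identity.

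For part~(\ref{E_SW-a}), I would begin from the equivalent form $E_\textnormal{sp}(R) = \sup_{s\geq 0}\{E_0(s) + sR\}$ in~\eqref{eq:gallager_sp2}. This is a pointwise supremum of affine functions of $R$, so convexity and monotonicity are immediate, and differentiability on the interior of the effective domain follows from the strict concavity of $E_0$ (Proposition~\ref{prop:E0_SW+E0down}\eqref{E0_SW-c}). The three-case formula comes from analyzing the first-order condition $-E_0'(s^\star) = R$: Proposition~\ref{prop:E0_SW+E0down}\eqref{E0_SW-d} gives $-E_0'(0) = H(X|B)_\rho = H_1^\uparrow$, and the limit $E_0(s)/s \to -H_0^\uparrow$ as $s\to\infty$ shows that $-E_0'$ sweeps out $[H_1^\uparrow, H_0^\uparrow)$ as $s$ traverses $[0,\infty)$. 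Hence $s^\star = 0$ for $R\leq H_1^\uparrow$ (giving $E_\textnormal{sp}=0$); $s^\star\in(0,1]$ exactly when $R\in(H_1^\uparrow, H_{1/2}^\uparrow]$, so the constraint $s\leq 1$ defining $E_\textnormal{r}$ is inactive and $E_\textnormal{sp}=E_\textnormal{r}$; and no finite $s$ satisfies the condition when $R>H_0^\uparrow$, forcing $E_\textnormal{sp}(R)=+\infty$.

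For part~(\ref{E_SW-b}), I would apply Sion's minimax theorem to $F_R$ on the compact convex set $[0,1]\times\mathcal{S}(B)$. The map $\sigma_B\mapsto F_R(\alpha,\sigma_B)$ is convex and lower semi-continuous by Lemma~\ref{lemma:chaotic}\eqref{eq:chaotic4} combined with the positivity of $(1-\alpha)/\alpha$ on $(0,1)$, and $\alpha\mapsto F_R(\alpha,\sigma_B)$ is upper semi-continuous and concave once we recognize, via the substitution $s=(1-\alpha)/\alpha$, that $F_R(\alpha,\sigma_B) = s\bigl(R + D_{1/(1+s)}(\rho_{XB}\|\mathds{1}_X\otimes\sigma_B)\bigr)$ is a fixed-$\sigma_B$ Gallager exponent whose concavity in $s$ is the analogue of Proposition~\ref{prop:E0_SW+E0down}\eqref{E0_SW-c}. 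Sion's theorem then yields $\sup_\alpha\inf_{\sigma_B}F_R = \inf_{\sigma_B}\sup_\alpha F_R = E_\textnormal{sp}(R)$ and produces a saddle point $(\alpha^\star,\sigma^\star)$; part~(\ref{E_SW-a}) ensures $\alpha^\star\in(0,1)$ whenever $R\in(H_1^\uparrow,H_0^\uparrow)$. Uniqueness of $\alpha^\star$ comes from strict concavity of the restricted objective, and uniqueness of $\sigma^\star$ from strict convexity of $\sigma_B\mapsto D_{\alpha^\star}(\rho_{XB}\|\mathds{1}_X\otimes\sigma_B)$ on the face of $\mathcal{S}(B)$ supporting $\rho_B$. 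The main technical obstacle is precisely this concavity in $\alpha$: it does not follow in one line from Lemma~\ref{lemma:chaotic}\eqref{eq:chaotic1}, since convexity of $\alpha\mapsto\log Q_\alpha$ is not preserved under division by $\alpha-1$, and must instead be routed through the Gallager substitution and Proposition~\ref{prop:E0_SW+E0down}.

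For part~(\ref{E_SW-c}), I would invoke the quantum Sibson identity~\cite{SW12} to obtain the explicit form of any minimizer of $\sigma_B\mapsto D_{\alpha^\star}(\rho_{XB}\|\mathds{1}_X\otimes\sigma_B)$, namely
\begin{equation*}
\sigma^\star \;=\; \frac{\bigl(\Tr_X\,\rho_{XB}^{\alpha^\star}\bigr)^{1/\alpha^\star}}{\Tr\bigl[\bigl(\Tr_X\,\rho_{XB}^{\alpha^\star}\bigr)^{1/\alpha^\star}\bigr]}.
\end{equation*}
Since $\alpha^\star\in(0,1)$, the operator $\rho_{XB}^{\alpha^\star}$ has the same support as $\rho_{XB}$, so $\Tr_X\,\rho_{XB}^{\alpha^\star}$ has support equal to $\texttt{supp}(\rho_B)$, and raising to the positive power $1/\alpha^\star$ preserves this support. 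Since $\texttt{supp}(\rho_{XB})\subseteq\mathcal{H}_X\otimes\texttt{supp}(\rho_B)$ for any c-q state, this immediately gives $\mathds{1}_X\otimes\sigma^\star\gg\rho_{XB}$.
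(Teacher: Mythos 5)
Your treatment of parts (\ref{E_SW-a}) and (\ref{E_SW-c}) essentially tracks the paper: the paper likewise gets the case formula from the monotonicity of $\alpha\mapsto H_\alpha^\uparrow(X|B)_\rho$ (Proposition~\ref{prop:H}) together with the pointwise-supremum structure of $E_\textnormal{sp}$, and part (\ref{E_SW-c}) is proved exactly as you propose, by reading off the support of the Sibson-identity minimizer. Your Legendre-transform reading of (\ref{E_SW-a}) is in fact more explicit about differentiability than the paper's own proof. One imprecision: the threshold at which the constraint $s\le 1$ becomes active is $-E_0'(1)$, which only satisfies $-E_0'(1)\ge H_{1/2}^{\uparrow}(X|B)_\rho$ (it exceeds $H_{1/2}^{\uparrow}$ by the nonnegative derivative of $s\mapsto H^{\uparrow}_{1/(1+s)}$ at $s=1$), so your ``$s^\star\in(0,1]$ exactly when $R\in(H_1^{\uparrow},H_{1/2}^{\uparrow}]$'' should read ``at least for $R\in(H_1^{\uparrow},H_{1/2}^{\uparrow}]$''; the claimed identity $E_\textnormal{sp}=E_\textnormal{r}$ on that interval still follows.

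The genuine divergence is part (\ref{E_SW-b}), where the paper does not use Sion: it restricts the inner infimum to $\mathcal{S}_\rho(B)=\{\sigma_B:\rho_{XB}\not\perp\mathds{1}_X\otimes\sigma_B\}$, verifies Rockafellar's ``closed saddle-element'' conditions (including the blow-up $F_R(\alpha,\sigma)\to+\infty$ at accumulation points outside $\mathcal{S}_\rho(B)$, which is precisely how the $+\infty$ values of $F_R$ are tamed), and invokes a saddle-point theorem that delivers existence directly. Two steps of your version are not yet proofs. First, Sion's theorem yields only the minimax \emph{equality}; it does not ``produce a saddle point.'' You must separately establish attainment of both extrema --- via compactness of a rectangle $[\bar\alpha_R,1]\times\mathcal{S}(B)$, lower semicontinuity of $\sigma\mapsto F_R(\alpha,\sigma)$ from \eqref{eq:chaotic4}, and continuity of $\alpha\mapsto\inf_\sigma F_R(\alpha,\sigma)=\frac{1-\alpha}{\alpha}(R-H_\alpha^\uparrow(X|B)_\rho)$ from Proposition~\ref{prop:H} --- and then combine equality with attainment to get the saddle point. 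Second, your cited source for concavity in the Gallager variable at \emph{fixed} $\sigma_B$ is wrong: item (\ref{E0_SW-c}) of Proposition~\ref{prop:E0_SW+E0down} concerns the optimized $E_0(s)=-sH^{\uparrow}_{1/(1+s)}(X|B)_\rho$ (proved via Bourin--Lee on the Sibson closed form) and says nothing about $s\mapsto sD_{1/(1+s)}(\rho_{XB}\|\mathds{1}_X\otimes\sigma_B)$ for a fixed $\sigma_B$. The correct route is \eqref{eq:chaotic1}: write $sD_{1/(1+s)}=-(1+s)\log Q_{1/(1+s)}$ and use that $t\mapsto t\,g(1/t)$ is convex whenever $g$ is (the perspective transform applied to $g(\alpha)=\log Q_\alpha(\rho_{XB}\|\mathds{1}_X\otimes\sigma_B)$); quasi-concavity in $\alpha$, which is all Sion needs, then follows by monotone reparametrization. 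With these two patches your Sion-based route goes through, and your uniqueness arguments (strict concavity of the optimized $\alpha$-objective, strict convexity of $\sigma\mapsto D_{\alpha^\star}$) coincide with the paper's.
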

\noindent The proof is provided in Appendix~\ref{app:prop}.

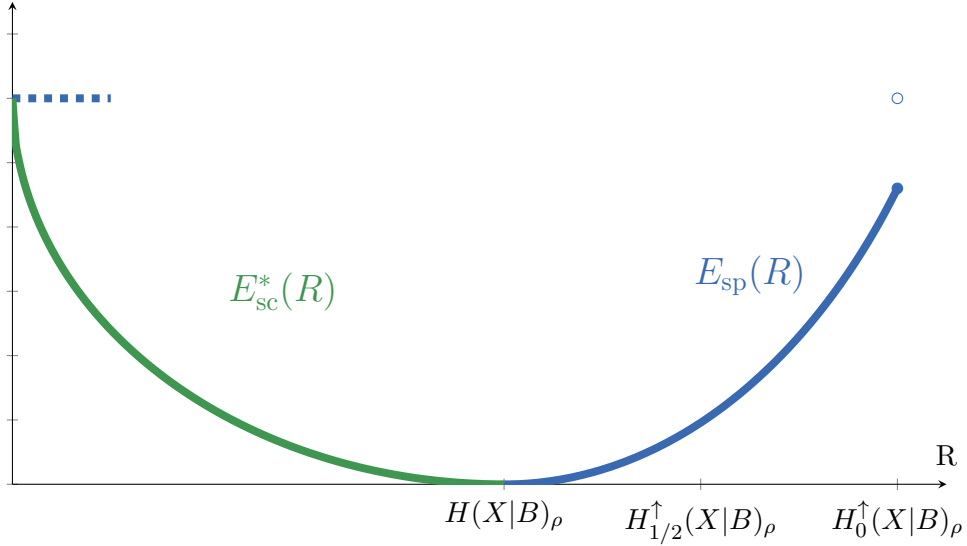
\begin{figure}
	\centering
	
	  \begin{tikzpicture}
    \begin{axis}[axis x line=bottom,
    axis y line = left,
    width=140mm, height=80mm,
      xmin=0, xmax = 1.9, ymin = 0, ymax = 1.5,
      xtick = {1,1.4,1.8},
      xticklabels={$H(X|B)_\rho$, $H_{1/2}^\uparrow(X|B)_\rho$, $H_0^\uparrow(X|B)_\rho$},
      yticklabels={},
      x label style={at={(axis description cs:1,0.1)},anchor=south},
      ylabel = {},
      xlabel={R},
      legend style={
        cells={anchor=west},
        legend pos=north west,}
      ]
	\draw[spcolor, line width=3pt,dashed] (axis cs: 0,1.2) --(axis cs: .2,1.2);

      \addplot[smooth,sccolor, line width=3pt,samples=155, domain=-0:1]  {
        1.2*(-(1 - (1-x)^2)^(.5) + 1)
      };
      
      \addplot[smooth,spcolor,line width=3pt, samples=155,domain=1:1.8]  {
       2.3*( -(1 - (1-x)^2)^(.5) + 1)
      };

      \node[text=spcolor] at (axis cs:1.5,.65){{\Large $E_\text{sp}(R)$}};
      \node[text=sccolor] at (axis cs:.55,.6){{\Large $E^*_\text{sc}(R)$}};

      \addplot [only marks,mark=*, spcolor] coordinates {
		(1.8,2.3/2.5)
	};
	\addplot [only marks,mark=o, spcolor] coordinates {
		(1.8,1.2)
	};
    \end{axis}
  \end{tikzpicture}

	\caption{Illustration of the sphere-packing exponent $E_\text{sp}(R)$ (right, blue curve) and the correct decoding exponent $E_\text{sc}^*(R)$ (left, green curve) in Slepian-Wolf coding over $R\geq 0$. 
		The conditional R\'enyi entropy is denoted by $H^{\uparrow}_\alpha(X|B)_\rho$. We note that $E_\text{sp}(R) = E_\text{r}(R)$ for $R\in[H^{\uparrow}_1(X|B)_\rho, H^{\uparrow}_{1/2}(X|B)_\rho]$.
	}
	\label{fig:Esp_SW}
\end{figure}

In Proposition~\ref{prop:representation} below, we show that the exponent functions defined in terms of $D^\flat$ admit the variational representations, analogous to those introduced in the classical case by Csisz{\'a}r and J.~K{\"o}rner's  \cite{CK80, Csi82, CK11}.
\begin{prop}\label{prop:representation}
Given a c-q state
\begin{equation}
\rho_{XB} = \sum_{x\in \cX} p(x) \ket{x}\bra{x}\otimes \rho_B^{x}
\end{equation}
we have the variational expressions
\begin{align}
{E}_\textnormal{r}^{\flat}(R) &= \min_{\sigma_{XB} \in \mathcal{S}_\rho(XB)}  \left\{ D\left( \sigma_{XB} \| {\rho}_{XB}  \right) + \left| R - H(X|B)_{\sigma} \right|^+ \right\}, \label{eq:representation1} \\
{E}_\textnormal{sp}^{\flat}(R) &=\min_{\sigma_{XB} \in \mathcal{S}_\rho(XB)} \left\{ D\left( \sigma_{XB} \| {\rho}_{XB}  \right) : R \leq H(X|B)_{\sigma} \right\}, \label{eq:representation2} \\
{E}_\textnormal{sc}^{\flat}(R) &= \min_{\sigma_{XB} \in \mathcal{S}_\rho(XB)} \left\{ D\left( \sigma_{XB} \| {\rho}_{XB}  \right)+ \left|  H(X|B)_{\sigma} - R \right|^+ \right\},\label{eq:representation3}
	\end{align}
where we denote $\mathcal{S}_\rho(XB)$ as the set of states $\sigma_{XB}$ with  $\sigma_{XB} \ll \rho_{XB}$ and which can be written as
\begin{equation}
\sigma_{XB} = \sum_{x\in \cX}q(x) \ket{x}\bra{x}\otimes \sigma_{B}^{x}
\end{equation}
for some probability distribution $q$ on $\cX$ and states $\sigma_B^{x}$ in $\mathcal{S}(B)$. 
\end{prop}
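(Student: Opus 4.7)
The plan is, for each of the three exponent functions, to (i) apply Gibbs' variational principle to the log-Euclidean R\'enyi divergence $D^\flat_\alpha$ to rewrite $E^\flat_0(s)$ as an infimum over states involving only the ordinary relative entropy $D$ and the conditional entropy $H(\cdot|\cdot)$, (ii) substitute the resulting expression into \eqref{eq:gallager_r2}--\eqref{eq:gallager_sc2}, and (iii) interchange the outer supremum over $s$ with the inner infimum over $\omega$, performing the sup over $s$ in closed form.

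I would start from the identity, following \cite{MO17}, obtained by combining the Gibbs variational principle $\log \tr[\exp(K)] = \sup_\omega\{\tr[\omega K] + H(\omega)\}$ (applied with $K = \alpha\log\rho + (1-\alpha)\log\sigma$) with $\tr[\omega\log\rho] = -H(\omega) - D(\omega\|\rho)$, namely
\begin{equation*}
\log Q_\alpha^\flat(\rho\|\sigma) = \sup_\omega\bigl\{-\alpha D(\omega\|\rho) - (1-\alpha)D(\omega\|\sigma)\bigr\}.
\end{equation*}
Taking $\rho = \rho_{XB}$ and $\sigma = \mathds{1}_X \otimes \sigma_B$: both operators are invariant under the pinching along the classical basis $\{|x\rangle\}_{x\in\cX}$, so data-processing of $D$ confines the optimizing $\omega$ to a classical-quantum state in $\mathcal{S}_\rho(XB)$. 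For such $\omega$, Gibbs' inequality yields $\min_{\sigma_B} D(\omega\|\mathds{1}_X\otimes \sigma_B) = -H(X|B)_\omega$, attained at $\sigma_B = \omega_B$. Plugging these into the identity $E^\flat_0(s) = -sH^{\flat,\uparrow}_{1/(1+s)}(X|B)_\rho = s\min_{\sigma_B} D^\flat_{1/(1+s)}(\rho_{XB}\|\mathds{1}_X\otimes\sigma_B)$, and tracking the sign of $s$ around $s = 0$, will give the uniform identity $E^\flat_0(s) = \inf_{\omega\in\mathcal{S}_\rho(XB)}\{D(\omega_{XB}\|\rho_{XB}) - sH(X|B)_\omega\}$ valid for $s \in (-1,+\infty)$.

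Next I would substitute this back into \eqref{eq:gallager_r2}--\eqref{eq:gallager_sc2}, rewriting each exponent as a $\sup$-$\inf$ over $(s,\omega)\mapsto D(\omega\|\rho_{XB}) + s(R - H(X|B)_\omega)$ on the ranges $s\in[0,1]$, $s\geq 0$, and $s\in(-1,0)$ respectively. On $s\geq 0$ the inner objective is convex in $\omega$ (since $-sH(X|B)$ is convex) and affine in $s$ on a compact interval, so Sion's minimax theorem applies on the convex compact set $\mathcal{S}_\rho(XB)$; the routine identities $\sup_{0\leq s\leq 1} s(R - H(X|B)_\omega) = |R - H(X|B)_\omega|^+$ and $\sup_{s\geq 0} s(R - H(X|B)_\omega) = +\infty\cdot\mathbf{1}_{\{R > H(X|B)_\omega\}}$ (which collapses to the constraint $R\leq H(X|B)_\omega$) then deliver \eqref{eq:representation1} and \eqref{eq:representation2}. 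For the strong-converse range $s\in(-1,0)$, I would appeal to Fenchel--Moreau biconjugation of the (convex, lower semi-continuous) rate function $t\mapsto\inf\{D(\omega\|\rho_{XB}):H(X|B)_\omega = t,\,\omega\in\mathcal{S}_\rho(XB)\}$: $E_0^\flat$ is essentially its negative Legendre conjugate, and evaluating $\sup_{-1<s<0}s(R - H(X|B)_\omega) = |H(X|B)_\omega - R|^+$ on the biconjugate will give \eqref{eq:representation3}.

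The hardest part will be this minimax interchange in the strong-converse regime. With $s<0$, the $\omega$-objective $D(\omega\|\rho_{XB}) + |s|H(X|B)_\omega$ is convex plus concave, so Sion's theorem no longer applies directly. My primary route is the Fenchel--Moreau argument above; as a backup, one can construct near-optimal states via the Gibbs-type optimizer $\omega^\star(s)\propto\exp(\alpha\log\rho_{XB}+(1-\alpha)\log(\mathds{1}_X\otimes\sigma^\star_B))$ arising from the $D^\flat_\alpha$ variational formula, and exploit the envelope identity $\partial_s E_0^\flat(s)=-H(X|B)_{\omega^\star(s)}$ to show that at an interior critical $s^\star$ one has $H(X|B)_{\omega^\star(s^\star)}=R$, so that $|H(X|B)_{\omega^\star(s^\star)} - R|^+=0$ and equality is attained; boundary cases $s^\star\to -1^+$ are handled directly by passing to the limit. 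A subsidiary issue is that the Gibbs principle is well-defined only on the shared support of $\rho_{XB}$ and $\mathds{1}_X\otimes\sigma_B$, which is precisely what motivates the restriction $\omega\in\mathcal{S}_\rho(XB)$ in the statement.
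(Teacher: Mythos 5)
Your route is the paper's route in different packaging: the Gibbs-variational identity $\log Q^\flat_\alpha(\rho\|\sigma)=\sup_\omega\{-\alpha D(\omega\|\rho)-(1-\alpha)D(\omega\|\sigma)\}$ is exactly Lemma~\ref{lemm:MO17} restated, and the intermediate formula $E_0^\flat(s)=\inf_{\omega\in\mathcal{S}_\rho(XB)}\{D(\omega\|\rho_{XB})-sH(X|B)_\omega\}$ is precisely what the paper extracts before the final sup--inf exchange. The genuine gap is that for $s\in(-1,0)$ (i.e.\ $\alpha=\tfrac{1}{1+s}>1$, the only case needed for \eqref{eq:representation3}) this ``uniform identity'' does not follow by ``tracking the sign of $s$''. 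Unpacking the definitions gives
$E_0^\flat(s)=\max_{\tau_B}\min_{\omega}\{D(\omega\|\rho_{XB})+sD(\omega\|\one_X\otimes\tau_B)\}$,
whereas the right-hand side you want is $\min_\omega\max_{\tau_B}$ of the same objective (since $-sH(X|B)_\omega=\max_{\tau_B}sD(\omega\|\one_X\otimes\tau_B)$ for $s<0$). So a minimax interchange in $(\tau_B,\omega)$ is already required at this stage; this is exactly the step the paper isolates and proves (convexity/lower semicontinuity in $\tau_B$, concavity in $\omega$ of $D(\omega\|\one_X\otimes\tau_B)+\tfrac1s D(\omega\|\rho_{XB})$, using that the coefficient $-(1+\tfrac1s)$ of the entropy is positive for $s\in(-1,0)$, then Theorem II.7 of \cite{MO17}). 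You must supply this argument; without it the claimed identity for negative $s$ is only an inequality.

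Conversely, the step you single out as hardest --- exchanging $\sup_s$ with $\inf_\omega$ in the strong-converse range --- is not actually obstructed. Writing $s'=-s\in(0,1)$, the $\omega$-objective is
\begin{equation}
D(\omega\|\rho_{XB})+s'H(X|B)_\omega=(s'-1)H(XB)_\omega-s'H(B)_\omega-\tr[\omega\log\rho_{XB}],
\end{equation}
a sum of convex functions of $\omega$ (note $s'-1<0$ and $H(B)_\omega$ is concave in $\omega$), hence convex; it is affine in $s'$, and $\mathcal{S}_\rho(XB)$ is convex and compact, so the standard minimax theorem applies directly and the Fenchel--Moreau detour is unnecessary. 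If you do keep that detour as written, note that the equality-constrained rate function $t\mapsto\inf\{D(\omega\|\rho_{XB}):H(X|B)_\omega=t\}$ need not be convex (mixing states only gives $H(X|B)\geq$ the convex combination); you would have to work with the $\geq$-constrained version. In short: the plan is sound and matches the paper's strategy, but the real minimax content sits in the $(\tau_B,\omega)$ swap you elided, not in the $(s,\omega)$ swap you flagged.
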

\noindent The proof is provided in Appendix~\ref{app:prop}.

\section{Error exponent at a fixed rate above the Slepian-Wolf limit (Large Deviation Regime)} \label{sec:Large}
Given a fixed compression rate $R$ above the Slepian-Wolf limit $H(X|B)_{\rho}$, Devetak and Winter showed that the optimal error probability vanishes asymptotically \cite{DW03}. In this section, we establish the finite blocklength achievability and converse bounds and show that the optimal error will exponential decay as a function of blocklength $n$. Specifically, we obtain (Theorems~\ref{theo:large_ach} and \ref{theo:sp_SW}):
\begin{equation}
n E_\text{r}^{\downarrow} (R) - 2 \leq-\log \Pestar(n,R) \leq n E_\text{sp}(R) + O(\log n).
\end{equation}
The exponent functions and their properties are introduced in Section~\ref{subsec:exponent}. The achievability and converse bounds are proved in Section \ref{subsec:large_achiev} and \ref{subsec:large_converse}.

\subsection{Achievability} \label{subsec:large_achiev}

Let us recall Theorem~\ref{theo:large_ach}.
\largeach*

\begin{proof}
	Our technique is to use a random coding argument to prove Theorem~\ref{theo:large_ach}. The idea originates from Gallager~\cite{Gal76} and later studied by Renes and Renner~\cite{RR12}.
	
 	We first present an one-shot lower bound on $e(1, R)$ before extending to the $n$-shot case. Let $\cs$ be a set. Consider a CQSW code $\sC$ consisting of a random encoder $f:\mathcal{X} \to  \cs$ which encodes every source $x\in\mathcal{X}$ into some index $w\in \cs$ with equal probability $1/| \cs|$, and a decoder
	given by a so-called pretty good measurement:
	\begin{align}
	\Pi_{x}^{(w)} := \left( \sum_{\bar{x}: f(\bar{x}) = w} \Lambda_{ \bar{x}}  \right)^{-1/2} 
	\Lambda_{ x }
	\left( \sum_{\bar{x}: f(\bar{x}) = w} \Lambda_{ \bar{x}}  \right)^{-1/2},
	\end{align}
	where $0\leq \Lambda_x \leq \mathds{1}_B$ for each $x\in\mathcal{X}$ will be specified later.
	The optimal probability of error can be upper bounded by
	\begin{equation}
	\Pestar (1,\log |\cs|) \leq \Pe(\sC) = \mathbb{E}_{ {x} } \mathbb{E}_w [\eps(x,w)], \label{eq:large_ach1}
	\end{equation}  
	where  $\eps(x,w) := \Tr\left[ \rho_B^{x} \left( \mathds{1}_B - \Pi_{ {x} }^{(w)} \right) \right]$ is the  probability of error conditioned on the source emitting the symbol $x$ and the encoder encoding $x$ as $w$.

	Applying the Hayashi-Nagaoka inequality \cite[Lemma 2]{HN03} to obtain
	\begin{equation}
	\mathds{1}_X - \Pi_{ {x} }^{(w)} \leq 2 \left( \mathds{1}_B - \Lambda_x \right) + 4 \sum_{\bar{x}\neq x } \mathbf{1}_{ \{ f(\bar{x}) = w  \}} \,\Lambda_{\bar{x}}, \label{eq:large_ach2}
	\end{equation}
	where $\mathbf{1}_{ \{ f(\bar{x}) = w  \}}$ denotes the indicator function for the event $f(\bar{x}) = w$.
	Combining Eqs.~\eqref{eq:large_ach1} and \eqref{eq:large_ach2} gives
	\begin{equation}
	\eps(x,w) \leq 2 \Tr\left[ \rho_B^{x} \left( \mathds{1}_B - \Lambda_x \right)\right] + 4 \Tr\Big[ \rho_B^{x} \, \sum_{\bar{x}\neq x } \mathbf{1}_{ \{ f(\bar{x}) = w  \}}\, \Lambda_{\bar{x}} \Big].
	\end{equation}
	Taking average over $w$, we find
	\begin{equation}
	\mathbb{E}_w\left[ \eps( {x},w) \right] \leq  2 \Tr\left[ \rho_B^{x} \left( \mathds{1}_B - \Lambda_x \right)\right] + 4 \Pr\left\{ { f(\bar{x}) = w  } \right\} \Tr\Big[ \rho_B^{x} \, \sum_{\bar{x}\neq x }  \Lambda_{\bar{x}} \Big].
	\end{equation}
	 By using the assumption $\Pr\left\{  f(\bar{x}) = w  \right\} = 1/|\cs|$,
	\begin{align}
	\mathbb{E}_w\left[ \eps( {x},w) \right] &= 2 \Tr\left[ \rho_B^{x} \left( \mathds{1}_B - \Lambda_x \right)\right] + \frac{4}{|\cs|}  \Tr\Big[ \rho_B^{x}\, \sum_{\bar{x}\neq x }  \Lambda_{\bar{x}} \Big], \\
	&\leq 2 \Tr\left[ \rho_B^{x} \left( \mathds{1}_B - \Lambda_x \right)\right] + \frac{4}{|\cs|}  \Tr\Big[ \rho_B^{x} \, \sum_{\bar{x}\in\mathcal{X} }  \Lambda_{\bar{x}} \Big].
		\end{align}
	By taking average over $x$ we obtain
	\begin{align}
	\Pestar (1,\log |\cs|) &\leq 2 \sum_{x\in\mathcal{X}} p(x) \Tr\left[ \rho_B^{x} \left( \mathds{1}_B - \Lambda_x \right)\right] + \frac{4}{|\cs|}  \Tr\Big[ \rho_B \, \sum_{\bar{x}\in\mathcal{X} }  \Lambda_{\bar{x}} \Big], \\
	&= 2 \Tr\left[ \rho_{XB} \left( \mathds{1}_{XB} - \Lambda_{XB} \right) \right] + \frac{4}{|\mathcal{W}|} \Tr\left[  \mathds{1}_X\otimes \rho_B \,\Lambda_{XB} \right],
	\end{align}
	where $\Lambda_{XB} := \sum_{x\in\mathcal{X}} |x\rangle\langle x| \otimes \Lambda_x$.
	Now, we choose, for $x\in\mathcal{X}$,
	\begin{align}
	\Lambda_x := \left\{ p(x) \rho_B^{x} - \frac{1}{|\mathcal{W}|} \rho_B \geq 0   \right\}.
	\end{align}
	We invoke Audenaert \textit{et al.}'s inequality \cite{ACM+07, ANS+08}: for every $A,B\geq 0$ and $s\in[0,1]$,
	\begin{align}
	\Tr\left[ \left\{ A - B \geq 0  \right\} B + \left\{ B - A \leq 0\right\} A \right] \leq \Tr\left[ A^{1-s} B^s \right].
	\end{align}
	Letting $A = \rho_{XB}$, $B = \frac{1}{|\mathcal{W}|}\mathds{1}_X\otimes \rho_B$, and noting that $\Lambda_{XB} = \left\{ \rho_{XB} - \frac{1}{|\mathcal{W}|}\mathds{1}_X\otimes \rho_B \geq 0   \right\}$, we have one-shot achievability:
		\begin{align}
	\Pestar (1,
	\log |\mathcal{W}|) &\leq 4 \min_{s\in[0,1]} |\mathcal{W}|^{-s} \Tr\left[ \rho_{XB}^{1-s} \left( \mathds{1}_X\otimes \rho_B\right)^s  \right]. \label{eq:large_ach3}
	\end{align}
	
	Finally, we consider the $n$-tuple case. Note that $\rho_{X^n B^n} = \rho_{XB}^{\otimes n}$, and let $|\mathcal{W}| = \exp\{nR\}$. Eqs.~\eqref{eq:large_ach3} and \eqref{eq:E0_bar} lead to 
	\begin{align}
	\Pestar (n,
	R) &\leq 4 \exp\left\{ -n {E}_\text{r}^{\downarrow}(R)  \right\},
	\end{align}	
	which completes the proof.
\end{proof}

\subsection{Optimality} \label{subsec:large_converse}
The main result of this section is the finite blocklength converse bound for the optimal error probability---Theorem~\ref{theo:sp_SW}. We call this the sphere-packing bound for Slepian-Wolf coding with quantum side information, as a counterpart of the sphere-packing bound in classical-quantum channel coding \cite{Dal13,CHT17}.
The proof technique relies on an one-shot converse bound in Proposition~\ref{prop:one-shot_converse} below (adapting the technique of \cite{WR13} to the case with side information), and a sharp $n$-shot expansion from \cite{CHT17,CH17} of a hypothesis testing quantity.

\begin{prop}[One-Shot Converse Bound for Error] \label{prop:one-shot_converse}
	Consider a Slepian-Wolf coding with a joint classical-quantum state  $\rho_{XB}\in\mathcal{S}(XB)$ and the index size $ |\cs|<|\mathcal{X}|$.
		Then,
	\begin{align} \label{eq:one-shot_converse}
	-\log \Pestar(1,\log |\mathcal{W}| )
	\leq \min_{\sigma_B \in \mathcal{S}(B)} - \log \widehat{\alpha}_{   \frac{|\mathcal{W}|}{|\mathcal{X}|} }\left(\rho_{XB}\| \tau_{{X}}\otimes \sigma_B \right),
	\end{align}
	where $\tau_{X}$ denotes the uniform distribution on the input alphabet $\mathcal{X}$; and $\widehat{\alpha}_\mu(\cdot\|\cdot)$ is defined by \eqref{eq:def_hat-alpha}.
\end{prop}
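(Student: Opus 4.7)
The plan is to build, from an arbitrary $(1,\log|\cW|)$-CQSW code, a binary hypothesis test between $\rho_{XB}$ and $\tau_X\otimes\sigma_B$ (for an arbitrary auxiliary state $\sigma_B$) whose type-I error equals the code's error probability and whose type-II error is at most $|\cW|/|\cX|$. The claimed bound then falls out of the definition~\eqref{eq:def_hat-alpha} of $\widehat{\alpha}_\mu$. This is the natural analog for the CQSW task of the meta-converse of \cite{WR13} for c-q channel coding, adapted to incorporate the side information system $B$.

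First, by \eqref{eq:Pe_rand_as_det}, random encodings cannot outperform deterministic ones, so I restrict attention to a deterministic encoder $\cE:\cX\to\cW$ together with POVMs $\{\Pi_x^{(w)}\}_{x\in\cX}$, one for each $w\in\cW$. The natural guess for the test is
\[
T \;:=\; \sum_{x\in\cX}\, |x\rangle\langle x|_X \otimes \Pi_x^{(\cE(x))},
\]
which satisfies $0\leq T\leq \one_{XB}$ since the $|x\rangle\langle x|$ are orthogonal and each $\Pi_x^{(\cE(x))}\leq \one_B$.

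The two error conditions are then checked directly. For type-I, the classical-quantum form $\rho_{XB}=\sum_x p(x)|x\rangle\langle x|\otimes \rho_B^x$ gives
\[
\tr[(\one-T)\rho_{XB}] \;=\; 1 - \sum_{x\in\cX} p(x)\,\tr[\Pi_x^{(\cE(x))}\rho_B^x] \;=\; \Pe(\sC).
\]
For type-II, expand $\tr[T(\tau_X\otimes\sigma_B)] = \tfrac{1}{|\cX|}\sum_{x\in\cX}\tr[\Pi_x^{(\cE(x))}\sigma_B]$, regroup the sum by $w=\cE(x)$, and use the POVM condition $\sum_{x:\cE(x)=w}\Pi_x^{(w)}\leq \sum_{\bar x\in\cX}\Pi_{\bar x}^{(w)}=\one_B$ to conclude $\tr[T(\tau_X\otimes\sigma_B)]\leq |\cW|/|\cX|$.

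Hence $T$ is feasible in \eqref{eq:def_hat-alpha} at level $\mu := |\cW|/|\cX|$, so $\widehat{\alpha}_\mu(\rho_{XB}\|\tau_X\otimes\sigma_B)\leq \Pe(\sC)$. Taking the infimum over deterministic codes (which equals $\Pestar(1,\log|\cW|)$ by the reduction above) yields $\widehat{\alpha}_\mu(\rho_{XB}\|\tau_X\otimes\sigma_B)\leq \Pestar(1,\log|\cW|)$ for every $\sigma_B\in\cS(B)$. Applying the decreasing map $-\log$ and optimizing over $\sigma_B$ gives~\eqref{eq:one-shot_converse}. There is no real technical obstacle: once the test $T$ is identified, the verification of its type-I and type-II errors is a few lines of bookkeeping, and the rest is just unwinding definitions.
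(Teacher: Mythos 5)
Your proof is correct, and it is essentially the paper's argument up to a presentational difference. The paper first pushes both states through a measure-and-prepare CP map $\Lambda$ (which records $x$ and the measurement outcome of the POVM $\{\Pi_{\hat x}^{(\cE(x))}\}$ in classical registers) and then tests with $T'=\sum_y \ket{y}\bra{y}\otimes\ket{y}\bra{y}$ on the output, invoking the data-processing inequality \eqref{eq:hat_alpha_DPI} to pull the bound back to $\rho_{XB}$ versus $\tau_X\otimes\sigma_B$; your operator $T=\sum_{x}\ket{x}\bra{x}\otimes\Pi_x^{(\cE(x))}$ is exactly the Heisenberg-picture pullback $\Lambda^*(T')$, so the two type-I and type-II computations coincide term by term, and your direct route merely saves the verification that $\Lambda$ is completely positive. (Incidentally, the test you construct is the same one the paper uses later in the proof of Theorem~\ref{thm:SC-converse-bound}.)
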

\begin{proof}[Proof of Proposition~\ref{prop:one-shot_converse}] Let $\cs$ be a set.
As discussed below \eqref{eq:Pe_rand_as_det}, we may reduce to a deterministic encoding $\cE$.
	Consider the map $\Lambda: \mathcal{S}(XB)\to \mathcal{S}(XX)$ 
	\begin{equation}
	\Lambda  =\sum_{x\in \cX} L_{\ket{x}\bra{x}}R_{\ket{x}\bra{x}} \otimes \Lambda^x, \qquad \Lambda^x : \sigma_B \mapsto \sum_{\hat x}\tr[\Pi_{\hat x}^{(\cE(x))}\sigma_B]\ket{\hat x}\bra{\hat x}
	\end{equation}
	which, for each $x$, projects into the classical state $\ket{x}\bra{x}$ and applies the measure-and-prepare map $\Lambda^x$ which measures according to the POVM $\{ \Pi_{\hat x}^{(\cE(x))}\}_{\hat x\in \cX}$ and records the outcome in a classical register. Here, $L_{\ket{x}\bra{x}}$ is the operator that acts by left-multiplication by projector $\ket{x}\bra{x}$ and similarly $R_{\ket{x}\bra{x}}$ acts by right-multiplication by $\ket{x}\bra{x}$. 
	We can quickly see $\Lambda$ is completely positive (CP):
	the map $X\mapsto L_AR_{A^*}(X) = A X A^*$ is CP (since $A$ is its only Kraus operator); as a measure-and-prepare map, $\Lambda^x$ is CP (see e.g.~\cite{Wilde2}) and the sum and tensor product of CP maps is CP.
					We set $\tau_X = \frac{1}{|\cX|} \one_X = \frac{1}{|\cX|} \sum_{x\in\cX} \ket{x}\bra{x}$ as the completely mixed state on $X$ and $\sigma_B\in \mathcal{S}(B)$ arbitrary.
				Note that
	\begin{equation}
	\Lambda( \tau_X \otimes \sigma_B) =\frac{1}{|\cX|} \sum_{x\in \cX} \Lambda(\ket{x}\bra{x}\otimes \sigma_B) = \frac{1}{|\cX|} \sum_{x \in \cX} \ket{x}\bra{x} \otimes \sum_{\hat x} \tr[\Pi^{(\cE(x))}_{\hat x} \sigma_B] \ket{\hat x}\bra{\hat x}, \label{eq:Lambda-on-tausig}
	\end{equation}
	and
	\begin{equation}
	\Lambda(\rho_{XB}) = \sum_{x\in \cX} p(x) \Lambda(\ket{x}\bra{x}\otimes\rho_B^x) =\sum_{x\in \cX} p(x) \ket{x}\bra{x} \otimes  \sum_{\hat x\in \cX}\tr[\Pi^{(\cE(x))}_{\hat x} \rho_B^x]\ket{\hat x}\bra{\hat x}. \label{eq:Lambda-on-rho}
	\end{equation}
	
	Now, take a two element POVM (the test) as $T = \sum_y \ket{y}\bra{y} \otimes \ket{y} \bra{y}$. Then, by \eqref{eq:Lambda-on-rho}
	\begin{equation}
	\tr[ (\one - T) \Lambda(\rho_{XB})] = 1 - \sum_y p(y)\tr[\Pi^{(\cE(y))}_{y} \rho_B^y] = \Pe(\cC),
	\end{equation}
	so this test has type I error of $\Pe(\cC)$.
	On the other hand, by \eqref{eq:Lambda-on-tausig} the quantity $\tr[ T\Lambda(\tau_X\otimes \sigma_B)]$ can be expanded as
	\begin{equation}
	 \sum_y \frac{1}{|\cX|}  \tr[\Pi^{(\cE(y))}_{ y} \sigma_B] = \frac{1}{|\cX|} \sum_{w\in \cs} \sum_{y\in \cX : \cE(y) = w} \tr[\Pi^{(w)}_{ y} \sigma_B] \leq \frac{1}{|\cX|} \sum_{w\in \cs}  \tr[\one_B \sigma_B]  \leq \frac{|\cs|}{|\cX|}.
	\end{equation}
	That is, this test achieves type II error of $\frac{|\mathcal{W}|}{|\cX|}$.
	As the infimum over all such tests, we have that
	\begin{equation}
	\Pe(\cC) \geq  \widehat \alpha_{\frac{|\cs|}{|\cX|}}(\Lambda(\rho_{XB})\| \Lambda(\tau_X\otimes \sigma_B))  \geq \widehat \alpha_{\frac{|\cs|}{|\cX|}}(\rho_{XB} \| \tau_X\otimes \sigma_B),
	\end{equation}
	where the second inequality follows from the data-processing inequality~\eqref{eq:hat_alpha_DPI}.
	Then taking the infimum over $\cE$ and $\cD$,
	\begin{equation}
	\widehat \alpha_{\frac{|\mathcal{W}|}{|\cX|}} (\rho_{XB}\|\tau_X\otimes \sigma_B)\leq \Pestar(1,\log|\cs|)
	\end{equation}
	yielding
	\begin{equation}
	-\log \Pestar(1,\log |\mathcal{W}|)  \leq - \log \widehat \alpha_{\frac{|\cs|}{|\cX|}} (\rho_{XB}\|\tau_X\otimes \sigma_B).
	\end{equation}
	Since this holds independently of $\sigma_B\in \mathcal{S}(B)$, we may minimize over $\sigma_B$ to obtain the result.
\end{proof}

\begin{prop}[Sharp Converse Hoeffding Bound {\cite[Proposition~14]{CHT17}, \cite[Proposition~7]{CH17}}] \label{prop:sharp_Hoeffding}
	Consider a binary hypothesis testing: $\mathsf{H}_0: \rho^n = \bigotimes_{i=1}^n \rho_i $ and $\mathsf{H}_1: \sigma^n = \bigotimes_{i=1}^n \sigma_i$ with $\rho^n \ll \sigma^n $, and a sequence of positive numbers $(r_n)_{n\in\mathbb{N}}$. 
	Denote by 
	\begin{align}
	\phi_n ( r_n | \rho^n \| \sigma^n ) &:=  \sup_{\alpha\in(0,1]}  \frac{1-\alpha}{\alpha} \left( \frac1n D_\alpha(\rho^n\|\sigma^n) - r_n \right); \\
	s_n^\star &:= \argmax_{s\geq 0} \left\{    \frac{s}{n} D_{\frac{1}{1+s}} (\rho^n\|\sigma^n) - s r_n
	\right\}.
	\end{align}
	If $ \frac1n V(\rho^n\|\sigma^n) \geq \nu$ for some $\nu > 0$, then there exist $N_1\in\mathbb{N}$, $K_1, C >0$ such that for all $n\geq N_1$, we have
	\begin{align}
	-\log \widehat{\alpha}_{  \exp\{-n r_n \}} \left( \rho^{ n}\|\sigma^{ n} \right) \leq n \phi_n ( r_n - \gamma_n  |\rho^n\|\sigma^n) + \frac12 \log \left(  n s_n^\star \right)  + K_1, \label{eq:sharp1}
	\end{align}
	where $\gamma_n := \frac{\log n}{2n} + \frac{C}{n}$.
	Moreover, if there exists $\eps>0$ such that for all $\bar{r}\in (r_n-\eps,r_n]$, 
	\begin{align}
		\phi_n (\bar{r}| \rho^n \| \sigma^n )  \in [\nu,+\infty), \label{eq:cond_sharp}
	\end{align}
	for some $\nu>0$, then there exist $N_2\in\mathbb{N}$, $K_2>0$ such that  for all $n\geq N_2$, we have
	\begin{align}
	-\log \widehat{\alpha}_{  \exp\{-nr_n\}} \left( \rho^{ n}\|\sigma^{ n} \right)
	&\leq n \phi_n ( r_n |\rho^n\|\sigma^n) + \frac12\left( 1 + s_n^\star \right) \log n  + K_2. \label{eq:sharp2}
	\end{align}	
\end{prop}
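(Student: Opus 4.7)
The plan is to follow the standard route for sharp refined large-deviation bounds: reduce to a classical problem via pinching, tilt by the Petz-optimal parameter $s_n^\star$ so that the tilted distribution of the log-likelihood ratio concentrates at the Neyman--Pearson threshold, and apply a Berry--Esseen-type local central-limit estimate to extract the logarithmic correction in the exponent.

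\textbf{Step 1: Classicalization.} I would first apply the pinching map $\cP_{\sigma^n}$ with respect to the null hypothesis. Since $\cP_{\sigma^n}(\rho^n)$ commutes with $\sigma^n$, the resulting test is effectively classical, and the pinching inequality $\rho^n \leq v(\sigma^n)\,\cP_{\sigma^n}(\rho^n)$ with $v(\sigma^n)$ polynomial in $n$ (the number of distinct eigenvalues of $\sigma^n$) alters $-\log \widehat\alpha$ by at most $O(\log n)$, which is absorbed into the constants $K_i$. This reduces the problem to bounding $-\log \widehat\alpha_{\mu_n}(P_n\|Q_n)$ for two product probability measures $P_n = \bigotimes_i P_i$ and $Q_n = \bigotimes_i Q_i$ on $n$ letters, with equivalent Petz--R\'enyi quantities.

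\textbf{Step 2: Tilting and Berry--Esseen.} Define the tilted product measure $P_n^{(s)} \propto P_n^{1/(1+s)}Q_n^{s/(1+s)}$; under $P_n^{(s)}$, the log-likelihood ratio $L_n := \log\frac{dP_n}{dQ_n}$ is a sum of $n$ independent (non-identically distributed) random variables. By a standard Cram\'er-type computation, the defining optimality condition for $s_n^\star$ is precisely what aligns the tilted mean of $L_n$ with the Neyman--Pearson threshold corresponding to type-II level $\e^{-n r_n}$; the variance hypothesis $\tfrac1n V(\rho^n\|\sigma^n)\geq\nu$ transfers, via a perturbation estimate on the derivative of the cumulant generating function, to a uniform positive lower bound on $\Var_{P_n^{(s_n^\star)}}(L_n)/n$. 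Writing the type-I error of the optimal test as a Gaussian-weighted integral of $\e^{-s_n^\star L_n}$ under $P_n^{(s_n^\star)}$ and applying a Lindeberg-type Berry--Esseen theorem then produces, through Laplace's method, the amplitude factor $(n s_n^\star)^{-1/2}$, i.e.\ the $\tfrac12\log(n s_n^\star)$ correction in the exponent. To absorb the residual $O(1/\sqrt n)$ Berry--Esseen remainder into an additive constant, one either (a) shifts the target rate to $r_n-\gamma_n$ with $\gamma_n=\frac{\log n}{2n}+\frac{C}{n}$ and uses the Taylor expansion $n\phi_n(r_n-\gamma_n)\approx n\phi_n(r_n) + s_n^\star\, n\gamma_n$, giving \eqref{eq:sharp1} under only the variance assumption; or (b) uses the local two-sided condition \eqref{eq:cond_sharp} to control $\phi_n$ in a neighborhood of $r_n$ directly, at the cost of the $s_n^\star$-dependent polynomial factor $n^{(1+s_n^\star)/2}$, giving \eqref{eq:sharp2}.

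\textbf{Main obstacle.} The principal difficulty is the regime where $s_n^\star$ grows with $n$, corresponding to rates $r_n$ approaching the upper boundary of the large-deviation domain. There the third absolute moment of the tilted log-likelihood, which controls the Berry--Esseen constant, can explode, and uniformity of the moment bounds in $s_n^\star$ must be verified. The local condition \eqref{eq:cond_sharp} is precisely what pins $s_n^\star$ to a compact range in the sharper bound \eqref{eq:sharp2}; without it one must retreat to \eqref{eq:sharp1} with its rate shift. A secondary technical care-point is that, because the $\rho_i,\sigma_i$ need not be identical, a non-i.i.d.\ (Lindeberg) Berry--Esseen theorem must be invoked and its constants tracked in terms of $\nu$ alone, independently of $s_n^\star$.
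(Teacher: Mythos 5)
You should first note that the paper does not actually prove Proposition~\ref{prop:sharp_Hoeffding}: it is imported verbatim from \cite[Proposition~14]{CHT17} and \cite[Proposition~7]{CH17}, so the comparison is against those proofs. Your overall architecture --- reduce to a classical binary test, exponentially tilt by the optimizer $s_n^\star$ so the log-likelihood ratio concentrates at the Neyman--Pearson threshold, and extract the $\tfrac12\log(ns_n^\star)$ correction from a non-i.i.d.\ Berry--Esseen (strong large deviation) estimate --- is indeed the route taken there, and your remarks about the two ways of absorbing the $O(1/\sqrt n)$ remainder (shifting the rate by $\gamma_n$ versus invoking the local condition \eqref{eq:cond_sharp}) correctly reflect the roles of the two hypotheses.

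The genuine gap is in Step 1. For a \emph{converse} one must bound $\widehat\alpha_{\mu}(\rho^n\|\sigma^n)$ from \emph{below} by a classical quantity, and pinching goes the wrong way on both counts. Data processing (Eq.~\eqref{eq:hat_alpha_DPI}) gives $\widehat\alpha_\mu(\rho^n\|\sigma^n)\leq \widehat\alpha_\mu(\cP_{\sigma^n}(\rho^n)\|\sigma^n)$, so an upper bound on $-\log\widehat\alpha$ for the pinched (classical) pair does not transfer to the quantum pair. The pinching inequality $\rho^n\leq v(\sigma^n)\,\cP_{\sigma^n}(\rho^n)$ likewise only yields $\tr[(\one-T)\rho^n]\leq v\,\tr[(\one-T)\cP_{\sigma^n}(\rho^n)]$, an upper bound on the type-I error, whereas the reverse operator inequality $\cP_{\sigma^n}(\rho^n)\leq v'\rho^n$ is false in general (take $\rho$ pure and not diagonal in the pinching basis). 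The cited proofs instead classicalize via the Nussbaum--Szko\l{}a distributions $P_{\rho,\sigma}(i,j)=\lambda_i|\langle e_i|f_j\rangle|^2$, $Q_{\rho,\sigma}(i,j)=\mu_j|\langle e_i|f_j\rangle|^2$, which satisfy $D_\alpha(P_{\rho,\sigma}\|Q_{\rho,\sigma})=D_\alpha(\rho\|\sigma)$ exactly for the Petz divergence (hence also matching $\phi_n$, $s_n^\star$ and the variance condition), and for which Nagaoka/Audenaert-et-al.-type arguments lower bound the quantum type-I and type-II errors by constants times the corresponding classical errors --- with no polynomial loss at all, rather than an $O(\log n)$ one. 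A secondary inaccuracy: condition \eqref{eq:cond_sharp} does not ``pin $s_n^\star$ to a compact range'' (indeed \eqref{eq:sharp2} retains an $s_n^\star$-dependent prefactor $n^{(1+s_n^\star)/2}$); its role is to keep the optimizer interior and the slope of $\phi_n$ controlled on a left neighborhood of $r_n$ so that the rate shift $\gamma_n$ can be undone by a first-order Taylor expansion.
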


With Propositions~\ref{prop:one-shot_converse} and \ref{prop:sharp_Hoeffding}, we are able to show the $n$-shot bound given by Theorem~\ref{theo:sp_SW}, which we recall here:

\spSW*

\begin{proof}[Proof of Theorem \ref{theo:sp_SW}]
	The proof is split into two parts. We first invoke an one-shot converse bound in Proposition \ref{prop:one-shot_converse} to relate the optimal error of Slepian-Wolf coding to a binary hypothesis testing problem. Second, we employ a sharp converse Hoeffding bound in Proposition \ref{prop:sharp_Hoeffding} to asymptotically expand the optimal type-I error, which yields the desired result.

	Applying Proposition~\ref{prop:one-shot_converse} with an $n$-shot extension $\rho_{X^nB^n}$ of the c-q state $\rho_{XB}$, $|\cs|= \exp\{nR\}$, and $\tau_{X^n} = \frac{1}{|X^n|} \one_{X^n}$ gives
	\begin{align}
	\log \left( \frac{1}{\Pestar (n, R )} \right)
	&\leq \min_{\sigma_B^n \in \mathcal{S}(B^n)} - \log \widehat{\alpha}_{   \frac{|\cs|}{|\mathcal{X}^n|} }\left(\rho_{X^n B^n}\|\tau_{{X^n}}\otimes \sigma_B^n \right) \\
	&\leq - \log \widehat{\alpha}_{   \frac{|\cs|}{|\mathcal{X}^n|} }\left(\rho_{X^n B^n}\|\tau_{{X^n}}\otimes \left(\sigma_B^\star\right)^{\otimes n} \right), \\
	&= - \log \widehat{\alpha}_{   \frac{|\cs|}{|\mathcal{X}|^n} }\left(\rho_{X B}^{\otimes n} \| \left( \tau_{{X}}\otimes \sigma_B^\star\right)^{\otimes n} \right),
	\label{eq:sp_SW1}
	\end{align}
	where we invoke the saddle-point property in Proposition~\ref{prop:E_SW}-\ref{E_SW-b} to denote by 
	\begin{align}
	\sigma_R^\star := \argmin_{\sigma_B \in \mathcal{S}(B)} \sup_{\alpha\in [0,1] } 
	\frac{1-\alpha}{\alpha} \left( R + D_\alpha\left(\rho_{XB}\| \mathds{1}_X \otimes \sigma_B  \right) \right).
	\end{align}
	
	Next, we show that Eq.~\eqref{eq:cond_sharp} is satisfied, and thus we can exploit Proposition~\ref{prop:sharp_Hoeffding} to expand the right-hand side of Eq.~\eqref{eq:sp_SW1}.
	Let $r = \log |\mathcal{X}| - R$, and note that item \ref{E_SW-c} in Proposition \ref{prop:E_SW} implies
	\begin{align}
	\rho_{XB} \ll \tau_X \otimes \sigma_R^\star.
	\end{align}
	One can verify that
	\begin{align}
	\phi_n\left( r|\rho_{XB}^{\otimes n} \| \left( \tau_X\otimes \sigma_R^\star \right)^{\otimes n}  \right) 
	&=  \sup_{\alpha\in(0,1]} \frac{1-\alpha}{\alpha} \left( D_\alpha\left( \rho_{XB} \|\tau_X\otimes \sigma_R^\star \right) - r \right) \label{eq:sp_SW5} \\
	&=  \sup_{\alpha\in(0,1]} \frac{1-\alpha}{\alpha} \left( D_\alpha\left( \rho_{XB} \|\mathds{1}_X\otimes \sigma_R^\star \right) - \log|\mathcal{X}|- r \right) \\	
	&= E_\text{sp}(R)	\label{eq:sp_SW2} \\
	&> 0, \label{eq:sp_SW4}
	\end{align}
	where $\phi_n$ is defined in Eq.~\eqref{eq:cond_sharp}; equality \eqref{eq:sp_SW2} follows from the saddle-point property, item \ref{E_SW-b} in Proposition \ref{prop:E_SW}, and the definition of $E_\text{sp}(R)$ in Eq.~\eqref{eq:gallager_sp2}; the last inequality \eqref{eq:sp_SW4} is due to item \ref{E_SW-a} in Proposition \ref{prop:E_SW} and the given range of $R$.
	Further, the positivity of $	\phi_n\left( r|\rho_{XB}^{\otimes n} \| \left( \tau_X\otimes \sigma_R^\star \right)^{\otimes n}  \right)$ implies that $r> D_0(\rho_{XB}\| \tau_X\otimes \sigma_R^\star)$.
	By choosing $\eps =  r - D_0(\rho_{XB}\| \tau_X \otimes \sigma_R^\star ) > 0$, $\rho = \rho_{XB}$ and $\sigma = \tau_X\otimes \sigma_R^\star$, Eq.~\eqref{eq:sp_SW5} guarantees that Eq.~\eqref{eq:cond_sharp} is satisfied.
	Hence, we apply Eq.~\eqref{eq:sharp2} in Proposition~\ref{prop:sharp_Hoeffding} on Eq.~\eqref{eq:sp_SW1} to obtain	
	\begin{align}
	\log \left( \frac{1}{\Pestar (n, R )} \right)
	&\leq n \phi_n \left(r|\rho_{XB}^{\otimes n} \| \left( \tau_X\otimes \sigma_R^\star\right)^{\otimes n} \right) + \frac12\left( 1 + \left| \left.\frac{\partial \phi_n \left( \tilde{r}| \rho_{XB}^{\otimes n} \| \left( \tau_X\otimes \sigma_R^\star\right)^{\otimes n} \right)}{\partial \tilde{r}}\right|_{\tilde{r}=r} \right|  \right) \log n  + K, \label{eq:sp_SW3}
	\end{align}
where $K>0$ is some finite constant independent of $n$.
Finally, combining Eqs.~\eqref{eq:sp_SW2} and \eqref{eq:sp_SW3} completes the proof.
\end{proof}

\section{Optimal success exponent at a fixed rate below the Slepian-Wolf limit (Strong converse regime)} \label{sec:SC}
In this section, we investigate the case of rate below the Slepian-Wolf limit, i.e.~$R<H(X|B)_{\rho}$, which is analogous to the strong converse in channel coding. We establish both the finite blocklength converse and achievability bound in Section~\ref{subsec:SC_converse} and \ref{subsec:SC_achiev}. As a result of Theorem~\ref{thm:SC-converse-bound} and Theorem~\ref{thm:SC-achiev-bound} below, we are able to show that in the strong converse regime, $E_\text{sc}^*(R)$ characterizes the exponential decay of the probability of success:
\begin{equation}
\lim_{n\to\infty} sc(n,R)=\lim_{n\to\infty} -\frac1n \log \left( 1 - \Pestar(n,R) \right) = E_\text{sc}^*(R)
\end{equation}
where $E_\text{sc}^*(R)$ is defined by \eqref{eq:def_Esc} below, and $\Pestar(n,R)$ is defined by \eqref{def:optPe}.

\begin{remark} \label{remark_sc}
	Theorem~\ref{thm:SC-converse-bound} together with Corollary~\ref{cor:exact_sc} imply that the established $n$-shot strong converse converse bound is stronger than the results obtained in Ref.~\cite{leditzky_strong_2016} whenever $E_\text{sc}^*(R)$ is finite.
	To see this, let us call the exponent of the bounds in \cite[Theorem 6.2]{leditzky_strong_2016} by $E_\text{LWD}(R)$. Namely,  $E_\text{LWD}(R) \leq sc(n,R)$ for all $n$.
	Now, we assume for some rate $R > 0$ that $E_\text{sc}^*(R) < E_\text{LWD}(R)$.
	Then, it holds that
	\[
	E_\text{sc}^*(R) < E_\text{LWD}(R) \leq sc(n,R)
	\]
	for all $n$. 
	Taking $n$ to infinity and invoking Corollary~\ref{cor:exact_sc}, we have 
	\[
	E_\text{sc}^*(R) < E_\text{LWD}(R) \leq \lim_{n \to \infty} sc(n,R) = E^*_\text{sc}(R),
	\]
	which is a contradiction as long as $E_\text{sc}^*(R)$ is finite. Therefore, we conclude that
	\[E_\text{LWD}(R) \leq E_\text{sc}^*(R)
	\] 
	for any $R$ such that $E_\text{sc}^*(R)$ is finite.
\end{remark}

\subsection{Establishing a strong converse exponent} \label{subsec:SC_converse}
Let us recall Theorem~\ref{thm:SC-converse-bound}.

\SCconversebound*

\begin{proof}[Proof of Theorem~\ref{thm:SC-converse-bound}]
	Let $R\geq 0$. We first claim that any one-shot code $\sC$ with $R < H(X|B)_\rho$ satisfies
	\begin{equation}
	-\log \left( 1-\Pestar(1, R) \right) \leq  E_\text{sc}^*(R), \label{eq:strong1}
	\end{equation} 
	where $\rho_{XB}$ is a c-q state defined by \eqref{eq:rho_XB-cq}.

	Let $\sC$ be a code with encoder $\cE: \cX\to \cW$ and $\log|\cW| = R$, and decoder $\cD$. To bound the optimal error probability, we may reduce to the case of deterministic $\cE$, as discussed below \eqref{eq:Pe_rand_as_det}. The decoder is a family of POVMs $\mathcal{D} = \{\cD_w\}_{w\in \cW}$, where $\cD_w = \{\Pi_x^{(w)}\}_{x\in \cX}$.

	Let $\sigma_B\in\mathcal{S}(B)$. We will consider a two-outcome hypothesis test between $\rho_{XB}$ and $\tau_X\otimes \sigma_B$, where $\tau_X = \frac{\one}{|\cX|}$. Let us define the test
	\begin{equation}
		\Pi_{XB} := \sum_{x\in\mathcal{X}}  |x\rangle\langle x| \otimes \Pi_x^{(\cE(x))}.
	\end{equation}
						Then  $0\leq \Pi_{XB}\leq \one$ and moreover, 
	\begin{align}
	 \Tr\left[ \rho_{XB} \Pi_{XB} \right]&= 
	 \sum_{x\in\mathcal{X}} p(x)\Tr\left[ \rho_B^{x} \Pi_x^{(\cE(x))} \right] =  1 - \Pe(\sC)\\
	 \Tr\left[ \tau_X\otimes \sigma_B \Pi_{XB} \right] &= \frac{1}{|\cX|}\sum_{w\in \cs} \sum_{\substack{x\in \cX \\ \cE(x) = w}} \tr[\Pi_{x}^{(w)} \sigma] \leq  \frac{1}{|\cX|}\sum_{w\in \cs} \tr[\sigma_B]=   \frac{|\cs|}{|\mathcal{X}|} \leq 1.
	\end{align}
		Then, for all $\alpha>1$,
	\begin{align}
	\left( 1 - \Pe(\sC) \right)^\alpha \left(\frac{|\cs|}{|\mathcal{X}|}\right)^{1-\alpha} &\leq \left( \Tr\left[ \rho_{XB} \Pi_{XB} \right] \right)^{\alpha} \left( \Tr\left[ \tau_X\otimes \sigma_B\Pi_{XB} \right] \right)^{1-\alpha}\\
	&\leq \left( \Tr\left[ \rho_{XB} \Pi_{XB} \right] \right)^{\alpha} \left( \Tr\left[ \tau_X\otimes \sigma_B\Pi_{XB} \right] \right)^{1-\alpha} \nonumber \\ &\qquad +
	\left( \Tr\left[ \rho_{XB} (\mathds{1} - \Pi_{XB} ) \right] \right)^{\alpha} \left( \Tr\left[ \tau_X\otimes \sigma_B(\mathds{1} - \Pi_{XB}) \right] \right)^{1-\alpha}. \label{eq:Q_rho_gamma}
	\end{align}
Consider the measure-and-prepare map $\Phi: \mathcal{S}(\cH_X\otimes \cH_B) \to \mathcal{S}(\mathbb{C}^2)$ given by
\begin{equation}
 \Phi: \quad \eta_{XB} \mapsto \tr[\eta_{XB}\Pi_{XB}] \ket{0}\bra{0} + \tr[\eta_{XB} (\one - \Pi_{XB})] \ket{1}\bra{1}.
 \end{equation}
 Then we can recognize the right-hand side of \eqref{eq:Q_rho_gamma} as $Q_\alpha^*(\Phi(\rho_{XB}) \| \Phi(\tau_X\otimes \sigma_B))$. By the monotonicity of $Q_\alpha^*$ under CPTP maps,
 \begin{equation}
 Q_\alpha^*(\Phi(\rho_{XB}) \| \Phi(\tau_X\otimes \sigma_B)) \leq Q_\alpha^*( \rho_{XB} \| \tau_X\otimes \sigma_B).
 \end{equation}
	Note that this holds for every $\sigma_B\in\mathcal{S}(B)$. Thus, it follows for all $\alpha> 1$.
	\begin{align}
	\frac{\alpha}{\alpha - 1} \log \left( 1 - \Pe(\sC)\right) - \log \frac{|\cs|}{|\mathcal{X}|}
	\leq \inf_{\sigma_B\in\mathcal{S}(B)} D_\alpha^*\left(\rho_{XB} \left\|  \tau_X \otimes  \sigma_B\right. \right),
	\end{align}
	or equivalently
	\begin{align}
	\frac{\alpha}{\alpha - 1} \log \left( 1 - \Pe(\sC) \right) - R \leq - H_\alpha^{*,\uparrow}(X|B).
	\end{align}
	Since $\sC$ is arbitrary of rate $R$, we obtain Eq.~\eqref{eq:strong1}.
	By the additivity of $D_\alpha^*$ under tensor products, we find that any $n$-blocklength code with $R < H(X|B)_\rho$ satisfies
	\begin{align}
	sc(n,R) \geq E_\textnormal{sc}^*(R).
	\end{align}
\end{proof}

\subsection{Matching bound} \label{subsec:SC_achiev}
The main result in this section is the $n$-shot upper bound on the decay exponent of the probability of success in terms of the $E_\text{sc}^*$ and additional residual terms, using a proof based on Mosonyi and Ogawa's proof of an analogous result in study of the transmission of classical information over quantum channels  \cite{MO17}.
Let us recall Theorem~\ref{thm:SC-achiev-bound}:

\SCachievbound*

This is a consequence of the following, more detailed, result.
\begin{prop}[$n$-shot Strong Converse Matching Bound] \label{prop:full-SC-achiev-bound}
	Let $R<H(X|B)_{\rho}$, $m\in \N$ and $\delta \in (0, \frac{1}{4})$. For all $n> m$, we have the bound
								\begin{equation}
	 sc(n,R) \leq  E_\textnormal{sc}^*(R) + \frac{c}{m}\log(m+1) + \frac{1}{\sqrt{n-1}} f_1(m,\delta) + \frac{1}{n-m}f_2(n,m)
	\end{equation}
	for 
	\begin{align}
	{E}_\textnormal{sc}^*(R) &:= \sup_{\alpha> 1} \frac{1-\alpha}{\alpha} \left( R - H_{\alpha}^{*,\uparrow}(X|B)_\rho \right) \\
	c &:=  \frac{3(|\cH_B|+2)(|\cH_B|-1)}{2} \\
	\sqrt{m}f_1(m,\delta) &:= \sqrt{3\log 2} \left( 4 \cosh (1+2\sqrt{|\mathcal{X}|^m}) [\log(1+2\sqrt{|\mathcal{X}|^m})]^2  + 1\right) \nonumber\\
	&\qquad +  4\sqrt{2} \left(\log \left(1 + \tr[P_{\texttt{supp}(\rho_m)}\rho_{m}^{-1/2} + \rho_{m}^{1/2}]\right)\right)\log \frac{1}{1 - \sqrt{1-\delta^2}} \\
	f_2(n,m) &:= 1- \log( 1 - 2^{- (\frac{n}{m}-1)R})
	\end{align}
	with $\rho_m := \cP_m(\rho_{XB}^{\otimes m})$, where $\cP_m$ is the pinching map associated to $\one_{X^m}\otimes \sigma_{\text{u},m}$,  $\sigma_{\text{u},m}$ is the universal symmetric state on $\cH_B^{\otimes m}$ given in Lemma~\ref{lem:univ-symm-state} below, and $H_{\alpha}^{*,\uparrow}(X|B)_\rho:= \max_{\sigma_B\in\mathcal{S}(B)} - D_\alpha^*(\rho_{XB}\|\mathds{1}_X \otimes \sigma_B)$ for $D_\alpha^*$ being sandwiched R\'enyi divergence, see Eq.~\eqref{eq:sandwich}.
	In particular,
	\begin{equation}
	\limsup_{n\to\infty} sc(n,R) \leq  E_\textnormal{sc}^*( R) .
	\end{equation}
\end{prop}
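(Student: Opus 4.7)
The plan is to adapt the blocking-and-pinching strategy of Mosonyi and Ogawa~\cite{MO17} (originally developed for classical-quantum channel coding) to the c-q Slepian-Wolf setting. There are four key ingredients: (i)~a one-shot achievability bound expressed through the smoothed max-relative entropy $D_\textnormal{max}^\delta$; (ii)~a blocking scheme that groups $n$ source copies into $k=\lfloor n/m\rfloor$ blocks of size~$m$; (iii)~the universal-symmetric-state pinching $\cP_m$ applied in each block with respect to $\one_{X^m}\otimes\sigma_{\text{u},m}$, whose pinching multiplicity satisfies $v_m\leq(m+1)^{(|\cH_B|+2)(|\cH_B|-1)/2}$; and (iv)~the AEP bound~\eqref{eq:Ddelta-AEP-UB} applied to $D_\textnormal{max}^\delta$.

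I would first prove a one-shot achievability statement: for any c-q state $\omega_{XB}$, any $\sigma_B\in\cS(B)$, any $\delta\in(0,\tfrac14)$, and any code alphabet $\cW$, there is a Slepian-Wolf code $\sC$ of rate $\log|\cW|$ with
\begin{equation*}
\Pe(\sC)\;\leq\;2\delta\;+\;\tfrac{|\cX|}{|\cW|}\,2^{D_\textnormal{max}^\delta(\omega_{XB}\|\tau_X\otimes\sigma_B)},
\end{equation*}
where $\tau_X=\one_X/|\cX|$. The encoder is a uniformly random binning $\cE:\cX\to\cW$, and the decoder is built from the projector certifying the smoothed-max inequality $\tilde\omega_{XB}\leq 2^a\,\tau_X\otimes\sigma_B$ (with $a=D_\textnormal{max}^\delta$ and $\tilde\omega$ in the $\delta$-ball around $\omega$); the analysis combines the Hayashi-Nagaoka operator inequality (to control the random-binning collision term) with Lemma~\ref{lem:$n$-shot-stein-exp} (to absorb the smoothing into the additive $2\delta$ loss). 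Applying this to the pinched product $\rho_m^{\otimes k}$ with reference $(\one_{X^m}\otimes\sigma_{\text{u},m})^{\otimes k}$ and code alphabet $|\cW|=\lceil 2^{nR}\rceil$, and invoking~\eqref{eq:Ddelta-AEP-UB} (which is legitimate because $\rho_m$ commutes with $\one_{X^m}\otimes\sigma_{\text{u},m}$, so sandwiched and Petz R\'enyi coincide on this pair), produces an upper bound of the form $D(\rho_m\|\one_{X^m}\otimes\sigma_{\text{u},m})+\sqrt{m}\,f_1(m,\delta)/\sqrt{k}$ on $\tfrac{1}{k}D_\textnormal{max}^\delta(\rho_m^{\otimes k}\|(\one_{X^m}\otimes\sigma_{\text{u},m})^{\otimes k})$. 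The pinching inequality $\rho_{XB}^{\otimes m}\leq v_m\rho_m$, invoked the requisite number of times to convert the Petz R\'enyi on $\rho_m$ into the sandwiched R\'enyi $D_\alpha^*$ on $\rho_{XB}^{\otimes m}$ and to minimise over the reference state, then bounds the right-hand side further by $-\tfrac{1}{m}H_\alpha^{*,\uparrow}(X^m|B^m)_{\rho^{\otimes m}}+\tfrac{3}{m}\log v_m$ for any $\alpha>1$.

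Combining the ingredients, using additivity $H_\alpha^{*,\uparrow}(X^m|B^m)_{\rho^{\otimes m}}=mH_\alpha^{*,\uparrow}(X|B)_\rho$, absorbing the $n-km\leq m$ leftover copies into the $\tfrac{1}{n-m}f_2(n,m)$ residual (which arises from the $\frac{|\cX|}{|\cW|}2^{D_\textnormal{max}^\delta}$ term of the one-shot bound after passing from $\Pe$ to $sc=-\tfrac{1}{n}\log(1-\Pe)$), and finally optimising over $\alpha>1$ to produce $E_\textnormal{sc}^*(R)$, yields the claimed bound on $sc(n,R)$. The asymptotic statement $\limsup_{n\to\infty}sc(n,R)\leq E_\textnormal{sc}^*(R)$ follows by sending $n\to\infty$ with $m,\delta$ fixed, then $m\to\infty$ and $\delta\to 0$. \textbf{The main obstacle} is orchestrating the three separate approximations---smoothing~$(\delta)$, pinching~$(v_m)$, and the AEP~$(1/\sqrt{k})$---coherently so that the residuals collapse to precisely $\tfrac{c}{m}\log(m+1)+\tfrac{f_1(m,\delta)}{\sqrt{n-1}}+\tfrac{f_2(n,m)}{n-m}$, with the explicit constant $c=3(|\cH_B|+2)(|\cH_B|-1)/2$ emerging from counting the number of times the pinching inequality enters when converting between sandwiched and Petz R\'enyi quantities.
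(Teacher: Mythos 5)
There is a genuine gap at the core of your argument: you are using the wrong one-shot primitive for the strong converse regime. Your plan is to prove an upper bound on the error probability, $\Pe(\sC)\leq 2\delta+\tfrac{|\cX|}{|\cW|}2^{D_\textnormal{max}^\delta(\omega_{XB}\|\tau_X\otimes\sigma_B)}$, and then "pass from $\Pe$ to $sc=-\tfrac1n\log(1-\Pe)$". But for $R<H(X|B)_\rho$ the right-hand side of any such achievability bound exceeds $1$ (that is exactly the content of the strong converse), so an upper bound on $\Pe$ is vacuous here and cannot be converted into the required \emph{lower} bound on the success probability $\Ps=1-\Pe$ of order $2^{-n(E_\textnormal{sc}^*(R)+o(1))}$. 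Relatedly, your chain of estimates has no mechanism to produce the Legendre-transform structure $\sup_{\alpha>1}\tfrac{1-\alpha}{\alpha}(R-H_\alpha^{*,\uparrow})$: the AEP for $D_\textnormal{max}^\delta$ only yields the $\alpha=1$ quantity $D(\rho_m\|\one_{X^m}\otimes\sigma_{\text{u},m})$, and the pinching inequalities relate divergences at the \emph{same} R\'enyi order, so "optimising over $\alpha>1$" at the end has nothing to optimise.

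The missing idea is the change-of-measure (dummy-state) argument of Mosonyi--Ogawa, which is how the paper proceeds. One first writes $E_\textnormal{sc}^\flat(R)=\min_{\sigma_{XB}\in\mathcal{S}_\rho(XB)}\{D(\sigma_{XB}\|\rho_{XB})+|H(X|B)_\sigma-R|^+\}$ (Proposition~\ref{prop:representation}), picks the optimal dummy state $\sigma_{XB}$, and builds a code that succeeds with probability close to $1$ \emph{for $\sigma_{XB}$} (via Theorem~\ref{theo:large_ach} when $H(X|B)_\sigma\leq R-\gamma$, and via a code-pruning argument reducing the rate from $R_1=H^\downarrow_\alpha(X|B)_\sigma+\gamma$ down to $R$ otherwise). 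Lemma~\ref{lem:dummy-state} then gives $\Ps(\rho^{\otimes n},\sC_n)\geq \e^{-a}\bigl(\Ps(\sigma^{\otimes n},\sC_n)-\tr[(\sigma^{\otimes n}-\e^a\rho^{\otimes n})_+]\bigr)$ with $a=D_\textnormal{max}^\delta(\sigma^{\otimes n}\|\rho^{\otimes n})$, and it is the trade-off between the change-of-measure cost $D(\sigma\|\rho)$ and the rate penalty $|H(X|B)_\sigma-R|^+$ that generates $E_\textnormal{sc}^\flat(R)$; only afterwards does one pinch blockwise to replace $E_\textnormal{sc}^\flat(\rho_m,mR)$ by $mE_\textnormal{sc}^*(R)+3\log v_{m,|\cH_B|}$. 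You have correctly identified the peripheral machinery (blocking into $k=\lfloor n/m\rfloor$ blocks, the universal symmetric state and the three factors of $\log v_{m,|\cH_B|}$ behind $c$, the AEP for $D_\textnormal{max}^\delta$), but without the dummy state, the variational representation, and the pruning step, the argument does not go through.
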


To establish Proposition~\ref{prop:full-SC-achiev-bound}, we first obtain a bound with $E_\text{sc}^{\flat}$ (Proposition~\ref{prop:SC-achiev-flat-bound}), and then exploit a pinching argument to further relate $E_\text{sc}^{\flat}$ to the desired $E_\text{sc}^*$. The proof of Proposition~\ref{prop:SC-achiev-flat-bound} and Proposition~\ref{prop:full-SC-achiev-bound} are delayed to Section~\ref{subsubsec:proof_SC_flat} and \ref{subsubsec:proof_SC_pinching}, respectively

\begin{prop} \label{prop:SC-achiev-flat-bound}
	
	For any $n\in \N$, $R>0$, and $\delta \in (0,\frac{1}{4})$, we have
	\begin{equation}
	sc(n,R) \leq  E_\textnormal{sc}^\flat(R) +\frac{1}{\sqrt{n}} [ \sqrt{4\log 2} (e_2 + 1) + e_1(\delta)]  + \frac{1}{n} \left[1- \log( 1 - 2^{-nR}) \right]
	\end{equation}
	where
	\begin{gather}
	e_1(\delta) :=   4\sqrt{2} \left(\log \left(1 + \tr[P_{\texttt{\textnormal{supp}}(\rho_{XB})}\rho_{XB}^{-1/2} + \rho_{XB}^{1/2}]\right)\right)\log \frac{1}{1 - \sqrt{1-\delta^2}} \label{eq:def_e1}\\
	e_2 := 4 \cosh (1+2\sqrt{|\mathcal{X}|}) [\log(1+2\sqrt{|\mathcal{X}|})]^2. \label{eq:def_e2}
	\end{gather}
	In particular,
	\begin{equation}
	\limsup_{n\to\infty} sc(n,R) \leq E_\textnormal{sc}^\flat(R).
	\end{equation}
\end{prop}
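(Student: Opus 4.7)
The statement is an achievability bound: I will construct, for each $n$, an $(n,R)$-code whose success probability under $\rho_{XB}^{\otimes n}$ is at least $\exp\{-n E_\textnormal{sc}^\flat(R) - O(\sqrt{n})\}$, from which the displayed upper bound on $sc(n,R) = -\tfrac{1}{n}\log(1-\Pestar(n,R))$ follows immediately. My strategy is a quantum Csisz\'ar-K\"orner twisting argument, parallel to the Mosonyi-Ogawa treatment of classical capacity \cite{MO17} via the log-Euclidean R\'enyi divergence.

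First, by the variational formula \eqref{eq:representation3} I pick $\sigma^\star_{XB}\in \cS_\rho(XB)$ nearly minimizing $D(\sigma^\star\|\rho_{XB}) + |H(X|B)_{\sigma^\star} - R|^+$. If $H(X|B)_{\sigma^\star}\geq R$, I perturb $\sigma^\star$ by mixing with a flat c-q state so as to push $H(X|B)_{\sigma^\star}$ strictly below $R$, all while preserving $\sigma^\star\ll \rho_{XB}$; a Fannes-type continuity estimate on the conditional entropy, whose explicit dependence on $|\cX|$ yields the constant $e_2 = 4\cosh(1+2\sqrt{|\cX|})\log^2(1+2\sqrt{|\cX|})$, controls the resulting increase of the minimand.

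Next, I apply the Slepian-Wolf achievability (Theorem~\ref{theo:large_ach}) \emph{to the twisted source} $\sigma^{\star,\otimes n}$ at rate $R > H(X|B)_{\sigma^\star}$, producing an $(n,R)$-code $\sC_n$ whose correct-decoding operator $T_n := \sum_{\bx}\ket{\bx}\bra{\bx}\otimes \Pi_{\bx}^{(\cE_n(\bx))}$ satisfies $\tr[T_n\, \sigma^{\star,\otimes n}] \to 1$ exponentially; the rounding $|\cs_n|=\lceil 2^{nR}\rceil$ in the code size accounts for the $\frac{1}{n}[1-\log(1-2^{-nR})]$ residual. I then transfer this guarantee to $\rho_{XB}^{\otimes n}$ via the AEP for the smoothed max-divergence: for $\delta\in(0,\tfrac14)$, \eqref{eq:Ddelta-AEP-UB} yields $\bar\sigma_n \in \B_\delta(\sigma^{\star,\otimes n})$ with
\[
\bar\sigma_n \leq \exp\{nD(\sigma^\star\|\rho_{XB}) + \sqrt{n}\, e_1(\delta)\}\,\rho_{XB}^{\otimes n}.
\]
Combining this with \eqref{eq:d_op_vs_td} to bound $\|\bar\sigma_n-\sigma^{\star,\otimes n}\|_1 \leq 2\delta$ gives
\[
1-\Pe(\sC_n) = \tr[T_n\, \rho_{XB}^{\otimes n}] \geq \exp\{-nD(\sigma^\star\|\rho_{XB}) - \sqrt{n}\, e_1(\delta)\}\bigl(\tr[T_n\,\sigma^{\star,\otimes n}]-2\delta\bigr),
\]
and taking $-\tfrac{1}{n}\log$ of both sides completes the chain.

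The principal obstacle is Step~1: I must ensure that the perturbation forcing $H(X|B)_{\sigma^\star}<R$ does not inflate the variational minimand by more than $O(1/\sqrt{n})$-worth of terms, and that this inflation is controlled uniformly by a dimensional constant in $|\cX|$---this quantitative continuity bound is precisely what produces $e_2$. A secondary care is maintaining $\sigma^\star \ll \rho_{XB}$ throughout the perturbation (else $D(\sigma^\star\|\rho)$ blows up), which is exactly why the variational representation in Proposition~\ref{prop:representation} is restricted to $\cS_\rho(XB)$. The remaining $\sqrt{n}$ losses are then forced by the max-divergence AEP \eqref{eq:Ddelta-AEP-UB}, and match $e_1(\delta)$ in the statement.
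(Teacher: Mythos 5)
Your overall architecture --- pick a dummy state $\sigma^\star$ from the variational formula \eqref{eq:representation3}, run the achievability bound of Theorem~\ref{theo:large_ach} on $\sigma^{\star\otimes n}$, and transfer the success probability back to $\rho_{XB}^{\otimes n}$ via Lemma~\ref{lem:dummy-state} together with the smoothed max-relative entropy AEP \eqref{eq:Ddelta-AEP-UB} --- is exactly the paper's strategy for the branch in which $H(X|B)_{\sigma^\star}\leq R$ (the quantity $F_1$ in the paper's proof). The transfer inequality you write down is correct and is precisely Lemma~\ref{lem:dummy-state} combined with Lemma~\ref{lem:$n$-shot-stein-exp}.

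The gap is in your Step~1, the case $H(X|B)_{\sigma^\star}> R$. You propose to perturb $\sigma^\star$ until its conditional entropy drops below $R$ and to control the damage by a Fannes-type bound. This cannot work at the required exponent. The penalized minimum $\min_\sigma\{D(\sigma\|\rho)+|H(X|B)_\sigma-R|^+\}$ is in general \emph{strictly smaller} than the constrained minimum $\min\{D(\sigma\|\rho):H(X|B)_\sigma\leq R\}$ (the former corresponds to restricting $s\in(0,1)$ in the Legendre transform, the latter to $s\to\infty$; whenever the slope of $h\mapsto\min\{D(\sigma\|\rho):H(X|B)_\sigma=h\}$ exceeds $1$ in modulus on $(R,H(X|B)_\rho)$ the optimal $\sigma^\star$ sits strictly above $R$). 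Consequently no state with conditional entropy below $R$ --- perturbed or otherwise --- attains $E_\textnormal{sc}^\flat(R)$, and the required perturbation is macroscopic, not $O(1/\sqrt n)$, so continuity estimates cannot absorb it. The penalty $|H(X|B)_{\sigma^\star}-R|^+$ must instead be paid \emph{operationally}: the paper (Lemma~\ref{prop:n-shot-F2}) builds a code at the larger rate $R_1=H_\alpha^\downarrow(X|B)_{\sigma^\star}+\gamma$, where Theorem~\ref{theo:large_ach} applies, and then expurgates the codebook down to rate $R$ by a majorization argument, losing a factor $2^{-n(R_1-R)}$ in success probability --- which is exactly the exponent $H(X|B)_{\sigma^\star}-R$ appearing in the variational formula. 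This rate-reduction step is the missing idea.

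Two smaller corrections. First, the constant $e_2$ does not come from Fannes continuity of $H(X|B)_\sigma$ in the state; it comes from the continuity of $\alpha\mapsto H_\alpha^\downarrow(X|B)_{\sigma^\star}$ in the R\'enyi parameter near $\alpha=1$ (Lemma 2.3 of \cite{AMV12}), needed because Theorem~\ref{theo:large_ach} yields an exponent in terms of $H_\alpha^\downarrow$ with $\alpha<1$ while the variational formula involves $H_1^\downarrow=H(X|B)_\sigma$. Second, to make the correct-decoding probability for $\sigma^{\star\otimes n}$ converge quantitatively you need a strictly positive margin $R-H_\alpha^\downarrow(X|B)_{\sigma^\star}\geq\gamma(1-\alpha)$; the choices $\gamma\sim 1-\alpha\sim n^{-1/2}$ are what produce the $\sqrt{4\log 2}\,(e_2+1)/\sqrt{n}$ term in the statement, and your sketch leaves this calibration implicit.
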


\subsubsection{Proof of Proposition~\ref{prop:SC-achiev-flat-bound}} \label{subsubsec:proof_SC_flat}
Using the variational representation, Proposition~\ref{prop:representation}, one has
\begin{equation} \label{eq:F=tildeE}
E_\text{sc}^\flat(R) = \min_{\sigma \in \mathcal{S}_{\rho}(XB)} \{D (\sigma_{XB}\|\rho_{XB}) + |H(X|B)_\sigma - R|^+)\}.
\end{equation}
The main idea is to use this variational representation to introduce a ``dummy state'' $\sigma_{XB}$ for which we can apply Theorem~\ref{theo:large_ach}. Then we relate the probability of success for this state to that of the source state $\rho_{XB}$.

To obtain explicit bounds $sc(n,R)$, we will proceed in the $n$-shot setting. Let us define an $n$-shot analog of the right-hand side for $\alpha\in[0,1)$, $\gamma>0$, and $\delta >0$ by
\begin{equation}
F(R,\rho_{XB},n,\delta,\alpha,\gamma) := \min_{\sigma \in \mathcal{S}_{\rho}(XB)} \left\{ \frac{1}{n}D_\text{max}^\delta (\sigma_{XB}^{\otimes n}\|\rho_{XB}^{\otimes n}) + |H^\downarrow_\alpha(X|B)_\sigma - R + \gamma|^+ \right\}
\end{equation}
where $D_\text{max}^\delta(\sigma\|\rho)$ indicates the max-relative entropy smoothed by $\delta$ (defined by \eqref{eq:Dmax_smoothed}) in the distance of optimal purifications, denoted $d_\text{op}$ (defined in \eqref{eq:def_dist_purifications}).  This quantity is upper bounded by $ E_\text{sc}^\flat$ (with error terms) in Lemma~\ref{lem:F-n-shot-bound} below. First, we establish the preliminary result given by Lemma~\ref{lem:dummy-state}. Next we bound the strong converse error exponent by $F(R,n,\delta,\alpha,\gamma)$ in two steps (Lemmas \ref{prop:n-shot-F1} and \ref{prop:n-shot-F2} below), and use Lemma \ref{lem:F-n-shot-bound} to bound with $ E^\flat_\text{sc}$.

The following result allows us to compare the success probability of the same code $\cC$ for  $\rho_{XB}$ and a dummy state $\sigma_{XB}\in S_\rho(XB)$.
\begin{lemm}	\label{lem:dummy-state}
	Let $\sigma_{XB} \in S_\rho(XB)$. For any code $\sC$, and any $a>0$ we have
	
	\begin{equation}
	\Ps(\rho_{XB}, \sC) \geq \e^{- a} \left( \Ps(\sigma_{XB}, \sC)  - \tr[(\sigma_{XB} - \e^{a}\rho_{XB})_+]  \right).
	\end{equation}
	Here, we write $\Ps(\rho_{XB}, \sC)$ and $ \Ps(\sigma_{XB}, \sC)$ to emphasize the dependency on the state, which is taken to be $\rho_{XB}$ when not explicitly indicated.
\end{lemm}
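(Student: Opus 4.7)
The plan is to represent both success probabilities as the trace of the respective state against a single positive operator constructed from the code $\sC$, rearrange the desired inequality, and then apply a Jordan (positive/negative part) decomposition to conclude.

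First I would reduce to a deterministic encoding. As observed in the paragraph containing equation \eqref{eq:Pe_rand_as_det}, the error (hence success) probability for a random encoding is a convex combination over deterministic encodings $\cE_j$ with weights $Q_j$, for both $\rho_{XB}$ and $\sigma_{XB}$. The bound to be proved is linear in $\Ps$ and the correction term $\tr[(\sigma_{XB}-\e^{a}\rho_{XB})_+]$ is independent of $\sC$, so averaging over $j$ shows that it suffices to treat deterministic codes $\sC = (\cE, \cD)$ with $\cD_w = \{\Pi^{(w)}_x\}_{x\in\cX}$.

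Next I would introduce the single test operator
\begin{equation}
\Pi_{XB} := \sum_{x\in\cX} \ket{x}\bra{x}\otimes \Pi_x^{(\cE(x))},\qquad 0 \leq \Pi_{XB}\leq \one_{XB},
\end{equation}
so that for any c-q state $\eta_{XB} = \sum_x q(x)\ket{x}\bra{x}\otimes \eta_B^{x}$ one has $\Ps(\eta_{XB},\sC) = \tr[\eta_{XB}\Pi_{XB}]$. In particular this applies both to $\rho_{XB}$ and to $\sigma_{XB}\in \mathcal{S}_\rho(XB)$. Multiplying the claimed inequality by $\e^{a}$ and rearranging, what needs to be shown is
\begin{equation}
\tr\bigl[(\e^{a}\rho_{XB}-\sigma_{XB})\Pi_{XB}\bigr] \;\geq\; -\tr\bigl[(\sigma_{XB}-\e^{a}\rho_{XB})_+\bigr].
\end{equation}

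Finally, writing the Hermitian operator $\e^{a}\rho_{XB}-\sigma_{XB} = A_+ - A_-$ with $A_-=(\sigma_{XB}-\e^{a}\rho_{XB})_+\geq 0$ and $A_+\geq 0$ orthogonal positive parts, I would bound
\begin{equation}
\tr[(A_+-A_-)\Pi_{XB}] \;\geq\; -\tr[A_-\Pi_{XB}] \;\geq\; -\tr[A_-],
\end{equation}
where the first inequality uses $A_+\geq 0$ and $\Pi_{XB}\geq 0$ and the second uses $A_-\geq 0$ together with $\Pi_{XB}\leq \one_{XB}$. This is exactly the needed inequality, and completes the argument. There is no real obstacle here: the only thing to be careful about is the reduction to deterministic encodings and the identification of the common operator $\Pi_{XB}$ so that both success probabilities become linear functionals of the state; once this is done, the estimate is a one-line consequence of the Jordan decomposition.
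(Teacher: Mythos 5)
Your argument is correct and is essentially the paper's own proof: the same test operator $\Pi_{XB}$ linearizes both success probabilities, and your Jordan-decomposition estimate is exactly the paper's inequality $\tr[XY]\leq\tr[X_+]$ for self-adjoint $X$ and $0\leq Y\leq\one$, applied with $X=\sigma_{XB}-\e^{a}\rho_{XB}$ and $Y=\Pi_{XB}$. The explicit reduction to deterministic encodings is a harmless extra precaution that the paper handles earlier in the text.
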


\begin{proof}[Proof of Lemma~\ref{lem:dummy-state}] Let $\sC$ be a coder with encoder $\cE$ and a decoder given by the family of POVMs $\mathcal{D} = \{\cD_w\}_{w\in\cs}$ where $\cD_w = \{\Pi_x^{(w)}\}_{x\in \cX}$.  Let $\Pi_{XB} = \sum_{x\in \cX} \ket{x}\bra{x}\otimes \Pi_x^{(\cE(x))}$. Then
\begin{equation}
\tr[\Pi_{XB}\rho_{XB}] = \Ps(\rho_{XB}, \sC), \qquad \tr[\Pi_{XB}\sigma_{XB}] = \Ps(\sigma_{XB}, \sC).
\end{equation}
For any self-adjoint operator $X$ and $Y$ with $0\leq Y\leq \one$,
\begin{equation} \label{eq:pos_part_ieq}
\tr[X_+] \geq \tr[X_+Y] \geq \tr[X_+Y] - \tr[X_- Y] = \tr[XY].
\end{equation}
Since $0\leq \Pi_{XB} = \sum_{x\in \cX} \ket{x}\bra{x}\otimes \Pi_x^{(\cE(x))} \leq \sum_{x\in \cX} \ket{x}\bra{x}\otimes \one_B = \one_{XB}$, eq.~\eqref{eq:pos_part_ieq} with $X = (\sigma_{XB}-\e^a \sigma_{XB})$ and $Y = \Pi_{XB}$ yields
\begin{equation}
\tr[(\sigma_{XB}-\e^a \sigma_{XB})_+] \geq \tr[ \Pi_{XB}(\sigma_{XB}-\e^a \sigma_{XB})] = \Ps(\sigma_{XB},\sC) - \e^{a}\Ps(\rho_{XB},\sC)
\end{equation}
which yields the result after dividing by $\e^a$.
\end{proof}

In the following, we prove Theorem \ref{prop:SC-achiev-flat-bound} using two lemmas, analogous to Lemmas V.9 and V.10 of \cite{MO17}.
Let us define 
\begin{align}
F_1(R,n,\delta,\alpha,\gamma) &:= \min_{\sigma_{XB}: H^\downarrow_{\alpha}(X|B)_\sigma \leq R - \gamma}\frac{1}{n}D_\text{max}^\delta (\sigma_{XB}^{\otimes n}\|\rho_{XB}^{\otimes n}).\\
F_2(R,n,\delta,\alpha,\gamma) &:= \min_{\sigma_{XB}:H^\downarrow_\alpha(X|B)_\sigma \geq R - \gamma} \frac{1}{n}D_\text{max}^\delta (\sigma_{XB}^{\otimes n}\|\rho_{XB}^{\otimes n}) + H^\downarrow_\alpha(X|B)_\sigma - R + \gamma
\end{align}
where the minima are restricted to $\sigma_{XB} \in \mathcal{S}_{\rho}(XB)$. Then
\begin{equation} \label{eq:F-min-F1-F2}
F(R,n,\delta,\alpha,\gamma) = \min \left\{ F_1(R,n,\delta,\alpha,\gamma),F_2(R,n,\delta,\alpha,\gamma) \right\}.
\end{equation}
First, we will bound the error probability using $F_1$.

\begin{lemm}	 \label{prop:n-shot-F1}
	Let $\rho_{XB}$ be a c-q source state. For any $\delta \in(0,\frac{1}{2})$, $\gamma > 0$, $\alpha\in [0,1)$ and $n\geq 1$,
	\begin{equation} \label{F1-$n$-shot-bound}
	sc(n,R)\leq  F_1(R,n,\delta,\alpha,\gamma) - \frac{1}{n}\log \left( 1 - 4 \e^{-n \gamma (1-\alpha)} -2\delta\right) .
	\end{equation}
				\end{lemm}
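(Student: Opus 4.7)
The plan is to exploit the ``dummy-state'' idea suggested by the variational representation~\eqref{eq:F=tildeE}: for each admissible $\sigma_{XB}$ in the constraint set defining $F_1$, I would construct a CQSW code that is good for the i.i.d.\ source $\sigma_{XB}^{\otimes n}$, transfer its success probability back to $\rho_{XB}^{\otimes n}$ via Lemma~\ref{lem:dummy-state}, paying a price controlled by $D_{\textnormal{max}}^\delta(\sigma_{XB}^{\otimes n}\|\rho_{XB}^{\otimes n})$, and finally minimize over $\sigma_{XB}$ to obtain~$F_1$.

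For the first step, fix any $\sigma_{XB}\in\mathcal{S}_\rho(XB)$ with $H_\alpha^\downarrow(X|B)_\sigma \leq R-\gamma$. Since $\beta\mapsto D_\beta(\sigma_{XB}\|\mathds{1}_X\otimes \sigma_B)$ is monotone on $[0,1]$ (Lemma~\ref{lemma:chaotic}), one has $H(X|B)_\sigma \leq H_\alpha^\downarrow(X|B)_\sigma\leq R-\gamma<R$, so Theorem~\ref{theo:large_ach} applies to $\sigma_{XB}^{\otimes n}$. Rewriting~\eqref{eq:E0_bar} through the substitution $s=1-\beta$ gives $E_\textnormal{r}^\downarrow(\sigma_{XB},R)=\max_{\beta\in[0,1]}(1-\beta)(R-H_\beta^\downarrow(X|B)_\sigma)$, and specializing the supremum to $\beta=\alpha$ yields the lower bound $E_\textnormal{r}^\downarrow(\sigma_{XB},R)\geq (1-\alpha)\gamma$. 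Hence Theorem~\ref{theo:large_ach} supplies an $(n,R)$-code $\sC$ with $\Pe(\sigma_{XB}^{\otimes n},\sC)\leq 4\,\e^{-n(1-\alpha)\gamma}$.

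For the transfer step, set $a:=D_{\textnormal{max}}^\delta(\sigma_{XB}^{\otimes n}\|\rho_{XB}^{\otimes n})$. Lemma~\ref{lem:$n$-shot-stein-exp} then gives $\tr[(\sigma_{XB}^{\otimes n}-\e^a\rho_{XB}^{\otimes n})_+]\leq 2\delta$, and Lemma~\ref{lem:dummy-state} applied to the very same code $\sC$ constructed above yields
\begin{equation}
1-\Pestar(n,R)\geq \Ps(\rho_{XB}^{\otimes n},\sC)\geq \e^{-a}\left(\Ps(\sigma_{XB}^{\otimes n},\sC)-2\delta\right)\geq \e^{-a}\left(1-4\,\e^{-n(1-\alpha)\gamma}-2\delta\right).
\end{equation}
Taking $-\tfrac{1}{n}\log$ of both sides gives $sc(n,R)\leq \tfrac{a}{n}-\tfrac{1}{n}\log(1-4\,\e^{-n(1-\alpha)\gamma}-2\delta)$, and minimizing over admissible $\sigma_{XB}$ converts $a/n$ into $F_1(R,n,\delta,\alpha,\gamma)$, which is the desired inequality. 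No substantive obstacle arises in this plan: the argument is just an assembly of Theorem~\ref{theo:large_ach}, Lemma~\ref{lem:dummy-state}, and Lemma~\ref{lem:$n$-shot-stein-exp} glued together by the variational dummy-state trick. The only point needing a moment of care is aligning the R\'enyi parameter $\alpha$ appearing in the constraint set of $F_1$ with the supremum defining $E_\textnormal{r}^\downarrow$, which is handled by the change of variables $s=1-\alpha$ above.
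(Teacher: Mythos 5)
Your proposal is correct and follows essentially the same route as the paper's proof: apply Theorem~\ref{theo:large_ach} to the dummy state $\sigma_{XB}$ satisfying the $F_1$ constraint, lower-bound $E_\textnormal{r}^\downarrow(\sigma_{XB},R)$ by $(1-\alpha)\gamma$ using the constraint $H_\alpha^\downarrow(X|B)_\sigma\leq R-\gamma$, and transfer the success probability back to $\rho_{XB}^{\otimes n}$ via Lemma~\ref{lem:dummy-state} and Lemma~\ref{lem:$n$-shot-stein-exp}. The only (immaterial) difference is that the paper parametrizes by an arbitrary $r>F_1$ and lets $r\downarrow F_1$, whereas you set $a=D_{\textnormal{max}}^\delta(\sigma_{XB}^{\otimes n}\|\rho_{XB}^{\otimes n})$ directly and minimize over $\sigma_{XB}$ at the end.
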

\begin{proof}[Proof of Lemma~\ref{prop:n-shot-F1}]
	Let $r> F_1(R,n,\delta,\alpha,\gamma)$. Then there exists $\sigma_{XB}$ such that 
	\begin{equation}
	\frac{1}{n}D_\text{max}^\delta (\sigma_{XB}^{\otimes n}\|\rho_{XB}^{\otimes n}) \leq r \label{eq:F1-r}
	\end{equation}
	and
	\begin{equation} \label{eq:F1-R}
	H^\downarrow_\alpha(X|B)_\sigma \leq R-\gamma.
	\end{equation}

	By Lemma \ref{lem:dummy-state} with $a=nr$, we have
	\begin{equation}
	\Ps(\rho_{XB}^{\otimes n}, \sC_n) \geq \e^{-n r} \left( \Ps(\sigma_{XB}^{\otimes n}, \sC_n)  - \tr[(\sigma_{XB}^{\otimes n} - \e^{nr}\rho_{XB}^{\otimes n})_+]  \right)
	\end{equation}
	for any code $\sC_n$. Then, since $D_\text{max}^\delta (\sigma_{XB}^{\otimes n}\|\rho_{XB}^{\otimes n}) \leq nr$, we have that
	$\tr[(\sigma_{XB}^{\otimes n} - \e^{nr}\rho_{XB}^{\otimes n})_+] \leq 2\delta$ by Lemma~\ref{lem:$n$-shot-stein-exp}. Additionally, since $\alpha\mapsto H^\downarrow_\alpha(X|B)_\sigma$ is monotone decreasing (see Proposition~\ref{prop:H}), we have that $H(X|B)_\sigma \leq H^\downarrow_\alpha(X|B)_\sigma \leq R$. Thus, we can choose a code $\sC_n$ with rate $R$ such that
	\begin{equation}
	\Ps(\sigma_{XB}^{\otimes n},\sC_n) \geq 1- 4 \exp (-n  E_\text{r}^{\downarrow}(\sigma_{XB},R))
	\end{equation}
	by Theorem~\ref{theo:large_ach}. Moreover,
	\begin{equation}
	  E_r^\downarrow(\sigma_{XB},R) = \sup_{s\in[0,1]} - s [H^\downarrow_{1-s}(X|B)_\sigma - R] \geq (1-\alpha)[R - H^\downarrow_\alpha(X|B)_\sigma] \geq \gamma(1-\alpha),
	\end{equation}
	since $\alpha\in [0,1)$. Therefore,
	\begin{equation} \label{eq:F1-good-code}
	1-\Pe(\rho_{XB}^{\otimes n}, \sC_n) =\Ps(\rho_{XB}^{\otimes n},\sC_n) \geq \e^{-nr} \left( 1 - 4  \e^{-n \gamma(1-\alpha)} -2 \delta \right).
	\end{equation}
	Then since $\Pestar(n,R) \leq \Pe(\rho_{XB}^{\otimes n}, \sC_n)$,
	\begin{equation}
	- \frac{1}{n}\log ( 1 - \Pestar(n,R)) \leq r - \frac{1}{n}\log \left( 1 - 4 \e^{-n \gamma(1-\alpha)} -2\delta\right).
	\end{equation}
	Since this is true for any $r > F_1(R,n,\delta,\alpha,\gamma)$, we have \eqref{F1-$n$-shot-bound}. 
\end{proof}

\begin{lemm}	\label{prop:n-shot-F2}
	For all $n\geq 1$, $\delta \in (0,\frac{1}{2})$, $\alpha\in [0,1)$, and $\gamma > 0$, we have
	\begin{equation} \label{eq:F2-bound}
	sc(n,R) \leq F_2(R,n,\delta,\alpha,\gamma) - \frac{1}{n}\log\left[(1-2^{-nR}) \left( 1 - 4  \e^{-n \gamma(1-\alpha)} - 2\delta \right)\right] .
	\end{equation}
				\end{lemm}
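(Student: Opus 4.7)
The plan is to parallel the proof of Lemma \ref{prop:n-shot-F1}, now accommodating the possibility that the dummy state $\sigma$ may have $H(X|B)_\sigma \geq R$. Fix $r > F_2(R,n,\delta,\alpha,\gamma)$; by the definition of $F_2$ there exists $\sigma_{XB} \in \mathcal{S}_\rho(XB)$ with $H^{\downarrow}_\alpha(X|B)_\sigma \geq R - \gamma$ and
\begin{equation}
\tfrac{1}{n} D^\delta_{\text{max}}(\sigma_{XB}^{\otimes n}\|\rho_{XB}^{\otimes n}) + H^{\downarrow}_\alpha(X|B)_\sigma - R + \gamma < r.
\end{equation}
Set $R^{*} := H^{\downarrow}_\alpha(X|B)_\sigma + \gamma$, so $R^{*} \geq R$.

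Since $\alpha \in [0,1)$ yields $H(X|B)_\sigma \leq H^{\downarrow}_\alpha(X|B)_\sigma < R^{*}$, Theorem \ref{theo:large_ach} applied to the source $\sigma^{\otimes n}$ at rate $R^{*}$ produces a code $\sC^{*}_n$ satisfying
\begin{equation}
\Ps(\sigma_{XB}^{\otimes n}, \sC^{*}_n) \geq 1 - 4 \e^{-n E^{\downarrow}_{\text{r}}(\sigma, R^{*})} \geq 1 - 4 \e^{-n\gamma(1-\alpha)},
\end{equation}
where the last inequality comes from choosing $s = 1-\alpha$ in the supremum defining $E^{\downarrow}_{\text{r}}$. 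The next, technical step is to convert $\sC^{*}_n$ into a rate-$R$ code $\sC_n$ for $\sigma^{\otimes n}$, losing at most a factor $(1 - 2^{-nR})$ in the success probability. Following the random-partition argument of Lemma V.10 in \cite{MO17} (adapted to the Slepian-Wolf setting), the $\lceil 2^{nR^{*}}\rceil$ indices of $\sC^{*}_n$ are randomly assigned into $\lceil 2^{nR}\rceil$ bins, with one bin reserved as an erasure flag, and the decoder for $\sC_n$ uses the quantum side information to identify the relevant portion of the $\sC^{*}_n$ POVM within each bin. A probabilistic analysis yields
\begin{equation}
\Ps(\sigma_{XB}^{\otimes n}, \sC_n) \geq (1 - 2^{-nR})\bigl(1 - 4 \e^{-n\gamma(1-\alpha)}\bigr),
\end{equation}
the $(1-2^{-nR})$ factor being the probability that the rate-$R$ encoding of a given source sequence avoids the reserved bin.

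Finally, transfer the success probability from $\sigma^{\otimes n}$ to $\rho^{\otimes n}$ via Lemma \ref{lem:dummy-state} with $a = nr$. Since $nr \geq D^\delta_{\text{max}}(\sigma_{XB}^{\otimes n}\|\rho_{XB}^{\otimes n})$, Lemma \ref{lem:$n$-shot-stein-exp} gives $\tr[(\sigma^{\otimes n} - 2^{nr}\rho^{\otimes n})_+] \leq 2\delta$. A careful application of Lemma \ref{lem:dummy-state} at the level of the code POVM, in which the $2\delta$ approximation error is absorbed directly into the $(1 - 2^{-nR})$ factor, yields
\begin{equation}
\Ps(\rho_{XB}^{\otimes n}, \sC_n) \geq 2^{-nr}(1 - 2^{-nR})\bigl(1 - 4\e^{-n\gamma(1-\alpha)} - 2\delta\bigr).
\end{equation}
Applying $-\tfrac{1}{n}\log$ and letting $r \downarrow F_2(R,n,\delta,\alpha,\gamma)$ establishes the claim. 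The main obstacle is the rate-reduction step: given a rate-$R^{*}$ code with high success probability on $\sigma$, one must construct a rate-$R$ code whose success is only diminished by the factor $(1 - 2^{-nR})$ rather than by a much cruder factor such as $2^{-n(R^{*}-R)}$. This requires a probabilistic construction of a POVM-based decoder that exploits the uniform random encoding structure used in the proof of Theorem \ref{theo:large_ach}.
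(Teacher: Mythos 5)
Your overall skeleton (dummy state from the definition of $F_2$, achievability via Theorem~\ref{theo:large_ach} at the elevated rate $R^{*}=H^{\downarrow}_\alpha(X|B)_\sigma+\gamma$, a rate-reduction step, and a transfer to $\rho$ via $D^\delta_{\text{max}}$ and Lemma~\ref{lem:dummy-state}) matches the paper's, and your final displayed inequality for $\Ps(\rho_{XB}^{\otimes n},\sC_n)$ is the correct one. However, the step you yourself identify as the main obstacle is not merely technical --- as stated, it is false. You claim that a rate-$R^{*}$ code for $\sigma^{\otimes n}$ with success probability close to $1$ can be converted into a rate-$R$ code for $\sigma^{\otimes n}$ whose success probability is diminished only by the factor $(1-2^{-nR})$, explicitly excluding the ``cruder'' factor $2^{-n(R^{*}-R)}$. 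But the minimizing $\sigma_{XB}$ in $F_2$ generically has $H(X|B)_\sigma > R$ (this is the whole point of the strong-converse regime), and then no rate-$R$ code for $\sigma^{\otimes n}$ can have success probability near $1$: the very converse being proven forbids it. The exponential loss $2^{-n(R^{*}-R)}$ is unavoidable, and it is not a defect --- it is exactly the term $H^{\downarrow}_\alpha(X|B)_\sigma - R + \gamma$ that appears inside $F_2$ and is meant to be absorbed into $r$. The paper handles this with a deterministic majorization argument (keep the $2^{nR}$ bins carrying the most success probability; the uniform vector is majorized by every probability vector, so the retained fraction is at least $(2^{nR}-1)/2^{nR_1}$), which yields precisely the factor $2^{-n(R_1-R)}(1-2^{-nR})$; no random binning is needed.

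There is a second, related problem with your ordering of the steps. The paper transfers from $\sigma$ to $\rho$ \emph{first}, at the high rate $R_1$ and with the reduced exponent $r_1 = r - (R_1-R) \geq \frac{1}{n}D^\delta_{\text{max}}(\sigma^{\otimes n}\|\rho^{\otimes n})$, and only then prunes; the pruning then acts multiplicatively on the already-transferred success probability under $\rho$. If instead you prune first (on $\sigma$) and transfer afterwards, the transferred bound reads $\e^{-nr_1}\left(2^{-n(R^{*}-R)}(1-2^{-nR})(1-4\e^{-n\gamma(1-\alpha)}) - 2\delta\right)$: the smoothing error $2\delta$ is \emph{not} scaled down by $2^{-n(R^{*}-R)}$ and swamps the main term. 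Your remark that the $2\delta$ error is ``absorbed directly into the $(1-2^{-nR})$ factor'' does not resolve this. To repair the proof, follow the paper: set $R_1 = H^{\downarrow}_\alpha(X|B)_\sigma+\gamma$ and $r_1 = r + R - R_1$, run the $F_1$ argument (Lemma~\ref{prop:n-shot-F1}) at rate $R_1$ with exponent $r_1$ to get $\Ps(\rho_{XB}^{\otimes n},\sC_n)\geq \e^{-nr_1}(1-4\e^{-n\gamma(1-\alpha)}-2\delta)$, and then prune that code down to rate $R$, using $r_1 + R_1 - R = r$ to conclude.
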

\begin{proof}[Proof of Lemma~\ref{prop:n-shot-F2}]
	Let $r > F_2(R,\rho_{XB},n,\delta,\alpha,\gamma)$. Then there exists $\sigma_{XB}$ such that
	\begin{equation} \label{eq:F2-r-satisfies}
	r > \frac{1}{n}D_\text{max}^\delta (\sigma_{XB}^{\otimes n}\|\rho_{XB}^{\otimes n}) + H^\downarrow_\alpha(X|B)_\sigma - R + \gamma
	\end{equation}
	and $R - \gamma\leq H^\downarrow_\alpha(X|B)_\sigma$. Let 
	\begin{gather}
	R_1 := H^\downarrow_\alpha(X|B)_\sigma +  \gamma \\
	r_1 := r +R -H^\downarrow_\alpha(X|B)_\sigma -  \gamma
	\end{gather}
	which has $r_1 \geq \frac{1}{n}D_\text{max}^\delta (\sigma_{XB}^{\otimes n}\|\rho_{XB}^{\otimes n})$ by \eqref{eq:F2-r-satisfies}.
	We see then that equation~\eqref{eq:F1-r} holds for the state $\sigma_{XB}$ when  $r$ is replaced by $r_1$, and \eqref{eq:F1-R} holds for $\sigma_{XB}$ when $R$ is replaced by $R_1$. Therefore, \eqref{eq:F1-good-code} yields that there is a code $\sC_n$ with rate $R_1$ such that
	\begin{equation} \label{eq:long-code-F2}
	\Ps(\rho_{XB}^{\otimes n},\sC_n) \geq \e^{-nr_1} \left( 1 - 4  \e^{-n \gamma(1-\alpha)} - 2\delta \right).
	\end{equation} 
	
	We have $R \leq R_1$ by construction. We aim to construct a code $\tilde \cC_n$ with rate $R$ and a high probability of success by pruning the low probability elements of $\cW$, as follows. If $R= R_1$, take $\tilde \sC_n = \sC_n$. Then \eqref{eq:F2-bound-success-tilde-C} holds, and we may continue the proof from there. Otherwise, $R<R_1$. Let $\sC_n$ have encoder $\cE_n:\cX^n \to \cW_n$ and decoding POVMs $\cD_w = \{\Pi^{(w)}_{\vec x}\}_{\vec x\in \cX^n}$, for $w\in \cW_n$, where $|\cW_n|=2^{n R_1}$. Then given a sequence $\vec x\in \cX^n$, the probability of correctly decoding the sequence is given by
	\begin{equation}
	\tr[ \rho_B^{\vec x} \Pi_{\vec x}^{(\cE_n(\vec x))}]
	\end{equation}
	where $\rho_B^{\vec x} = \rho_B^{x_1}\otimes \rho_B^{x_2}\otimes \dotsm \otimes \rho_B^{x_n}$. Similarly, writing $p(\vec x) = p(x_1)\dotsm p(x_n)$, the quantity
	\begin{equation}
	P_w := \sum_{\vec x\in \cX^n: \,\cE_n(\vec x) = w} p(\vec x) \tr[ \rho_B^{\vec x} \Pi_{\vec x}^{(w)}]
	\end{equation}
	is the probability of success for sequences which are encoded as $w$. Then
	\begin{equation}
	\Ps(\rho_{XB}^{\otimes n}, \sC_n) = \sum_{w \in \cW_n} P_w.
	\end{equation}
	Therefore, $\vec P := \left( \frac{P_w}{ \Ps(\rho_{XB}^{\otimes n}, \sC_n)} \right)_{w\in \cW_n}$ is a probability vector of length $2^{n R_1}$. Let $k = 2^{n R_1} - 2^{n R}$. In order to make a code of rate $R$ from $\sC_n$, we need to remove $k$ elements from $\cW_n$. We will choose the elements of $\cW_n$ corresponding to the $k$ smallest entries of the vector $\vec P$, to keep as much probability of success as possible. Without loss of generality, assume $\vec P$ is in decreasing order: $\vec P_1\geq \vec P_2 \geq \ldots$. 
	
	Let $\cR_n := \{ j  : j \geq 2^{nR}\}$ be the set of indices to remove (note $|\cR_n| = k$). Define $\tilde \cW_n = \cW_n \backslash \cR_n$. Choose $w_0\in W$ to be the element of $\cW$ with index $2^{n R} -1$, that is, with index corresponding to the smallest element of $\vec P$ which has not been removed. Define the encoder
	\begin{equation}
	\tilde \cE_n ( x ) = \begin{cases}
	w_0 & \text{ if } \cE_n(x)\in \cR_n \\
	\cE_n(x) & \text{else}.
	\end{cases}
	\end{equation}
	Then $\tilde \cE_n$ and the decoding POVMs $\{\cD_w\}_{w \in \tilde \cW_n}$ forms a code $\tilde \sC_n$ of rate $R$, since $|\tilde \cW_n| = 2^{n R_1} - k = 2^{n R}$.
	
	Let us briefly introduce the majorization pre-order of vectors. Given $\vec x,\vec y\in \mathbb{R}^n$, we say $\vec y$ majorizes $\vec x$, written $\vec x \prec \vec y$, if $\sum_{i=1}^k \vec y_k^\downarrow \geq \sum_{i=1}^k \vec x_k^\downarrow$ for each $k=1,\dotsc,n$, with equality for the case $k=n$, where $\vec y^\downarrow$ is the rearrangement of $\vec y$ in decreasing order: $\vec y^\downarrow_1 \geq \dotsm \geq \vec y_n$.

	Setting $\vec u = \left( \frac{1}{2^{n R_1}} \right)_{w \in \cW_n}$ as the uniform distribution, we have
	\begin{equation}
	\vec u \prec  \vec P
	\end{equation}
	since the uniform distribution is majorized by every other probability vector. In particular,
	\begin{equation}
	\sum_{j=1}^{2^{n R_1}-k-1} \vec P^\downarrow_j \geq \frac{2^{n R_1}-k-1}{2^{nR_1}}=\frac{2^{n R}-1}{2^{nR_1}}.
	\end{equation}
	The probability of success for any element which $\tilde\cE_n$ does not map to $\omega_0$ is the same as it was under the code $\sC_n$. Therefore,
	\begin{align}
	\Ps(\rho_{XB}^{\otimes n}, \tilde \sC_n) &= \sum_{w\in \tilde \cW_n \setminus \{\omega_0\}} P_w + \sum_{x\in \cX^n:\, \tilde \cE_n(x) = \omega_0}\tr[\rho_B^x \Pi^{\omega_0}_x]\\
	&\geq \sum_{w\in \tilde \cW_n \setminus \{\omega_0\}} P_w =\Ps(\rho_{XB}^{\otimes n}, \sC_n) \sum_{j=1}^{2^{n R_1}-k-1} \vec P^\downarrow_j \\
	&\geq \frac{2^{nR}-1}{2^{n R_1}}\Ps(\rho_{XB}^{\otimes n}, \sC_n)\\
	&= \frac{2^{nR}(1-2^{-nR})}{2^{n R_1}}\Ps(\rho_{XB}^{\otimes n}, \sC_n)
	\end{align}
	By \eqref{eq:long-code-F2}, we have therefore
	\begin{equation} \label{eq:F2-bound-success-tilde-C}
	\Ps(\rho_{XB}^{\otimes n}, \tilde \sC_n) \geq \frac{2^{nR}(1-2^{-nR})}{2^{n R_1}} \e^{-nr_1} \left( 1 - 4  \e^{-n \gamma(1-\alpha)} - 2\delta \right).
	\end{equation}
	Since $1 - \Pestar(n,R)\leq   \Ps(\rho_{XB}^{\otimes n}, \tilde \sC_n)$, by substituting the definition of $sc(n,R)$ from \eqref{eq:def_sc} we find
	\begin{align}
	sc(n,R)&\leq- \frac{1}{n}\log\left(\frac{2^{nR}(1-2^{-nR})}{2^{n R_1}} \right) + r_1 - \frac{1}{n}\log \left( 1 - 4  \e^{-n \gamma(1-\alpha)} - 2\delta \right)\\
		&=r_1+  R_1 - R - \frac{1}{n}\log\left[(1-2^{-nR}) \left( 1 - 4  \e^{-n \gamma(1-\alpha)} -2 \delta \right)\right]\\
	 &\leq r - \frac{1}{n}\log\left[(1-2^{-nR}) \left( 1 - 4  \e^{-n \gamma(1-\alpha)} - 2\delta \right)\right] 
	\end{align}
	using $r_1+  R_1 - R = r$.
	Since this inequality holds for all $r > F_2(R,n,\delta,\alpha,\gamma)$ we have \eqref{eq:F2-bound}.
\end{proof}

We can see as $n\to\infty$, $\alpha\to 1$, and $\gamma\to 0$,
\begin{equation}
F(R,n,\delta,\alpha,\gamma) \to  E_\text{sc}^\flat(\rho,R)
\end{equation}
using the variational characterization of $ E_\text{sc}^\flat(R)$ given in \eqref{eq:F=tildeE}. However, to find an $n$-shot bound on the error exponent in terms of $ E_\text{sc}^\flat$, we need to find the error terms hidden in this limit. \begin{lemm} \label{lem:F-n-shot-bound}
	If $R>0$, $\delta\in(0,\frac{1}{2})$, $\gamma > 0$,  $n\in \N$ and $\alpha \in (\frac{1}{2},1)$, then
	\begin{equation} \label{eq:F-n-shot-bound}
	F(R,n,\delta,\alpha,\gamma) \leq E_\text{sc}^\flat(R) + \gamma + \frac{1}{\sqrt {n}}e_1(\delta)  + (1-\alpha) e_2
	\end{equation}
	where $e_1(\delta)$ and $e_2$ are defined in \eqref{eq:def_e1} and \eqref{eq:def_e2}.
\end{lemm}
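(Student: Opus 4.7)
The plan is to use a minimizer of the variational formula for $E_\text{sc}^\flat(R)$ (Proposition~\ref{prop:representation}, eq.~\eqref{eq:representation3}) as a feasible point in the minimization defining $F$, and to control the two resulting terms using the AEP for $D_{\textnormal{max}}^\delta$ from \eqref{eq:Ddelta-AEP-UB} together with a quantitative continuity estimate for the Petz R\'enyi conditional entropy at $\alpha=1$. Concretely, let $\sigma^* \in \mathcal{S}_\rho(XB)$ attain $D(\sigma^*\|\rho_{XB}) + |H(X|B)_{\sigma^*} - R|^+ = E_\text{sc}^\flat(R)$ (the minimum exists by compactness of $\mathcal{S}_\rho(XB)$ and lower semicontinuity of $D$); substituting $\sigma^*$ into the definition of $F$ gives
\[
F(R,n,\delta,\alpha,\gamma) \leq \tfrac{1}{n} D_{\textnormal{max}}^\delta(\sigma^{*\otimes n}\|\rho_{XB}^{\otimes n}) + \bigl|H_\alpha^\downarrow(X|B)_{\sigma^*} - R + \gamma\bigr|^+.
\]

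For the first summand I would apply \eqref{eq:Ddelta-AEP-UB} with $\rho\leftarrow \sigma^*$ and $\sigma\leftarrow \rho_{XB}$. The state-dependent AEP constant $\eta = 1+\tr[(\sigma^*)^{3/2}\rho_{XB}^{-1/2} + (\sigma^*)^{1/2}\rho_{XB}^{1/2}]$ can be replaced by the $\sigma^*$-free constant appearing in $e_1(\delta)$: since $\sigma^*\ll \rho_{XB}$ and $\|\sigma^*\|_\infty\leq 1$, both $(\sigma^*)^{1/2}$ and $(\sigma^*)^{3/2}$ are PSD operators with operator norm at most $1$ supported in $\texttt{supp}(\rho_{XB})$, hence bounded by $P_{\texttt{supp}(\rho_{XB})}$ in operator order; pairing against the PSD operators $\rho_{XB}^{-1/2}$ and $\rho_{XB}^{1/2}$ yields $\eta \leq 1+\tr[P_{\texttt{supp}(\rho_{XB})}\rho_{XB}^{-1/2} + \rho_{XB}^{1/2}]$ and hence $\tfrac{1}{n}D_{\textnormal{max}}^\delta(\sigma^{*\otimes n}\|\rho_{XB}^{\otimes n}) \leq D(\sigma^*\|\rho_{XB}) + \tfrac{1}{\sqrt n}e_1(\delta)$. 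For the second summand I use that $\alpha\mapsto H_\alpha^\downarrow$ is monotone decreasing (Lemma~\ref{lemma:chaotic}), so $H_\alpha^\downarrow(X|B)_{\sigma^*} - H(X|B)_{\sigma^*} \geq 0$; combined with $\gamma>0$ and the elementary inequality $|a+c|^+ \leq |a|^+ + c$ for $c\geq 0$,
\[
\bigl|H_\alpha^\downarrow(X|B)_{\sigma^*} - R + \gamma\bigr|^+ \leq \bigl|H(X|B)_{\sigma^*} - R\bigr|^+ + \gamma + \bigl(H_\alpha^\downarrow(X|B)_{\sigma^*} - H(X|B)_{\sigma^*}\bigr).
\]
Summing the two bounds and recognising $D(\sigma^*\|\rho_{XB}) + |H(X|B)_{\sigma^*} - R|^+ = E_\text{sc}^\flat(R)$ yields \eqref{eq:F-n-shot-bound}, provided the R\'enyi-Shannon gap admits the bound $H_\alpha^\downarrow(X|B)_{\sigma^*} - H(X|B)_{\sigma^*} \leq (1-\alpha)\, e_2$.

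The main obstacle is establishing this last estimate uniformly over $\sigma^*_{XB} \in \mathcal{S}_\rho(XB)$. Writing the gap as $D(\sigma^*_{XB}\|\one_X \otimes \sigma^*_B) - D_\alpha(\sigma^*_{XB}\|\one_X \otimes \sigma^*_B)$, one needs a quantitative continuity bound for $\alpha\mapsto D_\alpha$ at $\alpha=1$, linear in $(1-\alpha)$, with a prefactor depending only on the classical alphabet size $|\cX|$ and independent of $\sigma^*_B$. The $\cosh(1+2\sqrt{|\cX|})[\log(1+2\sqrt{|\cX|})]^2$ form of $e_2$ points toward controlling the second derivative of $\alpha\mapsto \log Q_\alpha(\sigma^*_{XB}\|\one_X\otimes \sigma^*_B)$ through eigenvalue estimates of the relative modular operator on $\texttt{supp}(\sigma^*_{XB})$; the classicality of the $X$-register of size $|\cX|$ is what makes the resulting bound independent of the $B$-side spectrum, in the spirit of the continuity lemmas developed by Mosonyi and Ogawa \cite{MO17} for the channel-coding analogue. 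Once this uniform estimate is in hand, assembling the three bounds completes the proof.
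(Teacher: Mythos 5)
Your decomposition is exactly the paper's: pick a minimizer $\sigma^*_{XB}$ of the variational formula for $E_\text{sc}^\flat(R)$, feed it into the minimization defining $F$, control the smoothed max-relative entropy term via the AEP bound \eqref{eq:Ddelta-AEP-UB} with the operator-order argument $(\sigma^*)^{1/2},(\sigma^*)^{3/2}\leq P_{\texttt{supp}(\rho_{XB})}$ to strip the $\sigma^*$-dependence from the constant, and absorb $\gamma$ and the R\'enyi--Shannon gap into the positive part. Your single inequality $|a+c|^+\leq|a|^++c$ even merges the paper's two-case analysis ($H_\alpha^\downarrow(X|B)_{\sigma^*}-R+\gamma\gtrless 0$) into one line, which is a small tidying. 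All of these steps are sound.

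The one genuine gap is the estimate you yourself flag as the ``main obstacle'': $H_\alpha^\downarrow(X|B)_{\sigma^*}-H(X|B)_{\sigma^*}\leq(1-\alpha)\,e_2$ uniformly over $\sigma^*_{XB}\in\mathcal{S}_\rho(XB)$. You correctly guess its shape and the reason for uniformity, but a sketch of ``controlling the second derivative of $\log Q_\alpha$ through eigenvalue estimates of the relative modular operator'' is not a proof, and without it the lemma is not established. The paper closes this step by citation rather than by a fresh derivative computation: Lemma 2.3 of \cite{AMV12} gives $H^\downarrow_\alpha(X|B)_{\sigma^*}\leq H(X|B)_{\sigma^*}+4\cosh(c)\,(1-\alpha)(\log\eta')^2$ for $|1-\alpha|\leq\min\{\tfrac12,\tfrac{c}{2\log\eta'}\}$, with $\eta'=1+\e^{\frac12 H^\downarrow_{3/2}(X|B)_{\sigma^*}}+\e^{-\frac12 H^\downarrow_{1/2}(X|B)_{\sigma^*}}$; then Lemma 5.2 of \cite{Tom16} (the dimension bounds $-\log\min(\rank\omega_X,\rank\omega_B)\leq H^\downarrow_\alpha(X|B)_\omega\leq\log\rank\omega_X$ for $\alpha\in[0,2]$) pins $\eta'$ between state-independent constants, giving $\eta'\leq 1+2\sqrt{|\cX|}$; finally the choice $c=\log\eta'$ makes the validity condition read $|1-\alpha|\leq\tfrac12$, which is exactly why the lemma is stated for $\alpha\in(\tfrac12,1)$. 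If you import those two lemmas, your argument becomes a complete proof; as written, the central quantitative input is missing.
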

\begin{proof}[Proof of Lemma~\ref{lem:F-n-shot-bound}]
	Consider the variational representation
	\begin{equation} \label{eq:def_F-infinity}
	E_\text{sc}^\flat(R) = \min_{\sigma \in \mathcal{S}_{\rho}(XB)}  \left\{ D(\sigma_{XB}\|\rho_{XB}) + |H(X|B)_\sigma - R|^+ \right\}
	\end{equation}
	given by Proposition~\ref{prop:representation}.
	Let $\sigma^*_{XB}$ achieve the minimum in \eqref{eq:def_F-infinity}. Note $\sigma^*_{XB}$ depends on $\rho_{XB}$ and $R$. Then
	\begin{align}
	F(R,\rho_{XB},n,\delta,\alpha,\gamma) &= \min_{\sigma_{XB}\in \mathcal{S}_\rho(XB)} \left\{ \frac{1}{n}D_\text{max}^\delta (\sigma_{XB}^{\otimes n}\|\rho_{XB}^{\otimes n}) + |H^\downarrow_\alpha(X|B)_\sigma - R + \gamma|^+ \right\} \\
	&\leq  \frac{1}{n}D_\text{max}^\delta (\sigma_{XB}^*{}^{\otimes n}\|\rho_{XB}^{\otimes n}) + |H^\downarrow_\alpha(X|B)_{\sigma^*} - R + \gamma|^+.
	\end{align}
	By \eqref{eq:Ddelta-AEP-UB}, we have
	\begin{equation}
	\frac{1}{n}D_\text{max}^\delta (\sigma_{XB}^{*\otimes n}\|\rho_{XB}^{\otimes n}) \leq D(\sigma_{XB}^*\|\rho_{XB}) + \frac{1}{\sqrt {n}} 4 \sqrt{2}(\log \eta^*)\log \frac{1}{1 - \sqrt{1-\delta^2}}
	\end{equation}
	where
	\begin{equation}
	\eta^* = 1 + \tr[ \sigma_{XB}^*{}^{3/2}\rho_{XB}^{-1/2}] + \tr[\sigma_{XB}^*{}^{1/2} \rho_{XB}^{1/2}]
	\end{equation}
	We can remove the dependence on $\sigma^*_{XB}$ by the inequality
	\begin{equation}
	\eta^* \leq 1 + \tr[ P_{\texttt{supp}(\rho)}\rho_{XB}^{-1/2} + \rho_{XB}^{1/2}]
	\end{equation}
	which follows from $\sigma^*_{XB}{}^\alpha \leq P_{\texttt{supp}(\sigma)}$, the orthogonal projection onto the support of $\sigma$, which holds for all $\alpha > 0$, and since $P_{\texttt{supp}(\sigma)}\leq P_{\texttt{supp}(\rho)}$ as $\sigma_{XB}^* \in \mathcal{S}_\rho(XB)$. In particular, this removes the dependence on the rate $R$. Therefore we have
	\begin{equation} \label{eq:Dmax-bound}
	\frac{1}{n}D_\text{max}^\delta (\sigma_{XB}^{*\otimes n}\|\rho_{XB}^{\otimes n}) \leq D(\sigma_{XB}^*\|\rho_{XB}) + \frac{1}{\sqrt{n}} e_1(\delta).
	\end{equation}

	Next,  we use a continuity bound in $\alpha$ for $H_\alpha^\downarrow$: by Lemma 2.3 of \cite{AMV12}, for any $c>0$,
	\begin{equation}
	H^\downarrow_\alpha(X|B)_{\sigma^*} \leq H(X|B)_{\sigma^*} + 4 \cosh c (1-\alpha)(\log \eta')^2
	\end{equation}
	where
	\begin{equation}
	\eta' := 1 + \e^{\frac{1}{2}H^\downarrow_{3/2}(X|B)_{\sigma^*}} + \e^{- \frac{1}{2}H^\downarrow_{1/2} ( X|B)_{\sigma^*}}
	\end{equation}
	which holds when $|1-\alpha| \leq \min \left\{ \frac{1}{2}, \frac{c}{2\log \eta'} \right\}$. Note that for any state $\omega$ and $\alpha \in [0,2]$, we have
	\begin{equation}
	- \log \min(\rank \omega_X,\rank\omega_B) \leq H^\downarrow_\alpha(X|B)_\omega \leq \log \rank \omega_X
	\end{equation}
	by Lemma 5.2 of \cite{Tom16}. Thus,
	\begin{equation} \label{eq:bounds-etap}
	1 + \frac{1}{\sqrt{\min (|\mathcal{X}|, |B|) )}} + \frac{1}{\sqrt{|\mathcal{X}|}} \leq \eta' \leq 1 + \sqrt{|\mathcal{X}|} +\sqrt{\min (|\mathcal{X}|, |B|) )} \leq 1 + 2\sqrt{|\mathcal{X}|}
	\end{equation}
	independently of $\sigma^*_{XB}$. Then $\log \eta' > 0$,  and 
	\begin{equation} 
	(\log \eta')^2 \leq [\log(1+2\sqrt{|\mathcal{X}|})]^2.
	\end{equation}
	Moreover, since $\log \eta' > 0$, we can take $c= \log \eta'$ to recover for $\alpha \in (\frac{1}{2},1)$,
	\begin{align} 
	H^\downarrow_\alpha(X|B)_{\sigma^*} &\leq H(X|B)_{\sigma^*} + 4 \cosh (\eta') (1-\alpha)(\log \eta')^2 \\
	&\leq H(X|B)_{\sigma^*} +(1-\alpha) e_2\label{Hdownalpha-near-1-bound}
	\end{align}
	using \eqref{eq:bounds-etap}.

	Then for any $R>0$, $\delta\in(0,\frac{1}{2})$, $\gamma > 0$, and  $\alpha \in ( \frac{1}{2},1)$ such that  $H^\downarrow_\alpha(X|B)_{\sigma^*} - R + \gamma \geq 0$, we have
	\begin{align}
	F &= F(R,n,\delta,\alpha,\gamma)\\
	&\leq D(\sigma_{XB}^* \|\rho_{XB}) + H^\downarrow_\alpha(X|B)_{\sigma^*} -R + \gamma + \frac{1}{\sqrt {n}}e_1(\delta) \label{eq:using-Dmax-bound} \\
	&\leq D(\sigma_{XB}^* \|\rho_{XB}) + H(X|B)_{\sigma^*}  -R + \gamma + \frac{1}{\sqrt {n}}e_1(\delta)  + (1-\alpha) e_2 \label{eq:using-Hdown-bound}\\
	&\leq D(\sigma_{XB}^* \|\rho_{XB}) + |H(X|B)_{\sigma^*}  -R |_+ + \gamma + \frac{1}{\sqrt {n}}e_1(\delta)  + (1-\alpha) e_2\\
	&=  E_\text{sc}^\flat(R) + \gamma + \frac{1}{\sqrt {n}}e_1(\delta)  + (1-\alpha) e_2
	\end{align}
	where in \eqref{eq:using-Dmax-bound} we use \eqref{eq:Dmax-bound}, and in \eqref{eq:using-Hdown-bound} we use \eqref{Hdownalpha-near-1-bound}.

	On the other hand, if $R>0$, $\delta\in(0,\frac{1}{2})$, $\gamma > 0$,  and $\alpha \in (\frac{1}{2},1)$, and   $H^\downarrow_\alpha(X|B)_{\sigma^*} - R + \gamma < 0$, then
	\begin{align}
	F &= F(R, n,\delta,\alpha,\gamma)\\
	&\leq D(\sigma_{XB}^* \|\rho_{XB}) + \frac{1}{\sqrt {n}}e_1(\delta) \label{eq:using-Dmax-2}\\
	&=  E_\text{sc}^\flat(R) + \frac{1}{\sqrt {n}}e_1(\delta) \label{eq:using_pos_part_zero}
	\end{align}
	where in \eqref{eq:using-Dmax-2} we use \eqref{eq:Dmax-bound}, and in \eqref{eq:using_pos_part_zero} we use that $|H^\downarrow_\alpha(X|B)_{\sigma^*} - R|_+ = 0$. 
	In either case then, we recover \eqref{eq:F-n-shot-bound}.
\end{proof}

With these results, we may prove Proposition~\ref{prop:SC-achiev-flat-bound}.
\begin{proof}[Proof of Proposition~\ref{prop:SC-achiev-flat-bound}]
	By \eqref{eq:F-min-F1-F2} and Lemmas~\ref{prop:n-shot-F1} and \ref{prop:n-shot-F2}, we have the bound
	\begin{equation}
	sc(n,R) \leq F(R,n,\delta,\alpha,\gamma)- \frac{1}{n}\log\left[(1-2^{-nR}) \left( 1 - 4  \e^{-n \gamma(1-\alpha)} - \delta \right)\right] \label{eq:full-F-bound}
	\end{equation}
	for any $R> 0$, $n\geq 1$, $\delta \in (0,\frac{1}{2})$, $\alpha \in (0,1]$, and $\gamma > 0$. 	We choose $\delta\in(0,1/4)$ and $\kappa = 4 \log 2 = \log 16$. Then $4\e^{-\kappa}+\delta \leq \frac{1}{2}$.
	For $x\in [0,\frac{1}{2}]$, we have the bound $-\log(1-x)\leq 2x$, and therefore
	\begin{equation}
	- \log \left( 1-4\e^{-\kappa} -\delta \right) \leq 2 \cdot \frac{1}{2}=1.
	\end{equation}
	Now, choose $\alpha = 1- \frac{\sqrt{\kappa}}{\sqrt{n}}$, and $\gamma =\sqrt{\frac{\kappa}{n}}$. Then for $\delta\in(0,\frac{1}{4})$ by \eqref{eq:full-F-bound}, we have
	\begin{align}
	sc(n,R) \leq F(R, n,\delta,\alpha,\gamma)   + \frac{1}{n} \left[1- \log( 1 - 2^{-nR}) \right].
	\end{align}
	Thus, by Lemma~\ref{lem:F-n-shot-bound}, we have
	\begin{equation}
	sc(n,R) \leq E_\text{sc}^\flat(R) + \frac{\sqrt{\kappa}}{\sqrt{n}}[1+e_2] + \frac{1}{\sqrt {n}}e_1(\delta)  + \frac{1}{n} \left[1- \log( 1 - 2^{-nR}) \right]
	\end{equation}
	where $e_1(\delta)$ and $e_2$ are defined in \eqref{eq:def_e1} and \eqref{eq:def_e2}. Taking $n\to\infty$ recovers the asymptotic bound.
\end{proof}

To prove Proposition~\ref{prop:full-SC-achiev-bound}, we need to introduce the idea of universal symmetric state.
The universal symmetric state is obtained via representation theory (see e.g.~\cite[Appendix A]{MO17}). Briefly, $\cH_B^{\otimes m}$ is decomposed into a direct sum of tensor products of irreducible representations $U_\lambda$ of the symmetric group on $m$ letters and $V_\lambda$ of the special unitary group of dimension $d:=|\cH_B|$,
\[
\cH_B^{\otimes m} \cong \bigoplus_{\lambda \in Y_m^d} U_\lambda \otimes V_\lambda
\]
where the index set $Y_m^d$ consists of Young diagrams of depth up to $d$. The state $\sigma_{\text{u},m}$ is constructed by taking the maximally mixed state on each representation:
\begin{equation}
\sigma_{\text{u},m} = \bigoplus_{\lambda\in Y_m^d} \frac{1}{|Y_{m}^d|} \frac{\one_{U_\lambda}}{\dim U_\lambda} \otimes \frac{\one_{V_\lambda}}{\dim V_\lambda}.
\end{equation}

\begin{lemm}[Lemma II.8 of \cite{MO17}] \label{lem:univ-symm-state}
	There is a symmetric state $\sigma_{\text{u},m}$ on $\cH^{\otimes m}$ such that for any other symmetric state $\omega$ on $\cH^{\otimes m}$, we have that $\sigma_{\text{u},m}$ commutes with $\omega$ and obeys the bound
	\begin{equation}
	\omega\leq v_{m,d} \sigma_{\text{u},m}
	\end{equation}
	where $v_{m,d}$ is a number obeying $v_{m,d}\leq (m+1)^{\frac{(d+2)(d-1)}{2}}$ where $d=\dim \cH$. Additionally, the number of different eigenvalues of $\sigma_{\text{u},m}$ is bounded by $v_{m,d}$.
\end{lemm}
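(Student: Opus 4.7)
The plan is to invoke Schur--Weyl duality applied to the decomposition $\cH_B^{\otimes m} \cong \bigoplus_{\lambda\in Y_m^d} U_\lambda \otimes V_\lambda$ already displayed just before the statement. First I would characterize an arbitrary symmetric (i.e.\ permutation-invariant) state $\omega$ on $\cH_B^{\otimes m}$. Since $\omega$ commutes with the action of $S_m$, Schur--Weyl duality forces $\omega$ to act as a scalar on each Specht factor $U_\lambda$, so every symmetric state admits the block form
\begin{equation}
\omega \;=\; \bigoplus_{\lambda\in Y_m^d} p_\lambda\,\frac{\one_{U_\lambda}}{\dim U_\lambda}\otimes \omega_\lambda,
\end{equation}
where $(p_\lambda)_{\lambda\in Y_m^d}$ is a probability distribution on $Y_m^d$ and each $\omega_\lambda$ is a state on $V_\lambda$.

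From this block description the three claims fall out directly. Commutativity is immediate: $\sigma_{\text{u},m}$ is by construction proportional to the identity on each block $U_\lambda\otimes V_\lambda$, so it commutes with any block-preserving operator, and in particular with $\omega$. For the operator inequality, bounding $\omega_\lambda\leq \one_{V_\lambda}$ on each block yields
\begin{equation}
\omega \;\leq\; \bigoplus_{\lambda} p_\lambda\,\dim V_\lambda\cdot \frac{\one_{U_\lambda}}{\dim U_\lambda}\otimes \frac{\one_{V_\lambda}}{\dim V_\lambda}
\;\leq\; \bigl(|Y_m^d|\cdot \max_\lambda \dim V_\lambda\bigr)\,\sigma_{\text{u},m},
\end{equation}
so we may set $v_{m,d}:= |Y_m^d|\cdot \max_\lambda \dim V_\lambda$. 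The eigenvalue count is equally immediate: $\sigma_{\text{u},m}$ is a scalar on each of the $|Y_m^d|$ blocks, hence has at most $|Y_m^d|\leq v_{m,d}$ distinct eigenvalues.

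What remains is the polynomial estimate $v_{m,d}\leq (m+1)^{(d+2)(d-1)/2}$, which is classical bookkeeping. The number of Young diagrams with at most $d$ rows and $m$ cells satisfies $|Y_m^d|\leq (m+1)^{d-1}$, since such a partition is determined by the row lengths $\lambda_1,\dotsc,\lambda_{d-1}\in\{0,1,\dotsc,m\}$. For the Weyl dimension, I would plug $\lambda$ into $\dim V_\lambda = \prod_{1\leq i<j\leq d}\tfrac{\lambda_i-\lambda_j+j-i}{j-i}$ and bound each of the $\binom{d}{2}$ factors by $m+1$, giving $\dim V_\lambda\leq (m+1)^{d(d-1)/2}$. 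Multiplying the two bounds yields
\begin{equation}
v_{m,d} \;\leq\; (m+1)^{d-1}\cdot (m+1)^{d(d-1)/2} \;=\; (m+1)^{(d-1)(d+2)/2}.
\end{equation}

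There is no serious obstacle; once Schur--Weyl duality is brought in, the rest is bookkeeping. The only subtle point is recognizing permutation-invariance of $\omega$ as membership in the commutant of the $S_m$-action on $\cH_B^{\otimes m}$, so that Schur's lemma forces the claimed block decomposition with $\one_{U_\lambda}/\dim U_\lambda$ on the Specht factor. After that, commutativity, the dominance inequality, and the eigenvalue count are all one-line consequences, and the polynomial bound on $v_{m,d}$ reduces to elementary estimates of $|Y_m^d|$ and of $\dim V_\lambda$ via the Weyl dimension formula.
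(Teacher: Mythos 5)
Your proof is correct. The paper does not actually prove this lemma itself --- it imports it from \cite{MO17} and only displays the Schur--Weyl decomposition and the explicit construction of $\sigma_{\text{u},m}$ as the uniform mixture of maximally mixed states on the blocks $U_\lambda\otimes V_\lambda$ --- and your argument (commutant characterization of symmetric states, the bound $\omega_\lambda\leq\one_{V_\lambda}$, and the elementary counts $|Y_m^d|\leq(m+1)^{d-1}$ and $\dim V_\lambda\leq(m+1)^{d(d-1)/2}$ via the Weyl dimension formula) is exactly the standard verification underlying that citation, carried out correctly.
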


In the following, we employ the pinching argument to relate $E_\text{sc}^\flat$ to $E_\text{sc}^*$.
\begin{lemm} \label{lem:flat-star-bound}
	Let $\cP_m(\rho_{XB}^{\otimes m})$ be the map pinching by $\one_{X^m}\otimes \sigma_{\text{u},m}$, and define $\rho_m = \cP_m(\rho_{XB}^{\otimes m})$.
	Then for $-1 < s< 0$,
	\begin{equation} \label{eq_aux-flat-star-bound}
	E_0^{\flat}(\rho_m, s) \leq   m E_0^{*}(s) - s3\log v_{m,|\cH_B|}.
	\end{equation}
	In particular,
	\begin{equation}\label{eq_sc-flat-star-bound}
	E_\textnormal{sc}^\flat(\rho_m,m R) \leq m E_\textnormal{sc}^*(R) + 3 \log v_{m,|\cH_B|}.
	\end{equation}
	Here, we write $E_0^{\flat}(\rho_m, s)$ and $E_\textnormal{sc}^\flat(\rho_m,m R)$ to emphasize the dependency on $\rho_m$. For $E_0^{*}(s)$ and $E_\textnormal{sc}^*(R)$, the underlying system is given by $\rho_{XB}$.
\end{lemm}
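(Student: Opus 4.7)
The second inequality \eqref{eq_sc-flat-star-bound} will follow from \eqref{eq_aux-flat-star-bound} by taking the supremum over $s\in(-1,0)$ and using $-s<1$ to bound the error term $-3s\log v_{m,|\cH_B|}$ uniformly by $3\log v_{m,|\cH_B|}$. I therefore concentrate on \eqref{eq_aux-flat-star-bound}. Fixing $s\in(-1,0)$, setting $\alpha:=1/(1+s)\in(1,\infty)$ and writing $v:=v_{m,|\cH_B|}$, the identities $E^\flat_0(\rho_m,s)=-s\,H^{\flat,\uparrow}_\alpha(X^m|B^m)_{\rho_m}$ and $m E^*_0(s)=-sm\,H^{*,\uparrow}_\alpha(X|B)_\rho$ reduce the task, after dividing by $-s>0$, to establishing
\[
H^{\flat,\uparrow}_\alpha(X^m|B^m)_{\rho_m}\;\leq\;m\,H^{*,\uparrow}_\alpha(X|B)_\rho+3\log v.
\]
The plan is to build a chain of inequalities, each contributing at most $\log v$ to the error, passing from the $\flat$-entropy of the pinched state $\rho_m$ down to $m$ times the $*$-entropy of $\rho_{XB}$.

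First, because $\rho_{XB}^{\otimes m}$ and the pinching operator $\one\otimes\sigma_{\text{u},m}$ are both permutation-symmetric, so is $\rho_m=\cP_m(\rho_{XB}^{\otimes m})$. Unitary invariance of $D^\flat_\alpha$ together with convexity of $\sigma\mapsto Q^\flat_\alpha(\rho_m\|\one\otimes\sigma)$ (which follows from operator convexity of $(1-\alpha)\log\sigma$ for $\alpha>1$, combined with convexity and monotonicity of $X\mapsto\tr\,e^X$) will let me restrict the supremum defining $H^{\flat,\uparrow}_\alpha(\rho_m)$ to symmetric $\sigma_{B^m}$. For such symmetric candidates, Lemma~\ref{lem:univ-symm-state} gives $\sigma_{B^m}\leq v\,\sigma_{\text{u},m}$; combined with Loewner-monotonicity (decreasing) of $D^\flat_\alpha(\rho\|\cdot)$ at $\alpha>1$ (from operator monotonicity of $\log$ and the negative sign of $1-\alpha$), this yields a first $\log v$ and reduces the problem to upper-bounding $-D^\flat_\alpha(\rho_m\|\one\otimes\sigma_{\text{u},m})$. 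By construction $\rho_m$ commutes with $\one\otimes\sigma_{\text{u},m}$, so on this pair $D^\flat_\alpha$ collapses to the Petz--R\'enyi divergence $D_\alpha$. Hayashi's pinching inequality then gives $D_\alpha(\rho_{XB}^{\otimes m}\|\one\otimes\sigma_{\text{u},m})\leq D_\alpha(\rho_m\|\one\otimes\sigma_{\text{u},m})+\log v$ for $\alpha>1$, contributing a second $\log v$, and the inequality $D^*_\alpha\leq D_\alpha$ (valid for $\alpha>1$) converts the Petz divergence to the sandwiched one; a third $\log v$ comfortably accommodates any looseness in combining these estimates. The chain is closed by the trivial bound $-D^*_\alpha(\rho_{XB}^{\otimes m}\|\one\otimes\sigma_{\text{u},m})\leq H^{*,\uparrow}_\alpha(X^m|B^m)_{\rho_{XB}^{\otimes m}}$ and the additivity $H^{*,\uparrow}_\alpha(X^m|B^m)_{\rho_{XB}^{\otimes m}}=m\,H^{*,\uparrow}_\alpha(X|B)_\rho$ of the sandwiched conditional R\'enyi entropy under tensor products.

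The main obstacle will be the precise verification of the operator-theoretic inputs feeding the three $\log v$ errors: convexity of $\sigma\mapsto Q^\flat_\alpha(\rho\|\one\otimes\sigma)$ used in the symmetrization, Loewner-monotonicity of $D^\flat_\alpha(\rho\|\cdot)$, and Hayashi's Petz pinching inequality (whose proof at $\alpha>1$ rests on the identity $\tr[\rho^\alpha\sigma^{1-\alpha}]=\tr[\cP_\sigma(\rho^\alpha)\sigma^{1-\alpha}]$ coupled with a Hansen--Pedersen-type operator Jensen inequality for $x\mapsto x^\alpha$). Once these are in place, multiplying the resulting bound on $H^{\flat,\uparrow}_\alpha(\rho_m)$ through by $-s>0$ produces the claimed $-3s\log v$ error in \eqref{eq_aux-flat-star-bound}, and the coefficient $c=\tfrac{3(|\cH_B|+2)(|\cH_B|-1)}{2}$ appearing in Proposition~\ref{prop:full-SC-achiev-bound} then emerges from substituting the bound $v_{m,|\cH_B|}\leq(m+1)^{(|\cH_B|+2)(|\cH_B|-1)/2}$ from Lemma~\ref{lem:univ-symm-state}.
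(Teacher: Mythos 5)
Your overall architecture matches the paper's: reduce \eqref{eq_aux-flat-star-bound} to showing $H^{\flat,\uparrow}_{\alpha}(X^m|B^m)_{\rho_m}\leq m\,H^{*,\uparrow}_{\alpha}(X|B)_\rho+3\log v$ with $\alpha=1/(1+s)>1$ and $v:=v_{m,|\cH_B|}$, restrict the optimization to permutation-symmetric $\sigma_{B^m}$ via convexity of the divergence in its second argument, trade $\sigma_{B^m}$ for $v\,\sigma_{\text{u},m}$ at a cost of $\log v$, exploit commutativity of $\rho_m$ with $\one_{X^m}\otimes\sigma_{\text{u},m}$ to identify divergences, undo the pinching, and finish with additivity of $H^{*,\uparrow}_\alpha$. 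Your ordering of the middle steps is, if anything, slightly cleaner than the paper's, since you only need $\rho_m$ to commute with $\sigma_{\text{u},m}$ itself (immediate from the pinching) rather than with every symmetric state.

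The gap is in the step that undoes the pinching. You assert a Petz-divergence pinching inequality $D_\alpha(\rho_{XB}^{\otimes m}\|\one\otimes\sigma_{\text{u},m})\leq D_\alpha(\rho_m\|\one\otimes\sigma_{\text{u},m})+\log v$ for $\alpha>1$, justified by $\Tr[\rho^\alpha\sigma^{1-\alpha}]=\Tr[\cP_\sigma(\rho^\alpha)\sigma^{1-\alpha}]$ plus an operator Jensen inequality for $x\mapsto x^\alpha$. But operator Jensen yields $\cP_\sigma(\rho)^\alpha\leq\cP_\sigma(\rho^\alpha)$, which is the data-processing direction $D_\alpha(\cP_\sigma(\rho)\|\sigma)\leq D_\alpha(\rho\|\sigma)$ --- the opposite of what you need --- and for $\alpha>2$ (which does occur, since $s$ ranges over all of $(-1,0)$) the function $x^\alpha$ is neither operator convex nor operator monotone and even the Petz DPI fails. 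The naive route via the operator inequality $\rho_{XB}^{\otimes m}\leq v\,\rho_m$ and Weyl monotonicity of $\Tr[(\cdot)^\alpha]$ only gives $D^*_\alpha(\rho_{XB}^{\otimes m}\|\tau)\leq D^*_\alpha(\rho_m\|\tau)+\tfrac{\alpha}{\alpha-1}\log v$, whose prefactor diverges as $\alpha\to1^{+}$, so obtaining a constant uniform in $\alpha$ genuinely requires a nontrivial input. The paper supplies it by staying with the sandwiched divergence after the commutativity step and invoking Lemma~3 of \cite{HT14}, which gives $D^*_\alpha(\rho_m\|\one\otimes\sigma_{\text{u},m})\geq D^*_\alpha(\rho_{XB}^{\otimes m}\|\one\otimes\sigma_{\text{u},m})-2\log v$ for all $\alpha>1$; that $2\log v$, added to the single $\log v$ from the universal-state comparison, is exactly the source of the $3\log v$ in \eqref{eq_aux-flat-star-bound}. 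To repair your argument, replace your Petz pinching step with this lemma --- your ``third $\log v$ of slack'' is then consumed rather than spare.
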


\begin{proof}[Proof of Lemma~\ref{lem:flat-star-bound}]
	Recall for $t=\{*\},\{\flat\}$ the definition
	\begin{equation} \label{eq:def-t-aux}
	E_0^{t}(s) = - s H^{t,\uparrow}_{\frac{1}{1+s}}(X|B)_\rho.
	\end{equation}
	Let us then consider, for $\alpha > 1$, the quantities
	\begin{equation} \label{eq:DefHflat}
	H^{\flat,\uparrow}_\alpha (X^m|B^m)_{\rho_m} = \max_{\sigma_{B^m}} - D^\flat_\alpha(\rho_m\|\one_{X^m}\otimes \sigma_{B^m})
	\end{equation}
	and
	\begin{equation} \label{eq:DefHstar}
	H^{*,\uparrow}_\alpha (X^m|B^m)_{\rho_m} = \max_{\sigma_{B^m}} - D^*_\alpha(\rho_m\|\one_{X^m}\otimes \sigma_{B^m}).
	\end{equation}

	The symmetric group on $m$ letters has a natural unitary representation on $\cH^{\otimes m}$ as follows (see e.g.~\cite{harrow_church_2013}). For $\pi\in \mathcal{S}_m$, we define
	\begin{equation}
	\pi_{\cH} := \sum_{i_1,\dotsc,i_m} \ket{i_{\pi^{-1}(1)} ,\dotsc, i_{\pi^{-1}(m)}}\bra{i_1,\dotsc,i_m}.
	\end{equation}
	which is a representation of $\mathcal{S}_m$ on $\cH^{\otimes m}$.
	Then define a mixed-unitary channel $\Sym_{B^m}$ given by
	\begin{equation}
	\Sym_{B^m}(\sigma_{B^m}) := \frac{1}{m!}\sum_{\pi \in \mathcal{S}_m} \pi_{B^m}^*(\sigma_{B^m})\pi_{B^m}.
	\end{equation}
	Since $D^{(t)}_\alpha$ is convex in its second argument for any $\alpha\in[\frac{1}{2},\infty)$ for both $t=\{*\}$ and $t=\{\flat\}$ by Prop. III.17 of \cite{MO17}, we have
	
	\begin{equation} \label{eq:sym-ieq}
	D^{(t)}_\alpha(\rho_m\|\one_{X^m}\otimes \Sym_{B^m}(\sigma_{B^m})) \leq \frac{1}{m!}\sum_{\pi\in \mathcal{S}_m}D^{(t)}_\alpha(\rho_m\|\one_{X^m}\otimes \pi_{B^m}^*\sigma_{B^m}\pi_{B^m}).
		\end{equation}
	But since $\rho_m$ is pinched by $\one_{X^m}\otimes\sigma_{\text{u},m}$, the state $\rho_m$ commutes with $\one_{X^m}\otimes\sigma_{\text{u},m}$ and therefore with $\one_{X^m} \otimes \pi_{B^m}$ for every $\pi\in \mathcal{S}_m$. Thus,
	\begin{align}
	D^{(t)}_\alpha(\rho_m\|\one_{X^m}\otimes \pi_{B^m}^*\sigma_{B^m}\pi_{B^m}) &= D^{(t)}_\alpha((\one_{X^m}\otimes \pi_{B^m}^*) \rho_m(\one_{X^m}\otimes \pi_{B^m})\|\one_{X^m}\otimes \pi_{B^m}^*\sigma_{B^m}\pi_{B^m}) \nonumber\\
	&=D^{(t)}_\alpha(\rho_m\|\one_{X^m}\otimes \sigma_{B^m})
	\end{align}
	since $D^{(t)}_\alpha$ is invariant under unitary conjugation. Thus, by \eqref{eq:sym-ieq}, we have
	\begin{equation}
	-D^{(t)}_\alpha(\rho_m\|\one_{X^m}\otimes \Sym_{B^m}(\sigma_{B^m})) \geq -D^{(t)}_\alpha(\rho_m\|\one_{X^m}\otimes \sigma_{B^m}).
	\end{equation}
	Since $\Sym_{B^m}(\sigma_{B^m})$ is symmetric, we have that $-D_\alpha^{(t)}$ is maximized on a symmetric state for $t=\{*\},\{\flat\}$, so we may restrict to symmetric states in the maximums in \eqref{eq:DefHflat} and \eqref{eq:DefHstar}.

	Moreover, again since $\rho_m$ is pinched by $\one_{X^m}\otimes\sigma_{\text{u},m}$ we have that $\rho_m$ commutes with $\one_{X^m}\otimes\sigma_{\text{u},m}$, and therefore with $\one_{X^m}\otimes \sigma_{B^m}$, for any symmetric state $\sigma_{B^m}$. By commutativity then, for any symmetric $\sigma_{B^m}$, we have
	\begin{equation}
	D^\flat_\alpha(\rho_m\|\one_{X^m}\otimes \sigma_{B^m}) = D^*_\alpha(\rho_m\|\one_{X^m}\otimes \sigma_{B^m}).
	\end{equation}
	Therefore,
	\begin{align}
	H^{\flat,\uparrow}_\alpha (X^m|B^m)_{\rho_m} & = \max_{\sigma_{B^m} \in S(B^m)} - D^\flat_\alpha(\rho_m\|\one_{X^m}\otimes \sigma_{B^m})\\
	&= \max_{\sigma_{B^m} \in S_\text{sym}(B^m)} - D^\flat_\alpha(\rho_m\|\one_{X^m}\otimes \sigma_{B^m}) \\
	&=\max_{\sigma_{B^m} \in S_\text{sym}(B^m)} - D^*_\alpha(\rho_m\|\one_{X^m}\otimes \sigma_{B^m}) \label{eq:Hflat_as_Dstar} \\
	&= \max_{\sigma_{B^m} \in S(B^m)} - D^*_\alpha(\rho_m\|\one_{X^m}\otimes \sigma_{B^m}) \\
	&= H^{*,\uparrow}_\alpha (X^m|B^m)_{\rho_m}.
	\end{align}
	However, we want a quantity in terms of $\rho_{XB}^{\otimes m}$ instead of $\rho_m$.
	For any symmetric $\sigma_{B^m}$, we have $\sigma_{B^m}\leq  v_{m,|\cH_B|} \sigma_{\text{u},m}$ so by Lemma III.23 of \cite{MO17},
	\begin{align}
	D^*_\alpha(\rho_m\|\one_{X^m}\otimes \sigma_{B^m}) &\geq  D^*_\alpha(\rho_m\|\one_{X^m}\otimes  v_{m,|\cH_B|} \sigma_{\text{u},m})\\
	&= D^*_\alpha(\rho_m\|\one_{X^m}\otimes \sigma_{\text{u},m}) - \log v_{m,|\cH_B|}
	\end{align}
	and then by Lemma 3 of \cite{HT14}, we have
	\begin{equation}
	D^*_\alpha(\rho_m\|\one_{X^m}\otimes \sigma_{\text{u},m}) \geq D^*_\alpha(\rho_{XB}^{\otimes m}\|\one_{X^m}\otimes \sigma_{\text{u},m}) - 2 \log v_{m,|\cH_B|}.
	\end{equation}
	Putting it together, by \eqref{eq:Hflat_as_Dstar},
	\begin{align}
	H^{\flat,\uparrow}_\alpha (X^m|B^m)_{\rho_m} &\leq - D^*_\alpha(\rho_{XB}^{\otimes m}\|\one_{X^m}\otimes \sigma_{\text{u},m}) + 3 \log v_{m,|\cH_B|} \\
	&\leq H^{*,\uparrow}_\alpha (X^m|B^m)_{\rho^{\otimes m}} + 3 \log v_{m,|\cH_B|}.
	\end{align}
	By writing $\alpha = \frac{1}{1+s}$ for $s\in(-1,0)$ and using the definition \eqref{eq:def-t-aux}, we recover \eqref{eq_aux-flat-star-bound}. Then, using $H^{*,\uparrow}_\alpha (X^m|B^m)_{\rho^{\otimes m}} = m H^{*,\uparrow}_\alpha (X|B)_{\rho}$, and the definitions
	\begin{equation}
	E_\text{sc}^\flat(\rho_m,mR) = \sup_{-1<s<0} [ E_0^{\flat,\text{SW}}(s,\rho_m) + smR]
	\end{equation}
	and
	\begin{equation}
	mE_\text{sc}^*(R) = \sup_{-1<s<0} [ mE_0^{* }(s ) + smR]
	\end{equation}
	we obtain \eqref{eq_sc-flat-star-bound}.
\end{proof}

\subsubsection{Proof of Proposition~\ref{prop:full-SC-achiev-bound}} \label{subsubsec:proof_SC_pinching}

Let $n,m\in \N$, $n\geq m>1$, and $\delta\in(0,\frac{1}{4})$. Let $k\in \N$ such that $km\leq n \leq (k+1)m$.  We will consider $\rho_m = \cP_m(\rho_{XB})$, the state $\rho_{XB}^{\otimes m}$ pinched by the universal symmetric state, and apply Proposition~\ref{prop:SC-achiev-flat-bound} to obtain a code for $\rho_m^{\otimes k}$ and bounds on its probability of success. From this, we will construct a code for $\rho_{XB}^{\otimes n}$.

We define
\begin{equation}
T^{(m)}(\delta,k) :=  \sqrt{4\log 2} (e_2^{(m)} + 1) + e_1^{(m)}(\delta)] + \frac{1}{\sqrt{k}}[1- \log( 1 - 2^{-kR})]
\end{equation}
where
\begin{gather}
e_1^{(m)}(\delta) :=   4\sqrt{2} \left(\log \left(1 + \tr[\rho_{m}^{-1/2} + \rho_{m}^{1/2}]\right)\right)\sqrt{-\log(1 - \sqrt{1-\delta^2})} \label{eq:def_e1m},\\
e_2^{(m)} := 4 \cosh (1+2\sqrt{|\mathcal{X}|^m}) [\log(1+2\sqrt{|\mathcal{X}|^m})]^2. \label{eq:def_e2m}
\end{gather}

	By Proposition~\ref{prop:SC-achiev-flat-bound} for any rate $R>0$, we can construct a code $\cC_k^{(m)}$ for $k$ copies of the state $\rho_m$ with rate $Rm$ such that
	\begin{equation}
	-\frac{1}{k} \log \Ps(\rho_m,\cC_k^{(m)})) \leq E_\text{sc}^\flat(\rho_m, mR) + \frac{1}{\sqrt{k}} T^{(m)}(\delta,k) .
	\end{equation}

	We may use Lemma~\ref{lem:flat-star-bound} to upper bound $ E_\text{sc}^\flat(\rho_m,mR)$ to obtain
	\begin{equation} \label{eq:UB-PsCkm}
	-\frac{1}{k} \log \Ps(\rho_m,\cC_k^{(m)})) \leq m E_\text{sc}( R) +  3 \log v_{m,|\cH_B|} + \frac{1}{\sqrt{k}} T^{(m)}(\delta,k) 
	\end{equation}
	
	From this code, we wish to construct a code $\sC_n$ for $n$ copies of $\rho_{XB}$ with rate $R$ that has the same probability of success as $\sC_k^{(m)}$ does for $\rho_m$.
	
	$\sC_k^{(m)}$ consists of an encoder $\cE_k^{(m)}: \cX^{km} \to \cW_{kmR}$, where $|\cW_{kmR}| = 2^{kmR}$,  and a decoder $\cD_k^{(m)}$ that associates to any element  $w\in \cW_{kmR}$ a POVM $\{\Pi_{\vec x}^{(w)}\}_{\vec x\in\cX^{km}}$ on $ \cH_B^{\otimes km}$ with outcomes in $\cX^{km}$.
	Note 
	\begin{equation}
	\rho_m = \sum_{\vec x \in \cX^{km}} p(\vec x)\ket{\vec x}\bra{\vec x} \otimes \cP_m(\rho_B^{\vec x})
	\end{equation}
	for $\rho_B^{\vec x} = \rho_B^{x_1}\otimes \dotsm \otimes \rho_B^{x_{km}}$ and $p(\vec x) =p(x_1)\dotsm p(x_{km})$ when $\vec x = (x_1,\dotsc,x_{km})$. Moreover,
	\begin{align}
	\Ps(\rho_m,\cC_k^{(m)}) &= \sum_{\vec x \in \cX^{km}} p(\vec x) \tr[ \Pi^{\cE_k^{(m)}(\vec x)}_{\vec x}\cP_m^{\otimes k}(\rho_B^{\vec x})].
	\end{align}
	We may replace the code $\sC_k^{(m)}$ by one with a decoder invariant under the pinching operation, 
	\begin{equation} \label{eq:invar-under-pinching}
	\Pi^{\cE_k^{(m)}(\vec x)}_{\vec x} = \cP_m^{\otimes k}( \Pi^{\cE_k^{(m)}(\vec x)}_{\vec x})
	\end{equation} since the success probability does not change:
	\begin{equation}
	\sum_{\vec x \in \cX^{km}} p(\vec x) \tr[ \Pi^{\cE_k^{(m)}(\vec x)}_{\vec x}\cP_m^{\otimes k}(\rho_B^{\vec x})] = \sum_{\vec x \in \cX^{km}} p(\vec x) \tr[ \cP_m^{\otimes k}(\Pi^{\cE_k^{(m)}(\vec x)}_{\vec x})\cP_m^{\otimes k}(\rho_B^{\vec x})]
	\end{equation}
	using that $\cP_m$ consists of a sum of conjugations by orthogonal projections. Moreover, 
	\begin{equation}
	\sum_{\vec x \in \cX^{km}} p(\vec x) \tr[ \Pi^{\cE_k^{(m)}(\vec x)}_{\vec x}\cP_m^{\otimes k}(\rho_B^{\vec x})] = \sum_{\vec x \in \cX^{km}} p(\vec x) \tr[ \cP_m^{\otimes k}(\Pi^{\cE_k^{(m)}(\vec x)}_{\vec x})\rho_B^{\vec x}]
	\end{equation}
	and assuming \eqref{eq:invar-under-pinching}, we find
	\begin{equation} \label{eq:prob-succ-rho-m-Ckm}
	\Ps(\rho_m,\cC_k^{(m)})  
		= \sum_{\vec x \in \cX^{km}} p(\vec x) \tr[ \Pi^{\cE_k^{(m)}(\vec x)}_{\vec x}\rho_B^{\vec x}].
	\end{equation}
	
	Let us define the code $\sC_n$ as follows. Let $\cW_{nR}\supset \cW_{kmR}$ have size $2^{nR}$, and define $\cE_n: \cX^n \to \cW_{nR}$ by
	\begin{equation}
	\cE_n: \quad (x_1,\dotsc,x_{km}, \dotsc, x_n) \mapsto \cE_k^{(m)} ( x_1,\dotsc, x_{km}).
	\end{equation}
	Note then $\sC_n$ has rate $R$.
	Next, define the decoder for $\vec w\in \cW_{kmR} \subset \cW_{nR}$ by
	\begin{equation} \label{eq:new-decoder-def}
	\tilde\Pi_{\vec x}^{(\vec w)} := \Pi_{(x_1,\dotsc,x_{km})}^{(\vec w)} \otimes \one_{\cH_B^{\otimes (n-km)}}
	\end{equation}
	where $ \Pi_{(x_1,\dotsc,x_{km})}^{\vec w}$ is the POVM element defined by the decoder of $\sC_k^{(m)}$. For $\vec w \in \cW_{nR}\setminus \cW_{kmR}$, let $\{\tilde \Pi_{\vec x}^{(\vec w)}\}_{\vec x\in \cX^n}$ be any POVM. Let us evaluate the success probability of this code. By definition
	\begin{equation}
	\Ps(\rho_{XB}, \sC_n) \equiv \sum_{\vec x\in \cX^n} p(\vec x) \tr[\tilde \Pi_{\vec x}^{\cE_n(\vec x)} \rho_B^{\vec x}]. 
	\end{equation}
	We may split the sum as
	\begin{equation}
	\Ps(\rho_{XB}, \sC_n)=\sum_{\vec y\in \cX^{km}} \sum_{\substack{\vec x = (x_1,\dotsc,x_n)\in \cX^n: \\(x_1,\dotsc,x_{km}) = \vec y}} p(\vec x) \tr[\tilde \Pi_{\vec x}^{\cE_n(\vec x)} \rho_B^{\vec x}]. 
	\end{equation}
Using $p(\vec x) = p(\vec y) p(x_{km+1}) \dotsm p(x_n)$ for $\vec x = (y_1,\dotsc, y_{km}, x_{km+1},\dotsc,x_n)$, and $\cE_n(\vec x) = \cE_k^{(m)}(\vec y)$, we have
\begin{equation}
\Ps(\rho_{XB}, \sC_n)= \sum_{\vec y\in \cX^{km}} p(\vec y)\sum_{\substack{\vec x = (x_1,\dotsc,x_n)\in \cX^n: \\(x_1,\dotsc,x_{km}) = \vec y}} p(x_{km+1})\dotsm p(x_n) \tr[\tilde \Pi_{\vec x}^{\cE_k^{(m)}(\vec y)} \rho_B^{\vec x}].
\end{equation}
By \eqref{eq:new-decoder-def}, and evaluating the trace over the last $n-km$ copies of $\cH_B$,
\begin{align}
\Ps(\rho_{XB}, \sC_n)&=\sum_{\vec y\in \cX^{km}} p(\vec y)\sum_{\substack{\vec x = (x_1,\dotsc,x_n)\in \cX^n: \\(x_1,\dotsc,x_{km}) = \vec y}} p(x_{km+1})\dotsm p(x_n) \tr[\tilde\Pi_{(x_1,\dotsc,x_{km})}^{\cE_k^{(m)}(\vec y)}\rho_B^{\vec y}] \\
&= \sum_{\vec y\in \cX^{km}} p(\vec y)  \tr[\tilde \Pi_{(x_1,\dotsc,x_{km})}^{\cE_k^{(m)}(\vec y)}\rho_B^{\vec y}] \sum_{\substack{\vec x = (x_1,\dotsc,x_n)\in \cX^n: \\(x_1,\dotsc,x_{km}) = \vec y}} p(x_{km+1})\dotsm p(x_n).
\end{align}
Then since $\sum_{\substack{\vec x = (x_1,\dotsc,x_n)\in \cX^n: \\(x_1,\dotsc,x_{km}) = \vec y}} p(x_{km+1})\dotsm p(x_n)=\sum_{x_1,\dotsc,x_{km}} p(x_{km+1})\dotsm p(x_n) = 1$, by \eqref{eq:prob-succ-rho-m-Ckm} we have $\Ps(\rho_{XB}, \sC_n)=\Ps(\rho_m, \sC_k^{(m)})$.
												Therefore, using $n\geq mk$ and \eqref{eq:UB-PsCkm}, we have
	\begin{equation}
		-\frac{1}{n} \log \Ps(\rho_{XB}, \cC_n) \leq  -\frac{1}{mk} \log \Ps(\rho_{XB}, \cC_n) \leq  E_\text{sc}^*(  R) +  \frac{3}{m} \log v_{m,|\cH_B|} +  \frac{1}{m\sqrt{k}} T^{(m)}(\delta,k).
	\end{equation}
	By using the upper bound on $v_{m,|\cH_B|}$ given in Lemma~\ref{lem:univ-symm-state} and that $1-\Pestar(n,R) \geq \Ps(\rho_{XB},\sC_n)$, we find
	\begin{equation}
	-\frac{1}{n} \log (1-\Pestar(n,R)) \leq  E_\text{sc}^*(  R) +  \frac{3(|\cH_B|+2)(|\cH_B|-1)}{2}\frac{1}{m} \log (m+1) + \frac{1}{m\sqrt{k}} T^{(m)}(\delta,k).
	\end{equation}
	Finally, we take $k = \floor{\frac{n}{m}}$ and make use of the inequality $k \geq \frac{n}{m}-1$ to obtain the result. \qed

\section{Moderate Deviation Regime \label{sec:mod_dev_reg}}
Recall that a sequence of real numbers $(a_n)_{n\in\mathbb{N}}$ satisfying  \eqref{eq:a_n} is called a moderate sequence.

\subsection{Optimal error when the rate approaches the Slepian-Wolf limit moderately quickly}\label{sec:mod_dev_for_error}

Let us recall Theorem~\ref{theo:mod_large}:

\modlarge*

\begin{proof}[Proof of Theorem~\ref{theo:mod_large}]
	We shorthand $H = H(X|B)_\rho$, $V = V(X|B)_\rho$  for notational convenience.	
	We first show the achievability, i.e.~the ``$ \geq$" in Eq.~\eqref{eq:mod_large0}.
	Let $\{a_n\}_{n\geq 1}$ be any sequence of real numbers satisfying Eq.~\eqref{eq:a_n}. 
 For every $n\in\mathbb{N}$, Theorem~\ref{theo:large_ach} implies that there exists a sequence of $n$-block codes with rates $R_n = H + a_n$ so that
	\begin{align}
	\Pestar(n,R_n) \leq 4 \exp \left\{ -n \left[ \max_{0\leq s \leq 1} \left\{
	{E}_0^{\downarrow}(s ) + s R_n
	\right\} \right] \right\}. \label{eq:sup5}
	\end{align}
	Applying Taylor's theorem to ${E}^{\downarrow}_0(s)$ at $s=0$ together with Proposition \ref{prop:E0_SW+E0down} gives
	\begin{align} \label{eq:sup2}
	 {E}_0^\downarrow\left(s \right) = - sH - \frac{s^2}{2} V + 
	\left. \frac{s^3}{6} \frac{\partial^3 {E}^{\downarrow}_0\left(s\right)}{\partial s^3} \right|_{s = \bar{s}},
	\end{align}
	for some $\bar{s} \in [0,s]$. 
	Now, let $s_n = a_n/V$. Then,  $s_n\leq 1$ for all sufficiently large $n$ by the assumption in Eq.~\eqref{eq:a_n} and $V >0$. For all $s_n\leq 1$,
	\begin{equation} 
	\max_{0\leq s \leq 1} \left\{ {E}^{\downarrow}_0\left(s\right) + sR_n
	\right\} \geq  {E}_0^{\downarrow}\left(s_n \right) + s_n R_n .
	\end{equation}
 Then eq.~\eqref{eq:sup2} yields 
	\begin{equation}
	\max_{0\leq s \leq 1}\left\{ {E}^{\downarrow}_0\left(s\right) + sR_n
	\right\}  = \frac{a_n}{V } \left( -H + R_n \right)
	-\frac{a_n^2}{2 V }  + \frac{a_n^3}{6 V^3}
	\frac{\partial^3  {E}^{\downarrow}_0\left(s\right)}{\partial s^3} 	\Big|_{s = \bar{s}_n} 
	\end{equation}
where $\bar{s}_n\in[0,s_n]$. Then, since {$R_n = H + a_n$},
	\begin{align}
	\max_{0\leq s \leq 1}\left\{ {E}^{\downarrow}_0\left(s\right) + sR_n
	\right\}  &= \frac{a_n^2}{2 V }  + \frac{a_n^3}{6 V^3}
	 \frac{\partial^3 {E}^{\downarrow}_0\left(s\right)}{\partial s^3} \Big|_{s = \bar{s}_n} \label{eq:sup3} \\
	&\geq \frac{a_n^2}{2 V }  - \frac{a_n^3}{6 V ^3}
	\left| \frac{\partial^3 {E}^{\downarrow}_0\left(s \right) }{\partial s^3} 	\Big|_{s = \bar{s}_n} \right| \\	
	&\geq 
	\frac{a_n^2}{2 V }  - \frac{a_n^3}{6 V ^3}\Upsilon, \label{eq:sup4}
	\end{align}
	where
	\begin{equation} \label{eq:Upsilon}
	\Upsilon = \max_{ s \in[0,1]  } 
	\left| \frac{\partial^3  {E}^{\downarrow}_0\left(s\right)}{\partial s^3} \right|.
	\end{equation}
	This quantity is finite due to the compactness of $[0,1]$ and the continuity, item \ref{E0_SW-a} in Proposition \ref{prop:E0_SW+E0down}.
	Therefore, substituting Eq.~\eqref{eq:sup4} into Eq.~\eqref{eq:sup5} gives, for all sufficiently large $n\in\mathbb{N}$,
	\begin{align}
	\frac{1}{na_n^2} \log \left( \frac{1}{\Pestar(n, R_n ) } \right)  \geq - \frac{\log 4}{ na_n^2}  + \frac{1}{2V } \left( 1 - \Upsilon \frac{a_n}{3V ^2} \right).
	\end{align}
	Recall Eq.~\eqref{eq:a_n} and let $n\to+\infty$, which completes the lower bound:
	\begin{align}
	\liminf_{n\to +\infty} \frac{1}{na_n^2} \log   \left( \frac{1}{\Pestar(n, R_n ) } \right)  \geq \frac{1}{2V }. \label{eq:converse_HT4}
	\end{align}
	
	We move on to show the converse, i.e.~the ``$ \leq$" in Eq.~\eqref{eq:mod_large0}.
	Let $N_1\in\mathbb{N}$ be an integer such that $R_n = H + a_n \in (H_1(X|B)_\rho, H_0(X|B)_\rho) $ for all $n\geq N_1$. 
	We denote by $(\alpha_{R_n}^\star, \sigma_{R_n}^\star )$ the unique saddle-point of 
	\begin{align}
	\sup_{\alpha\in (0,1]} \inf_{\sigma_B \in \mathcal{S}(B)}
	\frac{1-\alpha}{\alpha}\left( R_n + D_\alpha( \rho_{XB}\|\mathds{1}_X\otimes \sigma_B
	\right).
	\end{align}
	By invoking the one-shot converse bound, Proposition~\ref{prop:one-shot_converse}, with $M = \exp\{n R_n\}$, we  obtain for all $n\geq N_1$,
\begin{align}
\log \left( \frac{1}{\Pestar(n, R_n ) } \right) 
&\leq \min_{\sigma_B^n \in \mathcal{S}(B^n) } - \log \widehat{\alpha}_{ \frac{M}{|\mathcal{X}^n|} } \left( \rho_{X^n B^n} \| \tau_{X^n} \otimes \sigma_B^n \right) \\
&\leq - \log \widehat{\alpha}_{ \frac{M}{|\mathcal{X}^n|} } \left( \rho_{X^n B^n} \| \tau_{X^n} \otimes (\sigma_{R_n}^\star)^{\otimes n} \right) \\
&= - \log \widehat{\alpha}_{ \frac{M}{|\mathcal{X}|^n} } \left( \rho_{X B}^{\otimes n} \| (\tau_X \otimes \sigma_{R_n}^\star)^{\otimes n} \right). \label{eq:converse_HT3}
\end{align}

Next, we verify that we are able to employ Eq.~\eqref{eq:sharp1} in Proposition~\ref{prop:sharp_Hoeffding} to asymptotically expand Eq.~\eqref{eq:converse_HT3}.
Equation~\eqref{eq:spChc3} in Proposition~\ref{prop:spCh} below shows that $\lim_{n\to+\infty} \alpha_{R_n}^\star = 1$. This together with the closed-form expression of $\sigma_{R_n}^\star$ \cite{SW12}, \cite[Lemma 1]{TBH14}, \cite[Lemma 5.1]{Tom16} shows that
\begin{align}
\lim_{n\to+\infty} \sigma_{R_n}^\star = \lim_{n\to+\infty} \frac{ \left( \Tr_X\left[\rho_{XB}^{\alpha_{R_n}^\star}\right] \right)^{\frac{1}{\alpha_{R_n}^\star}} }{ \Tr\left[\left( \Tr_X\left[\rho_{XB}^{\alpha_{R_n}^\star}\right] \right)^{\frac{1}{\alpha_{R_n}^\star}}\right] } = \rho_B.
\end{align}
Since $V = V(\rho_{XB}\|\mathds{1}_X\otimes \rho_B) > 0$, by the continuity of $V(\cdot\|\cdot)$ (c.f.~\eqref{eq:V}), for every $\kappa\in(0,1)$ there exists $N_2\in\mathbb{N}$ such that for all $n\geq N_2$,
\begin{align}
V\left( \rho_{XB}\| \tau_X \otimes \sigma_{R_n}^\star    \right)
= V\left( \rho_{XB}\| \mathds{1}_X \otimes \sigma_{R_n}^\star    \right) \geq (1-\kappa)V = :\nu > 0.
\end{align}
Hence, we apply Eq.\eqref{eq:sharp1} in  Proposition~\ref{prop:sharp_Hoeffding} with $r_n = \log|\mathcal{X}| - R_n$, $\rho = \rho_{XB}$ and $\sigma = \tau_X\otimes \sigma_{R_n}^\star$ to obtain for all $n\geq \max\{N_1, N_2\}$, 
	\begin{align}
	-\log \widehat{\alpha}_{\exp\left\{-n r_n\right\}} \left( \rho^{\otimes n} \| \sigma^{\otimes n} \right) 
	&\leq n \sup_{\alpha\in(0,1]} \frac{1-\alpha}{\alpha} \left( D_\alpha\left(\rho\|\sigma\right) - r_n + \gamma_n \right) + 
	\log \left( s_n^\star\sqrt{n} \right) + K, \\
	&= n E_\text{sp} ( H + a_n + \gamma_n  ) + 
\log \left( s_n^\star\sqrt{n} \right) + K, 	
	\label{eq:converse_HT1}
	\end{align}
	for some constant $K>0$, and $s_n^\star := ( 1-\alpha_{R_n}^\star )/ \alpha_{R_n}^\star$.
	Now, let $\delta_n := a_n + \gamma_n$, and notice that $\gamma_n = O(\frac{\log n}{n} ) = o(a_n)$.
	We invoke Proposition \ref{prop:spCh} below  to have
	\begin{align}
	\limsup_{n\to+\infty} \frac{ E_\text{sp} ( H(X|B)_\rho + \delta_n ) }{a_n^2} =
	\limsup_{n\to+\infty} \frac{ E_\text{sp} ( H(X|B)_\rho + \delta_n ) }{\delta_n^2} \leq \frac{1}{2V}. \label{eq:converse_HT6}
	\end{align}
	Moreover, Eq.~\eqref{eq:spChc3} in Proposition \ref{prop:spCh} gives that $\lim_{n\to+\infty}\frac{s_n^\star}{\delta_n} = 1/V$.
	Combining Eqs.~\eqref{eq:a_n}, \eqref{eq:converse_HT3}, \eqref{eq:converse_HT1} and \eqref{eq:converse_HT6} to conclude our claim
	\begin{align}
	\limsup_{n\to+\infty} \frac{1}{n a_n^2}\log \left( \frac{1}{\Pestar(n, R_n ) } \right)  &\leq 
	\limsup_{n\to+\infty} - \frac{\log \widehat{\alpha}_{\exp\left\{-n r_n\right\}} \left( \rho^{\otimes n} \| \sigma^{\otimes n} \right) }{ n a_n^2 }  \\
	&\leq \frac{1}{2V} + 
	\limsup_{n\to+\infty}  \frac{\log \left( s_n^\star\sqrt{n} \right)}{n a_n^2} \\
	&= \frac{1}{2V} + 
	\limsup_{n\to+\infty}  \frac{\log \left( s_n^\star\sqrt{n} \right)}{n \delta_n^2} \\
	&= \frac{1}{2V} + \limsup_{n\to+\infty}  \frac{ \frac12 \log \left( {n}\delta_n^2  \right) - \log V}{ n \delta_n^2 }  \\
	&= \frac{1}{2V}, \label{eq:converse_HT2}
	\end{align}	
	where the last line follows from $\lim_{n\to+\infty} n \delta_n^2 = +\infty$.
	Hence, Eq~\eqref{eq:converse_HT4} together with Eq.~\eqref{eq:converse_HT2}  completes the proof.

\begin{prop}[Error Exponent around Conditional Entropy] \label{prop:spCh}
	Let  $(\delta_n)_{n\in\mathbb{N}}$ be a sequence of positive numbers with $\lim_{n\to+\infty} \delta_n = 0$.
					The following hold:
	\begin{align}
	\limsup_{n\to+\infty} \frac{ E_\textnormal{sp} \left( H(X|B)_\rho + \delta_n \right) }{ \delta_n^2 } &\leq   \frac{ 1 }{2 V(X|B)_\rho }; \label{eq:spChc1} \\
	\limsup_{n\to+\infty} \frac{s_n^\star}{\delta_n} &= \frac{1}{V(X|B)_\rho}, \label{eq:spChc3}
	\end{align}
	where 
	\begin{align}
	s_n^\star := \argmax_{s\geq 0} \left\{ s\left( H(X|B)_\rho + \delta_n \right) - s H_{\frac{1}{1+s}}^\uparrow(X|B)_\rho
	\right\}.
	\end{align}
\end{prop}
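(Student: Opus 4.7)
The plan is to exploit the strict concavity and smoothness of the auxiliary function $E_0(s)$ established in Proposition~\ref{prop:E0_SW+E0down}, so that the supremum defining $E_\textnormal{sp}(R_n)$ is attained at an interior optimizer $s_n^\star$ satisfying the first-order condition $E_0'(s_n^\star) = -R_n$. Both \eqref{eq:spChc1} and \eqref{eq:spChc3} will then follow from a Taylor expansion of $E_0$ around $s=0$, making essential use of $E_0'(0) = -H(X|B)_\rho$ and $E_0''(0) = -V(X|B)_\rho < 0$ (the strict negativity of the second derivative is implicit in the hypothesis $V(X|B)_\rho > 0$ carried over from Theorem~\ref{theo:mod_large}).

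First, I would observe that for all sufficiently large $n$, $R_n := H(X|B)_\rho + \delta_n$ lies in the interval $(H_1^\uparrow(X|B)_\rho, H_{1/2}^\uparrow(X|B)_\rho)$, so by Proposition~\ref{prop:E_SW}\ref{E_SW-a} we have $E_\textnormal{sp}(R_n) = E_\textnormal{r}(R_n) = \max_{0\leq s\leq 1}\{E_0(s) + sR_n\}$. The function $s \mapsto E_0(s) + sR_n$ is continuous and concave on $[0,1]$ (Proposition~\ref{prop:E0_SW+E0down}\ref{E0_SW-a},\ref{E0_SW-c}), has derivative $E_0'(0) + R_n = \delta_n > 0$ at $s=0$, and by $E_0''(0) = -V < 0$ its derivative at some small $s_0 > 0$ is strictly less than $\delta_n$ once $n$ is large enough. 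Hence the maximizer $s_n^\star$ is a unique interior point, characterized by the first-order condition $E_0'(s_n^\star) = -R_n$.

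Second, I would Taylor-expand $E_0'$ at $s=0$: writing $E_0'(s) = -H(X|B)_\rho - V(X|B)_\rho \, s + o(s)$ as $s \to 0$, the equation $E_0'(s_n^\star) = -H(X|B)_\rho - \delta_n$ rearranges to $V(X|B)_\rho \, s_n^\star = \delta_n + o(s_n^\star)$. Together with $s_n^\star \to 0$ (which follows from the preceding paragraph), this inversion yields $s_n^\star = \delta_n / V(X|B)_\rho + o(\delta_n)$, proving \eqref{eq:spChc3}. Substituting back into $E_\textnormal{sp}(R_n) = E_0(s_n^\star) + s_n^\star R_n$ and using the second-order expansion $E_0(s) = -H(X|B)_\rho\, s - \tfrac{1}{2}V(X|B)_\rho\, s^2 + o(s^2)$,
\begin{align}
E_\textnormal{sp}(R_n) &= -H(X|B)_\rho\, s_n^\star - \tfrac{1}{2} V(X|B)_\rho (s_n^\star)^2 + o((s_n^\star)^2) + s_n^\star(H(X|B)_\rho + \delta_n) \notag \\
&= s_n^\star \delta_n - \tfrac{1}{2} V(X|B)_\rho (s_n^\star)^2 + o(\delta_n^2) \notag \\
&= \frac{\delta_n^2}{V(X|B)_\rho} - \frac{\delta_n^2}{2 V(X|B)_\rho} + o(\delta_n^2) = \frac{\delta_n^2}{2V(X|B)_\rho} + o(\delta_n^2),
\end{align}
which gives \eqref{eq:spChc1}, in fact as a full limit and not merely a limsup.

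The main obstacle is bookkeeping the remainder terms in the two Taylor expansions, which requires that $E_0$ admit a bounded third derivative on a right-neighborhood of $0$; as in the proof of Theorem~\ref{theo:mod_large} (see the definition of $\Upsilon$ in \eqref{eq:Upsilon}), this boundedness follows from the continuity of $E_0'''$ on the compact interval $[0,1]$ guaranteed by the analyticity-type regularity of the Petz R\'enyi conditional entropy. With this in hand, the inversion step and the substitution step are routine, and no further technicalities arise.
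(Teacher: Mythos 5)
Your proposal is correct and follows essentially the same route as the paper's proof: restrict the supremum to $s\in[0,1]$, use strict concavity to get the interior first-order condition $E_0'(s_n^\star)=-R_n$, deduce $s_n^\star\to 0$ from $E_0'(0)=-H(X|B)_\rho$ and $E_0''(0)=-V(X|B)_\rho<0$, and then Taylor-expand (the paper phrases your first-order expansion of $E_0'$ as a mean-value-theorem step, and controls the second-order remainder of $E_0$ via the same bound $\Upsilon$ on the third derivative over $[0,1]$). The only cosmetic difference is that the paper bounds $s_n^\star\delta_n-\tfrac{V}{2}(s_n^\star)^2$ by $\sup_{s\ge 0}\{s\delta_n-\tfrac{V}{2}s^2\}=\delta_n^2/(2V)$ rather than substituting the asymptotics of $s_n^\star$ directly, which is immaterial.
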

\noindent The proof of Proposition~\ref{prop:spCh} is provided in Appendix \ref{app:spCh}.
\end{proof}

\subsection{Optimal rate when the error approaches zero moderately quickly} \label{sec:mod_rate}

Theorem 9 of \cite{TH13} provides bounds on $R^*(1,\eps)$. By applying these bounds to $\rho^{\otimes n}_{XB}$ and slightly reformulating them, we find that for any $n\in \N$ and any  $\alpha \in (0,1)$, we have
\begin{equation} \label{n-shot-bounds-rate}
-\frac{1}{n}D_\text{H}^\eps(\rho_{XB}^{\otimes n}\|(\one_X\otimes \rho_B)^{\otimes n})  \leq\frac{1}{n}  R^*(n, \eps) \leq - \frac{1}{n}D_\text{H}^{\alpha \eps}(\rho_{XB}^{\otimes n}\|(\one_X\otimes \rho_B)^{\otimes n}) +\frac{1}{n} \log \frac{8}{(1-\alpha)^2 \eps} .
\end{equation}
By combining this result with a moderate-deviations expansion of the hypothesis testing relative entropy developed in \cite{CTT2017}, we obtain a moderate deviations result for $R^*(n,\eps)$.

\modrate*

This result relies heavily on the following expansion of the hypothesis testing relative entropy.
\begin{prop}[Theorem 1 of \cite{CTT2017}] \label{prop:moderate-expansion-entropy}
	For any moderate sequence $a_n$ and $\eps_n := \e^{-na_n^2 }$, and quantum states $\rho$ and $\sigma$ with $\rho \ll \sigma$, we have
	\begin{equation} \label{eq:moderate-expansion-entropy}
	\frac{1}{n}D^{\eps_n}_\textnormal{H} (\rho^{\otimes n}\|\sigma^{\otimes n}) = D(\rho\|\sigma) - \sqrt{2 V(\rho\|\sigma)} a_n + o(a_n).
	\end{equation}
\end{prop}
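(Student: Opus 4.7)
\bigskip
\noindent\textbf{Proof proposal for Theorem~\ref{theo:moderate_rate}.} The plan is to use the non-asymptotic sandwich \eqref{n-shot-bounds-rate} to reduce the asymptotic behaviour of $R^*(n,\eps_n)$ to the moderate-deviations expansion of the hypothesis testing relative entropy supplied by Proposition~\ref{prop:moderate-expansion-entropy}, applied to the pair $(\rho_{XB},\one_X\otimes \rho_B)$. The only ingredients beyond this are the two identifications
\begin{equation}
D(\rho_{XB}\,\|\,\one_X\otimes \rho_B) = -H(X|B)_\rho, \qquad V(\rho_{XB}\,\|\,\one_X\otimes \rho_B) = V(X|B)_\rho,
\end{equation}
the first of which follows from $\log(\one_X\otimes \rho_B) = \one_X\otimes \log\rho_B$ together with the definition of conditional entropy, and the second of which is the definition~\eqref{eq:V_cond}. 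I would also verify the domination condition $\rho_{XB}\ll \one_X\otimes \rho_B$ required to apply Proposition~\ref{prop:moderate-expansion-entropy}: since $\rho_B=\sum_x p(x)\rho_B^x$ satisfies $\texttt{supp}(\rho_B^x)\subseteq \texttt{supp}(\rho_B)$ for every $x$ with $p(x)>0$, this is immediate from the c-q form of $\rho_{XB}$.

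The lower bound is then essentially automatic. Substituting $\eps=\eps_n=\e^{-na_n^2}$ into the left inequality of \eqref{n-shot-bounds-rate} and invoking Proposition~\ref{prop:moderate-expansion-entropy}, I obtain
\begin{equation}
R^*(n,\eps_n) \,\geq\, -\tfrac{1}{n}D^{\eps_n}_{\mathrm H}\bigl(\rho_{XB}^{\otimes n}\,\big\|\,(\one_X\otimes \rho_B)^{\otimes n}\bigr) \,=\, H(X|B)_\rho + \sqrt{2V(X|B)_\rho}\,a_n + o(a_n).
\end{equation}

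For the upper bound I apply the right inequality of \eqref{n-shot-bounds-rate} with a fixed $\alpha\in(0,1)$ and threshold $\alpha \eps_n$. The key observation is that $\alpha\eps_n$ can be rewritten as $\e^{-n \tilde a_n^2}$ with $\tilde a_n^2 = a_n^2 - \tfrac{1}{n}\log\alpha$; since $n a_n^2\to \infty$ by \eqref{eq:a_n}, one has $\tilde a_n = a_n(1+o(1)) = a_n + o(a_n)$, so $(\tilde a_n)_{n\in\mathbb N}$ is itself a moderate sequence and Proposition~\ref{prop:moderate-expansion-entropy} (applied to $\tilde a_n$) yields
\begin{equation}
-\tfrac{1}{n}D^{\alpha\eps_n}_{\mathrm H}\bigl(\rho_{XB}^{\otimes n}\,\big\|\,(\one_X\otimes \rho_B)^{\otimes n}\bigr) = H(X|B)_\rho + \sqrt{2V(X|B)_\rho}\,a_n + o(a_n).
\end{equation}
The residual term in \eqref{n-shot-bounds-rate}, namely $\tfrac{1}{n}\log\bigl(\tfrac{8}{(1-\alpha)^2 \eps_n}\bigr) = a_n^2 + O(1/n)$, is $o(a_n)$ because $a_n\to 0$ and $na_n^2\to\infty$, so it is absorbed into the error.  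Combining the two inequalities yields the claimed expansion.

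The steps are essentially routine once Proposition~\ref{prop:moderate-expansion-entropy} is in hand; the only subtlety I would pause at is the upper-bound manipulation, where one must be careful that replacing $\eps_n$ by $\alpha\eps_n$ does not disturb the leading $\sqrt{2V(X|B)_\rho}\,a_n$ coefficient, and that the explicit $\tfrac{1}{n}\log(1/\eps_n)=a_n^2$ correction really is of order $o(a_n)$. Both follow from the two defining conditions of a moderate sequence. Apart from this, the argument is simply the composition of the one-shot converse/achievability bounds inherited from \cite{TH13} with the moderate-deviation expansion of $D^\eps_{\mathrm H}$.
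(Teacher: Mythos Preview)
Your proposal is correct and follows essentially the same route as the paper's proof of Theorem~\ref{theo:moderate_rate}: sandwich $R^*(n,\eps_n)$ via \eqref{n-shot-bounds-rate}, apply Proposition~\ref{prop:moderate-expansion-entropy} on both sides, and absorb the slack term $\tfrac{1}{n}\log\tfrac{8}{(1-\alpha)^2\eps_n}=a_n^2+O(1/n)$ into $o(a_n)$; your $\tilde a_n$ is exactly the paper's $b_n$.

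One small technical point you skip that the paper makes explicit: Proposition~\ref{prop:moderate-expansion-entropy} is stated for \emph{normalized} states $\sigma$, whereas $\one_X\otimes\rho_B$ has trace $|\cX|$. The paper first extends the proposition to unnormalized $\sigma\geq 0$ via the scaling identities $D_{\mathrm H}^\eps(\rho\|\lambda\sigma)=D_{\mathrm H}^\eps(\rho\|\sigma)-\log\lambda$, $D(\rho\|\lambda\sigma)=D(\rho\|\sigma)-\log\lambda$, and $V(\rho\|\lambda\sigma)=V(\rho\|\sigma)$, before applying it to $\one_X\otimes\rho_B$. This is pure bookkeeping and does not affect the substance of your argument.
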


\begin{proof}[Proof of Proposition~\ref{theo:moderate_rate}]
We may extend Proposition~\ref{prop:moderate-expansion-entropy} to unnormalized $\sigma\geq 0$ simply by factoring out the trace of $\sigma$ from the second slot of the hypothesis testing relative entropy using that 
\begin{equation}
D_\text{H}^\eps(\rho\|\lambda\sigma) = D_\text{H}^\eps(\rho\|\sigma) - \log \lambda, \qquad D(\rho\|\lambda\sigma) = D(\rho\|\sigma) - \log \lambda,
\end{equation}
and
\begin{equation}
V(\rho\|\lambda \sigma) = V(\rho\|\sigma)
\end{equation}
for $\lambda > 0$.
Therefore,
\begin{align*}	
\frac{1}{n}D^{\eps_n}_\text{H} (\rho^{\otimes n}\| (\lambda\sigma)^{\otimes n}) &=
\frac{1}{n}D^{\eps_n}_\text{H} (\rho^{\otimes n}\|\sigma^{\otimes n}) -\log \lambda= D(\rho\|\sigma) - \sqrt{2 V(\rho\|\sigma)} a_n + o(a_n) - \log \lambda.\\
&=D(\rho\|\lambda\sigma) - \sqrt{2 V(\rho\|\lambda\sigma)} a_n + o(a_n)
\end{align*}
and thus the relation \eqref{eq:moderate-expansion-entropy} holds for unnormalized $\sigma\geq 0$.

Next, we consider \eqref{n-shot-bounds-rate} with $\eps = \eps_n := \e^{-na_n^2}$, yielding
\begin{equation} \label{n-shot-bounds-rate-moderate}
-\frac{1}{n}D_\text{H}^{\eps_n}(\rho_{XB}^{\otimes n}\|(\one_X\otimes \rho_B)^{\otimes n})  \leq\frac{1}{n} \log R^* (n, \eps) \leq - \frac{1}{n}D_\text{H}^{\alpha \eps_n}(\rho_{XB}^{\otimes n}\|(\one_X\otimes \rho_B)^{\otimes n}) +\frac{1}{n} \log \frac{8}{(1-\alpha)^2 \eps_n} .
\end{equation}
 We next need to apply Proposition~\ref{prop:moderate-expansion-entropy}  to the hypothesis testing relative entropy on each side. While this application on the left-hand side is immediate, for the right-hand side, we need to check that $(b_n)$ satisfying 
 \begin{equation}
  \alpha \eps_n  = \e^{- nb_n^2}
  \end{equation} is a moderate sequence. We define $b_n := \sqrt{ a_n^2 +\frac{1}{n} \log \frac{1}{\alpha}}$.
Since for any $x,y>0$ we have
\begin{equation}
\sqrt{x+y} \leq \sqrt{x + y  + 2\sqrt{xy} } = \sqrt{(\sqrt{x} + \sqrt{y})^2} = \sqrt{x} + \sqrt{y}
\label{eq:sqrt-bound}
\end{equation}
we therefore obtain
\begin{equation} \label{eq:b_n-a_n_ieq}
0 \leq b_n \leq a_n + \sqrt{ \frac{1}{n}\log \frac{1}{\alpha}}  \xrightarrow{n\to \infty} 0 
\end{equation}
 by taking $x = a_n^2$ and $y= \frac{1}{n}\log \frac{1}{\alpha}$ in \eqref{eq:sqrt-bound}. Since $nb_n^2 = n a_n^2 + \log \frac{1}{\alpha} \xrightarrow{n\to\infty} \infty$, the sequence $(b_n)_n$ is indeed moderate. Thus, Proposition~\ref{prop:moderate-expansion-entropy} yields
\begin{equation} \label{eq:moderate-UB-expansion-1}
\frac{1}{n}D_\text{H}^{\alpha \eps_n}(\rho_{XB}^{\otimes n}\|(\one_X\otimes \rho_B)^{\otimes n}) = D(\rho_{XB}\|\one_X\otimes \rho_B) - \sqrt{2V(\rho_{XB}\|\one_X\otimes \rho_B)}  b_n + o(b_n).
\end{equation}
Now, we have that
\begin{equation}
b_n - a_n \leq \frac{1}{\sqrt{n}} \sqrt{\log \frac{1}{\alpha}} = o(a_n)
\end{equation}
since  for any $\delta > 0$,
\begin{equation}
\frac{1}{\sqrt{n}} \sqrt{\log \frac{1}{\alpha}} \leq \delta a_n \iff \sqrt{\log \frac{1}{\alpha}}\leq \sqrt{n}a_n
\end{equation}
which occurs for all $n$ sufficiently large because $\sqrt{n}a_n\to\infty$.

Moreover, if $f_n = o(b_n)$, since $b_n-a_n = o(a_n)$, we have $f_n = o(a_n)$.
Therefore, the error terms $f_n$ hidden in the $o(b_n)$ of \eqref{eq:moderate-UB-expansion-1} are in fact $o(a_n)$. Moreover, we may write
\begin{equation}
\sqrt{2V} b_n = \sqrt{2V}  a_n + \sqrt{2V}(b_n-a_n) = \sqrt{2V}  a_n + o(a_n)
\end{equation}
with $V:= V(\rho_{XB}\|\one_X\otimes\rho_B)$. Thus, \eqref{eq:moderate-UB-expansion-1} yields
\begin{equation}
\frac{1}{n}D_\text{H}^{\alpha \eps_n}(\rho_{XB}^{\otimes n}\|(\one_X\otimes \rho_B)^{\otimes n}) = D(\rho_{XB}\|\one_X\otimes \rho_B) - \sqrt{2V}  a_n + o(a_n)
\end{equation}
for any $\alpha\in(0,1)$.

The second term of the right-hand side of \eqref{n-shot-bounds-rate-moderate} is
\begin{equation}
\frac{1}{n}\log \frac{8}{(1-\alpha)^2 \eps_n} = \frac{1}{n} \log \frac{8}{(1-\alpha)^2} - \frac{1}{n}\log \eps = \frac{1}{n} \log \frac{8}{(1-\alpha)^2} + a_n^2.
\end{equation}
Since both $\frac{1}{n}=o(a_n^2)$ and $a_n^2 = o(a_n)$, the second term on the right-hand side of \eqref{n-shot-bounds-rate-moderate} is $o(a_n)$. 
Thus, we may conclude
\begin{equation}
\frac1n\log R^*(n, \eps_n) \leq- D(\rho_{XB}\|\one_X\otimes \rho_B) + \sqrt{2V(\rho_{XB}\|\one_X\otimes \rho_B)}  a_n + o(a_n).
\end{equation}
This precisely matches the bound obtained by applying Proposition~\ref{prop:moderate-expansion-entropy} to the left-hand side of \eqref{n-shot-bounds-rate-moderate}, and therefore we obtain \eqref{eq:moderate_exp-rate}.
\end{proof}

\section{Discussion} \label{sec:conclusions}

In this paper, we study the CQSW protocol, which is the task of classical data compression with quantum side information associated to a c-q state $\rho_{XB}$, for which the asymptotic data compression limit was shown to be $H(X|B)_\rho$ \cite{DW03}. We focus primarily on the non-asymptotic (i.e.~finite $n$) scenario, and obtain results for both the large and moderate deviation regimes. In the large deviation regime, the compression rate $R$ is fixed. We derive lower and upper bounds on the error exponent function for the range $R > H(X|B)_\rho$ (Theorems \ref{theo:large_ach} and \ref{theo:sp_SW}),  and corresponding bounds for the strong converse exponent for the range $R<H(X|B)_\rho$ (Theorems \ref{thm:SC-converse-bound} and \ref{thm:SC-achiev-bound}). Comparing the finite blocklength lower bound on the strong converse exponent given in Theorem~\ref{thm:SC-converse-bound} with the bounds given given by Equation~(6.19) in Theorem 6.2 of \cite{leditzky_strong_2016} and Equation~(8.6) of Section 8.1.3 of \cite{tom-thesis} remains an open question.

In addition, we analyze two scenarios in the moderate deviation regime. In the first, the rate depends on $n$ and slowly decays to the $H(X|B)_\rho$ from above, and we characterize the speed of convergence of the optimal error probability to zero in terms of the conditional entropy variance (Theorem \ref{theo:mod_large}). In the second, we obtain an expansion for the minimal rate possible to accomplish the CQSW protocol when the error probability is less than a threshold value which decays slowly with $n$ (Theorem \ref{theo:moderate_rate}).

There seems to be an interesting duality between our results on the error exponents for the CQSW protocol and those for classical-quantum channel coding \cite{Hol00, Dal13, CH16, CHT17}. In the former, the entropic error exponent functions arising in our bounds involve the difference between the compression rate and a conditional R\'enyi entropy, while in the latter they involve the difference between the R\'enyi capacity and the transmission rate. The above duality mimics the connections found between the tasks of Slepian Wolf coding and classical channel coding \cite{Ahl80, AD82, CHJ+09}.
We summarize such connections in Table~\ref{table:exponent} below.

Besides investigating this duality in detail \cite{duality}, other open problems include extending variable-rate Slepian-Wolf coding \cite{Gal76, CK80, CK81, WM15, chen_reliability_2017}, and the universal coding scenario \cite{Gal76, OH94, Csi82, Hay09} to the CQSW setting.

\begin{table}[th!]
	\centering
	\resizebox{1\columnwidth}{!}{
		\begin{tabular}{|c|c|c|} 			\toprule
			
			Bounds\textbackslash Settings & Slepian-Wolf Coding with Quantum Side Information & Classical-Quantum Channel Coding \\	
			
			\midrule
			\midrule

						Achievability & \multirow{2}{*}{$ \displaystyle  {E}_\text{r}^\downarrow (R)
				:= \sup_{ \frac12 \leq \alpha\leq 1} \frac{1-\alpha}{\alpha} \left( R - H_{2-\frac{1}{\alpha}}^\downarrow(X|B)_\rho \right)  $}
			& \multirow{2}{*}{$\displaystyle {E}_\text{r}^\downarrow(R) := \sup_{ \frac12 \leq \alpha\leq 1} \frac{1-\alpha}{\alpha} \left( \max_{P\in\mathscr{P}(\mathcal{X})} I_{2-\frac{1}{\alpha}}^\downarrow(p,\mathscr{W}) -   R \right)
				$} \\
			($R<C_\mathscr{W}$ or $R>H(X|B)_{\rho}$ )  & &
			\\			
			\midrule
			
			Optimality & $ \displaystyle {E}_\text{sp} (R)
			:= \sup_{s\geq 0}\left\{ E_0 (s ) + sR\right\}$
			& $\displaystyle {E}_\text{sp}(R)
			:= \sup_{s\geq 0}\left\{ \max_{P\in\mathscr{P}(\mathcal{X})}E_0(s,P) - sR\right\}$ \\
			($R<C_\mathscr{W}$ or $R>H(X|B)_{\rho}$ )  & 
			$\displaystyle \qquad\qquad\qquad\quad\; = \sup_{0\leq \alpha\leq1 } \left\{\frac{1-\alpha}{\alpha}\left( R - H_\alpha^{\uparrow}(X|Y)_\rho\right)  \right\} $	
			& $\displaystyle \qquad\quad = \sup_{0\leq \alpha\leq1 } \left\{\frac{1-\alpha}{\alpha}\left( C_{\alpha,\mathscr{W}} - R\right)  \right\} $\\			
			
			\midrule
			
			Strong Converse 
			& $ \displaystyle {E}_\text{sc}^*(R)
			:= \sup_{-1< s< 0}\left\{ {E}_0^\textsf{* }(s ) + sR\right\}$ & $\displaystyle {E}_\text{sc}^*(R)
			:= \sup_{-1< s<0}\left\{ \max_{P\in\mathscr{P}(\mathcal{X})} {E}_0^*(s,P) - sR\right\}$ \\
			($R>C_\mathscr{W}$ or $R<H(X|B)_{\rho}$ )  
			& $\displaystyle \qquad\qquad\;\; = \sup_{ \alpha>1 } \left\{\frac{1-\alpha}{\alpha}\left( R - H_\alpha^{*,\uparrow}(X|Y)_\rho\right)  \right\} $	
			& $\displaystyle \;\; = \sup_{ \alpha>1 } \left\{\frac{1-\alpha}{\alpha}\left( C_{\alpha,\mathscr{W}}^* - R\right)  \right\} $ \\		
			
			\midrule
			
			Auxiliary Function & $\displaystyle E_0 (s ) := -\log \Tr_B \left[
			\left( \Tr_X (\rho_{XB})^{1/(1+s)}\right)^{1+s}
			\right]$  & $\displaystyle E_0(s,P) := -\log \Tr \left[
			\left( \sum_{x\in\mathcal{X}} P(x) \cdot W_x^{1/(1+s)}\right)^{1+s}
			\right]$\\
			
			\bottomrule				
		\end{tabular}}
		\caption{The comparison of the error exponent analysis for Slepian-Wolf coding with quantum side information and classical-quantum channel coding. 
			The classical-quantum channel is denoted by $\mathscr{W}:\mathcal{X}\to \mathcal{S}(B)$, i.e. $x\mapsto W_x \in \mathcal{S}(B)$ for some Hilbert space $\mathcal{H}$.	
			The achievability of classical-quantum channel coding was proved in \cite{Hay07}, and the quantity $I_{\alpha}^\downarrow(p,\mathscr{W})$ is defined by  $I_{\alpha}^\downarrow(p,\mathscr{W}):= D_{\alpha}\left( \rho_{XB} \| \rho_X \otimes \rho_B \right)$ for $\rho_{XB} := \sum_x p(x) |x\rangle\langle x|\otimes W_x$.
		} \label{table:exponent}
	\end{table}

\section*{Acknowledgements}
M.-H.~Hsieh was supported by an ARC Future Fellowship under Grant FT140100574 and by US Army Research Office for Basic Scientific Research Grant W911NF-17-1-0401.
H.-C.~Cheng was supported by Ministry of Science and Technology Overseas Project for Post Graduate Research (Taiwan) under Grant 105-2917-I-002-028 and 104-2221-E-002-072. E.~P.~Hanson was partly supported by the Cantab Capital Institute for the Mathematics of Information (CCIMI) and would like to thank Institut Henri Poincar\'e, where part of this research was carried out, for its support and hospitality.

\appendix

\section{Proof of Proposition~\ref{prop:H}} \label{app:H}
\begin{prop2}[Properties of $\alpha$-R\'enyi Conditional Entropy] 	Given any classical-quantum state $\rho_{XB} \in \mathcal{S}(XB)$, the following holds:
	\begin{enumerate}[(a)]
		\item\label{H-aa} The map $\alpha \mapsto H_\alpha^\uparrow (X|B)_\rho $ is continuous and monotonically decreasing on $[0,1]$.
		
		\item\label{H-bb} The map $\alpha \mapsto \frac{1-\alpha}{\alpha} H_\alpha^\uparrow(X|B)_\rho$ is strictly concave on $(0,1)$.
	\end{enumerate}
\end{prop2}

\begin{proof}[Proof of Proposition \ref{prop:H}]
	~\\
	\begin{itemize}
		\item[(\ref{prop:H})-\ref{H-aa}]		
		Fix an arbitrary sequence $(\alpha_k)_{k\in\mathbb{N}}$ such that $\alpha_k\in[0,1]$ and $\lim_{k\to+\infty} \alpha_k = \alpha_\infty \in [0,1]$.
		Let 
		\begin{align}
		\sigma_k^\star \in \argmin_{\sigma\in\mathcal{S(H)}} D_{\alpha_k} \left(\rho_{XB}\|\mathds{1}_X\otimes \sigma \right), \quad \forall k\in\mathbb{N}\cup\{+\infty\}.
		\end{align}
		The definition in Eq.~\eqref{eq:cond_ent} implies that
		\begin{align}
		\limsup_{k \to +\infty}
		H_{\alpha_k}^\uparrow (X|B)_\rho &= - \liminf_{k \to +\infty} D_{\alpha_k} \left(\rho_{XB}\| \mathds{1}_X \otimes \sigma_k^\star \right) \\
		&\leq -   D_{\alpha_\infty} \left( \rho_{XB} \left\| \mathds{1}_X\otimes \left(\lim_{k \to +\infty} \sigma_k^\star\right)\right.  \right) \label{eq:II1}\\
		&\leq - \min_{\sigma\in\mathcal{S(H)}} D_{\alpha_\infty} \left( \rho_{XB} \| \mathds{1}_X\otimes \sigma \right) \\
		&= H_{\alpha_\infty}^\uparrow (X|B)_\rho, \label{eq:II2}
		\end{align}
		where, in order to establish \eqref{eq:II1}, we used the lower semi-continuity of the map $\sigma\mapsto D_{\alpha_k}(\rho_{XB} \| \mathds{1}_X\otimes \sigma)$ (see Eq.~\eqref{eq:chaotic4} in Lemma \ref{lemma:chaotic})  and the continuity of $\alpha \mapsto D_\alpha\left( \rho_{XB} \| \mathds{1}_X\otimes \sigma_k^\star \right)$ (Eq.~\eqref{eq:chaotic5} in Lemma~\ref{lemma:chaotic}).
		
		Next, we let 
		\begin{align}
		\sigma_k &:= 	\left( 1 - \eps_k \right)\sigma_\infty^\star +  \eps_k \frac{\mathds{1}}{d},\quad \forall k\in\mathbb{N},				\end{align}
		where $(\eps_k)_{k\in\mathbb{N}}$ is an arbitrary positive sequence that converges to zero.
		Then, it follows that
		\begin{align}
		\liminf_{k \to +\infty}
		H_{\alpha_k}^\uparrow(X|B)_\rho
		&\geq - \limsup_{k \to +\infty} \left\{ D_{\alpha_k}\left( \rho_{XB} \| \mathds{1}_X\otimes \sigma_k  \right) \right\} \\
		&= - D_{\alpha_\infty} \left( \rho_{XB} \| \mathds{1}_X\otimes \sigma_\infty^\star  \right) \label{eq:II4} \\
		&= H_{\alpha_\infty}^\uparrow (X|B)_\rho . \label{eq:II5}
		\end{align}
		Here, equality~\eqref{eq:II4} holds because $ \mathds{1}_X\otimes \sigma_k \gg \rho_{XB}$ for all $k\in\mathbb{N}\cup \{+\infty\}$. Thus, the map $(\alpha_k, \sigma_k) \mapsto D_{\alpha_k}(\rho_{XB}\| \mathds{1}_X\otimes \sigma_k)$ is continuous for $k\in\mathbb{N}\cup\{+\infty\}$.
		Hence, we prove the continuity.
		
		Now, we show the monotonicity.
		For all $\sigma_B \in \mathcal{S}(B)$, Eq.~\eqref{eq:chaotic4} in Lemma~\ref{lemma:chaotic} implies that $-D_\alpha(\rho_{XB} \| \mathds{1}\otimes \sigma_B)$ is monotonically decreasing in $\alpha\geq 0$.
		Since $H_\alpha^\uparrow(X|B)_\rho$ is the pointwise supremum of the above function, we conclude that $H_\alpha^\uparrow(X|B)_\rho$ is monotonically decreasing in $\alpha\geq 0$.
		Hence, item \ref{H-aa} is proven.
	
	\item[(\ref{prop:H})-\ref{H-bb}]	
	This proof follows directly from item \ref{E0_SW-c} in Proposition \ref{prop:E0_SW+E0down}, Eq.~\eqref{eq:E0SW1}, and the substitution $\alpha = 1/(1+s)$.
\end{itemize}
\end{proof}

\section{Properties of Auxiliary Function and Error Exponent Function} \label{app:prop}
Let us recall Prop.~\ref{prop:E0_SW+E0down}.
\propaux*

\begin{proof}[Proof of Proposition \ref{prop:E0_SW+E0down}]
	~\\
	\begin{itemize}
		\item[(\ref{prop:E0_SW+E0down})-\ref{E0_SW-a}] (Continuity)
				Since $E_0 (s )$ admits a closed-form
		\begin{align}
		-\log \Tr \left[ \left( \Tr_{X} \rho_{XB}^{\frac{1}{1+s}} \right)^{1+s} \right], \quad \forall s> -1.
		\end{align}
		It is clearly continuous for all $s>-1$.

		Likewise, since $ {E}_0^{\downarrow}(s ) = - \log \Tr\left[ \rho_{XB}^{1-s} \left( \mathds{1}_X\otimes \rho_B\right)^s \right]$,
		it is continuous for all $s\geq 0$.
				
		\item[(\ref{prop:E0_SW+E0down})-\ref{E0_SW-b}] (Negativity)
				The negativity of $E_0(s)$ and $E_0^\downarrow(s)$ directly follows from the non-negativity of the conditional R\'enyi entropy and the definition, Eq.~\eqref{eq:E0SW1}.

		\item[(\ref{prop:E0_SW+E0down})-\ref{E0_SW-c}] (Concavity) 
		For $s\mapsto E^\downarrow_0(s)$, the claim follows from the concavity of the map $s\mapsto sD_{1-s}(\,\cdot\,\|\,\cdot\,)$, Eq.~\eqref{eq:chaotic1} in Lemma~\ref{lemma:chaotic}.

		Let us now consider $s\mapsto E_0(s)$.
				The concavity for $s\geq 0$ can be proved with the geometric matrix means in \cite{CH16}. Here, we present another proof by the following matrix inequality.
				
		\begin{lemm}{\cite[Corollary 3.6]{BL16}} \label{lemm:BL16}
			Let $A_i$ be $m\times m$ positive semi-definite matrix and $Z_i$ be $n \times m$ matrix for  $i = 1,\ldots, k$. Then, for all unitarily invariant norms $\|\cdot\|$ and $\gamma>0$, the map
			\begin{align}
			(p,t) \mapsto \left\| \left( \sum_{i=1}^k Z_i^* A_i^{t/p} Z_i \right)^{\gamma p} \right\|
			\end{align}
			is jointly log-convex on $(0,+\infty)\times (-\infty,+\infty)$.
		\end{lemm}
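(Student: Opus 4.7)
The plan is to prove this joint log-convexity by a complex-interpolation argument based on Hadamard's three-lines theorem, reducing matters to an operator-valued analytic function on a strip. Given two points $(p_0,t_0),(p_1,t_1)\in(0,+\infty)\times\mathbb{R}$ and $\theta\in[0,1]$, set $(p_\theta,t_\theta)=(1-\theta)(p_0,t_0)+\theta(p_1,t_1)$, and abbreviate $Y(p,t):=\bigl(\sum_i Z_i^*A_i^{t/p}Z_i\bigr)^{\gamma p}$. The goal is
\begin{equation*}
\|Y(p_\theta,t_\theta)\|\;\leq\;\|Y(p_0,t_0)\|^{1-\theta}\,\|Y(p_1,t_1)\|^{\theta}.
\end{equation*}

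The first step is a change of variables to disentangle the two occurrences of $p$. Writing $r=t/p$ and $q=\gamma p$, the object of interest becomes $G(r,q):=\bigl(\sum_i Z_i^*A_i^{r}Z_i\bigr)^{q}$, and one checks that $(p,t)\mapsto(r,q)$ is smooth with positive Jacobian on the relevant domain. A brief calculation shows that joint log-convexity of $\|Y(p,t)\|$ on $(0,\infty)\times\mathbb{R}$ is implied by joint log-convexity of $\|G(r,q)\|$ on $\mathbb{R}\times(0,\infty)$; this is because the curve $\theta\mapsto(r_\theta,q_\theta)$ induced from an affine path in $(p,t)$ can be linearly reparametrised and bounded via a one-step Hölder inequality for unitarily invariant norms, $\|X\|\le \|X\|_0^{1-\lambda}\|X\|_1^\lambda$ on appropriate interpolation pairs. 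So it suffices to prove joint log-convexity of $\|G(r,q)\|$.

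The second step constructs the analytic family. For $z$ in the closed strip $S=\{z\in\mathbb{C}:0\leq\Re z\leq 1\}$, define $r_z=(1-z)r_0+zr_1$ and $q_z=(1-z)q_0+zq_1$, and set
\begin{equation*}
H(z)\;:=\;\Bigl(\sum_i Z_i^*\,A_i^{r_z}\,Z_i\Bigr)^{q_z},
\end{equation*}
where $A_i^{r_z}=\exp(r_z\log A_i)$ on $\operatorname{supp}(A_i)$ is entire in $z$, and $(\,\cdot\,)^{q_z}=\exp(q_z\log(\cdot))$ is defined via the principal branch on the (positive) spectrum of $\sum_i Z_i^*A_i^{r_z}Z_i$ after a small $\varepsilon$-regularisation $A_i\rightsquigarrow A_i+\varepsilon I$ to ensure positivity throughout the strip. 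Then $H$ is bounded, continuous on $S$, and analytic on the interior.

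The third step applies Hadamard's three-lines theorem: for any fixed $B$ in the unit ball of the norm dual to $\|\cdot\|$, the scalar function $z\mapsto\operatorname{tr}(B\,H(z))$ is bounded and analytic on $S$, and one checks that on the vertical boundaries
\begin{equation*}
|\!\operatorname{tr}(B\,H(is))|\leq\|H(is)\|\leq\|G(r_0,q_0)\|,\qquad|\!\operatorname{tr}(B\,H(1+is))|\leq\|G(r_1,q_1)\|,
\end{equation*}
using unitary invariance to absorb the imaginary part of the exponent (since $A_i^{is\,(r_1-r_0)}$ is unitary on $\operatorname{ran}(A_i)$, and likewise for the outer exponent $q_z$ which on real axis yields an ordinary power while the imaginary part contributes a unitary factor inside the functional calculus). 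Taking the supremum over $B$ and applying the three-lines theorem gives $\|H(\theta)\|\leq\|G(r_0,q_0)\|^{1-\theta}\|G(r_1,q_1)\|^{\theta}$; finally, sending $\varepsilon\downarrow 0$ recovers the un-regularised claim.

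The main obstacle is the third-step bound on the imaginary axes: the outer exponent $q_z$ is not pure-power when $\Im z\neq 0$, so $H(is)$ is not literally a positive operator raised to a real power, and extracting the unitary part requires care. The standard workaround is to use the polar decomposition of the analytic operator-valued function $F(z):=\sum_i Z_i^*A_i^{r_z}Z_i$ together with the identity $F(z)^{q_z}=|F(z)|^{q_z}\cdot U(z)^{i\Im q_z}$ (on the positive spectrum), combined with the Araki--Lieb--Thirring inequality for the trace against the test operator $B$. Once the imaginary-axis norm bound is in place, the three-lines theorem does the rest.
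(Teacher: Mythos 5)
First, a point of reference: the paper does not prove this lemma at all — it is imported verbatim as Corollary 3.6 of Bourin and Lee \cite{BL16} — so there is no internal proof to compare against. Judged on its own terms, your proposal has a fatal gap at its very first step. The change of variables $r=t/p$, $q=\gamma p$ is not affine, and, more importantly, the statement you reduce to — joint log-convexity of $(r,q)\mapsto \bigl\| \bigl(\sum_i Z_i^* A_i^{r} Z_i\bigr)^{q}\bigr\|$ on $\mathbb{R}\times(0,+\infty)$ — is false. Take $k=1$, $Z_1=\mathds{1}$ and $A_1=\lambda\mathds{1}$ with $\lambda>1$: the function equals $\lambda^{rq}\,\|\mathds{1}\|$, whose logarithm is $rq\log\lambda$ up to a constant, and $(r,q)\mapsto rq$ has indefinite Hessian. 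Concretely, comparing $(r,q)=(0,3)$, $(2,1)$ and their midpoint $(1,2)$, the values of $rq$ are $0$, $2$ and $2$, so the midpoint value exceeds the average $1$ and convexity fails. The entire content of the Bourin--Lee result lies in the specific coupling of the two variables: along an affine path in $(p,t)$ the product $(t/p)\cdot(\gamma p)=\gamma t$ is affine in $(p,t)$, which is exactly what removes this obstruction, and that structure is destroyed once you decouple into $(r,q)$. No H\"older inequality or ``linear reparametrisation'' can rescue the reduction, because the intermediate statement you would be invoking is not true.

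The same defect resurfaces concretely in your step 3. With $r_z=(1-z)r_0+zr_1$ and $q_z=(1-z)q_0+zq_1$, the exponent $r_zq_z$ is quadratic in $z$, so on the boundary line $\Re z=0$ one has $\Re(r_{\mathrm{i}s}q_{\mathrm{i}s})=r_0q_0-s^2(r_1-r_0)(q_1-q_0)$. In the scalar example above, $\|H(\mathrm{i}s)\|$ grows like $\e^{cs^2}$ whenever $(r_1-r_0)(q_1-q_0)\log\lambda<0$; hence $H$ is unbounded on the boundary of the strip and the claimed bound $\|H(\mathrm{i}s)\|\leq\|G(r_0,q_0)\|$ fails. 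The imaginary part of the \emph{outer} exponent does not merely contribute a unitary factor, precisely because it multiplies a $z$-dependent logarithm. Any correct proof must work directly with the coupled $(p,t)$ parametrisation; the argument in \cite{BL16} is of a different nature, reducing general unitarily invariant norms to the operator norm via antisymmetric tensor powers and proving the requisite two-point eigenvalue inequality directly, rather than by complex interpolation in decoupled variables.
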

		
		Let $\rho_{XB} = \sum_{x\in\mathcal{X}} p(x)|x\rangle\langle x|\otimes W_x$,
		$t = \gamma = 1$, $i = x$, $k = |\mathcal{X}|$, $A_i = p(x) W_x$, and $Z_i = I_{n,m}$.
		We obtain the log-convexity of the map by applying Lemma~\ref{lemm:BL16}:
		\begin{align}
		p \mapsto \Tr \left( \sum_{x\in\mathcal{X}} (p(x) W_x)^{\frac{1}{p}} \right)^{p}, \quad \forall p>0,
		\end{align}
		which is exactly the concavity of the map $s\mapsto E_0 (s )$ for all $s>-1$.

		\item[(\ref{prop:E0_SW+E0down})-\ref{E0_SW-d}] (First-order derivative)
				By the definition of $E_0 (s )$,
		\begin{align}
		\left.\frac{ \partial E_0 (s )}{\partial s}\right|_{s=0}
		&= \left.- H^{\uparrow}_{\frac{1}{1+s}}(X|B)_\rho - s \frac{ \partial H^{\uparrow}_{\frac{1}{1+s}}(X|B)_\rho}{ \partial s }\right|_{s=0}
		= - H(X|B)_{\rho}.
		\end{align}
				Likewise, one can verify that 
		\begin{align}
		\left.\frac{\partial {E}_0^{\downarrow}(s,\rho_{XB}) }{\partial s}\right|_{s=0} &=  \left.D_{1-s} \left( \rho_{XB} \| \mathds{1}_X \otimes \rho_B \right) - s  D_{1-s}'\left( \rho_{XB}  \| \mathds{1}_X \otimes \rho_B \right)\right|_{s=0} \\
		&= \left. D_{1-s}\left( \rho_{XB}  \| \mathds{1}_X \otimes \rho_B \right) \right|_{s=0} \\
		&= D( \rho_{XB} \|\mathds{1}_X \otimes \rho_B) \\
		&= -H(X|B)_\rho. \label{eq:tE_I}
		\end{align}
		
		\item[(\ref{prop:E0_SW+E0down})-\ref{E0_SW-e}] (Second-order derivative)	We first consider $E_0(s)$.
				Similar to Item \ref{E0_SW-d}, it follows that 
		\begin{align}
		\left.\frac{ \partial^2 E_0 (s )}{\partial s^2}\right|_{s=0}
		&= \left.- 2\frac{ \partial H^{\uparrow}_{\frac{1}{1+s}}(X|B)_\rho}{ \partial s } - s \frac{ \partial^2 H^{\uparrow}_{\frac{1}{1+s}}(X|B)_\rho}{ \partial s^2 }\right|_{s=0}. \label{eq:temp2}
		\end{align}
		The above equation indicates that we need to evaluate the first-order derivative of $H^{\uparrow}_{\frac{1}{1+s}}(X|B)_\rho$ at $0$.
		In the following, we directly deal with the closed-form expression, Eq.~\eqref{eq:E0SW2}.
		
		To ease the burden of derivations, we denote some notation:
		\begin{align}
		f(s) &:= 
				\Tr_X  \rho_{XB}^{1/(1+s)}  , \label{eq:f0} \\
		g(s) &:= f(s)^{(1+s)}, \label{eq:g0} \\
		F(s) &:= \Tr\left[ g(s) \right] , \label{eq:F0}
		\end{align}	
		Then,
		\begin{align}
		\frac{ \partial E_0 (s )}{\partial s} &=  - \frac{F'(s)}{F(s)} \label{eq:E01}\\
		\frac{ \partial^2 E_0 (s )}{\partial s^2} &= -\frac{F''(s)}{F(s)} - \left( \frac{ \partial E_0 (s )}{\partial s} \right)^2.  \label{eq:E02} \end{align}		
		Direct calculation shows that
		\begin{align}
		f'(s) &=  -\frac1{(1+s)^2} \Tr_X \rho_{XB}^{1/(1+s)} {\log} \rho_{XB}, \label{eq:f1}\\
		f''(s) &=  \frac1{(1+s)^3} \Tr_X \rho_{XB} {\log} \rho_{XB} \cdot \left[ 2 + \frac{{\log}  \rho_{XB}}{(1+s)}\right]. \label{eq:f2}
				\end{align}
		Note that\footnote{Here, let's assume $\rho_{XB}$ has full support on $\mathcal{S}(XB)$ for brevity. The general case should hold with more technical derivations. } $g(s) =  \mathrm{e}^{(1+s) {\log} f(s) }$.
		By applying the chain rule of the Fr\'echet derivatives, one can show 
		\begin{align}
		g'(s) 
		&= 
		\mathsf{D}\exp\left[{\log} g(s)\right] \left( (1+s) \mathsf{D} {\log}\left[f(s)\right] \left( f'(s) \right)  + {\log} f(s) \right). \label{eq:g1}
														\end{align}
		Further, we employ Lemma~\ref{lemm:trace_Petz} and Eqs.\eqref{eq:F0}, ~\eqref{eq:g1}, to obtain
		\begin{align}
		F'(s) &= \Tr\left[ g'(s) \left( (1+s) \mathsf{D}\log[f(s)](f'(s)) + \log f(s) 	\right) 	\right], \label{eq:F1} \\
		F''(s)|_{s=0} &= \left.\Tr\left[ g'(s) \left( (1+s) \mathsf{D} {\log}\left[f(s)\right] \left( f'(s) \right)  + {\log} f(s) \right) \right]\right|_{s=0} \notag\\
		&\quad + \left. \Tr\left[ g(s) \left( 2 \mathsf{D} {\log}\left[f(s)\right] \left( f'(s) \right) + (1+s) 
		\left\{  \mathsf{D} {\log}\left[f(s)\right] \left( f''(s) \right)  \right. \right. \right. \right. \notag\\&\quad+ \left. \left.\left. \left. \mathsf{D}^2 {\log}\left[f(s)\right] \left( f'(s) \right)
		\right\} \right)\right]\right|_{s=0}.  \label{eq:F2_0}
		\end{align}		
		Before evaluating $F''(s)$ at $s=0$, note that Eqs.~\eqref{eq:f0}, \eqref{eq:g0}, \eqref{eq:f1}, \eqref{eq:f2}, and \eqref{eq:g1} yield
		\begin{align}
		f(0) &= g(0) = \rho_B, \\
		f'(0) &= - \Tr_X \rho_{XB} \log \rho_{XB}, \label{eq:f1_0}\\
		f''(0) &= 2 \Tr_X \rho_{XB} \log \rho_{XB} + \Tr_X \rho_{XB} \log^2 \rho_{XB}, \label{eq:f2_0} \\
		g'(0) &= \mathsf{D}\exp\left[{\log} g(0)\right] \left( (1+0) \mathsf{D} {\log}\left[f(0)\right] \left( f'(0) \right)  + {\log} f(0) \right) \\
		&= \mathsf{D}\exp\left[{\log} f(0)\right] \left( \mathsf{D} {\log}\left[f(0)\right] \left( f'(0) \right)  + {\log} f(0) \right) \\
		&= f'(0) + f(0) \log f(0) \label{eq:g1_0}\\
		&= - \Tr_X \rho_{XB} \log \rho_{XB} + \rho_B \log \rho_B.
		\end{align}		
		From Eqs.~\eqref{eq:g1_0}, \eqref{eq:F1}, the first term in Eq.~\eqref{eq:F2_0} leads to
		\begin{align}
		&\Tr\left[ g'(0) \left( (1+0)\mathsf{D}{\log}\left[ f(0) \right] \left( f'(0) \right) + {\log} f(0) \right) \right] \\
		&= \Tr\left[ f'(0) \mathsf{D}{\log}\left[ f(0) \right] \left( f'(0) \right)
		+ 2f'(0) {\log} f(0) + f(0) {\log}^2 f(0) 
		\right] \\
		&= \Tr\left[ f'(0) \mathsf{D}{\log}\left[ f(0) \right] \left( f'(0) \right)
		- 2 \Tr_X \rho_{XB} \log \rho_{XB} \cdot {\log} \rho_B + \rho_B {\log}^2 \rho_B 
		\right]
		\label{eq:F2_01}
		\end{align}
		Further, from Eqs.~\eqref{eq:f2}, \eqref{eq:f1_0}, and \eqref{eq:f2_0}, the second term in Eq.~\eqref{eq:F2_0} leads to
		\begin{align}
		&\Tr\left[ f(0) \left( 2 \mathsf{D} {\log}\left[f(0)\right] \left( f'(0) \right) +  
		\left\{  \mathsf{D} {\log}\left[f(0)\right] \left( f''(0) \right)  \right. \right. \right.  \notag\\&\quad+  \left.\left.\left. \mathsf{D}^2 {\log}\left[f(0)\right] \left( f'(0) \right)
		\right\} \right)\right]\\
		&=\Tr\left[ 2f'(0) + f''(0) - f'(0) \mathsf{D}{\log}\left[ f(0) \right] \left( f'(0) \right)    \right] \\
		&= \Tr\left[  
		\Tr_X \rho_{XB} \log^2 \rho_{XB} 
		-f'(0) \mathsf{D}{\log}\left[ f(0) \right] \left( f'(0) \right)  \right]. \label{eq:F2_02}
		\end{align}
		Combining  Eqs.~\eqref{eq:F2_0}, \eqref{eq:F2_01}, \eqref{eq:F2_02} gives
		\begin{align}
		F''(0) = \Tr\left[  \rho_{XB}\left( \log \rho_{XB} - \log \mathds{1}_X \otimes \rho_B \right)^2
		\right]. \label{eq:F2_03}
		\end{align}
		Finally, Eqs.~\eqref{eq:E02} and \eqref{eq:F2_03} conclude our result:
		\begin{align}
		\left.\frac{\partial E_0 (s )}{\partial s}\right|_{s=0} = - V(\rho_{XB}\|\mathds{1}_X\otimes \sigma_B) =
		-V(X|Y)_{\rho}.
		\end{align}	
		Moreover, Eq.~\eqref{eq:temp2} gives
		\begin{align}
		\left.\frac{ \partial H^{\uparrow}_{\alpha}(X|B)_\rho}{ \partial \alpha }\right|_{\alpha = 0} = \frac12 V(X|B)_\rho.
		\end{align}

		For $E_0^\downarrow$, continuing from item \ref{E0_SW-d}, one obtains
		\begin{align}
	\left.\frac{\partial^2  {E}_0^{\downarrow}(s ) }{\partial s^2}\right|_{s=0} &= \left. -2 D_{1-s}' \left( \rho_{XB}  \| \mathds{1}_X \otimes \rho_B \right) + s D_{1-s}''\left( \rho_{XB}  \| \mathds{1}_X \otimes \rho_B \right)\right|_{s=0} \\
	&= \left. -2 D_{1-s}' \left( \rho_{XB}  \| \mathds{1}_X \otimes \rho_B \right)\right|_{s=0} \\
	&= - V( \rho_{XB} \| \mathds{1}_X \otimes \rho_B ) \label{eq:tE_V2} \\
	&= V(X|B)_\rho, 
	\end{align}
	where in equality \eqref{eq:tE_V2} we use the fact $D_{1/1+s}'(\cdot\|\cdot)|_{s=0} = V (\cdot\|\cdot)/2$ \cite[Theorem 2]{LT15}.

		\begin{lemm}[{\cite[Theorem 3.23]{HP14}}] \label{lemm:trace_Petz}
			Let $ \bm{A}, \bm{X}$ be $d\times d $ Hermitian matrices, and $t\in\mathbb{R}$. Assume $f:I\to \bbR$ is a continuously differentiable function.
						Then
			\[
			\left.\frac{\mathrm{d} }{ \mathrm{d}t} \Tr f(\bm{A}+t \bm{X})\right|_{t=t_0} = \Tr [  \bm{X} f' ( \bm{A} + t_0 \bm{X}) ].
			\]
		\end{lemm}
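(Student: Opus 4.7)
The plan is to establish the formula first for polynomial $f$ by direct computation, and then extend it to all continuously differentiable $f$ via uniform approximation on a compact interval containing the spectrum.

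First I would treat the monomial case $f(x)=x^n$. Writing $\bm{A}(t) := \bm{A}+t\bm{X}$ and applying the Leibniz product rule to the $n$-fold product gives
\[
\frac{\mathrm{d}}{\mathrm{d}t}\bm{A}(t)^n \;=\; \sum_{k=0}^{n-1}\bm{A}(t)^k\,\bm{X}\,\bm{A}(t)^{n-1-k}.
\]
Taking the trace and using its cyclic property, each of the $n$ summands equals $\Tr[\bm{X}\bm{A}(t)^{n-1}]$, so
\[
\frac{\mathrm{d}}{\mathrm{d}t}\Tr[\bm{A}(t)^n] \;=\; n\,\Tr[\bm{X}\bm{A}(t)^{n-1}] \;=\; \Tr[\bm{X}\,f'(\bm{A}(t))].
\]
By linearity, the identity extends to every polynomial $f$.

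Next I would fix $t_0\in\mathbb{R}$ and choose a compact interval $J\subset I$ containing the spectra of $\bm{A}(t)$ for all $t$ in some closed neighbourhood $N$ of $t_0$; such a $J$ exists because eigenvalues depend continuously on $t$ and $N$ is compact. Since $f\in C^1(I)$, Weierstrass's approximation theorem supplies polynomials $q_k \to f'$ uniformly on $J$. Setting $p_k(x) := f(a) + \int_a^x q_k(s)\,\mathrm{d}s$ for some $a\in J$ yields polynomials with $p_k\to f$ and $p_k' \to f'$ uniformly on $J$. This simultaneous uniform control of both $p_k$ and $p_k'$ is the key technical ingredient.

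Finally, I would interchange limit and derivative. Because the Hermitian functional calculus is continuous with respect to uniform convergence of scalar functions on $J$, the real-valued functions $g_k(t) := \Tr[p_k(\bm{A}(t))]$ converge pointwise on $N$ to $g(t) := \Tr[f(\bm{A}(t))]$, while the monomial case above gives $g_k'(t) = \Tr[\bm{X}\,p_k'(\bm{A}(t))]$, which converges uniformly on $N$ to $\Tr[\bm{X}\,f'(\bm{A}(t))]$. The standard theorem on the interchange of limits and derivatives (uniform convergence of derivatives plus pointwise convergence of values) then shows that $g$ is differentiable at $t_0$ with $g'(t_0) = \Tr[\bm{X}\,f'(\bm{A}(t_0))]$, which is the claim. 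The main obstacle in the argument is precisely the simultaneous uniform approximation, circumvented above by approximating $f'$ first and integrating; the remaining steps are standard bookkeeping.
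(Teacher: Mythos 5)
Your proof is correct. Note, however, that the paper does not prove this lemma at all: it is quoted verbatim as Theorem 3.23 of Hiai--Petz \cite{HP14}, so there is no in-paper argument to compare against. Your route --- Leibniz rule plus cyclicity of the trace for monomials, linearity for polynomials, then simultaneous uniform approximation of $f$ and $f'$ obtained by approximating $f'$ first and integrating, followed by the standard interchange-of-limit-and-derivative theorem --- is a clean, self-contained derivation. The textbook proof typically proceeds differently: it first establishes the full Fr\'echet derivative of the matrix-valued map $t\mapsto f(\bm A+t\bm X)$ via the Daleckii--Krein divided-difference formula in an eigenbasis of $\bm A+t_0\bm X$, and only then takes the trace, whereupon the off-diagonal divided differences drop out and the diagonal ones become $f'(\lambda_i)$. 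Your argument buys elementarity (no divided-difference machinery, only Weierstrass and cyclicity) at the cost of yielding only the trace identity rather than the stronger statement about the derivative of $f(\bm A+t\bm X)$ itself; for the purpose this lemma serves in the paper (computing derivatives of $F(s)=\Tr[g(s)]$ in Appendix B), the trace identity is all that is needed. The one point worth making explicit is that the existence of a compact $J\subset I$ containing $\mathrm{spec}(\bm A+t\bm X)$ for all $t$ near $t_0$ uses that $I$ is open (or that the spectrum at $t_0$ lies in its interior); this is implicit in the hypothesis that $f(\bm A+t\bm X)$ is defined near $t_0$, so it is a gap in the statement rather than in your proof.
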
	
	\end{itemize}
\end{proof}

\begin{prop5}[Properties of the Exponent Function] 	Let $\rho_{XB}$ be a classical-quantum state with $H(X|B)_{\rho}>0$, the following holds.
	\begin{enumerate}[(a)]
		\item\label{E_SW-aa}	$E_\textnormal{sp} (\cdot)$ is convex, differentiable, and monotonically increasing on $[0,+\infty]$. Further,
		\begin{align}
		E_\textnormal{sp} (R) = \begin{cases}
		0, & R\leq H_1^{\uparrow}(X|Y)_{\rho} \\
		E_\textnormal{r} (R) , & H_{1}^{\uparrow}(X|Y)_{\rho} \leq R \leq  H_{1/2}^{\uparrow}(X|Y)_{\rho} \\
		+\infty, & R >  H_{0}^{\uparrow}(X|Y)_{\rho}
		\end{cases}.
		\end{align}
		
		\item\label{E_SW-bb}	Define
		\begin{align} \label{eq:FF}
		F_R(\alpha,\sigma_B) := \begin{dcases}\frac{1-\alpha}{\alpha} \left( R + D_\alpha\left(\rho_{XB}\| \mathds{1}_X \otimes \sigma_B  \right) \right), &\alpha\in(0,1), \\
		0, &\alpha = 1,
		\end{dcases}
		\end{align}
		on $(0,1]\times \mathcal{S}(B)$.
		For $R\in(H_1^{\uparrow}(X|Y)_\rho, H_0^{\uparrow}(X|Y)_\rho)$, there exists a unique saddle-point $(\alpha^\star,\sigma^\star) \in (0,1)\times\mathcal{S}(B)$ of $F_R(\cdot,\cdot)$ such that
		\begin{align}
		F_R(\alpha^\star,\sigma^\star) = \sup_{\alpha\in [0,1] } \inf_{\sigma_B \in \mathcal{S}(B)} F_R(\alpha,\sigma_B) = \inf_{\sigma_B \in \mathcal{S}(B)}\sup_{\alpha\in [0,1] }   F_R(\alpha,\sigma_B) = E_\textnormal{sp} (R).
		\end{align}
		
		\item\label{E_SW-cc} Any saddle-point $(\alpha^\star, \sigma^\star)$ of $F_{R}(\cdot,\cdot)$ satisfies 		\begin{align}
		\mathds{1}_X\otimes \sigma^\star 
		\gg \rho_{XB}.
		\end{align}

													\end{enumerate}	
\end{prop5}

\begin{proof}[Proof of Proposition \ref{prop:E_SW}]
	~\\
\begin{itemize}
	\item[(\ref{prop:E_SW})-\ref{E_SW-aa}]
	Item \ref{H-a} in Proposition \ref{prop:H} shows that the map $\alpha \mapsto H_\alpha^\uparrow(X|B)_\rho$ is monotonically decreasing on $[0,1]$. Hence, from the definition:
	\begin{align}
	E_\text{sp} (R) := \sup_{\alpha\in(0,1]} \frac{1-\alpha}{\alpha} \left( R - H_\alpha^\uparrow (X|B)_\rho\right) ,
	\end{align}
	it is not hard to verify that $E_\text{sp} (R) = +\infty$ for all $ R > H_0^\uparrow(H|B)_\rho$; finite for all $  R< H_0^\uparrow(H|B)_\rho$; and $E_\text{sp}^\textsf{SW}(R) = 0$, for all $ R\geq  H_1^\uparrow(H|B)_\rho$.	
	Moreover, $E_\text{sp} (R) = E_\text{r} (R)$ for $R\in[H_{1}^{\uparrow}(X|Y)_{\rho},  H_{1/2}^{\uparrow}(X|Y)_{\rho} ]$ by the definition in Eq.~\eqref{eq:gallager_r2}.
	
	For every $\alpha\in(0,1]$, the function $\frac{1-\alpha}{\alpha} ( R - H_\alpha^\uparrow(X|B)_\rho )$ is an non-decreasing, convex, and continuous function in $R\in\mathbb{R}_{>0}$. Since $E_\text{sp} (R)$ is the pointwise supremum of the above function, $E_\text{sp} (R)$ is non-decreasing, convex, and lower semi-continuous function for all $R\geq 0$. 
	Furthermore, since a convex function is continuous on the interior of the interval if it is finite \cite[Corollary 6.3.3]{Dud02}, thus $E_\text{sp} (R)$ is continuous for all $R < H_0^\uparrow(X|B)_\rho$, and continuous from the left at $R = H_0^\uparrow(X|B)_\rho$.

	\item[{(\ref{prop:E_SW})-\ref{E_SW-bb}}]
		Let 
		\begin{align}
		\mathcal{S}_\rho(B) := \left\{ \sigma_B \in \mathcal{S}(B): \rho_{XB} \not\perp  \mathds{1}_X \otimes \sigma_B   \right\}.
		\end{align}
		Fix an arbitrary $R \in (H_1^\uparrow (X|B)_\rho, H_0^\uparrow (X|B)_\rho)$.
		In the following, we first prove the existence of a saddle-point of $F_{R}(\cdot,\cdot)$ on $(0,1	] \times \mathcal{S}_{\rho}(B)$. Ref.~\cite[Lemma 36.2]{Roc64} states that $(\alpha^\star,\sigma^\star)$ is a saddle point of $F_{R}(\cdot,\cdot)$ if and only if the supremum in 
		\begin{align}
		\sup_{\alpha\in (0,1]} \inf_{\sigma \in \mathcal{S}_{\rho}(B)} F_{R} (\alpha,\sigma) \label{eq:saddle15}
		\end{align}
		is attained at $\alpha^\star\in(0,1]$, the infimum in 
		\begin{align}
		\inf_{\sigma \in \mathcal{S}_{\rho}(B)} \sup_{\alpha\in (0,1]}  F_{R} (\alpha,\sigma) \label{eq:saddle16}
		\end{align}
		is attained at $\sigma^\star\in \mathcal{S}_{\rho}(B)$, and the two extrema in Eqs.~\eqref{eq:saddle15}, \eqref{eq:saddle16} are equal and finite.
		We first claim that, $\forall \alpha\in(0,1],$
		\begin{align}
		\inf_{\sigma \in \mathcal{S}_{\rho}(B)} F_{R}(\alpha,\sigma)
		= \inf_{\sigma \in \mathcal{S}(B)} F_{R}(\alpha,\sigma). \label{eq:saddle21}
		\end{align}
		To see this, observe that for any $\alpha \in (0,1)$, Eqs.~\eqref{eq:Petz} yield
		\begin{align}
		\forall \sigma \in \mathcal{S}(B)\backslash \mathcal{S}_{\rho}(B), \quad 
				D_{\alpha}\left( \rho_{XB}\|\mathds{1}_X \otimes \sigma \right) = +\infty, 
		\label{eq:saddle23}
		\end{align}
		which, in turn, implies
		\begin{align} \label{eq:saddle22}
		\forall \sigma \in \mathcal{S}(B)\backslash \mathcal{S}_{\rho}(B), \quad 
		F_{R}(\alpha,\sigma ) = +\infty.
		\end{align}
				Further, Eq.~\eqref{eq:saddle21} holds trivially when $\alpha=1$.
		Hence, Eq.~\eqref{eq:saddle21} yields
		\begin{align}
		\begin{split}
		\sup_{\alpha\in (0,1]} \inf_{\sigma \in \mathcal{S}_{\rho}(B)} F_{R} (\alpha,\sigma)
		&= \sup_{\alpha\in (0,1]} \inf_{\sigma \in \mathcal{S}(B)} F_{R} (\alpha,\sigma) 
				\end{split}
		\end{align}
						Owing to the fact $R< H_0^\uparrow(X|B)_\rho$ and Eq.~\eqref{eq:gallager_sp2}, we have
		\begin{align}
		E_\text{sp}  (R) = \sup_{\alpha\in (0,1]} \inf_{\sigma \in \mathcal{S}(B)} F_{R} (\alpha,\sigma) < +\infty, \label{eq:saddle28}
		\end{align}
		which guarantees the supremum in the right-hand side of Eq.~\eqref{eq:saddle28} is attained at some $\alpha\in(0,1]$. Namely, there exists some $\bar{\alpha}_{R} \in (0,1]$ such that
		\begin{align} \label{eq:fact1}
		\sup_{\alpha\in (0,1]} \inf_{\sigma \in \mathcal{S}_{\rho}(B)} F_{R} (\alpha,\sigma)
		= \max_{\alpha\in [\bar{\alpha}_{R},1]} \inf_{\sigma \in \mathcal{S}(B)} F_{R} (\alpha,\sigma)  < +\infty.
		\end{align}
		Thus, we complete our claim in Eq.~\eqref{eq:saddle15}. It remains to show that the infimum in Eq.\eqref{eq:saddle16} is attained at some $\sigma^\star \in \mathcal{S}_{\rho}(B)$ and the supremum and infimum are exchangeable. To achieve this, we will show that $\left( [\bar{\alpha}_{R},1], \mathcal{S}_{\rho}(B), F_{R} \right)$ is a closed saddle-element (see Definition \ref{defn:saddle} below) and employ the boundedness of $[\bar{\alpha}_{R},1]\times \mathcal{S}_{\rho}(B)$ to conclude our claim.

		\begin{defn} [Closed Saddle-Element {\cite{Roc64}}] \label{defn:saddle}
			We denote by $\texttt{ri}$ and $\texttt{cl}$ the relative interior and the closure of a set, respectively.	Let $\mathcal{A},\mathcal{B}$ be subsets of a real vector space, and $F:\mathcal{A}\times \mathcal{B}\to\mathbb{R}\cup\{\pm \infty  \}$.	
			The triple $\left(\mathcal{A},\mathcal{B}, F\right)$ is called a closed saddle-element if for any $x\in \texttt{ri}\left(\mathcal{A}\right)$ (resp.~$y\in \texttt{ri}\left(\mathcal{B}\right)$), 
			\begin{itemize}
				\item[(i)] $\mathcal{B}$ (resp.~$\mathcal{A}$) is convex.
				\item[(ii)] $F(x,\cdot)$ (resp.~$F(\cdot, y)$) is convex (resp.~concave) and lower (resp.~upper) semi-continuous.
				\item[(iii)] Any accumulation point of $\mathcal{B}$ (resp.~$\mathcal{A}$) that does not belong to $\mathcal{B}$ (resp.~$\mathcal{A}$), say $y_o$ (resp.~$x_o$) satisfies $\lim_{y\to y_o} F(x,y) = +\infty$ (resp.~$\lim_{x\to x_o} F(x, y) = -\infty$).
			\end{itemize}
		\end{defn}
		
		Fix an arbitrary $\alpha \in \texttt{ri}\left( [\bar{\alpha}_{R},1] \right) = (\bar{\alpha}_{R},1)$.
		We check that $\left( \mathcal{S}_{\rho}(B), F_{R}(\alpha, \cdot)\right)$ fulfills the three items in Definition \ref{defn:saddle}.
		(i) The set $\mathcal{S}_{\rho}(B)$ is clearly convex.
		(ii) Eq.~\eqref{eq:chaotic4} in Lemma \ref{lemma:chaotic} implies that $\sigma\mapsto D_{\alpha}(W_x\|\sigma)$ is convex and lower semi-continuous. Since convex combination preservers the convexity and the  lower semi-continuity, Eq.~\eqref{eq:FF} yields that  $\sigma \mapsto F_{R}(\alpha,\sigma)$ is convex and lower semi-continuous on $\mathcal{S}_{\rho}(B)$.
		(iii) Due to the compactness of $\mathcal{S}(B)$, any accumulation point of $\mathcal{S}_{\rho}(B)$ that does not belong to $\mathcal{S}_{\rho}(B)$, say $\sigma_o$, satisfies $\sigma_o \in \mathcal{S}(B) \backslash \mathcal{S}_{\rho}(B)$. Eqs.~\eqref{eq:saddle23} and \eqref{eq:saddle22} then show that $F_{R}(\alpha, \sigma_o) = +\infty$.
		
		Next, fix an arbitrary $\sigma \in \texttt{ri}\left(  \mathcal{S}_{\rho}(B) \right)$. Owing to the convexity of $\mathcal{S}_{\rho}(B)$, it follows that $\texttt{ri}\left(  \mathcal{S}_{\rho}(B) \right) $ $= \texttt{ri}\left(\texttt{cl}\left( \mathcal{S}_{\rho}(B)\right)\right)$ (see e.g.~\cite[Theorem 6.3]{Roc70}).
		We first claim $\texttt{cl}\left( \mathcal{S}_{\rho}(B)\right) = \mathcal{S}(B)$. To see this, observe that $\mathcal{S}_{>0}(B) \subseteq  \mathcal{S}_{\rho}(B)$ since a full-rank operator  is not orthogonal with $\rho_{XB}$.
		Hence, 
		\begin{align}
		\mathcal{S}(B)=
		\texttt{cl}\left( \mathcal{S}_{>0}(B) \right) 
		\subseteq \texttt{cl} \left(  \mathcal{S}_{\rho}(B) \right). \label{eq:saddle25}
		\end{align}
		On the other hand, the fact $\mathcal{S}_{\rho}(B) \subseteq  \mathcal{S}(B)$ leads to
		\begin{align}
		\texttt{cl}\left( \mathcal{S}_{\rho}(B) \right) \subseteq  
		\texttt{cl}\left(\mathcal{S}(B)\right) 
		= \mathcal{S}(B). \label{eq:saddle26}
		\end{align}
		By Eqs.~\eqref{eq:saddle25} and \eqref{eq:saddle26}, we deduce that 
		\begin{align}
		\texttt{ri}\left(  \mathcal{S}_{\rho}(B) \right)
		= \texttt{ri}\left( \texttt{cl}\left( \mathcal{S}_{\rho}(B) \right) \right)
		= \texttt{ri}\left(  \mathcal{S}(B) \right)
		=  \mathcal{S}_{>0}( {B}), \label{eq:saddle24}
		\end{align}
		where the last equality in Eq.~\eqref{eq:saddle24} follows from \cite[Proposition 2.9]{Wei11}.
		Hence, we obtain
		\begin{align} \label{eq:saddle19}
		\forall \sigma \in \texttt{ri}\left(  \mathcal{S}_{\rho}(B) \right) \quad \text{and} \quad \mathds{1}_X\otimes \sigma \gg \rho_{XB}.
		\end{align}
		Now we verify that $\left( [\bar{\alpha}_{R},1], F_{R}(\cdot,\sigma)\right)$ satisfies the three items in Definition \ref{defn:saddle}.
		Fix an arbitrary $\sigma \in \texttt{ri}\left(  \mathcal{S}_{\rho}(B) \right)$.
		(i) The set $(0,1]$ is obviously convex.
		(ii) From Eq.~\eqref{eq:chaotic5} in Lemma \ref{lemma:chaotic}, the map $\alpha \mapsto F_{R}(\alpha,\sigma)$ is continuous on $(0,1)$. Further, it is not hard to verify that $F_{R}(1,\sigma) = 0 = \lim_{\alpha\uparrow 1} F_{R}(\alpha,\sigma)$ from Eqs.~\eqref{eq:saddle19}, \eqref{eq:FF}, and \eqref{eq:Petz}.
		Item \ref{H-b} in Proposition~\ref{prop:H} implies that $\alpha \mapsto F_{R}(\alpha,\sigma)$ on $[\bar{\alpha}_R,1)$ is concave.
		Moreover, the continuity of $\alpha \mapsto F_{R}(\alpha,\sigma)$ on $[\bar{\alpha}_{R},1)$ guarantees the concavity of $\alpha \mapsto F_{R}(\alpha,\sigma)$ on $[\bar{\alpha}_{R},1]$.
		(iii) Since $[\bar{\alpha}_{R},1]$ is closed, there is no accumulation point of $[\bar{\alpha}_{R},1]$ that does not belong to $[\bar{\alpha}_{R},1]$.
		
		We are at the position to prove the saddle-point property.
		The closed saddle-element, along with the boundedness of $\mathcal{S}_{\rho}(B)$ and Rockafellar's saddle-point result \cite[Theorem 8]{Roc64}, \cite[Theorem 37.3]{Roc70} imply that
		\begin{align}
		- \infty < \sup_{ \alpha\in [\bar{\alpha}_{R},1] } \inf_{\sigma \in \mathcal{S}_{\rho}(B)} F_{R} (s,\sigma)
		= \min_{\sigma \in \mathcal{S}_{\rho}(B)} \sup_{ \alpha\in[\bar{\alpha}_{R},1] }  F_{R} (s,\sigma). \label{eq:fact2}
		\end{align}
		Then Eqs.~\eqref{eq:fact1} and \eqref{eq:fact2} lead to the existence of a saddle-point of $F_{R}(\cdot,\cdot)$ on $(0,1]\times \mathcal{S}_{\rho}(B)$.

		Next, we prove the uniqueness.
		The rate $R$ and item \ref{E_SW-aa} in Proposition \ref{prop:E_SW} shows that
		\begin{align} \label{eq:saddle1}
		\sup_{0<\alpha\leq 1} \min_{\sigma \in \mathcal{S}(B)}  F_{R} (\alpha, \sigma) \in \mathbb{R}_{>0}.
		\end{align}
		Note that $\alpha^\star = 1$ will not be  a saddle point of $F_{R,P}(\cdot,\sigma)$ because $F_{R} (1, \sigma) = 0$, $\forall\sigma \in \mathcal{S}(B)$,  contradicting Eq.~\eqref{eq:saddle1}.
		
		Now,  fix $\alpha^\star\in(0,1)$ to be a saddle-point of $F_{R}(\cdot,\cdot)$.
		Eq.~\eqref{eq:chaotic4} in Lemma \ref{lemma:chaotic} implies that the map $\sigma\mapsto D_{\alpha^\star}(\rho_{XB}\|\mathds{1}_X\otimes \sigma )$ is strictly convex, and thus the minimizer of Eq.~\eqref{eq:saddle1} is unique.	
						Next, let $\sigma^\star \in \mathcal{S}_{\rho}(B)$ be a saddle-point of $F_{R}(\cdot,\cdot)$. Then,
		\begin{align}
		F_{R}(\alpha,\sigma^\star ) &= \frac{1-\alpha}{\alpha}\left( R - H_\alpha^\uparrow(X|B)_\rho \right). \label{eq:saddle7} 
		\end{align}
		Item \ref{H-b} in Proposition \ref{prop:H} then shows that $\frac{1-\alpha}{\alpha} H_\alpha^\uparrow(X|B)_\rho$ is strictly concave on $(0,1)$, which in turn implies that $F_{R}(\cdot, \sigma^\star)$ is also strictly concave on $(0,1)$.	
		Hence, the maximizer of Eq.~\eqref{eq:saddle1} is unique, which completes item \ref{E_SW-bb} of Proposition \ref{prop:E_SW}.
	
	\item[(\ref{prop:E_SW})-\ref{E_SW-cc}]
		As shown in the proof of item \ref{E_SW-bb}, $\alpha^\star = 1$ is not a saddle point of $F_{R}(\cdot,\cdot)$ for any $R< H_0^\uparrow(X|B)_\rho$.
		We assume $(\alpha^\star, \sigma^\star)$ is a saddle-point of $F_{R}(\cdot,\cdot)$ with $\alpha^\star \in (0,1)$, it holds that
		\begin{align} \label{eq:saddle3}
		F_{R} (\alpha^\star, \sigma^\star) = \min_{\sigma\in\mathcal{S}B} F_{R} (\alpha^\star, \sigma)
		= \frac{1-\alpha^\star}{\alpha^\star} R + \frac{1-\alpha^\star}{\alpha^\star}  \min_{\sigma\in\mathcal{S}(B)}  D_{\alpha^\star} (\rho_{XB}\|\mathds{1}_X \otimes \sigma).
		\end{align}		
		It is known \cite{SW12}, \cite[Lemma 1]{TBH14}, \cite[Lemma 5.1]{Tom16} that the minimizer of Eq.~\eqref{eq:saddle3} is
		\begin{align} \label{eq:saddle6}
		\sigma^\star =  \frac{ \left( \Tr_X\left[\rho_{XB}^{\alpha^\star}\right] \right)^{\frac{1}{\alpha^\star}} }{ \Tr\left[\left( \Tr_X\left[\rho_{XB}^{\alpha^\star}\right] \right)^{\frac{1}{\alpha^\star}}\right] }.
		\end{align}
		From this expression, it is clear that
		$\mathds{1}_X\otimes \sigma^\star \gg \rho_{XB}$,
		and thus item \ref{E_SW-cc} is proved.

\end{itemize}
\end{proof}

\begin{proof}[Proof of Proposition~\ref{prop:representation}]
We only provide the proof for Eq~\eqref{eq:representation3} since Eqs.~\eqref{eq:representation1} and \eqref{eq:representation2} follow similarly.

Starting with the left-hand side,
\begin{align}
 \sup_{-1 < s<0} \left\{ E_0^\flat(s,\rho_{XB}) + sR \right\} &=  \sup_{-1 < s<0} \left\{ -s H^{\flat,\uparrow}_{\frac{1}{1+s}}(X|B)_\rho + sR \right\}\\
 &=\sup_{-1 < s<0} \left\{ -s \max_{\tau_B} -D^{\flat}_{\frac{1}{1+s}}(\rho_{XB}\|\one_X\otimes\tau_B) + sR \right\}\\
 &=\sup_{-1 < s<0} \max_{\tau_B}\left\{ s D^{\flat}_{\frac{1}{1+s}}(\rho_{XB}\|\one_X\otimes\tau_B) + sR \right\}.
\end{align}

Next, Lemma~\ref{lemm:MO17} gives the variational representation 
\begin{equation}
D^{\flat}_{\frac{1}{1+s}}(\rho_{XB}\|\one_X\otimes\tau_B) = \max_{\sigma_{XB}} \left[ D(\sigma_{XB}\| \one_X\otimes \tau_B) + \frac{1}{s} D(\sigma_{XB}\| \rho_{XB}) \right]
\end{equation}
and shows that the optimizer is a state $\sigma_{XB}^*\in \mathcal{S}_\rho(XB)$. Therefore,  we can write
\begin{equation}
D^{\flat}_{\frac{1}{1+s}}(\rho_{XB}\|\one_X\otimes\tau_B) =\max_{\sigma_{XB}\in \mathcal{S}_\rho(XB)} \left[ D(\sigma_{XB}\| \one_X\otimes \tau_B) + \frac{1}{s} D(\sigma_{XB}\| \rho_{XB}) \right]
\end{equation}

Substituting this variational representation into our expansion of $E^\flat_\text{sc}(R)$, we have
\begin{align*}
 E^\flat_\text{sc}(R) &=\sup_{-1 < s<0} \max_{\tau_B} \min_{\sigma_{XB}\in \mathcal{S}_\rho(XB)} \left\{ s \left[ D(\sigma_{XB}\| \one_X\otimes \tau_B) + \frac{1}{s} D(\sigma_{XB}\| \rho_{XB})  \right]+ sR \right\}\\
  &=\sup_{-1 < s<0}s  \min_{\tau_B}\max_{\sigma_{XB}\in \mathcal{S}_\rho(XB)} \left\{  D(\sigma_{XB}\| \one_X\otimes \tau_B) + \frac{1}{s}D(\sigma_{XB}\| \rho_{XB}) + R \right\}.
\end{align*}

 Consider the function
\begin{equation}
G(\tau_B, \sigma_{XB} ) = D(\sigma_{XB}\| \one_X\otimes \tau_B) + \frac{1}{s}D(\sigma_{XB}\| \rho_{XB}).
\end{equation}

\begin{lemm}[Properties of $G$] Let $s\in (-1,0)$. As a function of states $\tau_B$ in $\mathcal{S}(B)$ and $\sigma_{XB}$ in $\mathcal{S}(XB)$,
\begin{enumerate}
  \item $G(\tau_B,\sigma_{XB} )$ is convex and lowersemicontinuous in $\tau_B$,
  \item $G(\tau_B, \sigma_{XB} )$ is concave and continuous in $\sigma_{XB}$.
\end{enumerate}
\end{lemm}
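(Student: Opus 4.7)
The plan is to verify each of the two properties separately, relying on standard convexity and semi-continuity facts for the quantum relative entropy, combined with the key sign observation that for $s \in (-1,0)$ the quantity $1+\tfrac{1}{s}$ is strictly negative (in fact $< 0$, with $1/s < -1$). This latter sign is what converts the convex von Neumann entropy into a concave contribution.

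For the first claim (convex and lower semi-continuous in $\tau_B$ with $\sigma_{XB}$ fixed), I would simply observe that the summand $\frac{1}{s}D(\sigma_{XB}\|\rho_{XB})$ is independent of $\tau_B$ and may be discarded. It then suffices to treat $\tau_B \mapsto D(\sigma_{XB}\|\one_X\otimes\tau_B)$. Since $\eta \mapsto D(\sigma_{XB}\|\eta)$ is known to be convex and lower semi-continuous in its second argument (joint convexity and joint lower semi-continuity of $D$, cf.\ \cite{MO17}), and since $\tau_B \mapsto \one_X\otimes\tau_B$ is an affine embedding of $\mathcal{S}(B)$ into $\mathcal{S}(XB)$, the composition inherits both properties.

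For the second claim (concave and continuous in $\sigma_{XB}$ with $\tau_B$ fixed), my approach would be to expand both relative entropies using \eqref{eq:relative}, giving
\begin{equation*}
G(\tau_B,\sigma_{XB}) = \left(1+\tfrac{1}{s}\right)\tr[\sigma_{XB}\log\sigma_{XB}] \; - \; \tr[\sigma_B\log\tau_B] \; - \; \tfrac{1}{s}\tr[\sigma_{XB}\log\rho_{XB}].
\end{equation*}
The last two summands are linear (hence both concave and continuous) in $\sigma_{XB}$. For the first summand, the map $\sigma \mapsto \tr[\sigma\log\sigma]$ is convex (the negative von Neumann entropy), and multiplying by the negative coefficient $1+1/s$ yields a concave function of $\sigma_{XB}$. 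Summing a concave function with linear ones preserves concavity, and continuity follows because $\sigma \mapsto \tr[\sigma\log\sigma]$ is continuous on all of $\mathcal{S}(XB)$ and the remaining terms are linear in $\sigma_{XB}$.

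The statement itself involves no serious obstacle; the only real subtlety I would want to be careful about is the domain on which the terms $\tr[\sigma_{XB}\log\rho_{XB}]$ and $\tr[\sigma_B\log\tau_B]$ are finite, which forces one to restrict implicitly to $\sigma_{XB}\in\mathcal{S}_\rho(XB)$ (so that $\sigma_{XB}\ll\rho_{XB}$) and to handle the case when $\tau_B$ is rank-deficient (where the $\tau_B$-dependent piece becomes $+\infty$, consistent with the lsc claim). Since the surrounding variational representation in \eqref{eq:representation3} restricts the max over $\sigma_{XB}$ precisely to $\mathcal{S}_\rho(XB)$, this extended-real convention is exactly what is needed for the subsequent minimax/saddle-point argument that the lemma is designed to feed into.
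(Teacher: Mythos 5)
Your proposal is correct and follows essentially the same route as the paper: part (1) by noting that $\tau_B$ enters only through the second argument of a relative entropy (convex and lower semi-continuous there), and part (2) by expanding $G$ so that the entropy term carries the coefficient $1+\tfrac{1}{s}<0$, making that contribution concave, with the remaining terms linear. Your added remark on the support/domain conventions is a sensible clarification of a point the paper leaves implicit, but it does not change the argument.
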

\begin{proof}  
Since the only dependence in $\tau_B$ is in the second argument of the relative entropy, $G$ is convex and lowersemicontinuous in $\tau_B$. To prove the second part, we expand $G$ as
\begin{align}
G(\tau_B, \sigma_{XB} ) &= - H(\sigma_{XB}) - \tr[\sigma_{XB} \log (\one_X\otimes \tau_B)] - \frac{1}{s}H(\sigma_{XB}) - \tr[\sigma_{XB} \log \rho_{XB}]\\
&= - \left( 1 + \frac{1}{s} \right) H(\sigma_{XB}) - \tr[\sigma_{XB} ( \log(\one_X\otimes \tau_B) + \log \rho_{XB})].
\end{align}
Since $s\in (-1,0)$, the coefficient of $H(\sigma_{XB})$ is positive. The second term is linear in $\sigma_{XB}$, so $G$ is concave and continuous in $\sigma_{XB}$.
\end{proof}

By these properties of $G$, the compactness and convexity of $\mathcal{S}(B)$ and $\mathcal{S}_\rho(XB)$, we may apply the min-max theorem given by Theorem II.7 of~\cite{MO17} to find
\begin{equation}
\min_{\tau_B}\max_{\sigma_{XB}\in \mathcal{S}_\rho(XB) } G(\tau_B,\sigma_{XB}) =\max_{\sigma_{XB}\in \mathcal{S}_\rho(XB) } \min_{\tau_B}G(\tau_B,\sigma_{XB}).
\end{equation}
Therefore,
\begin{align}
 E^\flat_\text{sc}(R)  &= \sup_{-1 < s<0}s  \max_{\sigma_{XB}\in \mathcal{S}_\rho(XB)} \min_{\tau_B} \left\{  D(\sigma_{XB}\| \one_X\otimes \tau_B) + \frac{1}{s}D(\sigma_{XB}\| \rho_{XB}) + R \right\}\\
  &=\sup_{-1 < s<0} \min_{\sigma_{XB}\in \mathcal{S}_\rho(XB)} \left\{ -s \left[ \max_{\tau_B} -D(\sigma_{XB}\| \one_X\otimes \tau_B) - \frac{1}{s} D(\sigma_{XB}\| \rho_{XB})  \right]+ sR \right\}\\
  &=\sup_{-1 < s<0} \min_{\sigma_{XB}\in \mathcal{S}_\rho(XB)} \left\{ -s \left[ H(X|B)_\sigma - \frac{1}{s} D(\sigma_{XB}\| \rho_{XB})  \right]+ sR \right\}\\
  &=\sup_{0<s<1} \min_{\sigma_{XB}\in \mathcal{S}_\rho(XB)} \left\{  D(\sigma_{XB}\| \rho_{XB}) +s (H(X|B)_\sigma - R )\right\}\\
  &\leq \min_{\sigma_{XB}\in \mathcal{S}_\rho(XB)} \sup_{0<s<1}\left\{  D(\sigma_{XB}\| \rho_{XB}) +s (H(X|B)_\sigma - R )\right\} \label{eq:step-ieq}\\
  &= \min_{\sigma_{XB}\in \mathcal{S}_\rho(XB)} \left\{  D(\sigma_{XB}\| \rho_{XB}) +\sup_{0<s<1} s (H(X|B)_\sigma - R )\right\}\\
  &= \min_{\sigma_{XB}\in \mathcal{S}_\rho(XB)} \left\{  D(\sigma_{XB}\| \rho_{XB}) +|H(X|B)_\sigma - R |_+\right\}.
  \end{align}
  However, we can always achieve equality by taking $s\to 0$ or $s\to 1$, and therefore the inequality in \eqref{eq:step-ieq} is an equality.

\begin{lemm}[{\cite[Theorem III.5]{MO17}}] \label{lemm:MO17}
	Let $\rho,\tau \in\mathcal{S(H)}$ with $\rho\ll \tau$.
	For all $s>-1$, it follows that
	\begin{align}
	\min_{\sigma \in \mathcal{S(H)}} D(\sigma\|\rho) + sD(\sigma\|\tau)
	= s D^\flat_{\frac{1}{1+s}}(\rho\|\tau).
	\end{align}
	Additionally, the minimum is achieved uniquely by 
	\begin{equation}
	\sigma^* := P \e^{ \alpha P (\log \rho) P + (1-\alpha) P (\log \sigma) P} / Q_\alpha^\flat(\rho\|\sigma)
	\end{equation}
	where $\alpha = \frac{1}{1+s}$, and $P$ is the projection onto the support of $\rho$. 
\end{lemm}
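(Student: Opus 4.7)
The plan is to recognize the variational identity as a quantum Gibbs variational principle applied to the operator $K := \alpha\log\rho + (1-\alpha)\log\tau$ with $\alpha = 1/(1+s)$, after handling supports carefully. Throughout, write $\cK := \texttt{supp}(\rho)$ and let $P$ be the orthogonal projection onto $\cK$; since $\rho\ll\tau$ we also have $\cK\subseteq\texttt{supp}(\tau)$, so $P(\log\tau)P$ is a well-defined self-adjoint operator on $\cK$.

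First I would reduce the optimization to states $\sigma$ with $\sigma\ll\rho$. Indeed, if $\sigma\not\ll\rho$ then $D(\sigma\|\rho)=+\infty$; since $s>-1$ and $D(\sigma\|\tau)\geq 0$, the objective is $+\infty$, while the objective at $\sigma=\rho$ is $sD(\rho\|\tau)<+\infty$ (using $\rho\ll\tau$). Hence only $\sigma\ll\rho$, equivalently $\sigma = P\sigma P$, need be considered, and for such $\sigma$ both $\tr[\sigma\log\rho]$ and $\tr[\sigma\log\tau]$ equal $\tr[\sigma\, P(\log\rho)P]$ and $\tr[\sigma\,P(\log\tau)P]$ respectively.

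Second, using $1+s>0$ and $\alpha=1/(1+s)$, $1-\alpha=s/(1+s)$, I would algebraically rewrite
\begin{equation}
D(\sigma\|\rho) + sD(\sigma\|\tau) = (1+s)\bigl(\tr[\sigma\log\sigma] - \tr[\sigma K_P]\bigr),
\end{equation}
where $K_P := \alpha\,P(\log\rho)P + (1-\alpha)\,P(\log\tau)P$ acts on $\cK$. The quantum Gibbs variational principle on $\cK$ states that, for every self-adjoint $K_P$ on $\cK$,
\begin{equation}
\log\tr_\cK e^{K_P} = \max_{\sigma'\in\mathcal{S}(\cK)}\bigl(\tr[\sigma' K_P] - \tr[\sigma'\log\sigma']\bigr),
\end{equation}
with the maximum attained uniquely at $\sigma'_* = e^{K_P}/\tr_\cK e^{K_P}$. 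Equivalently, $\tr[\sigma\log\sigma] - \tr[\sigma K_P] = -\log\tr_\cK e^{K_P} + D(\sigma\,\|\,\sigma'_*)\geq -\log\tr_\cK e^{K_P}$ with equality iff $\sigma=\sigma'_*$. Multiplying by $1+s>0$ gives
\begin{equation}
\min_{\sigma\ll\rho}\bigl[D(\sigma\|\rho)+sD(\sigma\|\tau)\bigr] = -(1+s)\log\tr_\cK e^{K_P},
\end{equation}
uniquely attained at $\sigma^* = P e^{K_P} P / \tr_\cK e^{K_P}$, which matches the stated minimizer after extending the operator back to $\mathcal{H}$ by the projection $P$ (noting $Q_\alpha^\flat(\rho\|\tau) = \tr e^{\alpha\log\rho+(1-\alpha)\log\tau} = \tr_\cK e^{K_P}$).

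Finally, I would match the scalar factor with the claimed right-hand side: since $\alpha-1 = -s/(1+s)$, we have $s/(\alpha-1) = -(1+s)$, and therefore
\begin{equation}
sD^\flat_\alpha(\rho\|\tau) = \tfrac{s}{\alpha-1}\log Q^\flat_\alpha(\rho\|\tau) = -(1+s)\log\tr_\cK e^{K_P},
\end{equation}
which coincides with the minimum computed above. The main obstacle, and the only real subtlety, is the support bookkeeping: one must argue that $\log\rho$ and $\log\tau$ may be replaced by their projected versions on $\cK$ inside the exponential, so that Gibbs' principle applies on the finite-dimensional space $\cK$ rather than on all of $\mathcal{H}$; this relies exactly on the hypothesis $\rho\ll\tau$ and on restricting the optimization to $\sigma\ll\rho$, both justified above. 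Everything else is a direct algebraic manipulation and a single invocation of the Gibbs variational principle.
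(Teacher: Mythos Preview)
The paper does not supply its own proof of this lemma; it is quoted directly from \cite[Theorem~III.5]{MO17} and used as a black box in the proof of Proposition~\ref{prop:representation}. Your argument via the Gibbs variational principle is exactly the standard proof (and is essentially what \cite{MO17} does): rewrite the objective as $(1+s)\bigl(\tr[\sigma\log\sigma]-\tr[\sigma K_P]\bigr)$ and minimize over states supported in $\cK=\texttt{supp}(\rho)$. So there is no meaningful methodological difference to compare.

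One small technical gap in your reduction step: your claim ``since $s>-1$ and $D(\sigma\|\tau)\geq 0$, the objective is $+\infty$'' is not quite complete when $s\in(-1,0)$ and $\sigma\not\ll\tau$, because then $sD(\sigma\|\tau)=-\infty$ and the sum $D(\sigma\|\rho)+sD(\sigma\|\tau)$ is an indeterminate $+\infty-\infty$. The fix is trivial and does not touch the main argument: either adopt the convention that such $\sigma$ are excluded from the feasible set (the objective is not well defined), or observe directly that the candidate minimizer $\sigma^*$ satisfies $\sigma^*\ll\rho\ll\tau$ with finite objective value, so any $\sigma$ for which the objective is ill defined or $+\infty$ cannot improve the minimum. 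With that caveat noted, the proof is correct.
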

\end{proof}

\section{Proof of Proposition~\ref{prop:spCh}} \label{app:spCh}
\begin{prop6}[Error Exponent around Conditional Entropy]
	Let  $(\delta_n)_{n\in\mathbb{N}}$ be a sequence of positive numbers with $\lim_{n\to+\infty} \delta_n = 0$.
					The following hold:
	\begin{align}
	\limsup_{n\to+\infty} \frac{ E_\textnormal{sp} \left( H(X|B)_\rho + \delta_n \right) }{ \delta_n^2 } &\leq   \frac{ 1 }{2 V(X|B)_\rho }; \label{eq:spChc11} \\
	\limsup_{n\to+\infty} \frac{s_n^\star}{\delta_n} &= \frac{1}{V(X|B)_\rho}, \label{eq:spChc33}
	\end{align}
	where 
	\begin{align}
	s_n^\star := \argmax_{s\geq 0} \left\{ s\left( H(X|B)_\rho + \delta_n \right) - s H_{\frac{1}{1+s}}^\uparrow(X|B)_\rho
	\right\}.
	\end{align}
\end{prop6}

\begin{proof}[Proof of Proposition~\ref{prop:spCh}]
For notational convenience, we denote by $H := H(X|B)_\rho$, $V := V(X|B)_\rho$.. Thus,
\begin{align}\label{eq_D1}
E_\textnormal{sp}  \left( R \right) &= \sup_{s\geq 0} \left\{  s R + E_0 (s)\right\}, \\
\end{align}
Let a \emph{critical rate} to be
\begin{align} \label{eq:critical}
r_\textnormal{cr} := \left.\frac{\partial E_0(s) }{\partial s}\right|_{s=1}.
\end{align}
Let $N_0$ be the smallest integer such that $H(X|B)_\rho + \delta_n < r_\text{cr}$, for all $ n\geq N_0$. 
Since the map $r \mapsto E_\text{sp}  (r) $ is non-increasing by item \ref{E_SW-a} in Proposition~\ref{prop:E_SW}, the maximization over $s$ in Eq.~\eqref{eq_D1} can be restricted to the set $[0,1]$ for any rate  below $r_\text{cr}$, i.e.,
\begin{align} \label{eq:sp2}
E_\text{sp}  \left( H +\delta_n \right)
=  \max_{0\leq s\leq 1} \left\{  s\left(H +\delta_n\right) + E_0(s)        \right\}.
\end{align}
For every $n\in\mathbb{N}$, let $s_n^\star$ attain the maxima in Eq.~\eqref{eq:sp2} at a rate of $H  + \delta_n$.
It is not hard to observe that $s_n^\star>0$ for all $n\geq N_0$ since $s_n^\star = 0 $ if and only if $H + \delta_n < H $, which violates the assumption of $\delta_n>0$ for finite $n$.
Now, we will show Eq.~\eqref{eq:spChc33} and
\begin{align}
\lim_{n\to +\infty} s_n^\star = 0.
\end{align}
Let $(s_{n_k}^\star)_{k\in\mathbb{N} }$ be arbitrary subsequences.
Since $[0,1]$ are compact, we may assume that
\begin{align} 
\lim_{k\to\infty} s_{n_k}^\star = s_o, \label{eq:s_n_0}
\end{align}
for some $s_o \in [0,1]$.

Since $s\mapsto E_0(s)$ is strictly concave from item \ref{E0_SW-c} in Proposition~\ref{prop:E0_SW+E0down}, the maximizer $s_n^\star$ must satisfy
\begin{align} \label{eq:s_n3}
\left.\frac{\partial E_0(s)}{\partial s}\right|_{s=s_{n_k}^\star} = -( H + \delta_{n_k}),
\end{align}
which together with item \ref{E0_SW-a} in Proposition~\ref{prop:E0_SW+E0down} implies
\begin{align}
\lim_{k\to+\infty} \left.\frac{\partial E_0(s)}{\partial s}\right|_{s=s_{n_k}^\star}
= \left.\frac{\partial E_0(s)}{\partial s}\right|_{s=s_{o}} = - H.
\end{align}
On the other hand, item \ref{E0_SW-d} in Proposition~\ref{prop:E0_SW+E0down} gives
\begin{align} \label{eq:s_n7}
\left.\frac{ \partial E_0(s)}{\partial s}\right|_{s=0} = -H .
\end{align}
Since item~\ref{E0_SW-e} in Proposition \ref{prop:E0_SW+E0down} guarantees
\begin{align} \label{eq:s_n6}
\left.\frac{\partial^2 E_0\left(s \right)}{\partial s^2}\right|_{s=0} = -V < 0,
\end{align}
which implies that the first-order derivative $\partial E_0\left(s \right)/{\partial s}$ is strictly decreasing around $s=0$. Hence, we conclude $s_o = 0$.
Because the subsequence is arbitrary, Eq.~\eqref{eq:s_n_0} is shown.

Next, from Eqs.~\eqref{eq:s_n3} and Eqs.~\eqref{eq:s_n7}, the mean value theorem states that there exists a number $\hat{s}_{n_k} \in \left(0, s_{n_k}^\star\right)$, for each $k\in \mathbb{N}$,  such that
\begin{align}
-\left.\frac{ \partial^2 E_0\left(s\right)}{\partial s^2}\right|_{s=\hat{s}_{n_k}}
&= \frac{ -H + (H + \delta_{n_k})}{ s_{n_k}^\star} = \frac{\delta_{n_k}}{s_{n_k}^\star}. \label{eq:s_n4}
\end{align}
When $k$ approaches infinity, items \ref{E0_SW-a} and \ref{E0_SW-e} in Proposition \ref{prop:E0_SW+E0down}  give
\begin{align} \label{eq:s_n5}
\lim_{k\to+\infty} \left.\frac{\partial^2 E_0\left(s\right)}{\partial s^2}\right|_{s = \hat{s}_{n_k} }
= \left.\frac{\partial^2 E_0\left(s \right)}{\partial s^2}\right|_{s = 0 }
=  -V.
\end{align}
Combining Eqs.~\eqref{eq:s_n4} and \eqref{eq:s_n5} leads to
\begin{align} \label{eq:s_nc}
\lim_{k\to+\infty} \frac{ s_{n_k}^\star }{\delta_{n_k}} = \frac{1}{V }.
\end{align}
Since the subsequence was arbitrary, the above result establishes Eq.~\eqref{eq:spChc33}.

Finally, denote by 
\begin{align} \label{eq:Upsilon2}
\Upsilon = \max_{ s\in[0,1] } 
\left| \frac{\partial^3 {E}_0  \left(s \right)}{\partial s^3} \right| < +\infty.
\end{align}
For every sufficiently large $n\geq N_0$, we apply Taylor's theorem to the map $s_n^\star \mapsto E_0 \left( s_n^\star \right)$ at the original point to obtain
\begin{align}
E_\textnormal{sp}  \left(H+\delta_n\right)
&= s_n^\star\left( H + \delta_n \right) + E_0\left( s_n^\star \right) \\
&= s_n^\star  \delta_n - \frac{(s_n^\star)^2}{2} V  + \frac{(s_n^\star)^3}{6} \left.\frac{ \partial^3 E_0(s,P_n)}{ \partial s^3}\right|_{s=\bar{s}_n}  \\
&\leq s_n^\star \left( H + \delta_n - H   \right) - \frac{(s_n^\star)^2}{2} V  + \frac{(s_n^\star)^3 \Upsilon}{6} \label{eq_D16} \\
&\leq \sup_{s\geq 0} \left\{ s \delta_n - \frac{s^2}{2} V  \right\} +  \frac{(s_n^\star)^3 \Upsilon}{6}  \\
&= \frac{\delta_n^2}{2V} + \frac{(s_n^\star)^3 \Upsilon}{6}, \label{eq_D17}
\end{align}
where $\bar{s}_n$ is some number in $\left[0,s_n^\star\right]$.
Then, Eqs.~\eqref{eq:spChc33}, \eqref{eq:s_n_0}, \eqref{eq_D17}, and the assumption $\lim_{n\to+\infty}\delta_n=0$ imply that the desired inequality
\begin{align}
\limsup_{n\to+\infty} \frac{ E_\textnormal{sp} \left( H + \delta_n \right) }{\delta_n^2}
&\leq  \frac{1}{2 V }. 
\end{align}
\end{proof}



\end{document}